\newcommand{\Perp}{\perp\!\!\!\perp}
\renewcommand{\qed}{\hfill $\blacksquare$}
\newcommand{\sign}{\text{sign}}
\newtheorem{theorem}{Theorem}
\renewenvironment{proof}[1][Proof]{\begin{trivlist}
\item[\hskip \labelsep {\bfseries #1.}]}{\end{trivlist}}
\newtheorem{lemma}[theorem]{Lemma}
\newtheorem{hypothesis}{Hypothesis}
\begin{document}

\title{Warp Speed Price Moves: \\ Jumps after Earnings Announcements\thanks{We thank an anonymous reviewer for insightful and constructive referee reports during the revision process. We also thank Andrew Patton, Federico Bandi, Hashem Pesaran, Marcelo Moreira, Pete Kyle, Peter Hansen, Roberto Ren\`{o}, Shuping Shi, Thomas Chemmanur, Tim Bollerslev, Torben G. Andersen, Viktor Torodov, Yunus Topbas, and participants at the 33rd (EC)$^{2}$ conference at ESSEC, France, the Advances in Financial Econometrics conference at Copenhagen Business School, Denmark, the 15th annual SoFiE conference at Sungkyunkwan University, Republic of Korea, the Volatility conference at Singapore Management University, Singapore, the 32nd annual conference on Pacific Basin Finance, Economics, Accounting, and Management at Rutgers Business School, USA, as well as in invited seminars at the Advanced Financial Technologies Laboratory (AFTLab) at Stanford University, University of Houston, University of Maryland, University of Nottingham, and University of Southern California for helpful comments. We are grateful to research assistance from Hanbo Wang, who calculated the constituent members' weights in the S\&P 500 index. Christensen appreciates funding from the Independent Research Fund Denmark (DFF 1028--00030B). This work was also supported by the Center for Research in Econometric Analysis of Time Series (CREATES). In addition, Christensen and Veliyev are affiliated with the Danish Finance Institute (DFI), which financially supports researchers in Finance.}
}
\author{Kim Christensen\thanks{Department of Economics and Business Economics, Aarhus University, Fuglesangs All\'{e} 4, 8210 Aarhus V, Denmark.} \thanks{Research fellow at Danish Finance Institute (DFI).} \and Allan Timmermann\thanks{Rady School of Management and Department of Economics, UCSD, 9500 Gilman Drive, La Jolla, CA 92037, USA.} \and Bezirgen Veliyev\footnotemark[2] \footnotemark[3]}

\date{November, 2024}

\maketitle

\begin{abstract}
Corporate earnings announcements unpack large bundles of public information that should, in efficient markets, trigger jumps in stock prices. Testing this implication is difficult in practice, as it requires noisy high-frequency data from after-hours markets, where most earnings announcements are released. Using a unique dataset and a new microstructure noise-robust jump test, we show that earnings announcements almost always induce jumps in the stock price of announcing firms. They also significantly raise the probability of price co-jumps in non-announcing firms and the market. We find that returns from a post-announcement trading strategy are consistent with efficient price formation after 2016.

\vspace*{0.5cm}

\bigskip \noindent \textbf{JEL Classification}: C10; C80.

\medskip \noindent \textbf{Keywords}: after-hours trading; earnings announcements; jump testing; spillover effects, high-frequency data; market efficiency.
\end{abstract}

\setlength{\baselineskip}{18pt}\setlength{\abovedisplayskip}{10pt} \belowdisplayskip \abovedisplayskip \setlength{\abovedisplayshortskip }{5pt} \abovedisplayshortskip \belowdisplayshortskip \setlength{\abovedisplayskip}{8pt} \belowdisplayskip \abovedisplayskip \setlength{\abovedisplayshortskip }{4pt}

\vfill

\thispagestyle{empty}

\pagebreak

\section{Introduction} \setcounter{page}{1}

Earnings announcements constitute perhaps the single most important source of firm-level information. They are the key occasions on which firms release audited results and communicate their views on economic prospects to stock market investors. With several headline numbers being released simultaneously, earnings announcements can be thought of as points in time where draws from the public news process have a very high variance which, in all likelihood, leads investors to materially revise their estimates of firm values. In an efficient market where trading frictions are not prohibitively large, the widespread prevalence of high-frequency trading for the most liquid stocks therefore implies that earnings announcements should---almost surely---trigger jumps in prices. Effectively, a very high propensity of jumps in stock prices immediately after earnings announcements is a \emph{necessary} condition for markets to efficiently incorporate the new information.\footnote{The condition that prices jump after earnings announcements is, however, not sufficient because the price could systematically over- or undershoot, thus introducing predictable dynamics in the post-announcement returns. We explore this issue in Section \ref{section:price-discovery}.} Building on this idea, we propose to use jump tests based on high-frequency tick-by-tick price data as a way to study market efficiency.

In recent years, the vast majority of publicly traded U.S. companies have made their earnings announcements in the after-hours market outside the regular trading session (9:30am--4:00pm).\footnote{\citet*{jiang-likitapiwat-mcinish:12a}, \citet*{lyle-stephan-yohn:21a}, and \citet*{michaely-rubin-vedrashko:13a} report that more than 95\% of U.S. firms announce earnings outside the regular exchange trading session. This fraction is as high as 99.1\% in the recent analysis by \citet*{gregoire-martineau:22a} which studies companies belonging to the S\&P 1500 stock index between 2011 and 2015.} After-hours trading on days with earnings announcements are therefore central to the price discovery process and it is important to understand price dynamics in these markets. We do so by studying price discovery and return predictability at a much higher tick-by-tick frequency than in previous studies of earnings announcements. All told, we examine more than 89 billion after-hours quotes and nearly 8 billion after-hours transactions for 50 stocks over a 12-year sample. Our analysis provides the most complete analysis of price formation in the after-hours market for these stocks to date.

Unfortunately, high-frequency data from the after-hours trading session are severely affected by market microstructure noise. The irregular format of such data poses important challenges as it distorts standard jump test procedures. Empirical work has therefore resorted to testing for jumps during the regular trading session, which does not include most earnings announcements.\footnote{See, e.g., \citet{ait-sahalia-jacod:09b, ait-sahalia-jacod-li:12a, andersen-bollerslev-dobrev:07a, caporin-kolokolov-reno:17a, christensen-oomen-podolskij:14a, corsi-pirino-reno:10a, jiang-oomen:08a}.} To overcome this shortcoming, we develop a new jump test that is suitable for illiquid and noisy markets by generalizing the classical bipower variation-based jump test of \citet*{barndorff-nielsen-shephard:06a}. We construct a noise-robust version of their test statistic that relies on a pre-averaging technique to lessen the detrimental impact of the noise \citep*[see, e.g.,][]{ait-sahalia-jacod-li:12a, jacod-li-mykland-podolskij-vetter:09a, lee-mykland:12a, podolskij-vetter:09a, podolskij-vetter:09b} allowing for both heteroscedasticity and serial dependence in the microstructure noise process. We also show how to estimate the asymptotic variance of the test statistic, which is itself a non-trivial problem.\footnote{We propose a jump- and noise-robust version of the subsampling approach studied in \citet*{christensen-podolskij-thamrongrat-veliyev:17a}, which delivers a consistent estimator with desirable finite sample properties.} Through simulations and empirical results, we demonstrate that the conventional jump test is severely distorted by the type of market microstructure noise that is prevalent in after-hours markets, leading both to too many false positives and false negatives.

Using our new noise-robust test statistic, we find strong empirical evidence in support of the hypothesis that earnings announcements trigger jumps in prices. Specifically, the probability that the stock price of announcing firms jumps in after-hours sessions with earnings announcements exceeds 90\% while the corresponding jump probabilities for the regular trading session and after-hours sessions without earnings announcements are 2.95\% and 3.67\%. In a nutshell, prices almost always jump after earnings announcements, but rarely jump otherwise.

If earnings announcements contain industry- and economy-wide news components, we should expect an increase in the jump probability of non-announcing firms' stock price as well as in the market index on announcement days \citep{patton-verardo:12a, savor-wilson:16a}. We test these implications and find compelling support for their presence. Jump spillover effects are strongly related to industry proximity and after-hours market liquidity with the probability of a price jump more than doubling for a high-liquidity, non-announcing firm in the same industry as the announcing firm compared to a low-liquidity, non-announcing firm from other industries. The spillover effect is more potent early versus late in the earnings season, consistent with \citet{savor-wilson:16a}. Moreover, there is a significant increase in the jump probability of the market index when many firms announce together after controlling for the mechanical price effect resulting from the announcing firms often being constituent members of the market index.\footnote{ This finding suggests that the ``information spillover'' effect dominates the “distraction'' effect identified by \citet{hirshleifer-lim-teoh:09a} who find that the price response to earnings surprises is significantly weaker on days when multiple firms announce their earnings.}

We next study whether prices adjust \textit{sufficiently} fast and by an appropriate amount to efficiently incorporate the new information. Our analysis inspects all trades and quotes, using a more detailed and complete data set than previously examined. This means that we can pinpoint exactly how long it takes for the price discovery process to impound the news as well as identify the factors---including earnings surprises, market liquidity and analyst coverage---determining price movements in the immediate aftermath of earnings announcements.

We follow the approach of \citet{patell-wolfson:84a} by examining a simple trading rule based on the earnings response model of \citet{ball-brown:68a}. That is, we compute return forecasts off the surprise component in earnings announcements to decide whether to take a long or short position in the stock. Over the 12-year sample (2008--2020) we earn a highly significant mean return of 1.80\% per transaction in a ``no-friction'' scenario where investors enter at the first transaction price after the announcement. Executing trades instead at the midquote of the bid-ask spread, mean returns drop to (a highly significant) 1.50\% per trade. This finding continues to hold for trades executed at the actual spreads obtained from the bid-ask quotes after the announcement (mean return of 0.72\%). Delaying trades by five seconds reduces mean returns to 0.41\% per trade, which remains significant. Further delays beyond five seconds lead to mean returns that are no longer statistically significant.

To track how market efficiency has evolved over time, we split our data into two sub-samples chosen to reflect structural changes in market design, as explained in Section \ref{section:trading-strategy}. In the early sub-sample (2008--2015), mean returns are a highly significant 2.30\% and 2.00\% per trade in the frictionless and midquote scenarios, respectively, and remain significant and large (0.80\% per trade) even with a trading delay of 10 seconds. Conversely, in the second sub-sample (2016--2020), mean returns are only significantly positive in the frictionless and midquote scenarios, becoming small and insignificant in the presence of bid-ask spreads or even negative with timing delays. This suggests that the post-earnings announcement price discovery process in the after-hours market has become extremely fast and increasingly efficient over time for the most liquid U.S. stocks.\footnote{This conclusion agrees with \citet{chordia-green-kottimukkalur:18a}, who study price formation in the composite equity market around macroeconomic news. As in our paper, they show that the speed of information processing has increased in recent years and that profits from proprietary trading are economically small. \citet{brogaard-hendershott-riordan:14a} document how high-frequency traders aid with price discovery through their trading.}

The theoretical foundation based on jump testing with high-frequency data is fundamentally different from conventional approaches that examine how news releases get processed. Existing work typically looks at return predictability, usually in terms of serial correlation, in one- or five-minute intervals after news announcements, see, e.g., \cite{ederington-lee:93a}. While these time intervals are relatively short in an absolute sense, they are excessively long in modern markets with fast-paced, low-latency trading that can process thousands of orders within a few milliseconds after news releases (see Figure \ref{figure:warpspeed} for an illustrative example). In markets where a minute can be an eternity, jump tests conducted at the tick-by-tick level are arguably better at revealing whether prices react efficiently to news announcements.

Our paper contributes to a burgeoning literature on how stock prices react to earnings announcements, starting with \cite{ball-brown:68a} and \cite{beaver:68a}. The vast majority of the research in this field has refrained from studying the instantaneous price change after an earnings announcement and instead uncovers evidence of post-earnings announcement drift (PEAD), see \citet{bernard-thomas:89a, bernard-thomas:90a}. The PEAD can be interpreted as an indirect test for efficient price responses to earnings announcements because over- or underreaction in the immediate post-announcement price change induces a subsequent drift in the stock price. Tests for PEAD tend to exhibit low power and lead to conflicting evidence, however, since it is difficult to accurately estimate mean returns over a long post-event window that often spans several days (or even months). Indeed, while \cite{martineau:22a} concludes that the PEAD is notably weaker after the early 2000s, \cite{fink:21a} reaches the opposite conclusion that it has not vanished.\footnote{More recently, \cite{chan-marsh:24a} argue that although there is scant evidence of a daily (close-to-close) PEAD, a significant overnight effect remains, particularly for firms with extreme earnings misses.}

Far fewer studies analyze ultra short-term stock price dynamics after earnings announcements and none comes close to the level of resolution used here. The most closely related article is \cite{gregoire-martineau:22a}, who examine how earnings announcement news get transmitted to stock prices in the after-hours market. They examine a broader set of stocks than we do but look at the dynamics of price formation for each firm in far less detail, aggregating price changes into one-minute intervals. They conclude that bid-ask spreads are so wide that investors cannot cover their round-trip trading costs even if they knew the post-announcement closing price in advance.\footnote{Specifically, \cite{gregoire-martineau:22a} write that ``pre-announcement bid-ask spreads are wide enough to include the post-announcement closing price, eliminating the profits of informed liquidity takers.''} We arrive at an entirely different result for our sample of liquid firms: The post-earnings transaction price at 6:30pm (at the time where we close out our trading strategy) is outside the pre-announcement quoted spread for 92.66\% of the earnings announcements. So the post-earnings announcement price dynamics are rather distinctive for the most liquid stocks that exhibit the lowest bid-ask spreads and account for the vast majority of after-hours trading volume.

Our analysis is also connected to a more general stream of work that examines price discovery and trading in the after-hours market, notably \cite{barclay-hendershott:03a} who find evidence of inefficient price formation in those markets. Our results show that after-hours trading activity (and, hence, price discovery) is really a tale of two markets. Even for the vastly liquid stocks examined here, after-hours trading volume is almost entirely dormant on days without earnings announcements (or other important company-specific news releases). In contrast, on days with earnings announcements, after-hours trading activity explodes. Average after-hours trading volume on days with earnings announcements exceeds its counterpart on days without earnings announcement by almost two orders of magnitude.\footnote{Averaged over firms and years, this ratio equals 93.57 over our sample.}

Other articles also study the link between jumps in stock prices and news effects. \citet{andersen-bollerslev-diebold-vega:03a} study price discovery in foreign exchange markets around the dissemination of macroeconomic news. As in our paper, they show that changes in exchange rates are determined by the sign and magnitude of the standardized announcement surprise but restrict attention to high-frequency data sampled at 5-minute frequency. This is further documented in \citet{jeon-mccurdy-zhao:22a}, who base their analysis on general news and daily data. \citet{ait-sahalia-li-li:24a} relate stock price jumps extracted from 5-second intraday log-returns to firm-level, industry and macroeconomic news. In contrast, we examine earnings announcements and high-frequency data sampled at the tick-by-tick level, which is the highest possible degree of granularity.

Our finding that stock prices predominantly jump on earnings announcements days and induce jump spillover effects for other stocks and the market index has important implications also for the pricing of contingent claims affected by earnings announcement risk \citep{dubinsky-johannes-kaeck-seeger:19a, bandi-fusari-reno:22a,todorov-zhang:23a}. Moreover, \citet{bollerslev-li-todorov:16a} find significantly positive risk premia for jump and overnight risk, but not for continuous variation. This suggests that investors command a risk premium for exposure to discontinuous or overnight variation in the aggregate market. In addition to this, \citet{lou-polk-skouras:19a} show that most of the equity risk premium accrues overnight. Our findings can help to shed further light on this issue, including the extent to which the equity risk premium is driven by earnings announcements.

The outline of the paper is as follows. Section \ref{section:theory} establishes the link between price jumps and market efficiency before introducing our new noise-robust jump test statistic. Section \ref{section:data} explains our data sources. Section \ref{section:empirical} implements the jump test statistic on a unique set of high-frequency data that includes information about the transaction record from outside the official opening hours of the exchange. Section \ref{section:jump-spillover} examines jump spillover effects, and Section \ref{section:price-discovery} studies price dynamics after earnings announcements and analyzes returns from a simple trading strategy. Section \ref{section:conclusion} concludes. Supplemental appendices at the back of the paper provide additional results and technical analysis.

\section{Market efficiency and price jumps} \label{section:theory}

We begin by motivating why, in efficient markets, we should expect stock prices to jump immediately after earnings announcements. Next, to test this implication we introduce a new jump test that is robust to the high levels of market microstructure noise typical in after-hours trading sessions. We finish the section by reporting results from Monte Carlo simulations comparing the performance of our new noise-robust test to existing jump tests.

\subsection{Earnings announcements} \label{section:ea}

Earnings announcements unpack large bundles of information to the public. The timing of their release is typically known in advance, giving investors ample opportunity to prepare and respond more or less instantaneously.\footnote{Most large-cap companies inform markets in advance about their intention to release a financial report either Before Market Open (BMO), during Regular Trading Hours (RTH), or After Market Close (AMC), see, e.g., \path{https://www.nasdaq.com/market-activity/earnings}.} Since the main research question of our paper is to analyze the process by which stock market participants incorporate such public information into prices, we begin by defining what is meant by efficient price and price discovery processes. To this end, let $\mathcal{I}_{t}^{ \text{public}}$ be the public information set at time $t$. The \textit{efficient price} at time $t$, $P_{t}$, is defined as the expectation of some equilibrium price, $v_{t}$, conditional on this information:
\begin{equation} \label{equation:efficient-price-Q}
P_{t} = \mathbb{E}^{ \mathbb{Q}} \big[v_{t} \mid \mathcal{I}_{t}^{ \text{public}} \big],
\end{equation}
where $\mathbb{Q}$ is the risk-neutral probability measure (we ignore the time-discounting effect, or assume interest rates are zero, for notational convenience). Here, $v_{t}$ can be thought of as the liquidation value of the stock or, more broadly, the conditional expectation of its future cash flow ($\text{FCF}_{t}$) given the collective private information of investors, $\mathcal{I}_{t}^{ \text{private}}$, that is, $v_{t} = \mathbb{E}^{ \mathbb{Q}} \big[ \text{FCF}_{t} \mid \mathcal{I}_{t}^{ \text{private}} \big]$.  Since $\mathcal{I}_{t}^{ \text{public}} \subseteq \mathcal{I}_{t}^{ \text{private}}$, it follows by the dominance of the coarser filtration that $P_{t} = \mathbb{E}^{ \mathbb{Q}} \big[ \text{FCF}_{t} \mid \mathcal{I}_{t}^{ \text{public}} \big]$.

If, at each point in time, the efficient price is determined in accordance with \eqref{equation:efficient-price-Q}, we expect the release of large amounts of new public information to trigger large, and essentially instantaneous, price changes $\Delta P_{t} \neq 0$ (i.e., a jump). We take \textit{price discovery} to represent the process by which new public information gets impounded into the efficient price; a common definition in the literature \citep[e.g.][]{andersen-bollerslev-diebold-vega:03a, hasbrouck:91a}.

In a second strand of the literature, price discovery refers to how private information gets incorporated into the efficient price \citep[e.g.][]{biais-hillion-spatt:99a, vives:95a} through a tatonnement process in which the trades of informed parties reveal their private information \citep[e.g.][]{glosten-milgrom:85a, kyle:85a}. In that setting, $P_{t}$ converges toward $v_{t}$ (a moving target in practice). This definition is ill-suited for studying the pricing implications of earnings announcements as we do not expect the release of previously-private now-public information to alter the equilibrium price which depends on the firm's underlying business activity, the state of the economy, etc. In contrast, the efficient price determined according to \eqref{equation:efficient-price-Q} is highly likely to change.\footnote{Although we are not measuring how far the efficient price is from the equilibrium price, we do expect the post-announcement price to be closer to the equilibrium price than the pre-announcement price.}

Changing to the physical probability measure, \eqref{equation:efficient-price-Q} can be rewritten as
\begin{equation} \label{equation:efficient-price}
P_{t} = \mathbb{E} \big[v_{t} \mid \mathcal{I}_{t}^{ \text{public}} \big] + \text{cov} \big[v_{t}, m_{t} \mid \mathcal{I}_{t}^{ \text{public}} \big],
\end{equation}
where $m_{t}$ is the stochastic discount factor.

The covariance term in \eqref{equation:efficient-price} represents a risk premium that allows prices to jump even for earnings announcements that are equal to or close to expectations and do not change the first term much. Whether this effect arises depends on the underlying asset pricing model. Specifically, \cite{ai-bansal:18a} show that asset pricing models cannot generate an announcement premium, defined as the return from buying an asset prior to the announcement and selling it in its immediate aftermath, under time-separable expected utility. However, a class of generalized risk sensitivity preferences, including robust-control (uncertainty aversion), implies concavity of the certainty equivalent function. This causes additional risk-discounting for payoffs that correlate positively with utility, which generates a non-negative announcement risk premium \citep[see also][]{hansen:21a}. This also holds under \citet{epstein-zin:89a} recursive utility when these lead to a preference for early resolution of uncertainty. Earnings announcements in line with analyst expectations are therefore more likely to cause the stock price of the announcing firm to rise than decline due to a reduced risk premium, that is $\Delta \text{cov} \big[v_{t}, m_{t} \mid \mathcal{I}_{t}^{ \text{public}} \big] \geq 0$.

This discussion suggests the following hypothesis.

\begin{hypothesis} \label{hypothesis:necessary}
Necessary conditions for market efficiency after an earnings announcement include that, for the most liquid stocks,
\begin{enumerate}
\item[1.] The stock price of the announcing firm almost always jumps immediately after the earnings announcement.
\item[2.] Due to uncertainty resolution, the stock price of the announcing firm is more likely to jump up than down following an earnings announcement in line with analyst expectations.
\end{enumerate}
\end{hypothesis}

Earnings announcements may contain firm-, industry-, and market-wide components. Therefore, investors revise their expectations not only about the prospects of the announcing firm but also of other firms and the broader economy  \citep{patton-verardo:12a, savor-wilson:16a}. We therefore expect earnings announcements to induce spillover effects that increase the likelihood of jumps in the prices of non-announcing firms and the market index. This effect is arguably stronger for non-announcing firms with a closer industry proximity and high after-hours market trading activity, where the latter is required to facilitate the price discovery process. We also expect the information effect---and hence the jump spillover---to be enhanced for companies that announce their fiscal results early in the earnings season, since investors revise their beliefs the most in the beginning of the cycle, as documented in \citet{savor-wilson:16a}.

\begin{hypothesis} \label{hypothesis:spillover}
Information and learning-induced spillover effects in jump probabilities from the announcing to non-announcing firms and the broader market index imply that:
\begin{enumerate}
\item[1.] The prices of other (non-announcing) firms are more likely to jump on days with earnings announcements.
\item[2.] The increase in the jump probability for a non-announcing firm is higher if (i) the non-announcing firm is closer to the announcing firm (in terms of industry proximity); (ii) the announcing firm is early in the news announcement cycle; and (iii) the announcing and non-announcing firm are more liquid with higher after-hours market trading activity.
\item[3.] The market index is more likely to jump on days with earnings announcements and the jump probability is higher on days where multiple firms announce their earnings.
\end{enumerate}
\end{hypothesis}

Higher jump probabilities in stock prices after an earnings announcement are a weak implication in the sense that it is a necessary but not sufficient condition for market efficiency. This holds because stock prices could over- or undershoot. For example, prices could jump too far, giving rise to a mean-reverting component that drifts toward the pre-announcement price in the aftermath of the jump. If instead prices do not jump far enough, we should expect to see post-announcement prices drift in the direction of the jump, as evident in the post-earnings announcement drift (PEAD) literature \citep{bernard-thomas:89a, bernard-thomas:90a}.

These examples suggest that a sufficient condition for market efficiency after earnings announcements is that investors cannot detect return predictability at any point on the post-announcement price path and exploit such predictability after accounting for trading costs:\footnote{Efficient prices are only determined up to the marginal cost of acquiring and processing information and executing transactions \citep*{grossman-stiglitz:80a, pedersen:15a}. In principle, we should also account for risk premia, but since the returns from our proposed trading strategy are accrued over very short horizons, we do not expect such a correction to alter the results much.}

\begin{hypothesis} \label{hypothesis:sufficient}
A sufficient condition for market efficiency after an earnings announcement is that, at each point in time of the post-announcement price path:
\begin{enumerate}
\item[1.] Post-announcement returns of the announcing firm do not display any predictive patterns large enough to allow investors to earn abnormal returns after accounting for transaction costs.
\item[2.] Post-announcement returns of the non-announcing firms and the market index do not display any predictive patterns large enough to allow investors to earn abnormal returns after accounting for transaction costs.
\end{enumerate}
\end{hypothesis}

Hypothesis \ref{hypothesis:sufficient} does not mean that the price discovery process is settled instantaneously. We expect elevated volatility of the price process to last several minutes after the release of the news as information gets processed and trading positions are adjusted. However, the price discovery process should not leave any (local) biases allowing investors to predict future price movements.

The condition that return predictability should not be exploitable is important and goes back to \citet{jensen:78a}.\footnote{He defines market efficiency as follows: ``A market is efficient with respect to information set $\theta_{t}$ if it is impossible to make economic profits by trading on the basis of information set $\theta_{t}$.''} Trading costs and bid-ask spreads may induce negative autocorrelation in returns, but this should not generate abnormal profits after accounting for round-trip transaction costs. Time-varying risk premia may induce predictability in the price process but they are unlikely to matter at the high resolution studied here.

Testing the sufficient conditions in Hypothesis \ref{hypothesis:sufficient} is challenging because it has to hold for all possible trading rules that condition on the prevailing information. Short of inspecting the profitability of all possible trading rules based on the released information, it is difficult to conduct an exhaustive test of the condition. In practice, analysis typically examines simple strategies based on news measures such as earnings surprises. Conclusions about market efficiency are therefore limited to the class of trading rules being examined.

\subsection{Comparison to existing tests of market efficiency}
Our approach to testing market efficiency by examining the presence of jumps after the disclosure of public information is fundamentally different from common practice in the finance literature on testing market efficiency in the aftermath of macroeconomic news announcements \citep[e.g.,][]{ederington-lee:93a, andersen-bollerslev-diebold-vega:03a} or corporate earnings announcements \citep[e.g.,][]{beaver:68a, chambers-penman:84a, jiang-likitapiwat-mcinish:12a, gregoire-martineau:22a, lyle-stephan-yohn:21a}.

This literature typically studies predictability in post-announcement returns over fixed time intervals such as one or five minutes. For example, \cite{ederington-lee:93a} conclude that ``\textit{Following an announcement, traders with immediate access to the market apparently form an estimate of the release's implication for market prices almost immediately, and the actual price adjusts to this level within one minute. The price level at the end of one minute of trading is a relatively unbiased estimate of the final equilibrium price.}'' \citet{andersen-bollerslev-diebold-vega:03a} observe that ``\textit{The general pattern is one of very quick exchange-rate conditional mean adjustment, characterized by a jump immediately following the announcement, and little movement thereafter.}'' Importantly, the evidence of jumps presented by the latter relies on inspection of price movements over five-minute intervals and their sampling scheme does not allow them to formally test for instantaneous jumps in exchange rates.

In contrast, our hypotheses explicitly do \textit{not} reference time intervals, such as one or five minutes. Rather, jumps must occur as close to instantaneously as trading technology allows (Hypotheses \ref{hypothesis:necessary} and \ref{hypothesis:spillover}) and the sufficient conditions in Hypothesis \ref{hypothesis:sufficient} must hold at each point in time, i.e., on the entire price path, regardless of whether we are studying fixed or varying intervals in calendar or trading time. This is an important advantage because any choice of time interval used in conventional market efficiency tests is arbitrary.

More importantly, in the current fast-paced markets transactions can occur at high frequency where it is possible to execute thousands of trades at very short intervals measured in milliseconds or less. The absence of serial correlation in, e.g., one-minute returns is therefore no guarantee that prices move efficiently. In a world dominated by algorithmic trading with latency delays shrinking ever closer to the lower bound imposed by the speed of light, only jump tests can reveal whether markets react efficiently to big news events.

To illustrate the extreme speed with which transactions occur in the immediate aftermath of an announcement, Panel A of Figure \ref{figure:warpspeed} shows tick-by-tick evidence on the price movement and transaction count from the announcement (time 0) to 60 seconds after Apple (AAPL) released its earnings report on 07/30/2020; Panel B zooms further in on the first 100 milliseconds. Transactions  begin 15 milliseconds after the announcement and their toll exceeds more than 800 in the first post-announcement second. The fastest trades can possibly be attributed to algorithmic traders sniping stale quotes in the limit order book, while the flat spots can appear when a large amount of passive resting volume (e.g, an iceberg order) is executed against a sequence of smaller trades. Trading activity then lies dormant for a few seconds before picking up significant steam again around the time where slower financial intermediaries are likely to enter the market. On average, in the first ten seconds 300 trades execute per second, while in the one-minute post-announcement window more than 165 trades get executed per second.\footnote{toward the end of our sample, trading activity in the after-hours market is much higher than in the early years with high-speed electronic trading often exceeding 200 trades per second right after an earnings announcement.}

This example highlights the speed with which prices of the most liquid stocks adjust after earnings announcements. It also motivates the need for new market efficiency tests that account for such extremely fast trading, yet possess robustness to the high levels of noise typically encountered in after-hours markets. This is the topic we next turn to.

\subsection{A noise-robust jump test}

The vast majority of corporate earnings announcements occur in the after-hours market.\footnote{Perhaps for this reason, existing empirical work based on high-frequency data from the regular trading session is somewhat inconclusive about the importance of jumps relative to the diffusive volatility (as often modeled by the inclusion of a Brownian motion in the price process), see, e.g., \citet*{christensen-oomen-podolskij:14a}.} Here, the trading volume is typically far lower and bid-ask spreads much wider compared to the regular trading session. As we demonstrate below, in the presence of such microstructure frictions, conventional jump tests tend to spuriously identify false jumps, while often failing to detect true jumps. This shortcoming makes it important to have tests that are robust to the noised-up features of high-frequency data from the after-hours market.\footnote{\citet*{ait-sahalia-jacod-li:12a} construct a noise-robust jump test from power variations sampled at different frequencies, but it is not evident how one should implement such sampling in the after-hours market that is typically illiquid with limited trading. In the simulation analysis, we show that our noise-robust bipower variation-based jump test is on par with or even outperforms their test statistic.}

Given the popularity in empirical work of the jump test of \citet*{barndorff-nielsen-shephard:06a}, we develop a noise-robust generalization which retains the intuitive nature of their test. We state the main results below with detailed definitions and formal proofs provided in Appendix \ref{appendix:proof}.

\subsubsection{The efficient price}

We start by introducing a general continuous-time setup for modeling noisy high-frequency data. Consider a security price observed on the time interval [0,1] which we interpret as a complete daily trading session as described in Section \ref{section:data}. The fundamental theorem of asset pricing implies that in an efficient market without frictions and no arbitrage the price process, $P = (P_{t})_{t \geq 0}$, must be a semimartingale \citep*[e.g.,][]{delbaen-schachermayer:94a}. Starting with the log-price process $p = (p_{t})_{t \geq 0}$, where $p_{t} = \log(P_{t})$, we can decompose the cumulative intraday log-return at time $t$, $r_{t} = p_{t} - p_{0}$, into a continuous part and a jump, or discontinuous, part:
\begin{equation} \label{equation:X}
r_{t} = r_{t}^{c} + r_{t}^{d}.
\end{equation}
In Appendix \ref{appendix:proof}, we introduce a general but succinct mathematical definition of these two separate sources of risk. Our framework is nonparametric and model-free, since we do not constrain the dynamics of the efficient price process, and is compatible with salient features of high-frequency financial data such as time-varying expected returns, stochastic volatility, leverage effects, and jumps in the price and volatility. Furthermore, the price jump component can be infinitely active and of infinite variation.

The null hypothesis that we wish to test is that $P$ is continuous. This can be formulated as $\mathcal{H}_{0} : \omega \in \Omega_{0}$, where $\Omega_{0} \subseteq \Omega$ is the subset:
\begin{equation}
\Omega_{0} = \{ \omega \in \Omega : t \longmapsto P_{t}( \omega) \text{ is continuous on } [0,1] \}.
\end{equation}
We test this null, which is equivalent to the restriction $r_{t}^{d} = 0$ for all $t$, against the alternative $\mathcal{H}_{a} : \omega \in \Omega_{1}$ with $\Omega_{1} = \Omega_{0}^{ \complement}$, the complement of $\Omega_{0}$.

To develop a test of $\mathcal{H}_{0}$, we examine the composition of the quadratic return variation:
\begin{equation} \label{equation:qv}
[r]_{1} = \int_{0}^{1} \sigma_{s}^{2} \text{d}s + \sum_{0 \leq s \leq 1} ( \Delta r_{s}^{d})^{2},
\end{equation}
where $\sigma_{s}$ is the instantaneous volatility at time $s$, $\Delta r_{s}^{d} = r_{s}^{d} - r_{s-}^{d}$ is the associated jump size, and $r_{s-}^{d} = \lim_{t \uparrow s}r_{t}^{d}$. The first term in \eqref{equation:qv} is the quadratic variation of the continuous log-return, also known as the integrated variance. The second term represents the quadratic variation of the jump process, which is zero under $\mathcal{H}_{0}$.

Quadratic variation can equivalently be defined as follows:
\begin{equation}
[r]_{1} = \underset{n \rightarrow \infty}{ \text{plim}} \sum_{i=1}^{n} r_{i}^{2},
\end{equation}
where $r_{i} = p_{t_{i}} - p_{t_{i-1}}$ is the log-return over $[t_{i-1}, t_{i}]$ and the convergence in probability holds for every partition $0 = t_{0} < t_{1} < \ldots < t_{n} = 1$ such that $\underset{1 \leq i \leq n}{ \max_{i}}(t_{i} - t_{i-1}) \overset{p}{ \longrightarrow} 0$ as $n \rightarrow \infty$, motivating why tick-by-tick high-frequency data is used to estimate quadratic variation.

\subsubsection{The observed price}

In practice, the efficient market hypothesis is violated because of market frictions---or microstructure noise---such as price discreteness and bid--ask spreads. In addition, even though the price process evolves in continuous-time, trading is discrete. In this section, we suppose for notational convenience that the observed log-price is recorded regularly at equidistant time points $t_{i} = i \Delta_{n}$, for $i = 0,1, \dots, n$, where $\Delta_{n} = 1/n$ is the time gap, which we collect in $p^{*} = (p_{i \Delta_{n}}^{*})_{i=0}^{n}$.\footnote{In the empirical application, we employ tick time sampling, which leads to irregularly and randomly spaced observation times. We elaborate on the details in that section. However, it is important to note that the pre-averaging theory presented below remains compatible with such data, as long as $\underset{1 \leq i \leq n}{ \max_{i}}(t_{i} - t_{i-1}) \overset{p}{ \longrightarrow} 0$ as $n \rightarrow \infty$.}

To account for microstructure noise, we follow \citet*{hasbrouck:95a} and model $p_{i \Delta_{n}}^{*}$ as
\begin{equation} \label{equation:observed-price}
p_{i \Delta_{n}}^{*} = p_{i \Delta_{n}} + \epsilon_{i \Delta_{n}},
\end{equation}
where $\epsilon_{i \Delta_{n}}$ is the microstructure effect.

The key challenge is to draw inference about the presence of jumps in the efficient price from a high-frequency record of noisy prices. To see this, note from \eqref{equation:observed-price} that the observed log-return measures the efficient log-return with error due to the microstructure component:
\begin{equation} \label{equation:noisy-return}
r_{i}^{*} = p_{i \Delta_{n}}^{*} - p_{(i - 1) \Delta_{n}}^{*} = r_{i} + \epsilon_{i \Delta_{n}} - \epsilon_{(i - 1) \Delta_{n}}.
\end{equation}

\subsubsection{The pre-averaging approach}

To mitigate the effect of microstructure noise, we adopt the pre-averaging procedure of \citet*{jacod-li-mykland-podolskij-vetter:09a, podolskij-vetter:09a, podolskij-vetter:09b}. Specifically, we replace the observed noisy log-return by a pre-averaged log-return:
\begin{equation} \label{equation:pre-averaged-return}
\bar{r}_{i}^{*} = \sum_{j=1}^{k_{n}-1} g_{j}^{n} r_{i+j}^{*}, \quad \text{for } i = 0, \ldots, n - 2 k_{n} + 1,
\end{equation}
where $g_{j}^{n} = g(j/k_{n})$ is a weight function, $k_{n} = \theta \sqrt{n} + o\bigl( n^{-1/4} \bigr)$ is the pre-averaging horizon, and $\theta > 0$ is a tuning parameter.

The averaging in \eqref{equation:pre-averaged-return} attenuates the microstructure noise and enhances the efficient log-return. If $k_{n}$ is even and $g(x) = \min(x,1-x)$, \eqref{equation:pre-averaged-return} can be rewritten as
\begin{equation}
\bar{r}_{i}^{*} = \frac{1}{k_{n}} \sum_{j=k_{n}/2+1}^{k_{n}} p_{(i+j) \Delta_{n}}^{*} - \frac{1}{k_{n}} \sum_{j=1}^{k_{n}/2} p_{(i+j) \Delta_{n}}^{*}.
\end{equation}
Thus, $( \bar{r}_{i}^{*})_{i=1}^{n-k_{n}+2}$ constitutes a sequence of log-returns constructed by simple averaging of the noisy log-price. With minor modifications, standard estimators of quadratic variation can therefore be tweaked to exploit the pre-averaged log-return series. We next define the pre-averaged realized variance and pre-averaged bipower variation as follows:
\begin{align} \label{equation:rv}
\begin{split}
RV_{n}^{*} &= c_{1}^{n} \sum_{i = 0}^{n - 2 k_{n} + 1} | \bar{r}_{i}^{*}|^{2} - c_{2}^{n} \hat{ \omega}_{n}^{2}, \\
BV_{n}^{*} &= c_{1}^{n} \frac{ \pi}{2} \sum_{i = 0}^{n - 2k_{n} + 1} | \bar{r}_{i}^{*}| | \bar{r}_{i+k_{n}}^{*}| - c_{2}^{n} \hat{ \omega}_{n}^{2},
\end{split}
\end{align}
where
\begin{equation} \label{equation:noise-variance}
\hat{ \omega}_{n}^{2} = \rho_{n}(0) + 2 \sum_{m = 1}^{ \ell_{n}} \rho_{n}(m),
\end{equation}
\begin{equation}
\rho_{n}(m) = \frac{1}{n-5h_{n}+1} \sum_{i=0}^{n-5h_{n}} (p^{*}_{i \Delta_{n}} - \tilde{p}_{(i + 2h_{n}) \Delta_{n}}^{*})(p^{*}_{(i+m) \Delta_{n}} - \tilde{p}_{(i + 4h_{n}) \Delta_{n}}^{*}),
\end{equation}
for $m = 0, 1, \dots, \ell_{n}$, and
\begin{equation}
\tilde{p}_{i \Delta_{n}}^{*} = \frac{1}{h_{n}} \sum_{j=0}^{h_{n}-1} p_{(i+j) \Delta_{n}}^{*}.
\end{equation}

$RV_{n}^{*}$ is the sum of squared pre-averaged returns, whereas $BV_{n}^{*}$ is based on cross-products that are $k_{n}$ terms apart. As we prove in Theorem \ref{theorem:clt}, the extra distancing ensures that $BV_{n}^{*}$ is asymptotically jump-robust, whereas $RV_{n}^{*}$ is not. The scalars $c_{1}^{n}$ and $c_{2}^{n}$ are defined in Appendix \ref{appendix:proof} and depend on the weight function, $g$, and pre-averaging window, $k_{n}$.

In the above, both estimators are normalized for the effect of pre-averaging and bias-corrected for residual microstructure noise. The bias-correction employs the estimator in \eqref{equation:noise-variance} due to \citet{jacod-li-zheng:19a}. It was designed to estimate the long-run noise variance under endogeneity, heteroscedasticity, and a mixing condition allowing the autocovariances of the noise to decay at a polynomial rate (see Assumption (N) in Appendix \ref{appendix:proof}). The calculation involves a second pre-averaging parameter, $h_{n}$. The intuition is that in order to construct the sample autocovariances of the noise, we need to purge the efficient log-price component from \eqref{equation:observed-price}. Hence, demeaning of $p_{i \Delta_{n}}^{*}$ is done with a local average of the noisy log-price process (a proxy of $p_{i \Delta_{n}}$). In the empirical application, we follow the implementation from \citet{jacod-li-zheng:19a}, adopting a data-driven choice of lag length, $\ell_{n}$.

It is worth pointing out that since the main tool for jump identification is going to be based on $RV_{n}^{*} - BV_{n}^{*}$, the bias-correction in \eqref{equation:rv} cancels out in the construction of our jump test statistic. It therefore has no impact on the testing procedure but merely affects those parts of the analysis, where a pre-averaged realized measure is employed on a stand-alone basis. Even there, however, its magnitude is negligible. In our sample, the bias-correction term is an order of magnitude smaller than the uncorrected pre-averaging statistic, which is consistent with \citet{christensen-hounyo-podolskij:18a}. This suggests that our empirical findings are not in any way driven by the concrete choice of long-run noise variance estimator adopted for the bias-correction term.

\subsubsection{Asymptotic theory}

We next develop the theoretical foundation for the construction of our noise-robust jump test procedure.

\begin{theorem} \label{theorem:clt} Assume that $r$ follows the process in \eqref{equation:X} and that Assumptions (V) and (N) in Appendix \ref{appendix:proof} hold. Then, as $n \rightarrow \infty$,
\begin{equation}
RV_{n}^{*} \overset{p}{ \longrightarrow} [r]_{1} \qquad \mathrm{and} \qquad BV_{n}^{*} \overset{p}{ \longrightarrow} \int_{0}^{1} \sigma_{s}^{2} \mathrm{d}s.
\end{equation}
Moreover, provided that $\mathcal{H}_{0}$ is true, as $n \rightarrow \infty$,
\begin{equation} \label{equation:clt}
n^{1/4}
\begin{pmatrix}
RV_{n}^{*} - \int_{0}^{1} \sigma_{s}^{2} \mathrm{d}s \\[0.10cm] BV_{n}^{*} - \int_{0}^{1} \sigma_{s}^{2} \mathrm{d}s
\end{pmatrix}
\overset{d_{s}}{ \longrightarrow} N(0, \Sigma),
\end{equation}
where ``$\overset{d_{s}}{ \longrightarrow}$'' denotes convergence in law stably \citep[e.g.][Section 2.2.1]{jacod-protter:12a} and $\Sigma$ is the $2\times 2$ random asymptotic covariance matrix defined in \eqref{asymptotic-covariance-Sigma} in Appendix \ref{appendix:proof}.
\end{theorem}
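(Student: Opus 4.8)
The plan is to establish the two consistency statements first---these hold whether or not $\mathcal{H}_{0}$ is true---and then the stable central limit theorem under $\mathcal{H}_{0}$, building on the pre-averaging machinery of \citet{jacod-li-mykland-podolskij-vetter:09a} and \citet{podolskij-vetter:09a} together with the long-run noise-variance theory of \citet{jacod-li-zheng:19a}. Throughout I would decompose the pre-averaged return as $\bar{r}_{i}^{*} = \bar{r}_{i}^{c} + \bar{J}_{i} + \bar{\epsilon}_{i}^{*}$ into its pre-averaged continuous, jump, and noise components, and start from the standard localization argument that reduces Assumptions (V) and (N) to their bounded versions.

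For consistency, expand $|\bar{r}_{i}^{*}|^{2}$ and $|\bar{r}_{i}^{*}|\,|\bar{r}_{i+k_{n}}^{*}|$ into the nine cross-products of $\{\bar{r}_{i}^{c},\bar{J}_{i},\bar{\epsilon}_{i}^{*}\}$. The pre-averaging law of large numbers gives $c_{1}^{n}\sum_{i}|\bar{r}_{i}^{c}|^{2}\overset{p}{\longrightarrow}\int_{0}^{1}\sigma_{s}^{2}\mathrm{d}s$, and, using the asymptotic independence of $\bar{r}_{i}^{c}$ and $\bar{r}_{i+k_{n}}^{c}$ together with the identity $\tfrac{\pi}{2}\mathbb{E}|Z_{1}||Z_{2}| = v$ for independent $Z_{1},Z_{2}\sim N(0,v)$, it gives $c_{1}^{n}\tfrac{\pi}{2}\sum_{i}|\bar{r}_{i}^{c}|\,|\bar{r}_{i+k_{n}}^{c}|\overset{p}{\longrightarrow}\int_{0}^{1}\sigma_{s}^{2}\mathrm{d}s$. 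The pure-noise terms $c_{1}^{n}\sum_{i}|\bar{\epsilon}_{i}^{*}|^{2}$ and $c_{1}^{n}\sum_{i}|\bar{\epsilon}_{i}^{*}|\,|\bar{\epsilon}_{i+k_{n}}^{*}|$ converge to a constant multiple of the long-run noise variance $\omega^{2}$, which is exactly what the bias-correction $c_{2}^{n}\hat{\omega}_{n}^{2}$ removes by consistency of the estimator in \eqref{equation:noise-variance} under Assumption (N). All mixed signal--noise and signal--jump cross terms are $o_{p}(1)$ by Cauchy--Schwarz and the $n^{-1/4}$ magnitude of $\bar{r}_{i}^{c}$. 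For the jump contributions, a jump at $\tau$ enters $\bar{J}_{i}$ for $O(k_{n})$ indices with $|\bar{J}_{i}| = O(1)$, so $c_{1}^{n}\sum_{i}|\bar{J}_{i}|^{2}\overset{p}{\longrightarrow}\sum_{s}(\Delta r_{s}^{d})^{2}$ and hence $RV_{n}^{*}\overset{p}{\longrightarrow}[r]_{1}$; in $BV_{n}^{*}$ the staggering by $k_{n}$ ensures that $\bar{r}_{i}^{*}$ and $\bar{r}_{i+k_{n}}^{*}$ share no common jump outside $O(1)$ boundary indices, so the jump contribution is $O(c_{1}^{n}k_{n}n^{-1/4}) = O(n^{-1/4})\to 0$ and $BV_{n}^{*}\overset{p}{\longrightarrow}\int_{0}^{1}\sigma_{s}^{2}\mathrm{d}s$, which is the announced jump-robustness of $BV_{n}^{*}$ and the lack thereof for $RV_{n}^{*}$.

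For the stable CLT, now $\bar{J}_{i}\equiv 0$ and the $n^{1/4}$-scaled fluctuation of each statistic splits into a pure-signal, a signal--noise cross, and a pure-noise part, each of order $n^{-1/4}$. For the signal part I would use a big-block/small-block decomposition: partition $[0,1]$ into big blocks containing a large but fixed number of pre-averaging windows, freeze $\sigma$ within each block, replace the $\bar{r}_{i}^{c}$ by locally Gaussian surrogates (the replacement error being negligible by continuity of $\sigma$ and standard moment bounds), and treat the block sums as a martingale-difference triangular array. To this array I would apply a stable central limit theorem for such arrays \citep[see, e.g.,][]{jacod-protter:12a}: verify the conditional Lindeberg condition, identify the conditional variance---yielding the entries of $\Sigma$ as functionals of $g$, $\theta$, $\int_{0}^{1}\sigma_{s}^{4}\mathrm{d}s$, $\omega^{2}\int_{0}^{1}\sigma_{s}^{2}\mathrm{d}s$, and $\omega^{4}$, with the $BV_{n}^{*}$ entries and the $RV_{n}^{*}$--$BV_{n}^{*}$ covariance carrying extra terms generated by the overlap of consecutive cross-products---and check the orthogonality conditions (vanishing conditional covariation with the driving Brownian motion and with any bounded martingale) that upgrade weak convergence to $\mathcal{F}$-stable convergence. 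The pure-noise fluctuations are handled by a separate martingale CLT under the mixing part of Assumption (N), and I would show that $c_{2}^{n}(\hat{\omega}_{n}^{2}-\omega^{2}) = o_{p}(n^{-1/4})$ via the convergence-rate results of \citet{jacod-li-zheng:19a} with the chosen $h_{n}$ and $\ell_{n}$, so the bias-correction does not enter the limit (it also cancels entirely in $RV_{n}^{*}-BV_{n}^{*}$, which is all the jump test uses). Combining the three pieces by the Cram\'{e}r--Wold device and joint stable convergence gives \eqref{equation:clt}.

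The hard part is the CLT. Because every price observation enters roughly $k_{n}$ pre-averaged returns and, in $BV_{n}^{*}$, consecutive cross-products $|\bar{r}_{i}^{*}|\,|\bar{r}_{i+k_{n}}^{*}|$ and $|\bar{r}_{i+1}^{*}|\,|\bar{r}_{i+1+k_{n}}^{*}|$ share $k_{n}-1$ underlying returns, one must control a large number of covariance terms across the three scales simultaneously, and the non-linearity of $|\cdot|$ in $BV_{n}^{*}$ blocks a clean Gaussian computation---this is where the local Gaussian approximation inside big blocks, combined with explicit moment identities for absolute values of correlated normals, does the heavy lifting and pins down the off-diagonal entry of $\Sigma$. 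The second technical hurdle is showing that the data-driven $\hat{\omega}_{n}^{2}$ is accurate to $o_{p}(n^{-1/4})$ when the noise autocovariances are allowed to decay only at a polynomial rate.
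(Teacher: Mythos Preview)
Your high-level architecture---localization, big-block/small-block, freeze volatility on blocks, apply a martingale-array stable CLT with the usual orthogonality checks---matches the paper's. The genuine gap is the signal/noise separation. You propose to ``expand $|\bar r_i^*|\,|\bar r_{i+k_n}^*|$ into the nine cross-products'' of $(\bar r_i^c,\bar J_i,\bar\epsilon_i^*)$ and treat pure-signal, pure-noise, and cross parts separately. That works for $|\bar r_i^*|^2$ (squares expand), but it fails for the bipower product: $|a+b+c|\neq |a|+|b|+|c|$, so there is no nine-term expansion, and in particular $c_1^n\tfrac{\pi}{2}\sum_i|\bar r_i^c+\bar\epsilon_i^*|\,|\bar r_{i+k_n}^c+\bar\epsilon_{i+k_n}^*|$ converges to $\int_0^1\sigma_s^2\,\mathrm{d}s+\tfrac{\psi_1}{\theta^2\psi_2}\int_0^1\rho^2\omega_s^2\,\mathrm{d}s$ via the \emph{combined} variance $\theta\psi_2\sigma_s^2+\theta^{-1}\psi_1\rho^2\omega_s^2$, not to a sum of a ``pure-signal'' limit $\int\sigma^2$ and a ``pure-noise'' limit. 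For the same reason your CLT split into ``pure-signal, signal--noise cross, pure-noise'' fluctuations does not go through for $BV_n^*$.

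The paper's route avoids this by never separating signal and noise: it freezes \emph{both} $\sigma$ and $\omega$ at the block start, works with the approximant $\bar r_{i,m}^*=\sigma_{m\Delta_n}\!\cdot\!(\text{pre-averaged }W)+\omega_{m\Delta_n}\!\cdot\!(\text{pre-averaged }\pi)$, and then proves $\mathbb E\!\left[|n^{1/4}\bar r_{i,m}^*|^q|n^{1/4}\bar r_{i+k_n,m}^*|^r\,\middle|\,\mathcal H_m^n\right]=\mu_q\mu_r(\theta\psi_2\sigma_{m\Delta_n}^2+\theta^{-1}\psi_1\rho^2\omega_{m\Delta_n}^2)^{(q+r)/2}+o_p(n^{-1/4})$ via an Edgeworth expansion for $\ell$-dependent summands (verifying the G\"otze--Hipp/Lahiri conditions, which is exactly where Assumption~(N,iv) on the characteristic function enters). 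This is the step that replaces your ``local Gaussian surrogate'' idea: the pre-averaged noise is \emph{not} Gaussian, and you need the expansion to kill the $n^{-1/4}$-order correction in the centering. Two smaller points: under Assumption~(N) the noise scale $\omega_t$ is itself an It\^o process, so the bias target is $\int_0^1\rho^2\omega_s^2\,\mathrm{d}s$ (not a constant $\omega^2$), and the entries of $\Sigma$ are integrals of the function $h_{ij}((\rho\omega_u,\sigma_u),\cdot,\cdot)$ rather than polynomials in $\int\sigma^4$, $\omega^2\!\int\sigma^2$, $\omega^4$; and the $o_p(n^{-1/4})$ accuracy of $\hat\omega_n^2$ is proved directly in the paper (its Lemma on the noise estimator) rather than cited, because the heteroscedasticity-plus-dependence combination is not covered verbatim by \citet{jacod-li-zheng:19a}.
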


Theorem \ref{theorem:clt} establishes a weak law of large numbers for the pre-averaged realized variance and pre-averaged bipower variation in the noisy jump-diffusion model under $\mathcal{H}_{a}$. It shows that $RV_{n}^{*}$ is consistent for the quadratic variation, including the jump part, while $BV_{n}^{*}$ converges to the integrated variance, excluding the jump part. Their difference therefore estimates the quadratic jump variation, $RV_{n}^{*} - BV_{n}^{*}  \overset{p}{ \longrightarrow} \sum_{0 \leq s \leq 1} ( \Delta r_{s}^{d})^{2}$ which is zero in the absence of jumps. The last part of Theorem \ref{theorem:clt} establishes a bivariate asymptotic mixed normal distribution of $RV_{n}^{*}$ and $BV_{n}^{*}$ as estimators of integrated variance under $\mathcal{H}_{0}$.

Applying the delta rule to \eqref{equation:clt}, it follows from Theorem \ref{theorem:clt} that, under $\mathcal{H}_{0}$:
\begin{equation} \label{equation:infeasible-test-statistic}
\mathcal{J}_{n}^{ \text{inf.}} = \frac{n^{1/4} \big( RV_{n}^{*} - BV_{n}^{*} \big)}{\sqrt{v^{ \top}  \Sigma v}} \overset{d}{ \longrightarrow} N(0,1),
\end{equation}
where $v = [1,-1]^{ \top}$. In contrast, $\mathcal{J}_{n}^{ \text{inf.}} \overset{p}{ \longrightarrow} \infty$ under the alternative, $\mathcal{H}_{a}$, so large values of the test statistic point toward the presence of price jumps.

\subsubsection{A noise- and jump-robust covariance estimator}

The test statistic $\mathcal{J}_{n}^{ \text{inf.}}$ is not feasible since it depends on the covariance matrix $\Sigma$, which is a function of the latent volatility and microstructure noise processes. We therefore next develop a noise- and jump-robust plug-in estimator of $\Sigma$.\footnote{In principle, we only need to estimate $\Sigma$ given that $\mathcal{H}_{0}$ is true. However, by using a jump-robust estimator of the asymptotic covariance matrix we avoid losing power under $\mathcal{H}_{a}$.} Specifically, we propose to set
\begin{equation} \label{equation:Sigma-n}
\Sigma_{n}^{*} = \frac{1}{L} \sum_{l = 1}^{L} \Bigg( \frac{n^{1/4}}{ \sqrt{L}} \begin{bmatrix}  \check{BV}_{l}^{*}(2,0) - \check{BV}_{n}^{*}(2,0) \\[0.10cm] \check{BV}_{l}^{*}(1,1) - \check{BV}_{n}^{*}(1,1) \end{bmatrix} \Bigg) \Bigg( \frac{n^{1/4}}{ \sqrt{L}} \begin{bmatrix}  \check{BV}_{l}^{*}(2,0) - \check{BV}_{n}^{*}(2,0) \\[0.10cm] \check{BV}_{l}^{*}(1,1) - \check{BV}_{n}^{*}(1,1) \end{bmatrix} \Bigg)^{ \top},
\end{equation}
where $\check{BV}_{n}^{*}(q,r)$ is a truncated pre-averaged bipower variation of \citet*{christensen-hounyo-podolskij:18a}. The details are in Appendix \ref{appendix:subsampler}.

$\Sigma_{n}^{*}$ is the sample covariance matrix of truncated subsampled pre-averaged bipower variation estimators, $\check{BV}_{l}^{*}(q,r)$, calculated on small batches of high-frequency data covering blocks of length $pk_{n}$, for $l = 1, \dots, L$, where $p \geq 2$ is an integer, and $L$ is the number of subsamples. They are centered around the full sample statistic, $\check{BV}_{n}^{*}(q,r)$, and suitably normalized.

\begin{theorem} \label{theorem:subsampler}
Assume that $r$ follows the process in \eqref{equation:X}, that Assumptions (V) and (N) in Appendix \ref{appendix:proof} hold, as do the conditions listed in Theorem \ref{theorem:subsampler-general} therein. Then, as $n \rightarrow \infty$,
\begin{equation}
\Sigma_{n}^{*} \overset{p}{ \longrightarrow} \Sigma.
\end{equation}
\end{theorem}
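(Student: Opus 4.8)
The estimator $\Sigma_n^*$ is a scaled sample covariance matrix of truncated subsampled pre-averaged bipower variations, so the plan is to adapt the subsampling consistency argument of \citet{christensen-podolskij-thamrongrat-veliyev:17a} to the present bivariate, jump-contaminated setting, using the truncated pre-averaged bipower variation of \citet{christensen-hounyo-podolskij:18a} to neutralize the jumps. I would start with a localization and jump-removal step: by a standard localization argument one may assume the drift, volatility, volatility-of-volatility and noise coefficients appearing in Assumptions (V) and (N) are bounded (patch the conclusions back over a countable partition of $\Omega$). Because each $\check{BV}_l^*(q,r)$ and $\check{BV}_n^*(q,r)$ is truncated, the contribution of the jump component $r^d$ to every subsampled statistic is negligible of the relevant order, uniformly in $l$ --- this is the mechanism of \citet{christensen-hounyo-podolskij:18a} applied block by block --- and it reduces the problem to the purely continuous model $r = r^c$, in which $\check{BV}_n^*(q,r)$ and each $\check{BV}_l^*(q,r)$ consistently estimate integrated variance (over $[0,1]$ and over the $l$-th block, respectively, after the normalizations fixed in Appendix \ref{appendix:subsampler}), so that the recentering $\check{BV}_l^*(q,r)-\check{BV}_n^*(q,r)$ isolates a local fluctuation.

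The heart of the argument is a block-level conditional central limit theorem. Write $Y_l = \frac{n^{1/4}}{\sqrt{L}}\big(\check{BV}_l^*(2,0)-\check{BV}_n^*(2,0),\ \check{BV}_l^*(1,1)-\check{BV}_n^*(1,1)\big)^{\top}$, so that $\Sigma_n^* = \frac{1}{L}\sum_{l=1}^{L} Y_l Y_l^{\top}$. I would show that, conditionally on the $\sigma$-field $\mathcal F$ generated by the volatility and noise paths, $Y_l$ is asymptotically mixed Gaussian with conditional covariance $\frac{1}{L}\,\Psi(\sigma^2_{\tau_l}, a_{\tau_l}) + o_p(L^{-1})$, where $\tau_l\in[0,1]$ is the location of block $l$, $a_s$ is the spot (long-run) noise variance, and $\Psi(\cdot,\cdot)$ is the spot density whose time integral is the matrix in \eqref{asymptotic-covariance-Sigma}, i.e. $\Sigma = \int_0^1 \Psi(\sigma_s^2,a_s)\,\mathrm{d}s$. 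This step requires the moment expansions that already underlie Theorem \ref{theorem:clt}: approximating the pre-averaged noisy returns on a block by a stationary Gaussian moving-average array with parameters frozen at $(\sigma^2_{\tau_l}, a_{\tau_l})$, computing the covariances of $|\bar r^*|^2$ and of the bipower cross-products $|\bar r^*_i||\bar r^*_{i+k_n}|$ within the block, and controlling the edge effects where consecutive subsamples meet. One must also check that centering by the \emph{global} statistic $\check{BV}_n^*(q,r)$ rather than by the block-restricted integrated variance perturbs $Y_l$ only by a term of smaller order, since $\check{BV}_n^*(q,r)-\int_0^1\sigma_s^2\,\mathrm{d}s = O_p(n^{-1/4})$ is negligible at the $Y_l$ scale, and that the residual-noise bias correction entering $\hat\omega_n^2$ does not affect the limit.

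For the aggregation, decompose $Y_lY_l^{\top} = \mathbb E[Y_lY_l^{\top}\mid\mathcal F] + \big(Y_lY_l^{\top}-\mathbb E[Y_lY_l^{\top}\mid\mathcal F]\big)$. By the block-level result and the c\`{a}dl\`{a}g regularity of $s\mapsto(\sigma_s^2,a_s)$ from Assumption (V), $\frac{1}{L}\sum_l \mathbb E[Y_lY_l^{\top}\mid\mathcal F] = \frac{1}{L}\sum_l \Psi(\sigma^2_{\tau_l},a_{\tau_l}) + o_p(1)$ is a Riemann sum converging to $\int_0^1\Psi(\sigma_s^2,a_s)\,\mathrm{d}s = \Sigma$; here $L\to\infty$ and the block length $pk_n/n = p\theta/\sqrt{n}+o(\cdot)\to 0$, both of which follow from the conditions listed in Theorem \ref{theorem:subsampler-general}. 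For the centered remainder, I would bound $\mathbb E\big[\big\|\tfrac{1}{L}\sum_l(Y_lY_l^{\top}-\mathbb E[Y_lY_l^{\top}\mid\mathcal F])\big\|^2\mid\mathcal F\big]$: subsampled statistics built from blocks separated by more than one block are conditionally uncorrelated and adjacent ones have covariance $O(L^{-2})$, so the double sum collapses to $O(L^{-1})\to 0$, using the uniform control of the fourth conditional moments of the pre-averaged returns supplied by the localization. Combining the two pieces yields $\Sigma_n^*\overset{p}{\longrightarrow}\Sigma$.

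The main obstacle is the second step: establishing the block-level conditional mixed-normality with the correct spot covariance $\Psi(\sigma^2_{\tau_l},a_{\tau_l})$, \emph{uniformly} over the $L\sim\sqrt{n}$ subsamples, while explicitly tracking (i) the negligibility of truncation, (ii) the negligibility of the global recentering and of the $\hat\omega_n^2$ correction, and (iii) the edge effects at block boundaries --- essentially re-deriving the variance in Theorem \ref{theorem:clt} in a localized, small-window form. The remaining ingredients (localization, Riemann-sum convergence, and the law of large numbers for weakly dependent triangular arrays) are routine given the machinery of \citet{christensen-podolskij-thamrongrat-veliyev:17a} and \citet{christensen-hounyo-podolskij:18a}.
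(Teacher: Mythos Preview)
Your proposal rests on a misconception about the subsample design. In the paper's construction (Appendix~\ref{appendix:subsampler}), the $l$-th subsampled statistic is
\[
\check{BV}_{l}(q,r)=\frac{Lpk_n}{n}\sum_{i=1}^{n/Lpk_n} v_{(i-1)L+l}(q,r)^{n},
\]
so subsample $l$ aggregates blocks $l,\,L+l,\,2L+l,\ldots$ spread \emph{uniformly across the whole interval} $[0,1]$; it is not a single block located at some $\tau_l$. Consequently each $\check{BV}_l^*(q,r)$ is itself a consistent estimator of $\int_0^1\sigma_s^2\,\mathrm{d}s$, and the recentered vector $Y_l$ has conditional covariance approximately equal to the full integrated matrix $\Sigma$, not a spot quantity $\Psi(\sigma^2_{\tau_l},a_{\tau_l})$. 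Your block-level mixed-normal CLT ``at $\tau_l$'' and the Riemann-sum step that follows are therefore not applicable (and the scaling inconsistency you wrote---$\mathbb{E}[Y_lY_l^\top\mid\mathcal F]=\tfrac{1}{L}\Psi(\cdot)$ yet $\tfrac{1}{L}\sum_l\mathbb{E}[\cdot]=\tfrac{1}{L}\sum_l\Psi(\cdot)$---is a symptom of this misreading).

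The paper's route is more direct. After polarization to the univariate case, it introduces the auxiliary subsampler $\Sigma_n'$ built from \emph{untruncated} pre-averaged returns of the continuous part $r^c$ plus noise, and invokes \citet{christensen-podolskij-thamrongrat-veliyev:17a} (Theorem~3.8) for $\Sigma_n'\overset{p}{\to}\tilde\Sigma$ outright; no block-level CLT is re-derived. The substantive work is proving $\tilde\Sigma_n-\Sigma_n'\overset{p}{\to}0$, which via Cauchy--Schwarz reduces to the uniform bound $\sup_{j}\mathbb{E}\big[\big|\tfrac{n^{1/4}}{\sqrt L}\lambda_j^n(q,0)\big|^4\big]\to 0$, where $\lambda_j^n(q,0)=n^{q/4}\big(|\bar r_j'+\bar r_j^d|^q\mathbbm{1}_{\{|\bar r_j^*|\le u_n\}}-|\bar r_j'|^q\big)$. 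The crux is a pointwise inequality for $\big||x+y|^q\mathbbm{1}_{\{|x+y|\le u\}}-|x|^q\big|$ combined with moment bounds on $\bar r_j'$, $\bar r_j^d$, and $|\bar r_j^d|\wedge u_n$; the conditions on $\bar\omega$, $\delta$, and $\beta\le 1$ in Theorem~\ref{theorem:subsampler-general} arise precisely from balancing the four resulting error terms. This is the step you compressed into one sentence (``the mechanism of \citet{christensen-hounyo-podolskij:18a} applied block by block''), but it is where the novelty relative to \citet{christensen-podolskij-thamrongrat-veliyev:17a} lies---handling jumps and the general noise simultaneously---and it cannot be replaced by a spot-covariance Riemann argument.
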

Theorem \ref{theorem:subsampler} shows that $\Sigma_{n}^{*}$ is consistent both under the null and alternative. Our proof confines attention to jump processes of bounded variation, which is standard when inference is based on the continuous part of the process.

By Slutsky' theorem, we obtain the feasible jump test statistic:
\begin{equation} \label{equation:feasible-test-statistic}
\mathcal{J}_{n} = \frac{n^{1/4} \big( RV_{n}^{*} - BV_{n}^{*} \big)}{\sqrt{v^{ \top}  \Sigma_{n}^{*} v}}.
\end{equation}
The finite sample properties of $\mathcal{J}_{n}$ under the alternative can be further improved if we also replace $BV_{n}^{*}$ in the numerator of \eqref{equation:feasible-test-statistic} with a truncated version:\footnote{On the one hand, ``big'' jumps induce an upward bias in $BV_{n}^{*}$ for realistic sample sizes, causing a downward bias in the estimated jump variation and reducing the power of the jump test. However, such biases can readily be handled through truncation. On the other hand, the bipower mechanism---multiplication of adjacent log-returns---is an effective tool to get rid of ``small'' jumps that survive any truncation. This approach of cracking down twice on the jump component was proposed by \citet*{corsi-pirino-reno:10a} for bipower variation in the noise-free setting. In Appendix \ref{appendix:proof}, we show that the substitution of $BV_{n}^{*}$ with $\check{BV}_{n}^{*}$ in the numerator does not alter the asymptotic theory in Theorem \ref{theorem:clt}.}
\begin{equation} \label{equation:test-statistic}
\mathcal{J}_{n} = \frac{n^{1/4} \big( RV_{n}^{*} - \check{BV}_{n}^{*} \big)}{\sqrt{v^{ \top}  \Sigma_{n}^{*} v}},
\end{equation}
where $\check{BV}_{n}^{*} \equiv \check{BV}_{n}^{*}(1,1)$.

The final form of $\mathcal{J}_{n}$ in \eqref{equation:test-statistic} serves as our noise-robust jump test statistic.\footnote{\citet*{barndorff-nielsen-shephard:06a} advocate a log- and ratio-based transformation of the noise-free version of \eqref{equation:test-statistic} via the delta method. We implemented both these alternative versions of the noise-robust jump test statistic, which does not impact any of our conclusions.} A one-sided test procedure is appropriate here, so the decision rule is to reject $\mathcal{H}_{0}$ at significance level $\alpha$ if $\mathcal{J}_{n} > \Phi^{-1}(1 - \alpha)$, where $\Phi(x)$ is the standard normal distribution function. This implies that:
\begin{equation}
\mathbb{P} \Big( \mathcal{J}_{n} >  \Phi^{-1}(1 - \alpha) \Big) \rightarrow
\begin{cases}
\alpha, & \text{under } \mathcal{H}_{0}, \\[0.10cm]
1, & \text{under } \mathcal{H}_{a},
\end{cases}
\end{equation}
so our test is asymptotically unbiased and consistent.

\subsection{Monte Carlo study}

In Appendix \ref{appendix:simulation}, we conduct a thorough Monte Carlo investigation of the finite-sample properties of our jump test. We simulate from a general jump-diffusion process which gets affected by different forms of microstructure noise with an extreme magnitude matching what we observe empirically in the after-hours trading session. Moreover, we investigate the robustness of our test to various degrees of pre-averaging, subsampling, and jump-truncation.

We compare our approach to the noise-robust test of \citet{ait-sahalia-jacod-li:12a} and the noise-free version from \cite{barndorff-nielsen-shephard:06a}. The former is also based on pre-averaging to compress the noise but the building block in the test statistic is the ratio of power variation estimators sampled at different frequencies, in contrast to the difference of multipower variation estimators sampled at identical frequencies as advocated here.\footnote{In Appendix \ref{appendix:close-to-open}, we inspect the jump test of \citet{lee-mykland:08a, lee-mykland:12a}. Whereas our test \citep[and][]{ait-sahalia-jacod-li:12a} targets the cumulative return variation over a discrete time interval, i.e., quadratic variation and integrated variance, the latter is a point-in time test based on spot volatility. In principle, a local jump test over a short post-announcement window is more appropriate for analyzing price responses to earnings announcements because we know the announcement time (up to minimal rounding). However, the regime-switching nature of post-market high-frequency data—–in particular, the complete lack of liquidity on non-announcement days—renders their test unsuited for the empirical analysis conducted in this paper. Our reliance on integrated measures of return variation over the regular and extended trading session leads to a direct measure of incremental volatility in the after-hours market (typically large on announcement days and close to zero on non-announcement days), which can be further dissected into a continuous variation and jump component. In the appendix, we adapt the noise-free \citet{lee-mykland:08a} testing approach by designing a close-to-open return-based jump test that can be applied to the overnight period even when high-frequency data for the extended trading session is not available. We implemented this version of their test on a much larger cross-section of generally less liquid stocks. The conclusion from this exercise supports the corresponding evidence below.}

The key findings can be summarized as follows.
\begin{enumerate}
\item Our test has correct size and does not systematically over- or under-reject $\mathcal{H}_{0}$. The sole exception occurs when the noise is fat-tailed and the pre-averaging window is too short to adequately accommodate this which leads to a modest overrejection. Under $\mathcal{H}_{a}$, our test has good power properties and correctly identifies the jump component.

\item The \citet{ait-sahalia-jacod-li:12a} test is slightly more sensitive to the structure of microstructure noise. In particular, their test is very conservative under $\mathcal{H}_{0}$, leading to loss of power under $\mathcal{H}_{a}$, with conditionally autocorrelated noise. We emphasize that this setting was not allowed in their theoretical framework. We do expect that their test statistic can be modified to exhibit good properties also in that setting, though this version remains to be developed. Otherwise, the conclusion of their test is often close to ours.

\item In sharp contrast, the \citet{barndorff-nielsen-shephard:06a} test based on 5-minute realized variance and bipower variation degrades in the presence of the high level of microstructure noise typical for after-hours trading. In particular, it identifies too many spurious jumps, falsely rejecting the null when no jumps are present. Conversely, this test frequently fails to identify true jumps, falsely accepting the null when jumps are genuinely present.
\end{enumerate}

In summary, our simulations demonstrate both the need for and benefits from utilizing our noise-robust jump test on the high-frequency after-hours transaction price data examined in our empirical application, which we next proceed to.

\section{Data description} \label{section:data}

Earnings announcements should trigger jumps in stock prices if investors efficiently and instantaneously process the new earnings information. In this section, we introduce the data sources employed to study if this prediction is supported empirically.

\subsection{Transaction and quotation data}

Jump tests benefit crucially from having data at the highest possible resolution, so our analysis employs tick-by-tick transaction and quotation data on individual firms' stock prices to track their response to corporate earnings announcements.

U.S. stock exchanges are only open for trading on normal business days from 9:30am--4:00pm Eastern time.\footnote{The major national stock exchanges close at 1:00pm the day before big public holidays, such as July 4th, Thanksgiving, and Christmas.} On most days, official pre-market trading is available from 4:00am--9:30am, while the after-hours market runs from 4:00pm--8:00pm.\footnote{The trading schedule of NASDAQ- and NYSE-listed securities can be gauged at \path{https://www.nasdaq.com/stock-market-trading-hours-for-nasdaq} and \path{https://www.nyse.com/markets/nyse-arca/market-info}.} Orders can additionally be submitted and trades executed at any time via private trading systems such as electronic communication networks (ECNs) or dark pools. The consolidated tape collates such real-time tick-by-tick data, which is made available for purchase via the New York Stock Exchange (NYSE) Trade and Quote (TAQ) database. In practice, there is usually limited trading activity before 6:00am or after 6:30pm, so we exclude this part of the day.

As noted earlier, the vast majority of U.S. companies publish their financial results outside the regular trading session. They may prefer to announce at these times because these markets are mostly comprised of professional investors and informed traders, speeding up the price discovery process \citep*[see][]{jiang-likitapiwat-mcinish:12a}. In Appendix \ref{appendix:pre-market}, we examine the trading activity in the pre-market (6:00am--9:30am) and after-hours market (4:00pm--6:30pm). We find that trading volume is much lower in the pre-market, even on announcement days. This impairs the implementation of our jump test so our analysis focuses on the after-hours market. Following \citet*{dubinsky-johannes-kaeck-seeger:19a}, we look at the 50 firms with the largest after-hours trading volume on announcement days.\footnote{To identify these firms, we sort the constituents of the S\&P 500 index as of 12/31/2020 by their post-close (4:00pm--6:30pm) transaction volume on quarterly earnings announcement days, averaged over the five-year period from 01/01/2015--31/12/2019. We further require that the firms report at least 75\% of their earnings announcements after the close of the regular trading session during our full sample period 06/02/2008--12/31/2020. We exclude Xilinx (XLNX, ranked 43th) from the original list since the analyst information is missing in the I/B/E/S database and replace it by CSX corporation (CSX).} This is appropriate not only because individual transaction counts are prohibitively large, but also because liquidity evaporates fast as we descend to stocks with lower after-hours market liquidity. The selected stocks account for a sizeable fraction of trading activity in the after-hours market among all firms in the S\&P 500 index during our sample (46.89\% of the total transaction count vs. 21.31\% in the regular trading session).

We download high-frequency data from the NYSE TAQ database for the stocks over the sample period 06/02/2008 through 12/31/2020, a total of $T = 3{,}169$ business days. The data is cleaned using state-of-the-art procedures \citep*[e.g.][]{barndorff-nielsen-hansen-lunde-shephard:09a, christensen-oomen-podolskij:14a}. The main distinction here is that we retain observations with a timestamp outside the official exchange trading hours if they otherwise fulfill the conditions required to be tagged non-erroneous.\footnote{A complete list of sale conditions that can be associated with a transaction is available in the documentation of the Daily TAQ Client Specification manual available for download at: \path{http://www.nyxdata.com/Data-Products/Daily-TAQ}. In particular, the label ``T'' identifies pre-market and after-hours trades. These are virtually always ignored in the high-frequency volatility literature, since the first rule of the widely used \citet*{barndorff-nielsen-hansen-lunde-shephard:09a} filtering algorithm instructs researchers to ``\textit{P1. Delete entries with a time stamp outside the 9:30am--4:00pm window when the exchange is open.}''} In total, we preserve almost ninety billion (89,640,467,708) quotations and nearly eight billion (7,985,202,326) transactions.

We employ tick time sampling, whereby the transaction price process is sampled with every price change, as advocated by \citet{griffin-oomen:08a}. Apart from rendering the log-return series potentially more homogeneous, this approach has the convenient effect of removing transactions that merely repeat the previous price. Thus, we purge zero returns that are known to be detrimental for jump identification \citep[e.g.,][]{bandi-kolokolov-pirino-reno:23a, kolokolov-reno:24a}. Hence, our tick data are inevitably irregularly spaced, but the pre-averaging theory readily conforms with such an extension.\footnote{Whether the sampling scheme is equidistant or irregular, the effective sampling is random, as pointed out by \citet{bandi-kolokolov-pirino-reno:23a}. Pre-averaging complies with sampling random observation times that can be endogenous (i.e. dependent on the price process and microstructure noise), see \citet{jacod-li-zheng:19a, mykland-zhang:16a}. In practice, transaction times are likely endogenous \citep[e.g.][]{kolokolov-livieri-pirino:22a}. This is almost unavoidable when studying price movements after an earnings announcement. The impact of endogenous sampling times on realized variance has been studied by \citet{li-mykland-renault-zhang-zheng:14a} in the noise-free setting.} Finally, we aggregate any remaining observations with identical timestamps and replace them with an average price, a procedure that is known to further alleviate microstructure noise.

Table \ref{table:sp500-volume-pm.tex} reports descriptive statistics about the after-hours trading volume of stocks 1--25, while Table \ref{table:sp500-volume-pm-supplemental.tex} in Appendix \ref{appendix:supplemental} contains the information of stocks 26--50.\footnote{The corresponding table for the pre-market trading activity can be found in Appendix \ref{appendix:pre-market}.} We show the sample average, standard deviation, and the 25th, 50th, 75th and 99th percentiles of the transaction count distribution for days without earnings announcements (columns 2--8) and days with earnings announcements (columns 9--15). For many firms, the after-hours transaction count is two orders of magnitude larger on days with earnings announcements relative to days without them. For example, Facebook (FB) averaged 76,129 after-hours transactions on earnings announcement days, compared with only 891 transactions on days with no such announcements. It is clearly uncommon for individual firms to exhibit high transaction counts in the after-hours trading session on days without earnings announcements.

Transaction volume in the after-hours market has grown substantially between 2008 and 2020 with more than a ninefold increase for the stocks we analyze in our paper. The fraction of after-hours trading has grown even more, quadrupling from 0.17\% to 0.71\%, as a proportion of the total transaction volume (including trades from the regular trading session). On days with earnings announcements this fraction is substantially higher at the firm-level, frequently exceeding 10\%. Appendix \ref{appendix:after-hours-market} summarizes trading volume and spreads in the after-hours market over time.

\subsection{Announcement timing}

While most companies announce their financial results at a fixed time, for others the timing can vary quarter-by-quarter. Even slight inaccuracies in the recording of the announcement time can impede our calculation of the post-announcement price reaction, so accurate timing is imperative and we spend considerable effort on double checking the time stamps.

We begin by gathering announcement times from a data set provided by Wall Street Horizon (\path{https://www.wallstreethorizon.com/}) that offers a comprehensive suite of corporate events data, including detailed information on the timing of earnings announcements. Wall Street Horizon stores the timestamp included with the press release issued by the company when it publicly announces its quarterly results.\footnote{A publicly traded company with classes of securities registered in Section 12 or subject to Section 15(d) of the Securities Exchange Act of 1934 is legally required to file regular reports with the SEC, including an annual form 10-K and quarterly form 10-Q's, in addition to proxy reports and other reporting requirements. Most companies file form 10-Q either 40 or 45 days after the end of the fiscal quarter, but many large corporations choose to summarize their financial statements at an earlier date by issuing a press release and filing a form 8-K.}

As a double-check,  we hand collect a separate set of announcement times from the Factiva news archive by searching on the announcement day for press releases marked with the company ticker code and a designated ``Earnings'' subject.\footnote{A limitation of both data sets is that announcement times are rounded to the nearest minute. However, most of the earnings reports in our sample are released to the public through professional dissemination services, such as Business Wire or PR Newswire, which are under strict requirements to ensure immediate and equal access to company information as noted by the SEC's Regulation Fair Disclosure. The audit-trail information left from the high-frequency data after each announcement suggests that these vendors often publish announcements close to a whole minute such that the rounding effect is typically negligible.}

Comparing the merged set of announcement times with the audit trail of the complete tick-by-tick transaction price history suggests that occasionally both Wall Street Horizon and Factiva get it wrong. To correct seemingly erroneous timestamps, we design a conservative screening algorithm that infers the announcement time based on the price volatility. In particular, we calculate the log-return for each minute from 4:00pm--6:30pm. If the largest absolute one-minute log-return exceeds 2.5\% and precedes the announcement time from both Wall Street Horizon and Factiva, we assume the earnings were released at the earlier time. The filter affects only a very limited number of announcements (77 in total, or 3.5\%). The median change is one minute, suggesting that the algorithm primarily captures small rounding effects for some of the most important earnings reports that lead to substantial revisions of the security price.\footnote{Importantly, our noise-robust jump test statistic does not depend on the announcement time and so is unaltered by the screening algorithm. Moreover, employing only Wall Street Horizon and Factive information does not cause any discernible change in the post-announcement return regression (Section \ref{section:jump-size-determinant}). Only the returns from our trading strategy (Section \ref{section:trading-strategy}) deteriorate slightly, but remain significant with the removal of the inferred announcement time.}

The third-to-last column in Table \ref{table:sp500-volume-pm.tex} shows the modal announcement time for each firm, which ranges from as early as 4:00pm for reports released immediately after the close of the regular exchange-trading session to as late as 4:30pm, with many firms announcing around 4:05pm.

\subsection{Earnings surprises}

Earnings announcements contain a wealth of information but the most important part is arguably the headline earnings figure. To compute the surprise element in earnings announcements, we employ the Institutional Brokers Estimate System (I/B/E/S) database available via subscription to Refinitiv (\path{https://www.refinitiv.com/}), formerly a part of Thomson Reuters. I/B/E/S aggregates earnings information for over 22,000 companies. In addition to storing the actual reported earnings per share (EPS), adjusted for non-recurring items and stock option expenses, prior to each announcement I/B/E/S surveys earnings forecasts from leading professional analysts covering individual companies. We compile this information into a consensus estimate ($\mu_{ \mathrm{EPS}}$) and a standard deviation of that estimate calculated over the distribution of analyst expectations ($\sigma_{ \mathrm{EPS}}$). Following \citet*{berkman-truong:09a}, \citet*{michaely-rubin-vedrashko:13a}, among others, our measure of the standardized unexpected earnings is given by\footnote{Some studies employ the median analyst estimate instead of the sample average \citep*{dellavigna-pollet:09a, gregoire-martineau:22a, hartzmark-shue:18a}. Others replace the standard deviation with the lagged closing price of the stock, e.g. from a week before the announcement or at the end of the previous quarter \citep*{gregoire-martineau:22a, lyle-stephan-yohn:21a}. Our main findings are not altered by adopting these alternative ways of defining the standardized unexpected earnings.}
\begin{equation} \label{equation:eps-surprise-measure}
z_{ \mathrm{EPS}} = \frac{ \mathrm{Actual \ EPS} - \mu_{ \mathrm{EPS}}}{ \sigma_{ \mathrm{EPS}}}.
\end{equation}

We purge announcements with no analyst forecasts and those where $\sigma_{ \mathrm{EPS}}$ is less than 0.001 (one tenth of a cent) which result in an unstable $z_{ \mathrm{EPS}}$ measure. This removes five announcements from Ulta Beauty (ULTA) that was still a relatively overlooked small-cap stock with limited broker coverage at the beginning of our sample. A single announcement from Broadcom (AVGO) is also eliminated. We further exclude announcements where share trading was temporarily suspended by the exchange as it leads to a disconnected trade sequence.\footnote{Trading halts are registered in the daily TAQ master files that contain static information for securities traded by participants of the Consolidated Tape Association (CTA), mainly stocks with NYSE as primary exchange, and Unlisted Trading Privileges (UTP), or NASDAQ-listed issues. The file has a letter code in the ``Halt Delay Reason'' if trading was paused (blank otherwise). Resumption of trading is flagged in the transaction data with a sale condition ``5'' defined as ``Market Center Re-Opening Trade'' (CTA) or ``Re-opening Prints'' (UTP).} Finally, following \citet*{gregoire-martineau:22a} we discard announcements where $z_{ \mathrm{EPS}}$ exceeds ten in absolute value. Such outliers are often caused by extraordinary items in the profit statement.\footnote{For example, on 08/25/2020 Salesforce (CRM) posted a quarterly EPS of \$1.44 after adjusting for non-recurrent items. With a consensus estimate of \$0.66 and a standard deviation of about four cents, this yields a twenty-sigma event. The surprise was motivated by mark-to-market accounting for the company’s investments in nCino, which saw a \$617 million unrealized gain in the quarter.} The last step removes 40 observations (about 1.7\% of the sample).\footnote{The I/B/E/S database incorrectly puts the EPS of Akamai Technologies (AKAM) as 0.31, 0.29, 0.28, and 0.34 for the Q1--Q4 earnings announcement of the fiscal year 2009. This induces a -9.64, -2.12, -8.86, and -7.80 unexpected standardized earnings surprise, despite the company meeting or beating expecations at all occasions. This is possibly caused by AKAM employing non-GAAP financial metrics in their accounting system, in particular a so-called normalized net income, which eliminates the effects of events
that are either not part of the company’s core operations or are non-cash. We replaced the wrong EPS numbers with the correct ones, i.e. 0.43, 0.40, 0.38, and 0.46, which were extracted directly from the PDF files containing the financial statements, available at the company's investor relations website (\path{www.ir.akamai.com}).}

The number of quarterly earnings announcements retained for each company is shown in the fourth-to-last column in Table \ref{table:sp500-volume-pm.tex}. We report summary statistics of $z_{ \mathrm{EPS}}$ in the second-to-last column. Standardized unexpected earnings are substantially higher than zero on average, consistent with the notion that managers engage in earnings smoothing to avoid negative surprises.

\section{Empirical results} \label{section:empirical}

We next proceed with our empirical investigation. Price dynamics and trading activity in the after-hours market is not as thoroughly studied and understood as that in the regular trading session and it has evolved significantly since the seminal analysis of \citet*{barclay-hendershott:03a}. We therefore begin by briefly summarizing some of the key features of the after-hours market, emphasizing the contrast between days with and without earnings announcements. We start by analyzing the pattern in trading volumes and bid-ask spreads in a short interval around earnings announcements before turning to the evidence on the presence of price jumps.

\subsection{Trading volume and bid-ask spread}

To get a broad picture of post-announcement trades, for each five-second interval in a one-hour window centered around the announcement time, we calculate transaction counts and bid-ask spreads (in basis points) relative to the $\mathrm{midquote} = (\mathrm{bid} + \mathrm{ask})/2$ as
\begin{equation}
\text{Spread} = 10000 \times \frac{ \mathrm{ask} - \mathrm{bid}}{ \mathrm{midquote}},
\end{equation}

As a control sample, for each announcement we select a non-announcement date at random and calculate transaction counts and $\text{Spread}$ on a corresponding window.

Panel A in Figure \ref{figure:trade-spread-announcement} reports the transaction count per five-second interval on days with and without earnings announcements. The modal announcement time across companies is 4:05pm so the market typically closes five minutes prior to the announcement (labeled time $0$), as highlighted in the figure. The transaction volume right before the regular market close is notably higher on days with announcements than on days without announcements, but both drop sharply at the close. From this point onward, the graphs evolve very differently. Whereas the after-hours transaction count remains close to zero on non-announcement days, it spikes to more than five contracts shortly after the announcement time on days with earnings announcements before gradually tailoring off over the next 30 minutes.

In Panel B of Figure \ref{figure:trade-spread-announcement}, we compare bid-ask spreads in the after-hours market on days with and without earnings announcements. Again, we see distinct differences. On non-announcement days, the median quoted spread starts at 5 bps at the end of the regular trading session before rising to 30--35 bps within the first five minutes of the post-trading session. It then gradually inclines for the duration of the after-hours session as we approach the overnight period. On announcement days, the median quoted spread goes from 5 bps at close (4:00pm) before quickly rising to 60 bps at the time of the announcement, then drops to 30 bps within 10--20 minutes of the announcement.

For comparison, the average quoted spread during post-close trading is 58.1 bps, while the average spread in regular trading equals 8.4 bps, implying that the bid-ask spread on average is about seven times greater in the after-hours session compared with the regular trading session.\footnote{See Appendix \ref{appendix:after-hours-market}, which reports summary statistics on the after-hours trading volume and bid-ask spreads.}

\subsection{Jump proportion}

The theoretical analysis in Section 2 highlights that jumps in prices increase the realized variance without affecting the bipower variation. This insight suggests an intuitive way to check whether earnings announcements cause jumps.

First, since earnings announcements almost exclusively fall outside the hours of the exchange-traded session, in Panel A of Figure \ref{figure:regular-versus-extended-all} we plot the pre-averaged realized variance computed for the regular trading session (9:30am--4:00pm) against its corresponding value computed for the extended trading session (9:30am--6:30pm), both converted to annualized standard deviations. Their difference measures the incremental volatility observed in the after-hours market (4:00pm--6:30pm). On non-announcement days (indicated by a red dot) the extra volatility during after-hours trading is typically minuscule and has a negligible impact on the cumulative pre-averaged realized variance measure. Except for a few outliers, the vast majority of these observations form a cloud close to the 45-degree line. Conversely, on days with a quarterly earnings announcement (indicated by a blue cross), differences are much greater with the extended trading session pre-averaged realized variance substantially exceeding its corresponding value from the regular trading session. Earnings releases clearly trigger substantial price volatility in the after-hours market, which is otherwise dormant.

Second, and more directly related to our jump test, in Panel B of Figure \ref{figure:regular-versus-extended-all} we plot the pre-averaged bipower variation against the pre-averaged realized variance, both based on high-frequency data from the extended trading session (9:30am--6:30pm). Realized variance less bipower variation isolates the contribution of jumps to return variation, which is zero (up to sampling error) if the price path is continuous but is strictly positive if there are price jumps. Hence, the farther to the right of the 45-degree line a data point is, the more it indicates the presence of jumps. Consistent with this interpretation, the pre-averaged realized variance and bipower variation are more or less perfectly aligned on non-announcement days. Conversely, on days with earnings announcements we observe large differences indicating jumps in price movements at these times. Table \ref{table:rv-descriptive} provides further stock-level details on realized variance and bipower variation for the regular and extended trading session broken into days with and without earnings announcements. Appendix \ref{appendix:apple} provides an even more detailed analysis for Apple which is perhaps the most prominent stock in our sample.

Third, an alternative approach to gauge the relative importance of jumps is to examine the jump proportion which estimates the fraction of the return variation originating from the jump component. Following \citet*{christensen-oomen-podolskij:14a}, this is defined as\footnote{The jump proportion is computed without pre-averaging if sampling at a 5-minute frequency or with pre-averaging if sampling at the tick-by-tick frequency.}
\begin{equation} \label{equation:jump-proportion}
\text{Jump proportion} = 1 - \frac{\text{Bipower variation}}{ \text{Realized variance}}.
\end{equation}
Figure \ref{figure:jump-proportion} plots kernel density estimates of the distribution of the jump proportion. As indicated by the peak near zero for the graphs based on our noise-robust pre-averaging estimators, on most days without announcements (Panel A) the jump component accounts for none of the variation in the stock price. Note that small negative numbers can be observed due to estimation error. Conversely, on days with earnings announcements (Panel B) more than fifty percent of the total variation in price movements stems from the jump component.\footnote{The slight attenuation observed with $\theta = 1$ is consistent with our Monte Carlo evidence in Appendix \ref{appendix:simulation}, where we note that excessive pre-averaging tends to deflate the power of our microstructure noise-robust jump test.}

In contrast, the jump proportion extracted with five-minute sampling peaks near 10\%, regardless of whether there is an announcement or not. Although the jump proportion for the five-minute measure is slightly more right-skewed on days with announcements, in general it overestimates the importance of jumps on days with no announcements and underestimates it on days with announcements. These findings are consistent with  the biases in the conventional jump measure documented in our Monte Carlo simulation results.

\subsection{Jump frequency} \label{section:jump-frequency}

Figures \ref{figure:regular-versus-extended-all}--\ref{figure:jump-proportion} shows that the jump component is important in after-hours trading sessions on days with earnings announcements but largely absent on days without announcements. To examine this point more closely and formally test Hypothesis \ref{hypothesis:necessary}.1, we compute the jump indicator:
\begin{equation} \label{equation:jump-indicator}
J_{it} =
\begin{cases}
  1, & \mathcal{J}_{n,it} > q, \\
  0, & \text{otherwise.}
\end{cases}
\end{equation}
where $\mathcal{J}_{n,it}$ is the noise-robust jump test statistic in \eqref{equation:test-statistic} for company $i$ on day $t$. We set the critical value $q$ in a way that reduces the likelihood of identifying false jumps by applying a Bonferroni correction to account for multiple testing and keep the family-wise error rate at 1\%.\footnote{\citet{bajgrowicz-scaillet-treccani:16a} report that up to 90\% of jumps identified with the five-minute jump test are spurious due to multiple hypothesis testing.}

Table \ref{table:rv-descriptive} reports the jump frequency for stock 1--25 in our sample, while Table \ref{table:rv-descriptive-supplemental} in Appendix \ref{appendix:supplemental} contains the information for stock 26--50. Panel A is for the regular trading session, while Panel B is for the extended trading session. The sample is further split into days with and without an earnings announcement. For the regular trading session, we estimate a jump frequency of 2.95\%. While jumps do occur at a higher frequency than expected by random chance, they are relatively rare during the regular trading session. This result is consistent with \citet*{christensen-oomen-podolskij:14a}. For after-hours trading sessions without earnings announcements, we obtain a slightly higher jump frequency of 3.67\%. Finally, for after-hours trading sessions with earnings announcements, our jump frequency estimate is 91.06\% which is far higher than the previous values. Moreover, though very high on average, the jump frequency is even higher after 2016 (95.84\%) versus before (87.49\%).

In essence, while it is rare to identify a jump during regular trading sessions or in extended trading sessions with no announcements, it is rare \emph{not} to find a jump in extended sessions with earnings announcements. This finding provides strong empirical support for Hypothesis \ref{hypothesis:necessary}.1.

The corresponding jump frequency rates for the five-minute jump test statistic of \citet*{barndorff-nielsen-shephard:06a} are 5.46\% in regular trading sessions, 14.38\% in extended trading sessions without earnings announcements, and 40.40\% in extended trading sessions with earnings announcements. Recall that our simulations in Appendix \ref{appendix:simulation} show that the five-minute jump test overrejects in the presence of little or no microstructure noise (as in the regular trading sessions) but lacks power (underrejects) in the presence of high levels of noise contamination (as in the after-hours market). Hence, our estimates illustrate that the conventional jump test identifies too many jumps (false positives) during regular trading sessions and extended trading sessions without earnings announcements but too few jumps (false negatives) during extended trading with earnings announcements.

Table \ref{table:rv-descriptive} suggests that failure to find jumps after earnings announcements is associated with the trading activity in the after-hours market. As this has increased over time, the lower average jump frequency prior to 2016 is therefore largely a result of fewer trades and weaker power in detecting jumps. If trading is too sparse, there are often relatively few post-announcement observations to base the analysis on, rendering it problematic for our noise-robust test statistic to work as intended. The other main reason why we sometimes fail to detect a post-announcement price jump is that although our jump test has excellent power, it is not bullet proof. This is particularly relevant for announcements that are succeeded by severe whipsawing in the price. Intuitively, a jump is less likely to be detected when it is surrounded by extremely high diffusive volatility which reduces the relative contribution of the jump to return variation.

\subsection{Determinants of the jump probability} \label{section:logit}

To more extensively test Hypothesis \ref{hypothesis:necessary}, we next develop a regression framework to examine the determinants of price jumps. We begin by inspecting the role of earnings surprises since this is the measure most directly related to earnings announcements.

To capture the relation between the jump probability versus the sign and magnitude of earnings surprises, we sort announcements by the value of the earnings surprise, form decile portfolios, and plot the corresponding average jump frequencies implied by our noise-robust jump test in Panel A of Figure \ref{figure:decile-sort}. The jump probability mostly exceeds 90\% across deciles, and does not depend systematically on the size of the earnings surprise.

Our finding that prices jump even when investors are not surprised by the reported earnings per share is consistent with Hypothesis \ref{hypothesis:necessary}.2 and can be explained as follows. First, information other than the headline earnings figure is typically released (e.g., forward guidance on future earnings) and this may trigger a jump.\footnote{For example, on 10/24/2011 Netflix (NFLX) reported third-quarter EPS of \$1.16 which beat the consensus expectation of \$0.94, corresponding to a modest 1.6 standard deviation surprise. However, the company also significantly lowered its forward guidance, and the stock crashed more than 20\% in after-hours trading.} While this helps explain why prices \textit{can} jump even on days where markets do not get surprised by the earnings figure, we would still expect a lower \textit{average} jump probability on such days. This points to the second explanation, namely that prices jump because of the resolution of uncertainty associated with the earnings announcement \citep*{ai-bansal:18a}. An implication of this story is that prices are more likely to move up (due to the lower post-announcement risk premium) than down following an earnings announcement that is close to investor expectations. We explore this prediction in Section \ref{section:jump-size-determinant}.

To establish a more formal connection between price jumps and earnings announcements and examine whether other factors affect jump probabilities, we next estimate logit regressions. Our analysis includes up to five regressors. To estimate the effect of an earnings announcement on the jump probability and test Hypothesis \ref{hypothesis:necessary}.1, we consider an announcement dummy variable:
\begin{equation}
EA_{it} =
\begin{cases}
  1, & \text{on announcement days,}\\
  0, & \text{otherwise.}
\end{cases}
\end{equation}
Since $z_{ \text{EPS}}$ proxies for the magnitude of the news component in the announcement, we expect this to matter for the jump probability, even though the evidence from Panel A of Figure \ref{figure:decile-sort} is a bit weak. As our second and third covariates, we therefore include the absolute value of the earnings surprise, measured separately for positive and negative surprises, i.e., we define $z_{ \mathrm{EPS},it}^{+} = z_{ \mathrm{EPS},it}EA_{it}^{+}$ and $z_{ \mathrm{EPS},it}^{-} = z_{ \mathrm{EPS},it}EA_{it}^{-}$, where $EA_{it}^{+} = \mathbbm{1}(z_{ \mathrm{EPS},it} > 0)$ and $EA_{it}^{-} = \mathbbm{1}(z_{ \mathrm{EPS},it} < 0)$ are indicator variables that capture the direction of the earnings surprise.

Fourth, to see whether high volatility in prices prior to the announcement makes it more or less likely to find jumps, we include the pre-averaged realized volatility, $\sqrt{RV_{n}^{*}}$, computed over the regular trading session from 9:30am--4:00pm on the day of the announcement. We expect to find a negative effect since high volatility makes it harder for our noise-robust jump test to detect jumps.

Fifth, as a measure of the resources available to facilitate the price discovery process, we include the number of analysts covering the stock, $N_{A}$.\footnote{We also considered the dispersion in EPS forecasts across analysts, $\sigma_{ \mathrm{EPS}}$, to get a measure of subjective uncertainty surrounding a particular earnings announcement. However, this measure fails to be significant in our logit regression.} We would expect to find a positive coefficient since the price discovery process, and thus the speed of price adjustments, should be more efficient for stocks with the greatest analyst coverage.

In summary, we estimate the logit model using observations for stock $i$ on day $t$:
\begin{equation} \label{equation:logit}
P(J_{it} = 1) = F( a + b_{1} EA_{it} + b_{2} |z_{ \mathrm{EPS},it}^{+}| + b_{3} |z_{ \mathrm{EPS},it}^{-}| + b_{4} \sqrt{RV_{n,it}^{*}} + b_{5} N_{A,it}),
\end{equation}
where $F$ is the logistic distribution function.\footnote{The coefficients in \eqref{equation:logit} are estimated on the entire sample of days with and without earnings announcements. Hence, the extra explanatory variables in the model should implicitly be understood as being interacted with the $EA$ variable and assume a value of zero on non-announcement days. With this design, all covariates measure the marginal effect (or the logarithm of the odds ratio) on the jump probability \textit{conditional} on an announcement.}

Table \ref{table:logit} reports estimation results for the logit regression applied to the extended trading session.\footnote{The results of our logistic jump probability regressions and the linear return regression in the next subsection are robust to the inclusion of stock fixed effects.} The estimated intercept in column (2) with only the announcement dummy yields a jump probability on days without announcements ($EA = 0$) of 3.67\% compared to 91.06\% for announcement days ($EA = 1$), which agrees with the results in Section \ref{section:jump-frequency}. Further, and consistent with Hypothesis \ref{hypothesis:necessary}.1 and Figure \ref{figure:decile-sort}, the earnings announcement dummy is highly statistically significant and positive, showing that the likelihood of a jump increases by a huge amount and is near unity in the wake of an earnings announcement.

As reported in column (3), the estimated coefficients on both positive ($|z_{ \text{EPS}}^{+}|$) and negative ($|z_{ \text{EPS}}^{-}|$) earnings surprises are positive. While the former is small and insignificant, the much larger coefficient estimate for negative surprises is significant, suggesting that disappointing earnings news is particularly likely to trigger a jump.

The effect of $\sqrt{RV_{n}^{*}}$ is significantly negative so higher pre-announcement return volatility reduces the jump probability. For example, setting the pre-averaged realized volatility at its 10th and 90th percentiles while keeping the number of analysts fixed at its average value our coefficient estimates imply a conditional jump probability of 98.17\% and 97.51\% (93.43\% and 89.87\%) for an average negative (positive) announcement. A plausible explanation is that it is harder to detect a post-announcement price jump amid high levels of volatility which weakens the testing power. Moreover, a high level of volatility can act as a substitute for price jumps, since it widens the dispersion of the return distribution and enables the price to move rapidly, albeit continuously, in response to earnings announcements.

The effect of the number of analysts covering a stock, $N_{A}$, is significantly positive, so jumps are more likely on earnings announcement days for those stocks covered by the largest number of analysts. For example, the coefficient estimates imply a conditional jump probability of 93.96\% and 99.37\% (77.90\% and 98.01\%) for an average negative (positive) announcement when the number of analysts is at its 10th and 90th percentiles while keeping pre-averaged realized volatility fixed at its average value. Broader analyst coverage may speed up the price discovery process and increase the chance of observing jumps.

\section{Jump spillover effects} \label{section:jump-spillover}

Earnings announcements reveal important information about the cash flow prospects of the announcing firm, but it may also affect other firms or even the broader economy. \citet{patton-verardo:12a} and \citet{savor-wilson:16a} develop theoretical models in which investors use earnings of announcing firms (AFs) to revise their expectations of earnings for non-announcing firms (NAFs), generating a covariance between firm-specific and market-wide cash flow news. As suggested in Hypothesis \ref{hypothesis:spillover}, it is natural to expect that such learning effects can induce spillovers in the form of an earnings announcement from the AF triggering co-jumps in the prices of NAFs and the market portfolio.\footnote{Existing empirical work on co-jumps include \citet{bollerslev-law-tauchen:08a}, \citet{caporin-kolokolov-reno:17a}, and \citet{li-todorov-tauchen:17b}. However, these studies mainly examine jump dependence between individual equity prices and the stock index around macroeconomic announcements, where systematic news are expected to arrive.} In this section, we examine if these predictions hold true and test Hypothesis \ref{hypothesis:spillover}.

\subsection{Jump spillover frequency}

We begin by matching announcement days for the AFs with non-announcement days in the NAFs. Any evidence of co-jump spillover effects should show up as a higher conditional jump probability for the NAFs, given an earnings release from the AF, compared to the jump probability when no firms announce.

Table \ref{table:co-jump} reports the proportion of extended trading sessions with a jump in the NAF (listed in columns) when the individual AFs (in rows) report their financial results. To preserve space, we narrow down our examination to the 25 most liquid stocks. This suffices to highlight the importance of industry proximity and differences in the size of firms' idiosyncratic and market-wide earnings components.

We find compelling evidence that industry proximity plays a crucial role in determining the co-jump probability. For example, conditional on a Facebook (FB) announcement, the price of Apple (AAPL), Amazon (AMZN), and Google (GOOG) jump on 13.3\%, 25.8\%, and 28.1\% of their non-announcement days. Thus, conditional on the AF being a tech firm the jump frequency generally rises for companies within that sector and is much lower for companies from other sectors. This is consistent with the presence of a common jump factor for the tech industry. Moreover, on days where Intel (INTC) is the AF, we record a conditional jump frequency of 20.0\%, 44.0\%, and 30.0\% in the price of Nvidia (NVDA), Advanced Micro Devices (AMD), and Micron Technology (MU), pointing to a common jump component tracking the microconductor industry. These numbers are much higher than the common non-announcement jump frequency of 4.18\%, calculated as the average jump frequency on days where there are no announcements among any of our firms.\footnote{The conditional jump frequency on common non-announcement days is slightly lower than the conditional non-announcement day jump frequency reported in Table \ref{table:rv-descriptive}, since the latter does not discriminate between days where other firms are or are not announcing.} Our analysis therefore provides strong support for Hypothesis \ref{hypothesis:spillover}.1.

Conversely, there is scant evidence of spillover effects from some of the other companies in our sample without obvious industry peers. For example, conditional on an announcement from Walt Disney (DIS), the jump frequency in the other stocks is always less than 10\% and often close to zero, which suggests that the jump component is tracking idiosyncratic news.\footnote{Disney may of course exert influence on the conditional jump probability of related companies not included in our sample such as Hasbro (HAS) or Mattel (MAT).}

In the right-most column of Table \ref{table:co-jump}, we compute the jump frequency in the S\&P 500 index (proxied by the SPDR S\&P 500 ETF Trust, SPY) over the extended trading session in which the firm announces its earnings. This can be interpreted as the conditional jump probability in the market index given that a particular firm releases its financial statement. If earnings announcements contain systematic news, we should expect this number to be higher than the comparable jump frequency in the S\&P 500 in the absence of any earnings announcement.\footnote{To rule out that a jump in the S\&P 500 is a mechanical effect following from the AF being an index constituent, we remove the AF's price effect from the index by subtracting its price times its weight in the S\&P 500 index from the SPY before we implement the jump test. We are grateful to Andrew Patton for raising this point.} Furthermore, we expect the effect to be stronger for businesses that are more central to the economy.

The table offers strong backing for the presence of a systematic ``market'' factor in individual stocks' price jumps. For instance, the frequency of jumps in the S\&P 500 index is very high conditional on an announcement from important stocks such as Apple (AAPL), Facebook (FB), Amazon (AMZN), Microsoft (MSFT), Oracle (ORCL), and Starbucks (SBUX). Notably, the market index jumps 30\% of the time Apple makes its earnings announcements, which is an order of magnitude higher than the jump probability in the market index on common non-announcement days (3.60\%). By contrast, the market jump probability is much lower when firms such as Tesla (TSLA) and Salesforce (CRM) announce earnings, suggesting that price movements in both of these stocks are to a larger extent driven by idiosyncratic news unrelated to the broader economy.

In our sample, there are between a minimum of zero and a maximum of nine earnings announcements on any given day. The conditional jump frequency in the market index is increasing in the number of announcements. On average, the rate at which the market index jumps rises from the 3.60\% on common non-announcement days [0 announcements] to 4.82\% [1--2 announcements], 3.02\% [3--4 announcements], 12.28\% [5--6 announcements], and 21.05\% [7--9 announcements]. These observations are consistent with Hypothesis \ref{hypothesis:spillover}.3. Furthermore, this finding suggests that the ``information spillover'' effect dominates the ``distraction'' effect, as identified by \citet{hirshleifer-lim-teoh:09a}, who show that the immediate price response to earnings surprises is significantly weaker on days when multiple firms announce their earnings.

\subsection{Determinants of the jump spillover probability}

Our empirical results on co-jump frequencies are particularly interesting because it is often difficult to detect jumps in the market index even when the price of the underlying stocks jump. This is because individual firms typically only take up a small part of the index basket, so their price jumps are attenuated. Our findings point toward an information channel that leads investors to revise their earnings expectations for the broader economy after individual firms' earnings announcements. We next examine key determinants of this transmission mechanism and estimate their importance for jump spillover effects with a view toward Hypothesis \ref{hypothesis:spillover}.2.

We let $J_{jt}^{ \text{NAF}}$ be an indicator variable that equals one if the price of NAF $j$ jumps on day $t$, zero otherwise. We then estimate a logit model for the conditional jump probability of stock $j$'s price, given that AF $i$ releases its fiscal results on day $t$ ($i \neq j$ by construction):
\begin{equation} \label{equation:logit-naf}
P \left(J_{jt}^{ \text{NAF}} = 1 \mid EA_{it} = 1 \right) = F \left( a + b_{1} IP_{ij} + b_{2} \log(V_{jt}^{ \text{NAF}}) + b_{3} \log(V_{it}^{ \text{AF}}) + b_{4} D^{ \text{EARLY}}_{it} + b_{5} D^{ \text{LATE}}_{it} \right).
\end{equation}
We include an intercept, a measure of industry proximity, $IP$, the natural logarithm of the transaction volume of the AF and NAF, $\log(V^{ \text{NAF}})$ and $\log(V^{ \text{AF}})$, and indicator variables for whether the AF is disclosing its fiscal statement early or late in the earnings season, $D^{ \text{EARLY}}$ and $D^{ \text{LATE}}$.\footnote{We also included the jump indicator of the AF, $J_{it}^{ \text{AF}}$, as a covariate. However, since the probability of a jump for an AF is near unity and varies little over our sample, this variable has little-to-no explanatory power.} Following \citet*{wang-zajac:07a}, we define industry proximity as the proportion of common consecutive digits (starting with leading digit and going down) in their SIC code shown in Table \ref{table:sp500-volume-pm.tex}.\footnote{$IP_{ij}$ is discrete with range 0.00, 0.25, 0.50, 0.75, 1.00, where $IP_{ij} = 0$ means no industry overlap (leading digit in SIC code differs), while $IP_{ij} = 1$ is perfect overlap (SIC codes are identical).} $V^{ \text{NAF}}$ and $V^{ \text{AF}}$ are the average number of transactions in the NAF and AF in the after-hours market in the month prior to $t$. We log-transform the latter to make them less right-skewed and closer to Gaussian. Following \citet{savor-wilson:16a}, we define early (late) announcers as those firms that release their financial results in the 1st (4th) quartile of the earnings season. The latter is assumed to start two weeks and stop two months after the end of a quarter (e.g., the earnings season for Q1 starts on April 15 and ends on May 31).

The results are reported in Panel A of Table \ref{table:logit-co-jump}. We expect a positive relationship with the first three covariates since greater industry proximity should induce a positive jump spillover effect, higher liquidity facilitates a bigger and more rapid transmission of information from the AF to the NAF, and companies that announce early in the cycle are more likely to disclose new information relevant to other companies. This is confirmed by the fitted model. Industry proximity is associated with a positive and highly significant marginal effect on the jump spillover probability (t-statistic of 7.16). The liquidity of both the NAF and AF in the after-hours market are also positive and significant with the coefficient estimate on $V^{ \text{NAF}}$ being twice as large and significant as that on $V^{ \text{AF}}$ (t-statistics of 7.16 and 14.56). In addition, the spillover from early announcers is positive and significant (t-statistic of 3.28), and vice versa for late announcers (t-statistic of -10.11). These findings are all consistent with Hypothesis \ref{hypothesis:spillover}.2.

To convert these estimates into an implied co-jump probability, we again  change the covariates one at a time, while holding the others fixed. For example, early in the earnings cycle ($D^{ \text{EARLY}} = 1$) and for firms with no industry overlap ($IP = 0$), the coefficient estimates imply that the conditional jump probability for the NAF is 4.02\% and 7.19\% (4.96\% and 5.80\%) when $V^{ \text{NAF}}$ ($V^{ \text{AF}}$) is assumed to be at its 10th and 90th percentile while holding $V^{ \text{AF}}$ ($V^{ \text{NAF}}$) fixed at its average value. Late in the earnings cycle ($D^{ \text{LATE}} = 1$), those numbers drop notably to 2.39\% and 4.34\% (2.97\% and 3.48\%). Furthermore, early in the earnings cycle and for firms with identical industry classification ($IP = 1$), the conditional jump probability for the NAF is shifted up to 5.70\% and 10.06\% (7.01\% and 8.17\%) when $V^{ \text{NAF}}$ ($V^{ \text{AF}}$) is at its 10th and 90th percentile while holding $V^{ \text{AF}}$ ($V^{ \text{NAF}}$) fixed at its average value. Those numbers again drop late in the earnings cycle to 3.42\% and 6.15\% (4.23\% and 4.96\%).

To uncover any jump spillover effects on the market portfolio, we estimate the following model:
\begin{equation} \label{equation:logit-spy}
P \left(J_{t}^{ \text{SPY}} = 1 \mid \# \text{AF}_{t} \geq 1 \right) = F \left( a + b_{1} D_{t}^{\# \text{AF}} + b_{2} \log \left(CV_{t}^{ \text{AF}} \right) \right),
\end{equation}
where
\begin{equation}
D_{t}^{\# \text{AF}} =
\begin{cases}
  1, & \text{if } \# \text{AF}_{t} \geq 5, \\
  0, & \text{otherwise},
\end{cases}
\end{equation}
and $\# \text{AF}_{t} = \sum_{i} EA_{it}$ is the number of AFs on a given day. The dummy variable separates days with ``few'' or ``many'' announcements. We further include the natural logarithm of the cumulative average number of transactions in the AFs in the after-hours market in the month prior to day $t$, $CV_{t}^{ \text{AF}} = \sum_{i \, : \, EA_{it} = 1} V_{it}^{ \text{AF}}$.

Panel B of Table \ref{table:logit-co-jump} shows that the jump probability for the market portfolio is significantly higher on days where many firms announce their results. Moreover, there is also a positive relationship with the after-hours liquidity of the AFs, $CV_{t}^{ \text{AF}}$. These findings make sense because a larger pool of company announcements implies that more market-related information is being released to investors and larger liquidity among the AFs facilitates a faster price discovery process, which feeds into the market index, consistent with Hypothesis \ref{hypothesis:spillover}.3.\footnote{Again, we account for the mechanical effect of a constituent members' price jump on the market index by subtracting its price component from the SPY.}

\section{Price dynamics after earnings announcements} \label{section:price-discovery}

Our results so far provide insights into price jump dynamics in the immediate aftermath of earnings announcements. However, they do not address if the jumps in prices over- or undershoot. In this section, we explore whether post-announcement returns are predictable and whether this can be exploited in simple trading strategies to generate abnormal returns. Both points help us better understand the efficiency of the post-announcement price formation and test the ``sufficiency'' part of conditions for market efficiency in Hypothesis \ref{hypothesis:sufficient}.

\subsection{Determinants of the jump size} \label{section:jump-size-determinant}

While our noise-robust jump test statistic speaks to the presence (or absence) of a jump, it does not reveal by how much the price jumps. We therefore next construct a proxy for the jump size based on returns measured over a small time interval following the earnings announcement. In particular, we compute the one-minute post-announcement return ($r_{1m}^{ \text{EA}}$) as the log-price a minute after the earnings release less the latest log-price prior to the announcement.

The last column of Table \ref{table:sp500-volume-pm.tex} reports summary statistics for the one-minute post-announcement returns for each company. While the cross-sectional sample average of 0.15\% is rather large, the standard deviation of 3.50\% is even bigger, indicating that one-minute post-announcement returns, while positive on average, are highly dispersed. One-minute returns are slightly negatively skewed (-0.48), although this is mainly driven by a few extremely negative return observations in our sample as evidenced by a large coefficient of kurtosis (5.76).

In Panel B of Figure \ref{figure:decile-sort}, we plot the average one-minute post-announcement return against the standardized earnings surprise. We observe a near-monotonic relationship between the sign and magnitude of earnings surprises and subsequent returns. Specifically, large negative (positive) earnings surprises are associated with large negative (positive) post-announcement returns.

More than 85\% of earnings announcements surprise positively in our sample, so firms are expected to beat consensus estimates. The average value of the standardized unexpected earnings is $\bar{z}_{ \mathrm{EPS}} = 1.73$ (median value of 1.37), which is located in the sixth decile in Figure \ref{figure:decile-sort}. The average one-minute post-announcement return in that decile is 0.46\% (median value of 0.37\%), which is consistent with Hypothesis \ref{hypothesis:necessary}.2, that earnings announcements in line with expectations generate positive returns due to uncertainty resolution.

We next examine the determinants of the change in stock prices in the aftermath of earnings announcements. To understand which factors drive post-announcement returns, we estimate a regression model that includes the predictor variables from the logit model in \eqref{equation:logit} except the $EA$ dummy variable, which is captured by the intercept here. In addition, we add the cumulative net order imbalance, which we construct following \citet*{gregoire-martineau:22a}:
\begin{equation} \label{equation:oi}
OI_{it} = \frac{B_{it}-S_{it}}{B_{it}+S_{it}},
\end{equation}
where $B_{it}$ and $S_{it}$ correspond to the buyer- and seller-initiated trading volume over the one-minute post-announcement window. To get signed trade volume, we assign an aggressiveness indicator to each transaction based on whether it was buyer- or seller-initiated. We employ a ``level-1 algorithm'' known as the tick rule \citep*[e.g.][]{chakrabarty-pascual-shkilko:15a}.\footnote{Quoting from their paper: \textit{"The tick rule classifies a trade as buyer-initiated if the trade price is above the preceding trade price (an uptick trade) and as seller-initiated if the trade price is below the preceding trade price (a downtick trade). If the trade price is identical to the previous trade price (a zero-tick trade), the rule looks for the closest prior price that differs from the current trade price. Zero-uptick trades are classified as buys, and zero-downtick trades are classified as sells."}} Note that in contrast to the other variables, the net order imbalance is not known ahead of the announcement.

We then estimate a pooled panel regression for the one-minute post-announcement return for stock $i$ on day $t$, $r_{1m,it}^{ \text{EA}}$:
\begin{equation} \label{equation:return-regression}
r_{1m,it}^{ \text{EA}} = a + b_{1} z_{ \text{EPS},it}^{+} + b_{2} z_{ \text{EPS},it}^{-} + b_{3} \sqrt{RV_{n,it}^{*}} D_{it} + b_{4} N_{A,it} D_{it} + b_{5} OI_{it} + \epsilon_{it},
\end{equation}
where $D_{it} = \sign(z_{ \text{EPS},it})$ is the sign function. We interact the sign of $z_{ \text{EPS}}$ with pre-averaged realized volatility and the number of analysts because the latter are non-negative and merely capture the speed with which information gets impounded into stock prices whereas earnings surprises are directional. The bigger the return volatility and number of analysts, the more we expect positive news (a positive sign indicator) to move returns upwards and negative news (a negative sign indicator) to move returns downwards. A priori, we therefore expect all coefficients to be positive in \eqref{equation:return-regression}.

Table \ref{table:return-regression} reports our estimation results. In the restricted model in column (2) with only $z_{ \text{EPS}}^{+}$ and $z_{ \text{EPS},it}^{-}$ but no additional controls, the estimates of $b_{1}$ and $b_{2}$ are both positive, implying that larger earnings surprises translate into larger announcement returns. The marginal effect of negative earnings surprises is more than twice as large as that of positive earnings surprises. Negative earnings surprises thus trigger a much larger drop in the stock price than the rise in prices for comparable positive surprises. In this regression, the coefficients are also significantly different from each other ($P$-value of 0.0016). Adding controls in column (3) and (4), the estimates of $b_{1}$ and $b_{2}$ remain positive and $b_{2} > b_{1}$, but they are nearly identical and the hypothesis $H_{0}: b_{1} = b_{2}$ is no longer rejected.

The slope estimate on the pre-averaged realized volatility, $\sqrt{RV_{n}^{*}}$, is positive and borderline statistically significant. Higher volatility is therefore weakly associated with wider dispersion in returns as post-announcement prices move further away from their initial value during periods of higher volatility.

The coefficient estimate on the number of analysts, $N_{A}$, is positive and significant. Hence, a larger number of analysts monitoring a company leads to larger one-minute post-announcement returns following positive surprises and smaller (more negative) returns for negative surprises.

Finally, we consider the order imbalance variable, $OI$. While the other variables in the regression are known ex-ante (at or before the earnings announcement), order imbalance is only known ex-post. Bearing this in mind, our estimates on $OI$ are positive and highly significant for both positive and negative earnings surprises, implying that larger net order imbalances are associated with bigger movements in returns. Intuitively, buying pressure (positive OI) pushes prices up, while selling pressure (negative OI) reduces prices.\footnote{As a robustness check, we reestimated the return regression in \eqref{equation:return-regression} based on 5-, 10-, 30-, and 60-minute post-announcement returns. Although some of the marginal effects were found to be larger simply because the longer-horizon returns tend to be larger than one-minute returns, none of the conclusions regarding the sign and significance of the coefficient estimates changed. \label{footnote:return-regression}}

We conclude from this evidence that both small and large earnings surprises almost always trigger a jump in the stock price. However, while small earnings surprises are associated with small returns, large earnings surprises tend to move prices by a bigger amount in a direction that is consistent with the sign of the earnings surprise. This effect is stronger for negative earnings surprises. Moreover, prices move by more, the more analysts cover a given stock and the magnitude of price movements is also affected by the pre-announcement price volatility.\footnote{We also examined our data for a pre-earnings announcement drift but failed to identify any significant effect.}

\subsection{Trading strategy} \label{section:trading-strategy}

We next examine if the one-minute post-announcement return can be predicted and, if so, whether the resulting forecasts can be exploited in a simple trading strategy. We employ the parsimonious earnings response model of \citet*{ball-brown:68a} by using the pooled panel regression in \eqref{equation:return-regression} based on the standardized earnings surprise as the only conditioning information:
\begin{equation} \label{equation:trading-strategy}
r_{1m,it}^{ \text{EA}} = a + b_{1} z_{ \text{EPS},it}^{+} + b_{2} z_{ \text{EPS},it}^{-} + \epsilon_{it}.
\end{equation}

We recursively estimate the parameters in \eqref{equation:trading-strategy} using an initial warm-up sample of one month of announcement days, adding new data as it becomes available. This ensures our forecasts are available in real time and do not suffer from look-ahead bias.

Our trading strategy follows \citet{patell-wolfson:84a}. It opens a long (short) position if the predicted return in \eqref{equation:trading-strategy} is positive (negative). In our baseline scenario the position is held until the end of the trading day (6:30pm, or EOD) and closed at the last observation. In contrast to \cite{patell-wolfson:84a}, there is little evidence of elevated returns during the overnight period following the release or at the opening of trading on the next day, see Appendix \ref{section:weighted-price-contribution}. Thus, our results are robust to leaving the position on the book until the exchange opens the next morning. To avoid chasing too small profits, we set an entrance barrier of 0.75\% in absolute value for the predicted return, which is a 25\% markup of the immediate post-announcement bid-ask spread observed in Panel B of Figure \ref{figure:trade-spread-announcement}. However, our results are robust to using a range of other threshold values. With a 0.75\% required return, the strategy generates 826 trading signals from 2,193 earnings announcements, of which 664 are long and 162 are short.

To gauge the importance of transaction costs, we implement four versions of our trading strategy. The first sets the entry level of the trade equal to the first post-announcement transaction price (Trade). Our second approach employs the midquote (Midquote). We expect the transaction price to be in the direction of the announcement surprise, but it is possible that it is on the other side of the strategy (e.g., a sale following a positive announcement surprise, or vice versa). Moreover, whereas the Trade strategy can snipe stale quotes resting in the limit order book prior to the announcement, the Midquote strategy employs an updated quote that incorporates the information from the announcement.\footnote{\cite{baldauf-mollner:22a} document the importance of stale quotes in generating jumps in the transaction price.} Trading at the midquote is often possible for large institutional traders, but may not be feasible for other entities. Our third strategy therefore incorporates the quoted bid-ask spread (Best Bid and Offer, or BBO). It initiates a buy order at the prevailing best ask, whereas a sell order is initiated at the prevailing best bid. The latter therefore incorporates a full spread in the returns generated from the trading strategy.\footnote{We do not control for market impact since our data is not rich enough to include such information. Hence, our results are mostly relevant for small trading sizes.} Finally, our fourth approach makes the odds even more unfavorable by introducing a latency delay, which is the minimum number of seconds an investor must wait before entering a position (BBO+Xs). Such trading delays cause investors to miss out on the initial reaction in the stock price. In principle, it could be beneficial to wait a few seconds before starting a trade because spreads narrow following an announcement.

Figure \ref{figure:cumulative-return} shows the evolution over time in the cumulative returns based on payoffs from individual trades. The slope of the graph reflects the rate at which returns accrue: The steeper the curve, the larger the returns. Conversely, flat spots or a declining curve indicate that the trading strategy fails to earn, or  outright loses, money.

The figure highlights that the transaction price and midquote approaches yield positive returns throughout our sample with no notable periods of inferior performance (full blue and dashed red lines). As expected, introducing execution costs or latency delays reduces the rate of accumulation. In particular, from 2016 onward the strategy that pays the bid-ask spread (dashed-dotted line) is flatlining. Adding a further 5- or 10-second latency delay (dashed lines) reduces returns even more and the strategy loses money after 2016. This indicates that the speed with which earnings announcement information gets incorporated into prices has increased in the second part of our sample.

To inspect the profitability of our trading strategy, Figure \ref{figure:cumulative-return} also reports (in parenthesis) the sample average return and a test statistic examining if the mean return is different from zero. In frictionless markets where investors can enter the position at the first post-announcement transaction price, a highly significant return of 1.80\% per trade (t-statistic of 8.88) is earned over the full sample. Trading at the midquote, the average return drops to a highly significant 1.50\% per trade (t-statistic of 7.67). Even if trading is executed at the bid-ask spreads, the average return of 0.72\% continues to be significant (t-statistic of 3.68). A 5-second latency delay reduces the return to 0.41\% per trade, which is only borderline significant (t-statistic of 2.18). Forcing a 10-second delay on trade entrance, the average return is still positive at 0.28\% per trade, which is now insignificant (t-statistic of 1.51).\footnote{We can estimate how much money investors leave on the table after an earnings announcement using transaction information between the announcement and 6:30pm. Denote by $P_{t,i}^{j}$ and $n_{t,i}^{j}$ the price and number of shares of the $i$th transaction in stock $j$ after the announcement at time $t$. Also, let $P_{t,6:30}^{j}$ be the price at 6:30pm. The dollar profit made by an investor buying in the $i$th post-announcement trade and closing it out at 6:30pm is then $\pi_{t,i}^{j} = n_{t,i}^{j} (P_{t,6:30}^{j} - P_{t,i}^{j})$. Cumulative dollar profits across the whole sample are $\pi^{\$} = \sum_{j} \sum_{ \#EA_{j}} \sum_{i} \pi_{t,i}^{j} \mathcal{I}_{t,i}^{j}$, where $\mathcal{I}_{t,i}^{j}$ is the signal from the trading strategy (1 = long, -1 = short, 0 = no position). We estimate a total profit of \$418,783,007, or 0.3209\% of turnover. The average profit is \$507,001 with a standard deviation of \$189,115, or a z-score of 2.68. We thank Lasse Heje Pedersen for suggesting this calculation.}

Panel A of Table \ref{table:trading-strategy} reports a more comprehensive set of summary statistics. On top of the full sample (2008--2020) results, we also do the analysis separately for the period 2008--2015 and 2016--2020. In this way, we can examine if the price discovery process has changed over time. We choose the subsample splits based on major structural changes in market design. First, the last electronic communication network (ECN) in the U.S., as defined by Rule 600(b)(23) of Regulation NMS, closed in 2015. The disappearance of ECNs implies that during our sample an increased amount of liquidity has been rerouted to alternative after-hours trading venues, such as dark pools.\footnote{Traditional floor trading is available, but it is not anonymous and probably also not fast enough.} The latter provide anonymous trading and, in general, do not display quotes to the public. On the one hand, this can impede price discovery, because order flow contains valuable information.\footnote{\citet*{hendershott-jones:03a} find evidence of a deterioration in price discovery, when the Island ECN transitioned into a dark pool following stricter display requirements pursuant of Regulation ATS.} On the other hand, less pre-trade transparency can facilitate price discovery because investors can conceal their trading activity, which increases the incentive to collect private information. Moreover, as earnings announcement news is arguably short-lived, it encourages faster and more aggressive trading on such information \citep*[e.g.][]{brogaard-pan:21a, flood-huisman-koedijk-mahieu:99a}. Second, stricter legislation approved by the SEC means that from April 4, 2016 a greater number of broker-dealers are required to disclose stock-by-stock aggregate volume and transaction count information to the Financial Industry Regulatory Authority (FINRA) on a weekly basis.\footnote{See, for example, \path{https://www.finra.org/sites/default/files/NoticeDocument/p446087.pdf} and \path{https://www.finra.org/sites/default/files/Regulatory-Notice-15-48.pdf}.} Post-trade transparency helps investors locate more liquid platforms, which accelerates the matching process. Furthermore, the reduction in the number of trading venues makes the market less fragmented, which further enhances this process.

Across the board, average returns are both larger and more significant in the first subsample (2008--2015) compared to the full sample. For example, the average return per trade in the transaction price setting is 2.30\% (t-statistic of 8.77) over 2008--2015 but only 1.11\% (t-statistic of 3.49) over 2016--2020. Moreover, the average return inferred from the midquote strategy drops to 0.80\% (t-statistic of 2.65) in the second subsample -- more than a full percentage point lower than its early subsample counterpart (2.00\%). Trading at the bid-ask spread further reduces the average return and adding a latency delay even renders them slightly negative, though not significant, in the second subsample. These findings are consistent with a more efficient price discovery process over time with average returns declining to the point where they are no longer significantly different from zero after accounting for bid-ask spreads or latency delays. They also concur with \citet{martineau:22a}, who finds that most stock prices are significantly more responsive to earnings surprises during the latest subsample (2016--2019) compared to the earliest (1984--1990).

\cite{patell-wolfson:84a} found that excess returns in their sample ceased to exist within 10- to 15-minutes following an earnings announcement. In Panel B of Table \ref{table:trading-strategy}, we show the effect of changing the investment horizon with more recent data. Instead of closing positions at 6:30pm, the duration of the position is now measured either in minutes (physical time) or by the number of tick updates (tick time) after the announcement.\footnote{We also estimated the predictive return regression with the post-announcement return (dependent variable) measured over a 5-, 10-, 30-, and 60-minute horizon. Average trading returns decline monotonically as the length of the measurement window is extended, suggesting that the price response to earnings surprises is very rapid and the ``signal'' is strongest for the one-minute post-announcement return. For example, compared to the number in Table \ref{table:trading-strategy}, moving from a one-minute to a 60-minute horizon causes a decline of 48 basis points in average return for the transaction price approach.}

In Panels B.1--B.5, we maintain the position from 30 seconds to 5 minutes. In the full sample, average returns from the no-friction strategy based on transaction prices increase monotonically from 0.74\% (t-statistic of 8.28) for positions closed after 30 seconds to 1.58\% (t-statistic of 10.82) per trade for positions closed after five minutes. In the midquote strategy, the average return is slightly compressed but continues to be positive and statistically significant. Trades executed at the BBO reduce the average return which becomes slightly negative at -0.47\% (t-statistic of -4.75) for the 30-second termination rule but rises to 0.41\% at the 5-minute mark (t-statistic of 2.81). Latency delays lead to further reductions in trading performance as none of the approaches generate positive \textit{and} significant average returns. Once again the results change meaningfully between the first and second subsamples. As the price discovery process improves, average returns become lower and less significant in the second subsample compared to the earlier subsample. Interestingly, the average return over the 2016--2020 sample for the baseline scenario in Panel A is close to the corresponding value for the 5-minute stopping rule in Panel B.5. This suggests that, in the most recent subsample, the price discovery process has more or less been completed at that time.

Next, we consider what happens to trading performance in transaction time if we close the position after between 50 and 1,000 tick updates. As shown in Panels B.6--B.10, we continue to generate positive and significant average returns for trades executed at the transaction price or midquote. However, executing at the BBO again produces insignificant mean returns in the latest subsample. Interestingly, trades stopped after 1,000 tick updates in Panel B.10 yield mean returns nearly identical to those from the positions held until 6:30pm (Panel A) in the first subsample, but not in the second. This reflects the lower trading activity in the early parts of our sample, which means that 1,000 tick updates cover a much longer time interval.

Figure \ref{figure:trading-result} summarizes these results by plotting the cumulative return from trading strategies that close out the position after a fixed number of tick updates (Panels A--C) or after a fixed number of seconds (Panels D--F), employing either the transaction price, midquote, or BBO. The figure illustrates both how fast the price adjusts in the very short period after an earnings announcement (both in trade time and in physical time) and also that the price settles at its new level even faster in the second subsample.

Overall, our results demonstrate that price discovery in the post-announcement window has gotten far speedier during our twelve-year sample. Prior to 2016, investors were able to reap significant profits from post-earnings announcement trades executed at the BBO even with a latency delay of 10 seconds. In contrast, during the last subsample, after-hours prices move so fast after earnings announcements that it is no longer possible to generate significant outperformance after accounting for bid-ask spreads. This suggests that the positive effects of decreased pre-trade transparency and improved post-trade transparency outweigh the negative effects of decreased pre-trade transparency, leading to better price discovery around corporate earnings announcements.

We can gain additional insights by comparing our results with \citet{gregoire-martineau:22a}. They find that that for ``most'' announcements in their sample, the post-announcement closing price at 4:00pm the following day is within the bid-ask spread prior to the announcement, so even a liquidity taker who was informed about the upcoming unexpected earnings surprise could not trade profitably on the information. We achieve the reverse finding for our sample: The post-earnings transaction price at 6:30pm---where we close our trading strategy---is outside the pre-announcement quoted spread for 92.66\% of the earnings announcements.

This, and several other points, help to explain the differences between our findings and those in \citet{gregoire-martineau:22a}. First, the data used in their paper cover the shorter and earlier period 2012--2015. This is important, because we document a shift in after-hours price dynamics after 2016 following the change in market design. Second, and most importantly, while we analyze the 50 most liquid constituents of the S\&P 100 index, they study the after-hours behavior of the 1,500 stocks included in the S\&P 500, S\&P 400, or S\&P 600 indexes corresponding to the large-, medium- and small-cap segment of the U.S. stock market. Many of the firms in their sample are very illiquid: The median number of trades in the after-hours market on earnings announcement days is 70, 10, and 7 for S\&P 500, S\&P 400, and S\&P 600 stocks, and 20\% of the included stocks do not trade \textit{at all}. In contrast, as seen in Table \ref{table:sp500-volume-pm-supplemental.tex} in Appendix \ref{appendix:supplemental}, for our selection of stocks the smallest average number of trades on announcement days is 1,072, and even on non-announcement days many of our firms are more actively traded than those studied by \citet{gregoire-martineau:22a} are on announcement days. Third, the firms in their sample are often quoted with wide spreads in excess of 20\% while the spreads for our firms are narrow(er) and often less than 50 basis points, on average, as shown in Figure \ref{figure:trade-spread-announcement}. Finally, the time-aggregation is different: We sample at tick frequency, while they sample at a 1-minute frequency.

Proving that the sufficiency condition for market efficiency holds ultimately requires us to inspect a much broader class of trading strategies. However, the evidence presented here for the most recent subsample is consistent with Hypothesis \ref{hypothesis:sufficient}.1 and with trading in the after-hours market becoming notably more efficient over time, so we do not pursue this point further here.

\subsection{Spillover effects on returns of non-announcing firms}

Our jump spillover analysis shows that the release of financial statements by an announcing firm (AF) increases the likelihood of jumps in a non-announcing firm (NAF), particularly when the two firms have a close industry proximity and high after-hours trading volumes. These results do not, however, show the extent to which earnings surprises of AFs affect the returns of NAFs and whether the information dissemination is fully efficient as stipulated by Hypothesis \ref{hypothesis:sufficient}.2.

To shed light on this, we estimate a return regression equivalent to \eqref{equation:return-regression}:
\begin{equation}
r_{1m,jt}^{ \text{NAF}} = a + b_{1} z_{ \text{EPS},it}^{+} IP_{ij} + b_{2} z_{ \text{EPS},it}^{-} IP_{ij} + b_{3} D_{it} \log(V_{jt}^{ \text{NAF}}) + b_{4} D_{it} \log(V_{it}^{ \text{AF}}) + \epsilon_{jt}, \label{spillover_returns}
\end{equation}
where $r_{1m,jt}^{ \text{NAF}}$ is the one-minute post-announcement return of NAF $j$ after the announcement of AF $i$ on day $t$. The other variables are unchanged. The industry proximity interaction terms allow us to attenuate the effect of earnings surprises on the post-announcement return of NAFs in more distant industries. Similarly, we interact the sign of $z_{ \text{EPS}}$ with $\log(V_{jt}^{ \text{NAF}})$ and $\log(V_{it}^{ \text{AF}})$, allowing the liquidity measures to reinforce an expected negative price movement after a bad announcement. We find that none of the estimated coefficients in \eqref{spillover_returns} are statistically significant and their explanatory power is close to zero.

Figure \ref{figure:return-distribution} explains this finding. In the left panel, we show a box plot of the distribution of one-minute returns of NAFs in three settings: (i) NAFs on common non-announcement days; (ii) NAFs in other industries ($IP_{ij} = 0$); and (iii) NAFs in the same industry ($IP_{ij} = 1$). Results in (ii) and (iii) are based on days with at least one announcement.\footnote{We construct (i) by taking for each announcement of an AF a common non-announcement day at random and measuring the stock price movement of the AF over the one-minute window aligned with its associated announcement time.}

While the median one-minute return is close to zero and nearly identical across the three distributions, there is a clear ranking of their spread. The dispersion is smallest on common non-announcement days, where the one-minute return mostly reflects the bid-ask spread since no information is being released. The spread is slightly wider on announcement days for NAFs in other industries and much wider for NAFs in the same industry as the AF. However, the spillover effects on the NAFs pale in comparison to the impact of the AF on itself shown in the right panel. Thus, although we can detect co-jump spillover effects, the magnitude of the price impact is much smaller than the impact on the announcing firm itself.

We next implement the earlier trading strategies on the post-announcement returns of NAFs. Absent any trading frictions, and assuming we do not impose a threshold on the magnitude of the predicted return, the trading strategy produces a small, positive, and marginally significant average return on the order of 3--4 bps per trade.\footnote{The return forecasts of NAFs rarely exceed more than a few basis points, so imposing a threshold on expected return often prevents the trading strategy from entering a position, which renders the mean return insignificant.} This is consistent with the volatility of NAF returns being minuscule. Indeed, any profits completely vanish and turn into severe losses once trading frictions and latency delays are introduced.

We also inspect whether AFs affect one-minute returns on the market index, $r_{1m,t}^{ \text{SPY}}$, again after correcting for the mechanical price impact from the announcement, through the regression:
\begin{equation}
r_{1m,t}^{ \text{SPY}} = a + b_{1} \sum_{i \, : \, EA_{it} = 1} w_{i} z_{ \text{EPS},it} + b_{2} \bar{D}_{it} \sum_{i \, : \, EA_{it} = 1} \log(V_{it}^{ \text{AF}}) + \epsilon_{t},
\end{equation}
where
\begin{equation}
\bar{D}_{it} = \sign \left( \sum_{i \, : \, EA_{it} = 1} w_{i} z_{ \text{EPS},it} \right) \quad \text{and} \quad w_{i} = \frac{ \log(V_{it}^{ \text{AF}})}{ \sum_{i \, : \, EA_{it} = 1} \log(V_{it}^{ \text{AF}})}.
\end{equation}
In this model, when more than one firm is announcing on a given day, we weight the earnings surprises by the respective after-hour trading volume of the companies.

The coefficient estimates are both significant and of the correct sign, i.e $100 \times \hat{b}_{1} = 0.15$ (t-statistic of 2.26) and $100 \times \hat{b}_{2} = 0.09$ (t-statistic of 2.49), but the predictive power of the regression is very small. Moreover, the slope estimates are economically small with a marginal effect around one-tenth of a basis point for a unit change in the explanatory variables. Consequently, the predicted market return only rarely exceeds a full basis point which cannot be exploited in practice.

Taken together, the analysis in this subsection is consistent with Hypothesis \ref{hypothesis:sufficient}.2.

\section{Conclusion} \label{section:conclusion}

High-frequency trading is widespread in today's financial markets. This suggests that, at least for the most liquid stocks, information from public announcements should be incorporated into prices almost instantaneously after the news release. In an efficient market without large trading frictions, stock prices should therefore jump, almost surely, in the immediate aftermath of releases of large bundles of information such as firms' earnings announcements. This suggests employing jump tests on highly granular tick-by-tick data as a way to test a necessary (but not sufficient) condition for market efficiency. This strategy is very different from the conventional practice of scrutinizing post-announcement returns defined over a fixed time intervals, such as one or five minutes, which can be almost an eternity in a modern fast-paced market environment.

The vast majority of earnings announcements occur outside the regular exchange trading hours, so such an analysis of the post-announcement price discovery process requires that we examine a relatively unexplored set of high-frequency data that includes quotation and transaction records from the after-hours trading session. The irregular trading activity combined with pronounced levels of microstructure noise in this market pose serious challenges to conventional jump tests that can severely distort the inference.

To address these critical shortcomings, we develop a new jump test that is robust to the unusually noisy price data observed in after-hours markets. Our noise-robust generalization extends the classical bipower variation-based jump test of \citet*{barndorff-nielsen-shephard:06a} to a pre-averaged version that can be implemented on noisy high-frequency data. Moreover, we develop a subsample estimator of the asymptotic variance-covariance matrix that is jump-robust under the alternative, extending earlier work of \citet*{christensen-podolskij-thamrongrat-veliyev:17a}. Equipped with these tools, we show that prices of the stocks with the highest after-hours trading volume almost always jump after earnings announcements. Conversely, jumps in stock prices are rare both during the regular trading session and during after-hours trading sessions without earnings announcements.

We document a strong jump spillover effect. Conditional on an announcement from a firm, there is a significantly higher probability that prices of non-announcing firms, both within and outside the industry of the announcing firm, as well as the market index, jumps in the same after-hours trading session. The effect is larger for companies with a closer industry proximity to the announcing firm. Moreover, early announcers induce a much stronger spillover effect than late announcers. This finding is consistent with investors learning about industry- and economy-wide news components from earnings announcements.

That stock prices in the after-hours market nearly always jump after the release of earnings announcements is indicative of a very rapid price discovery process and, thus, consistent with efficient markets. However, the price jump in the immediate aftermath of an earnings release may be too big or too small, potentially introducing profitable trading opportunities. We examine if in fact prices are unbiased predictors of their future steady-state values by studying the performance of a trading rule subject to different degrees of trading frictions.

In the absence of transaction costs, we find that investors could have earned highly significant, positive average returns in the early years of our sample (2008--2015). Conversely, in the second part of our sample (2016--2020), even minor trading frictions, such as bid-ask spreads, reduce average returns to the point where they are no longer statistically significant. We conclude from this analysis that the after-hours market incorporates information on the earnings announcements of the largest U.S. firms extremely fast, particularly after 2016.

\pagebreak


\bibliographystyle{rfs}
\bibliography{userref}

\clearpage


\begin{figure}[h!]
\begin{center}
\caption{Examples of price jumps after earnings announcements. \label{figure:jump-illustration}}
\begin{tabular}{cc}
\small{Panel A: MSFT 10/24/2013} & \small{Panel B: QCOM 11/02/2011.} \\
\includegraphics[height=8cm,width=0.48\textwidth]{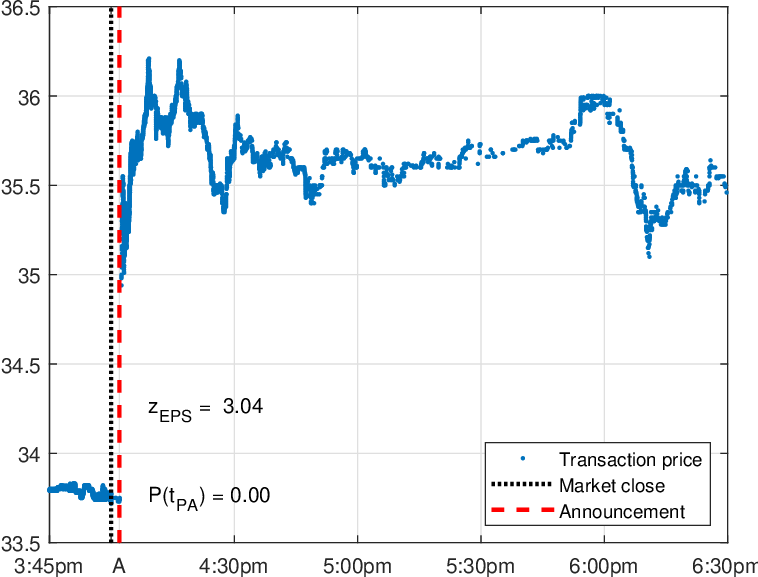} &
\includegraphics[height=8cm,width=0.48\textwidth]{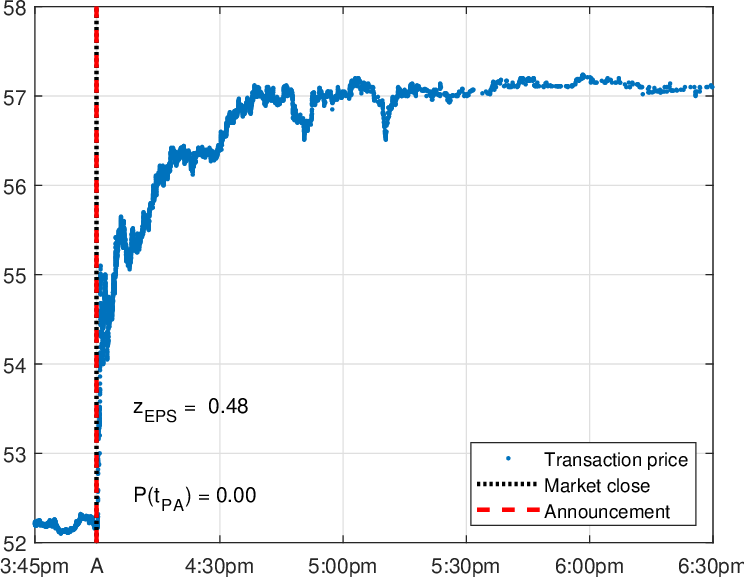}
\end{tabular}
\begin{scriptsize}
\parbox{\textwidth}{\emph{Note.} We plot the price from 3:45pm to 6:30pm for Microsoft on 10/24/2013 in Panel A, and Qualcomm on 11/02/2011 in Panel B, where the companies made an earnings announcement in the after-hours trading session. The figure also shows the market close and the announcement time. $z_{ \text{EPS}}$ is the standardized unexpected earnings, and $P($t$_{ \text{PA}})$ is the $P$-value of the pre-averaging jump test statistic.}
\end{scriptsize}
\end{center}
\end{figure}

\clearpage

\begin{figure}[ht!]
\begin{center}
\caption{Apple's third-quarter 2020 earnings announcement on 07/30/2020.}
\label{figure:warpspeed}
\begin{tabular}{cc}
\small{Panel A: 60-second horizon.} & \small{Panel B: 100-millisecond horizon.} \\
\includegraphics[height=8cm,width=0.48\textwidth]{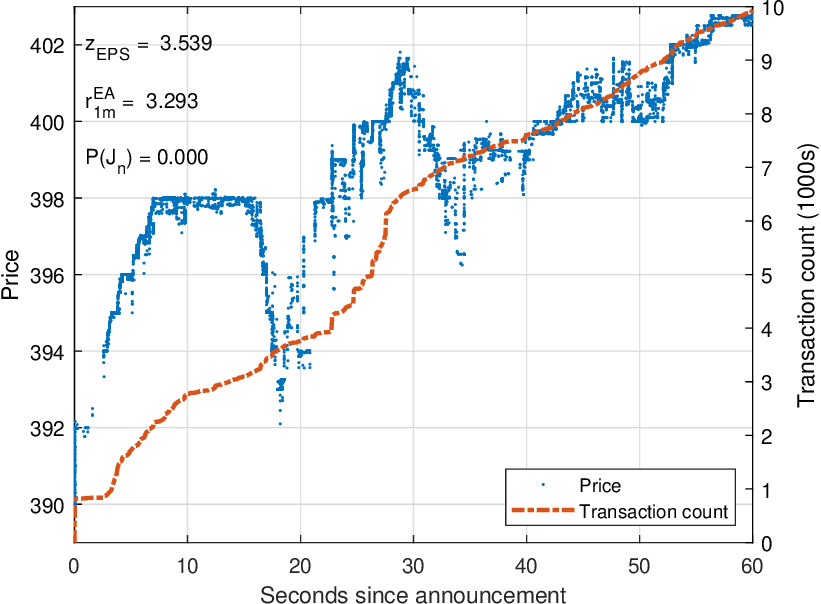} &
\includegraphics[height=8cm,width=0.48\textwidth]{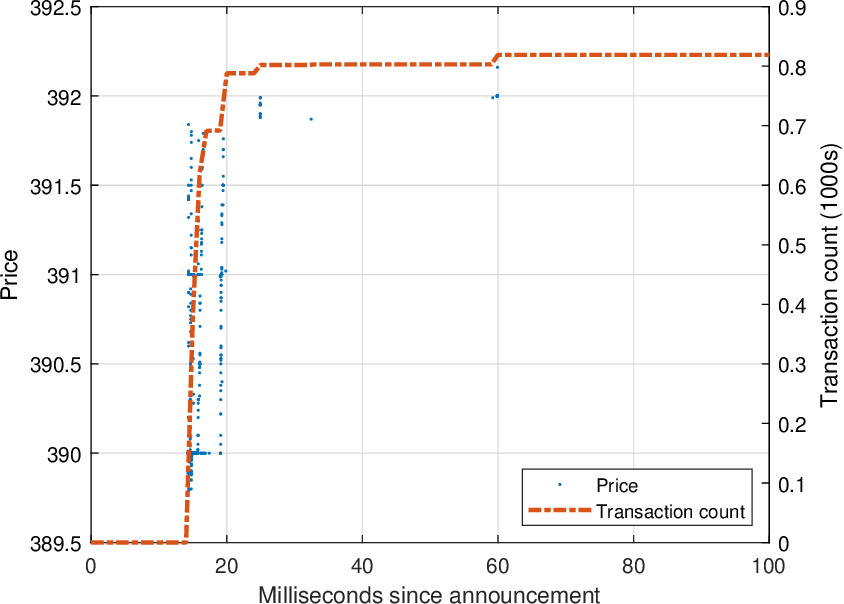} \\
\end{tabular}
\begin{scriptsize}
\parbox{\textwidth}{\emph{Note.} The figure shows the post-announcement price (left $y$-axis) and cumulative transaction count in 1000s (right $y$-axis) for Apple's third-quarter 2020 earnings announcement, which was released at 4:30pm on 07/30/2020. In Panel A, we plot the data from the first 60 seconds since the announcement, whereas Panel B zooms further in on the first 100 milliseconds since the announcement.}
\end{scriptsize}
\end{center}
\end{figure}

\clearpage

\begin{figure}[ht!]
\begin{center}
\caption{Trading volume and bid-ask spread in the after-hours market.} \label{figure:trade-spread-announcement}
\begin{tabular}{cc}
\small{Panel A: Transaction count.} & \small{Panel B: Bid-ask spread (in bps).} \\
\includegraphics[height=8cm,width=0.48\textwidth]{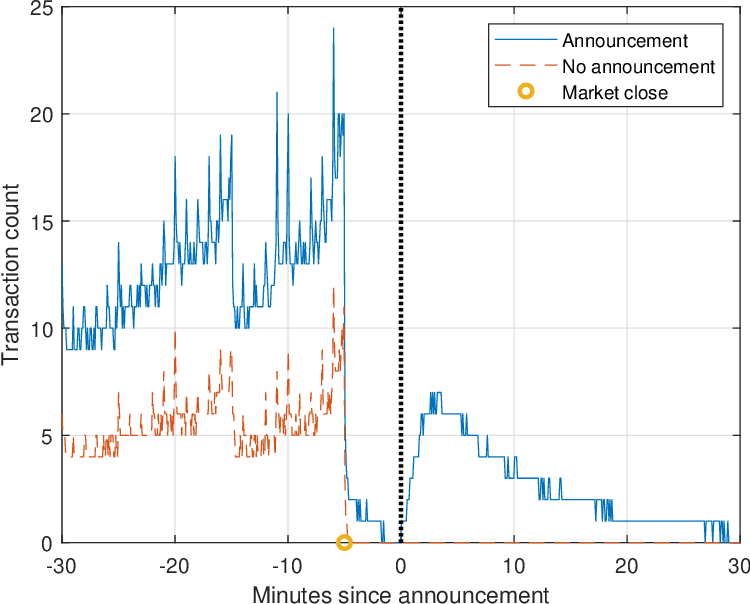} &
\includegraphics[height=8cm,width=0.48\textwidth]{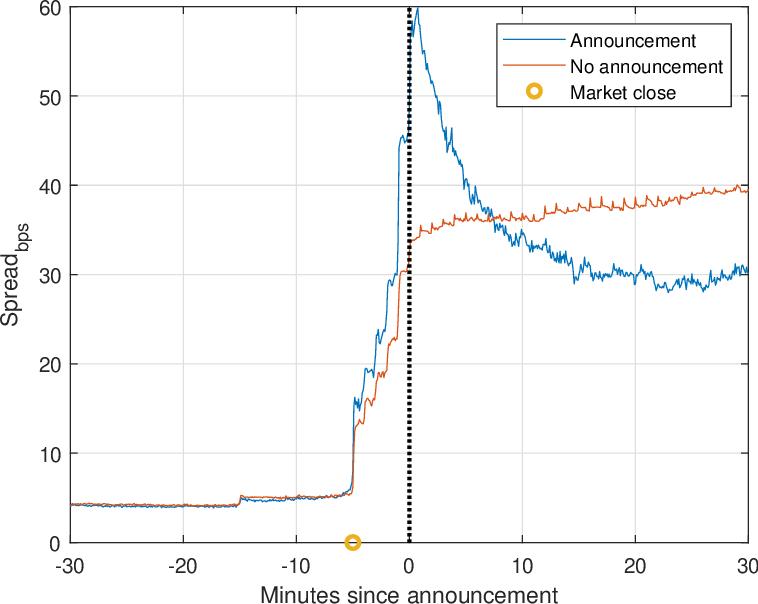}
\end{tabular}
\begin{scriptsize}
\parbox{\textwidth}{\emph{Note.} In Panel A, we show the cross-sectional sample average trading volume (measured by transaction counts) for each five-second interval in a one-hour window centered around the earnings announcement, which occurs at time 0. In Panel B, we report the associated median bid-ask spread (in basis points), which is computed as $\text{Spread}_{ \text{bps}} = 10000 \times ( \text{ask} - \text{bid})/ \text{midquote}$, where $\text{midquote} = (\text{bid} + \text{ask})/2$. As a control sample, for each announcement we select a random non-announcement date without replacement and calculate trading volume and $\text{Spread}_{ \text{bps}}$ on the corresponding time interval. The modal announcement time across companies is 4:05pm. Hence, the exchange typically closes five minutes prior to an announcement (highlighted by an orange circle).}
\end{scriptsize}
\end{center}
\end{figure}

\clearpage

\begin{figure}[ht!]
\begin{center}
\caption{Pre-averaged realized variance and bipower variation.} \label{figure:regular-versus-extended-all}
\begin{tabular}{cc}
\small{Panel A: Incremental return variation.} & \small{Panel B: Incremental jump variation.} \\
\includegraphics[height=8cm,width=0.48\textwidth]{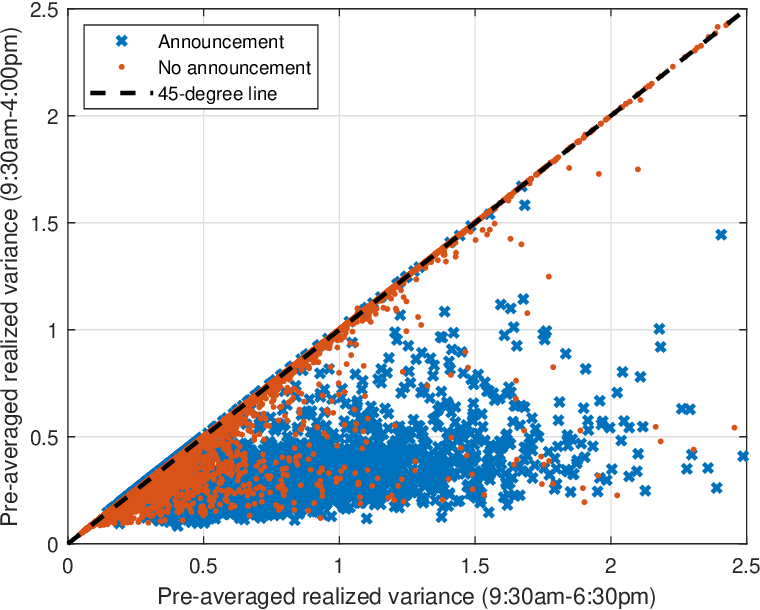} &
\includegraphics[height=8cm,width=0.48\textwidth]{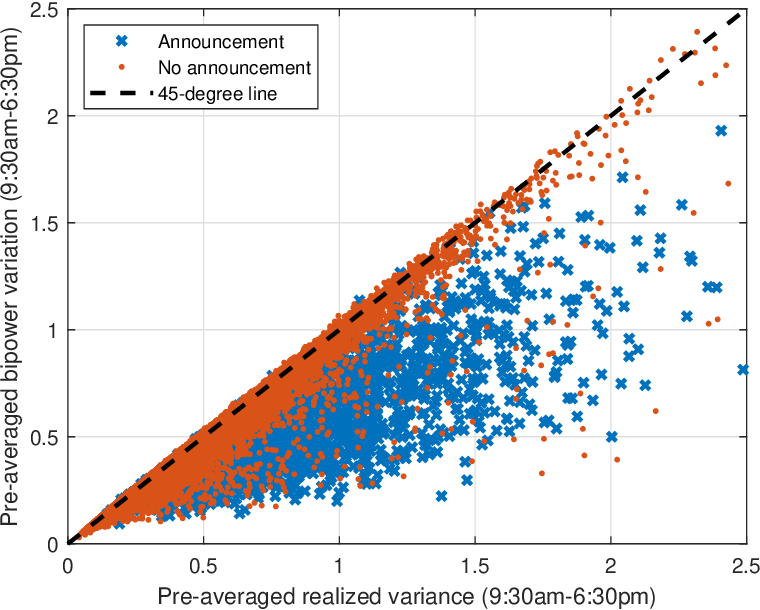}
\end{tabular}
\begin{scriptsize}
\parbox{\textwidth}{\emph{Note.} We show point estimates of the pre-averaged realized variance and pre-averaged bipower variation, converted to an annualized standard deviation. The axis label shows which estimator is plotted. In parenthesis, we further indicate for which part of the day high-frequency data are employed to calculate the estimate.}
\end{scriptsize}
\end{center}
\end{figure}

\clearpage

\begin{figure}[ht!]
\begin{center}
\caption{Distribution of the estimated jump proportion.}
\label{figure:jump-proportion}
\begin{tabular}{cc}
\small{Panel A: No announcement.} & \small{Panel B: Announcement.} \\
\includegraphics[height=8cm,width=0.48\textwidth]{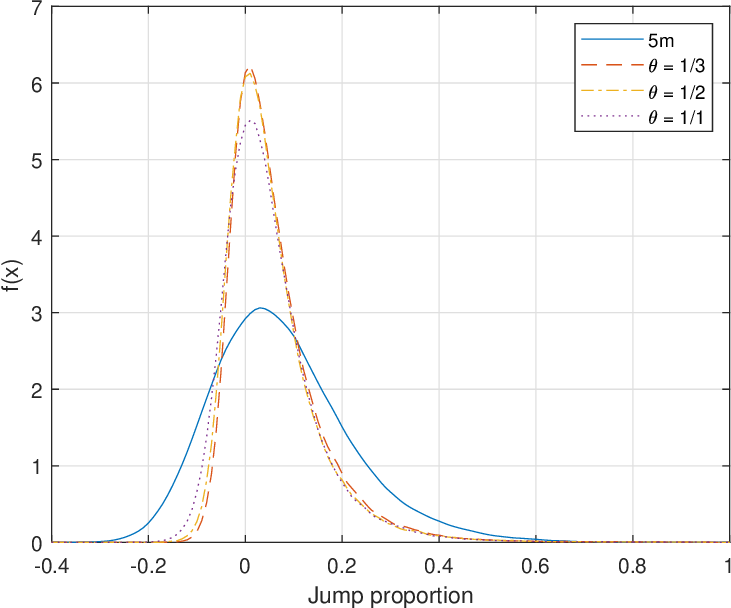} &
\includegraphics[height=8cm,width=0.48\textwidth]{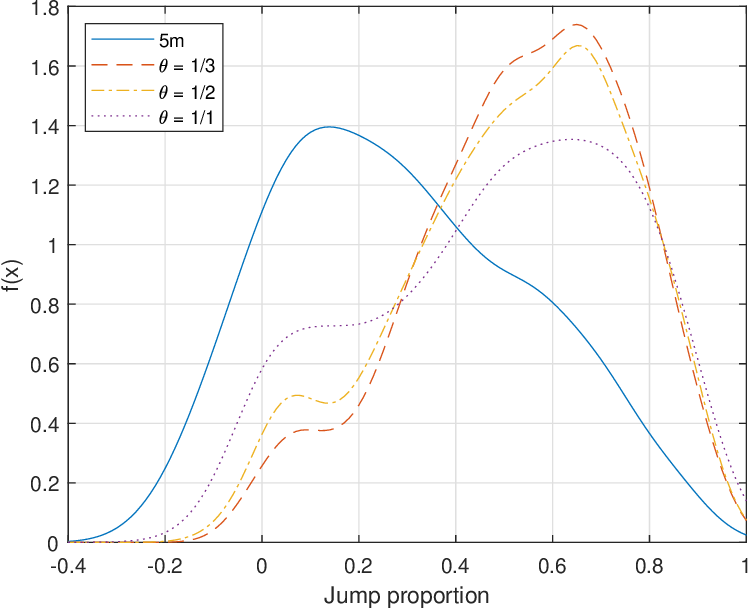}
\end{tabular}
\begin{scriptsize}
\parbox{\textwidth}{\emph{Note}. This figure shows kernel smoothed densities of the proportion of quadratic return variation from the jump component, which is estimated by: Jump proportion = 1 - Bipower variation / Realized variance. The latter are computed without pre-averaging if sampling at a 5-minute frequency (5m) or with pre-averaging if sampling at the tick-by-tick frequency (rest). $\theta$ controls the size of the pre-averaging horizon $k_{n} = \lfloor \theta \sqrt{n} \rfloor$ for calculating pre-averaged returns. $n$ is the number of tick-by-tick data. The sample is split into days without (in Panel A) and with earnings announcements (in Panel B).}
\end{scriptsize}
\end{center}
\end{figure}

\clearpage

\begin{figure}[ht!]
\begin{center}
\caption{Jump frequency in the regular and extended trading session.}
\label{figure:jump-frequency}
\begin{tabular}{cc}
\small{Panel A: Regular trading session.} & \small{Panel B: Extended trading session.} \\
\includegraphics[height=8cm,width=0.48\textwidth]{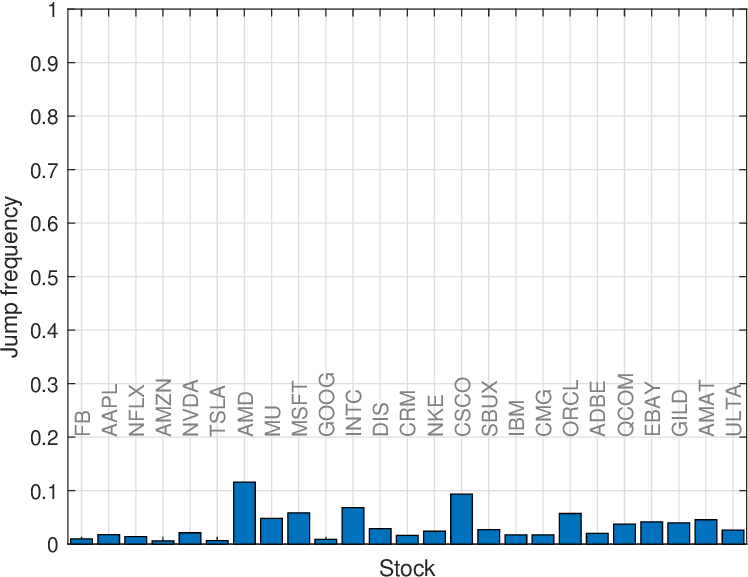} &
\includegraphics[height=8cm,width=0.48\textwidth]{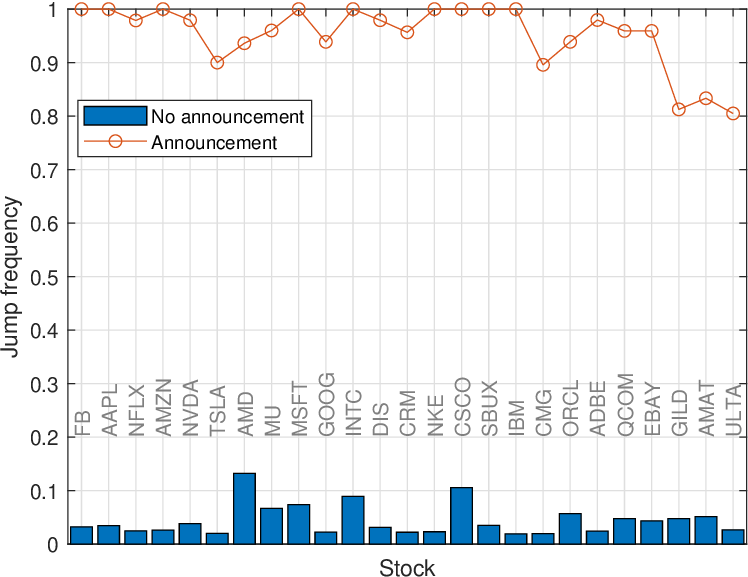}
\end{tabular}
\begin{scriptsize}
\parbox{\textwidth}{\emph{Note.} We report the proportion of days where our noise-robust jump test is significant during the regular trading session (9:30am--4:00pm) and extended trading session (9:30am--6:30pm). The latter is separated into days with and without earnings announcements.}
\end{scriptsize}
\end{center}
\end{figure}

\clearpage

\begin{figure}[ht!]
\begin{center}
\caption{Jump frequency and post-announcement return against standardized earnings surprise.}
\label{figure:decile-sort}
\begin{tabular}{cc}
\small{Panel A: Jump frequency.} & \small{Panel B: Post-announcement return.} \\
\includegraphics[height=8cm,width=0.48\textwidth]{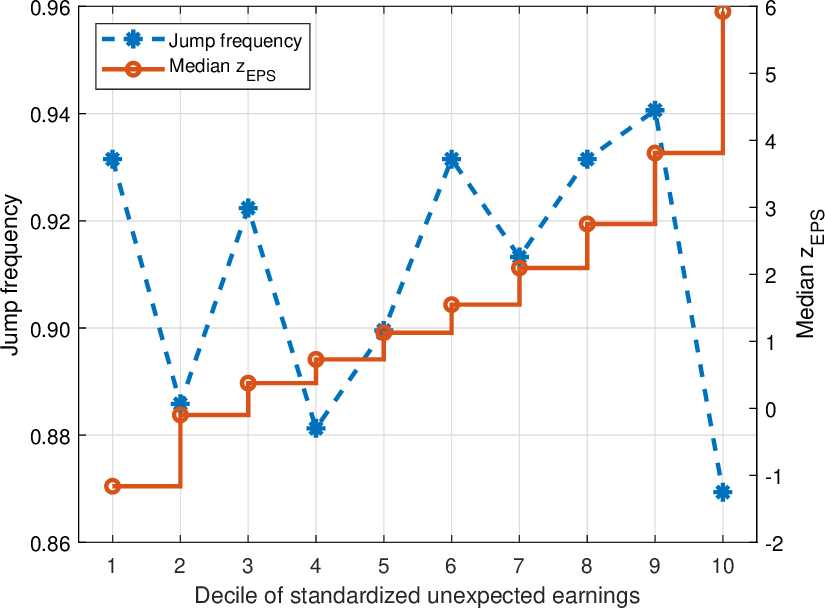} &
\includegraphics[height=8cm,width=0.48\textwidth]{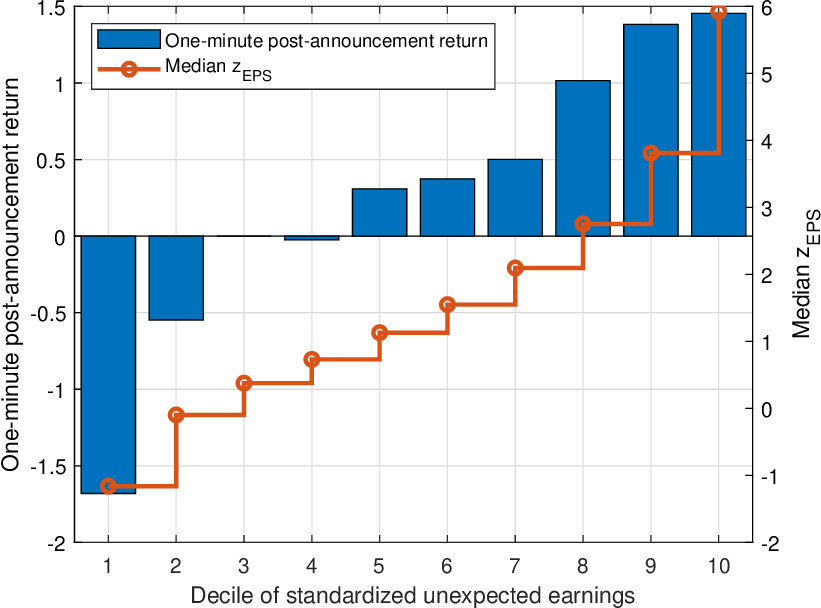}
\end{tabular}
\begin{scriptsize}
\parbox{\textwidth}{\emph{Note.} We sort our earnings announcements by the value of the standardized unexpected earnings, $z_{ \text{EPS}}$, and form decile portfolios from lowest to highest values of the earnings surprise. In Panel A, we then plot the corresponding announcement jump frequency implied by our noise-robust jump test (averaged within decile). In Panel B, we report the one-minute post-announcement return (averaged within decile). The median value of $z_{ \text{EPS}}$ within each decile is plotted against the right-hand $y$-axis.}
\end{scriptsize}
\end{center}
\end{figure}

\clearpage

\begin{figure}[ht!]
\begin{center}
\caption{Post-announcement return against standardized earnings surprise.}
\label{figure:return-regression}
\begin{tabular}{c}
\includegraphics[height=8cm,width=0.48\textwidth]{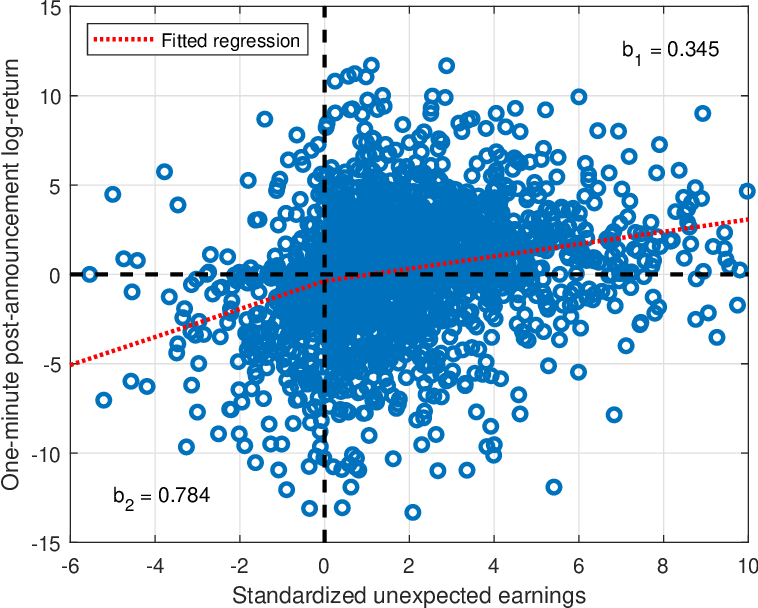}
\end{tabular}
\begin{scriptsize}
\parbox{\textwidth}{\emph{Note.} We show the one-minute post-announcement return, $r_{1m}^{ \text{EA}}$, against the standardized earnings surprise, $z_{ \text{EPS}}$. The fit from a pooled panel regression $r_{1m,it}^{ \text{EA}} = a + b_{1} z_{ \text{EPS},it}^{+} + b_{2} z_{ \text{EPS},it}^{-} + \epsilon_{it}$ is plotted as a red dotted line for visual reference.}
\end{scriptsize}
\end{center}
\end{figure}

\clearpage

\begin{figure}[ht!]
\begin{center}
\caption{Cumulative return from trading strategy.}
\label{figure:cumulative-return}
\begin{tabular}{c}
\includegraphics[height=8cm,width=0.48\textwidth]{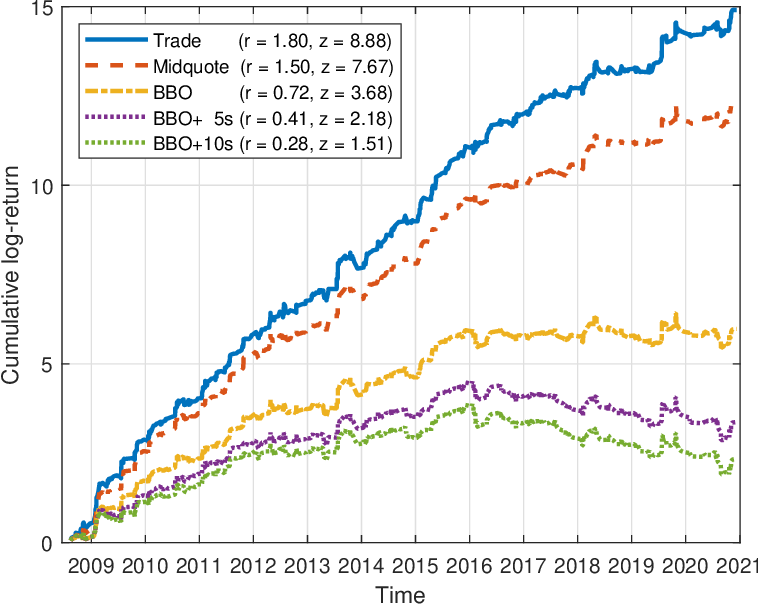}
\end{tabular}
\begin{scriptsize}
\parbox{\textwidth}{\emph{Note.} The figure shows the cumulative return from a trading strategy that employs the standardized earnings surprise to predict the one-minute post-announcement return: $r_{1m,it}^{ \text{EA}} = a + b_{1} z_{ \text{EPS},it}^{+} + b_{2} z_{ \text{EPS},it}^{-} + \epsilon_{it}$. A long (short) position in the stock is entered if the predicted return is greater (smaller) than 0.5\% (-0.5\%) and held until 6:30pm. The sample average return and test statistic for testing that the mean return is zero, based on robust standard errors, is reported in parenthesis. ``Trade'' employs the transaction price, ``Midquote'' the midquote, and ``BBO'' the best bid and offer. ``+Xs'' enforces a latency delay of X seconds before entrance.}
\end{scriptsize}
\end{center}
\end{figure}

\clearpage

\begin{sidewaysfigure}[ht!]
\begin{center}
\caption{Cumulative returns from trading strategy.}
\label{figure:trading-result}
\begin{tabular}{ccc}
\small{Panel A: Trade.} & \small{Panel B: Midquote.} & \small{Panel C: BBO.} \\
\includegraphics[height=0.35\textwidth,width=0.30\textwidth]{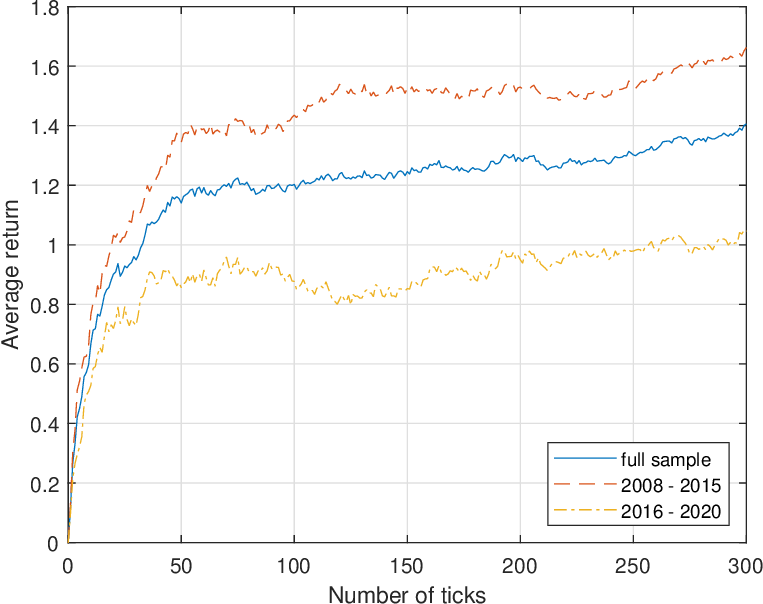} &
\includegraphics[height=0.35\textwidth,width=0.30\textwidth]{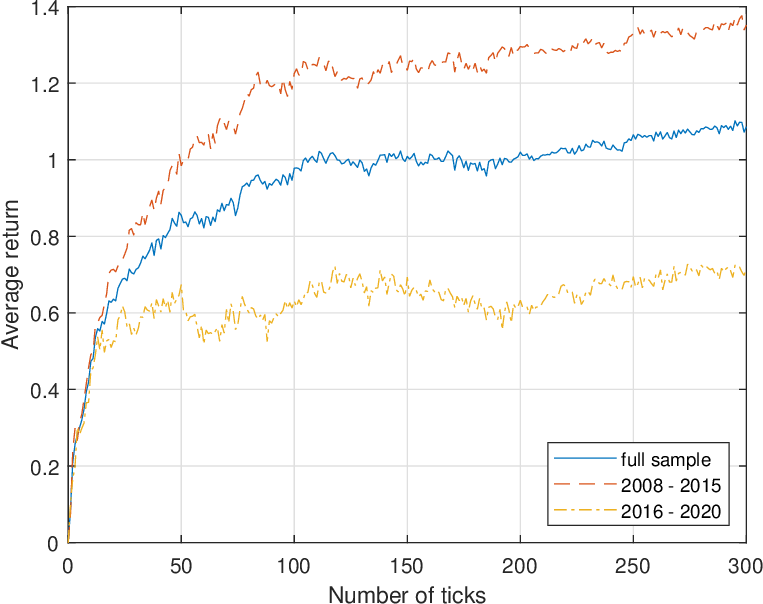} &
\includegraphics[height=0.35\textwidth,width=0.30\textwidth]{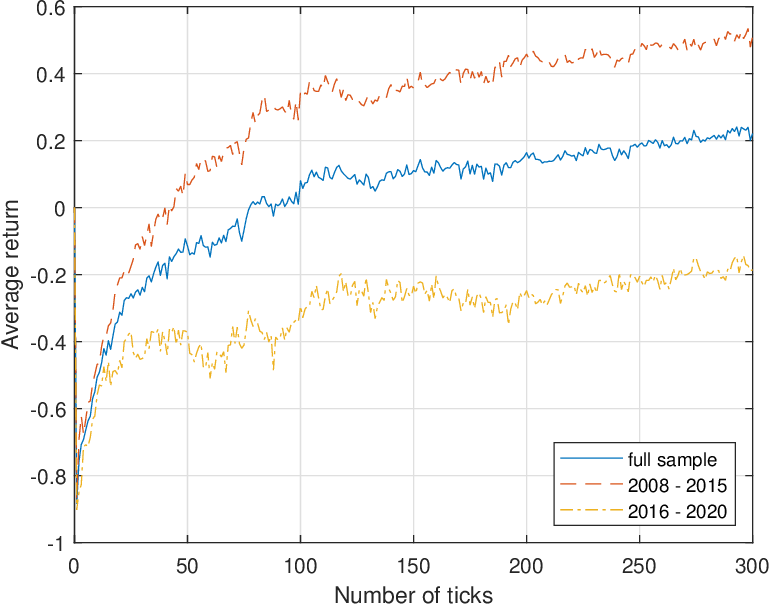} \\
\end{tabular}
\begin{scriptsize}
\parbox{\textwidth}{\emph{Note.} The figure shows the cumulative return in percent from a trading strategy that employs the standardized earnings surprise to predict the one-minute post-announcement return: $r_{1m,it}^{ \text{EA}} = a + b_{1} z_{ \text{EPS},it}^{+} + b_{2} z_{ \text{EPS},it}^{-} + \epsilon_{it}$. A long (short) position in the stock is entered if the predicted excess return is greater (smaller) than 0.5\% (-0.5\%). The cumulative return is based on closing out the position after a fixed number of tick updates. ``Trade'' employs the transaction price, ``Midquote'' the midquote, and ``BBO'' the best bid and offer.}
\end{scriptsize}
\end{center}
\end{sidewaysfigure}

\clearpage

\begin{figure}[ht!]
\begin{center}
\caption{One-minute return distribution.}
\label{figure:return-distribution}
\begin{tabular}{cc}
\small{Panel A: Excluding AFs} & \small{Panel B: Including AFs} \\
\includegraphics[height=8cm,width=0.48\textwidth]{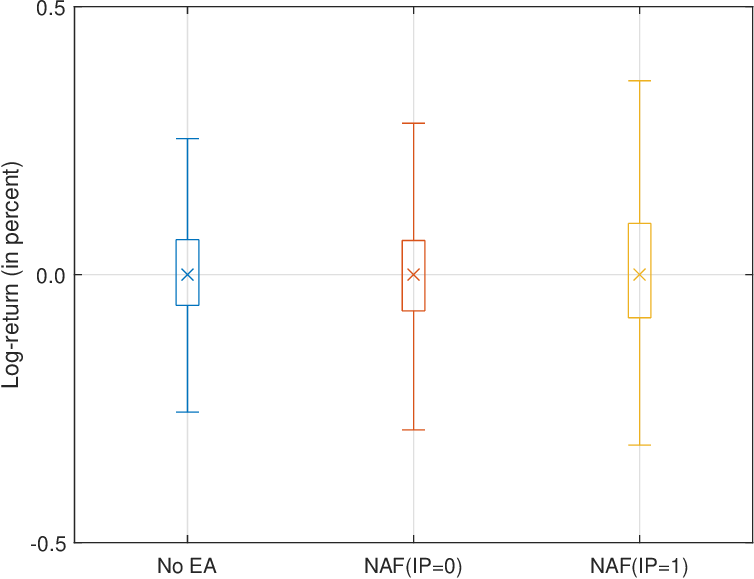} &
\includegraphics[height=8cm,width=0.48\textwidth]{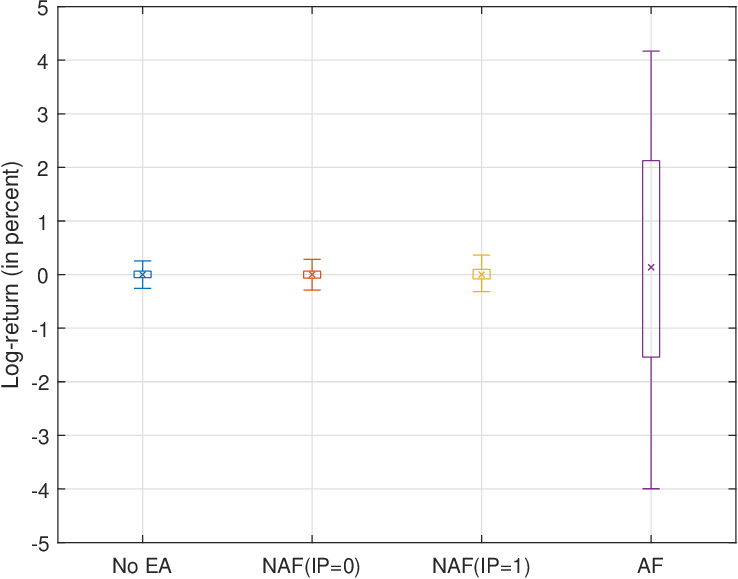}
\end{tabular}
\begin{scriptsize}
\parbox{\textwidth}{\emph{Note.} The box plots in the left panel show the one-minute return distribution of (i) non-announcing firms (NAFs) on common non-announcement days (no EA), (ii) NAFs outside the industry of the announcing firm (AF), NAF(IP = 0), and (iii) NAFs within the industry of the AF, NAF(IP = 1). The cross measures the median, the box is the interquartile range, and the whisker marks the 0.10 (lower) and 0.90 (upper) percentile. While (i) is based on days when none of our firms are announcing, (ii) and (iii) are based on days with at least one announcement. We construct (i) by taking for each announcement of an AF a common non-announcement day at random and measuring the stock price movement of the AF over the one-minute window aligned with its associated announcement time. In Panel B, we add the one-minute post-announcement return distribution of the AFs in our sample.}
\end{scriptsize}
\end{center}
\end{figure}

\clearpage

\begin{sidewaystable}[p!]
\begin{footnotesize}
\setlength{ \tabcolsep}{0.10cm}
\begin{center}
\caption{ \normalsize S\&P 500 companies by after-hours (4:00pm--6:30pm) trading activity.}
\label{table:sp500-volume-pm.tex}
\vspace*{-0.25cm}
\begin{tabular}{lcrrrrrrrrrrrrrrrrrrrr}
\hline \hline
& & \multicolumn{7}{c}{Conditional on no announcement} && \multicolumn{7}{c}{Conditional on announcement} && \multicolumn{4}{c}{Announcement information}\\
\cline{3-9} \cline{11-17} \cline{19-22} 
& & & & & \multicolumn{4}{c}{Quantile} & & & & & \multicolumn{4}{c}{Quantile}\\
\cline{6-9} \cline{14-17}
Ticker & SIC & Mean & Fraction & Std. & 0.25 & 0.50 & 0.75 & 0.99 & & Mean & Fraction & Std. & 0.25 & 0.50 & 0.75 & 0.99 & & \#EA & Time & $z_{ \text{EPS}}$ & $r_{1m}^{ \text{EA}}$\\
\hline
FB & 7370 & 891 & 0.52 & 1,258 & 307& 526& 1,000& 5,380 & & 76,129 & 20.80 & 29,991 & 59,075 & 70,024 & 88,936 & 211,431 & & 34 & 4:05pm & $\underset{(1.252)}{1.750}$ & $\underset{(4.446)}{0.336}$
\\
AAPL & 3571 & 1,552 & 0.61 & 3,950 & 386& 649& 1,235& 18,311 & & 51,346 & 15.94 & 29,927 & 32,891 & 44,509 & 56,832 & 194,220 & & 40 & 4:30pm & $\underset{(1.584)}{1.669}$ & $\underset{(3.395)}{0.806}$
\\
NFLX & 7841 & 367 & 0.55 & 606 & 52& 128& 434& 2,732 & & 33,328 & 20.89 & 26,447 & 14,460 & 25,303 & 46,619 & 122,817 & & 47 & 4:05pm & $\underset{(1.621)}{1.028}$ & $\underset{(6.769)}{-0.236}$
\\
AMZN & 5961 & 808 & 0.82 & 1,442 & 74& 152& 866& 6,751 & & 33,612 & 22.16 & 18,872 & 22,063 & 28,423 & 38,523 & 94,020 & & 50 & 4:01pm & $\underset{(1.790)}{0.687}$ & $\underset{(5.714)}{-0.369}$
\\
NVDA & 3674 & 405 & 0.36 & 860 & 39& 74& 428& 3,912 & & 16,909 & 8.45 & 20,656 & 1,903 & 4,456 & 36,972 & 68,790 & & 48 & 4:20pm & $\underset{(1.787)}{1.626}$ & $\underset{(3.988)}{0.714}$
\\
TSLA & 3711 & 2,156 & 0.86 & 10,399 & 25& 195& 824& 32,536 & & 27,210 & 15.41 & 35,016 & 1,287 & 15,304 & 30,270 & 139,936 & & 40 & 4:05pm & $\underset{(1.696)}{0.355}$ & $\underset{(3.422)}{1.248}$
\\
AMD & 3674 & 579 & 0.38 & 1,486 & 40& 80& 612& 5,722 & & 15,417 & 7.37 & 21,607 & 1,711 & 4,849 & 25,714 & 86,880 & & 47 & 4:15pm & $\underset{(1.342)}{0.746}$ & $\underset{(4.201)}{-0.859}$
\\
MU & 3674 & 251 & 0.21 & 480 & 53& 102& 280& 1,959 & & 13,508 & 6.89 & 13,340 & 1,990 & 11,078 & 21,868 & 59,740 & & 50 & 4:01pm & $\underset{(1.259)}{0.262}$ & $\underset{(2.927)}{0.196}$
\\
MSFT & 7372 & 609 & 0.27 & 1,425 & 112& 195& 479& 6,460 & & 20,152 & 7.31 & 17,196 & 9,919 & 16,144 & 24,788 & 104,119 & & 46 & 4:03pm & $\underset{(2.698)}{2.257}$ & $\underset{(1.787)}{-0.148}$
\\
GOOG & 7370 & 198 & 0.60 & 370 & 61& 100& 201& 1,524 & & 15,221 & 20.81 & 7,253 & 9,630 & 14,043 & 19,716 & 30,074 & & 49 & 4:01pm & $\underset{(1.849)}{0.930}$ & $\underset{(2.973)}{0.347}$
\\
INTC & 3674 & 278 & 0.20 & 549 & 92& 159& 281& 2,029 & & 22,705 & 9.72 & 17,642 & 12,198 & 17,908 & 27,180 & 102,918 & & 42 & 4:01pm & $\underset{(2.250)}{2.320}$ & $\underset{(2.425)}{-0.034}$
\\
DIS & 7990 & 328 & 0.38 & 898 & 26& 57& 276& 3,666 & & 10,043 & 7.87 & 12,298 & 2,116 & 5,545 & 11,760 & 56,136 & & 48 & 4:15pm & $\underset{(2.192)}{1.263}$ & $\underset{(2.200)}{-0.194}$
\\
CRM & 7372 & 102 & 0.24 & 296 & 18& 34& 90& 1,066 & & 10,131 & 12.71 & 13,701 & 4,216 & 6,285 & 10,431 & 86,704 & & 46 & 4:05pm & $\underset{(1.728)}{1.761}$ & $\underset{(3.484)}{0.874}$
\\
NKE & 3021 & 98 & 0.23 & 171 & 16& 33& 120& 715 & & 7,339 & 10.79 & 6,527 & 2,679 & 5,947 & 9,337 & 36,950 & & 48 & 4:15pm & $\underset{(2.632)}{2.309}$ & $\underset{(2.903)}{0.421}$
\\
CSCO & 3576 & 194 & 0.17 & 239 & 74& 124& 219& 1,210 & & 20,912 & 10.00 & 16,488 & 11,140 & 17,694 & 24,999 & 91,696 & & 50 & 4:05pm & $\underset{(1.428)}{2.331}$ & $\underset{(2.588)}{-0.109}$
\\
SBUX & 5810 & 145 & 0.24 & 314 & 39& 61& 153& 1,269 & & 6,607 & 7.46 & 4,012 & 3,670 & 5,691 & 9,331 & 16,886 & & 50 & 4:05pm & $\underset{(1.872)}{1.126}$ & $\underset{(2.998)}{-0.347}$
\\
IBM & 3570 & 79 & 0.21 & 124 & 30& 49& 87& 469 & & 8,385 & 11.85 & 5,432 & 5,161 & 7,474 & 9,924 & 30,725 & & 48 & 4:08pm & $\underset{(1.337)}{1.203}$ & $\underset{(2.058)}{0.030}$
\\
CMG & 5812 & 47 & 0.45 & 157 & 9& 24& 53& 263 & & 4,695 & 18.30 & 4,002 & 1,222 & 3,821 & 7,024 & 16,027 & & 48 & 4:10pm & $\underset{(2.324)}{1.231}$ & $\underset{(4.224)}{-0.443}$
\\
PYPL & 7389 & 205 & 0.28 & 273 & 51& 125& 216& 1,297 & & 10,836 & 8.74 & 10,650 & 4,104 & 6,680 & 13,351 & 47,021 & & 21 & 4:15pm & $\underset{(1.944)}{2.159}$ & $\underset{(2.525)}{0.116}$
\\
ORCL & 7372 & 102 & 0.11 & 554 & 39& 61& 96& 612 & & 8,994 & 6.34 & 5,863 & 4,131 & 8,979 & 11,217 & 26,529 & & 49 & 4:00pm & $\underset{(2.804)}{1.757}$ & $\underset{(2.402)}{-0.220}$
\\
ADBE & 7372 & 84 & 0.24 & 202 & 26& 40& 81& 624 & & 5,389 & 9.51 & 4,273 & 2,675 & 4,628 & 6,477 & 23,037 & & 49 & 4:05pm & $\underset{(2.084)}{2.235}$ & $\underset{(2.856)}{0.102}$
\\
QCOM & 3663 & 145 & 0.18 & 581 & 49& 74& 130& 947 & & 9,695 & 8.21 & 7,473 & 5,000 & 7,594 & 12,302 & 33,231 & & 49 & 4:00pm & $\underset{(1.444)}{1.355}$ & $\underset{(2.639)}{0.500}$
\\
EBAY & 7389 & 77 & 0.13 & 86 & 38& 55& 85& 374 & & 8,664 & 8.15 & 7,007 & 4,180 & 7,488 & 11,404 & 42,753 & & 49 & 4:15pm & $\underset{(1.518)}{1.624}$ & $\underset{(3.514)}{-0.489}$
\\
GILD & 2836 & 208 & 0.23 & 2,014 & 46& 69& 112& 1,685 & & 4,125 & 4.07 & 5,449 & 566 & 1,830 & 7,064 & 25,630 & & 48 & 4:05pm & $\underset{(1.703)}{0.671}$ & $\underset{(1.362)}{0.081}$
\\
AMAT & 3674 & 88 & 0.15 & 106 & 37& 59& 101& 476 & & 3,368 & 3.85 & 3,221 & 666 & 2,268 & 4,998 & 12,834 & & 48 & 4:01pm & $\underset{(1.878)}{1.573}$ & $\underset{(2.278)}{0.324}$
\\
\hline \hline
\end{tabular}
\smallskip
\begin{scriptsize}
\parbox{0.98\textwidth}{\emph{Note.} 
In the left-hand side, we show descriptive statistics of the most liquid companies from the S\&P 500 index based on their after-hours market (4:00pm--6:30pm) trading volume, which is separated into no announcement days (columns 2--8) and announcement days (columns 9--15).
Ticker is the stock listing symbol.
SIC is the standard industrial classification code.
Mean and Std. are the transaction count sample average and standard deviation.
Quantile are selected quantiles from the empirical transaction count distribution.
Fraction is the number of transactions executed in the after-hours session divided by the total transaction count for the extended trading session from 9:30am--6:30pm.
In the right-hand side, we report announcement information.
\#EA is the number of earnings announcements.
Time is the modal announcement time (Eastern Standard Time).
$z_{ \text{EPS}}$ and $r_{1m}^{ \text{EA}}$ are the sample average standardized unexpected earnings and one-minute post-announcement return (standard deviation across announcements reported below in parenthesis).
}
\end{scriptsize}
\end{center}
\end{footnotesize}
\end{sidewaystable}

\clearpage

\begin{sidewaystable}[p!]
\setlength{\tabcolsep}{0.25cm}
\begin{center}
\caption{Sample average of realized variance, bipower variation, and jump proportion.}
\label{table:rv-descriptive}
\vspace*{-0.25cm}
\begin{tabular}{lrrrrrrrrrrrrrrrrrrr}
\hline \hline
& \multicolumn{9}{c}{Panel A: Regular trading session (9:30am--4:00pm)} && \multicolumn{9}{c}{Panel B: Extended trading session (9:30am--6:30pm)} \\
\cline{2-10} \cline{12-20}
& \multicolumn{4}{c}{no EA} && \multicolumn{4}{c}{EA} && \multicolumn{4}{c}{no EA} && \multicolumn{4}{c}{EA} \\
\cline{2-5} \cline{7-10} \cline{12-15} \cline{17-20}
& RV & BV & JP & JF && RV & BV & JP & JF && RV & BV & JP & JF && RV & BV & JP & JF \\
FB & 16.2 & 16.1 & 0.6 & 1.0 && 26.0 & 25.9 & 1.6 & 0.0 && 16.5 & 16.2 & 1.6 & 3.1 && 94.0 & 66.2 & 42.6 & 97.1 \\
AAPL & 20.3 & 20.0 & 1.6 & 1.6 && 19.8 & 19.3 & 4.1 & 5.0 && 20.6 & 20.2 & 2.2 & 3.3 && 61.1 & 37.9 & 51.7 & 100.0 \\
NFLX & 30.5 & 29.9 & 0.8 & 1.3 && 35.5 & 35.2 & 0.9 & 2.1 && 30.7 & 30.0 & 1.2 & 2.4 && 132.2 & 68.1 & 57.3 & 97.9 \\
AMZN & 23.0 & 22.9 & 0.0 & 0.5 && 27.4 & 27.1 & 1.2 & 0.0 && 23.2 & 23.0 & 0.7 & 2.4 && 103.2 & 59.1 & 50.4 & 100.0 \\
NVDA & 29.1 & 28.3 & 3.8 & 1.9 && 34.4 & 33.7 & 3.4 & 2.1 && 29.4 & 28.3 & 4.5 & 3.5 && 92.4 & 46.9 & 54.2 & 97.9 \\
TSLA & 27.9 & 27.5 & 0.2 & 0.5 && 37.6 & 36.9 & 0.2 & 5.0 && 28.4 & 27.8 & 0.9 & 2.0 && 100.0 & 69.9 & 37.7 & 90.0 \\
AMD & 38.0 & 33.1 & 14.8 & 10.9 && 40.9 & 34.8 & 16.7 & 19.1 && 38.4 & 33.1 & 15.3 & 12.6 && 95.7 & 50.3 & 49.0 & 93.6 \\
MU & 35.6 & 33.3 & 7.6 & 4.5 && 40.4 & 37.3 & 7.8 & 2.0 && 35.8 & 33.2 & 8.4 & 6.3 && 83.6 & 56.0 & 40.7 & 96.0 \\
MSFT & 18.7 & 18.1 & 6.4 & 5.3 && 20.9 & 20.2 & 6.1 & 4.3 && 18.9 & 18.1 & 6.9 & 7.0 && 54.8 & 35.1 & 49.8 & 100.0 \\
GOOG & 17.5 & 17.4 & 0.6 & 0.8 && 20.5 & 20.5 & 0.4 & 0.0 && 17.7 & 17.4 & 1.1 & 2.2 && 69.9 & 38.8 & 51.1 & 93.9 \\
INTC & 20.6 & 19.8 & 7.8 & 6.4 && 24.4 & 23.6 & 7.1 & 0.0 && 20.8 & 19.8 & 8.4 & 8.3 && 66.6 & 42.5 & 50.6 & 100.0 \\
DIS & 17.4 & 16.9 & 4.3 & 2.7 && 21.8 & 21.2 & 5.9 & 4.2 && 17.5 & 16.9 & 4.5 & 3.0 && 52.0 & 24.7 & 61.6 & 97.9 \\
CRM & 23.6 & 22.9 & 3.7 & 1.5 && 32.3 & 31.2 & 5.0 & 2.2 && 23.7 & 22.7 & 4.0 & 2.1 && 80.3 & 34.5 & 59.0 & 95.7 \\
NKE & 17.5 & 16.9 & 4.5 & 2.1 && 21.3 & 20.7 & 5.3 & 8.3 && 17.5 & 16.9 & 4.5 & 2.2 && 60.3 & 21.6 & 63.5 & 100.0 \\
CSCO & 18.9 & 17.9 & 9.9 & 8.7 && 22.1 & 21.0 & 9.9 & 14.0 && 19.0 & 17.9 & 10.3 & 9.9 && 67.1 & 44.1 & 47.5 & 100.0 \\
SBUX & 19.3 & 18.7 & 4.6 & 2.6 && 23.7 & 22.8 & 6.5 & 2.0 && 19.4 & 18.6 & 4.9 & 3.3 && 62.1 & 32.1 & 52.7 & 100.0 \\
IBM & 15.0 & 14.7 & 3.2 & 1.6 && 17.4 & 17.1 & 2.9 & 2.1 && 15.0 & 14.7 & 3.3 & 1.8 && 47.3 & 20.9 & 62.5 & 100.0 \\
CMG & 18.2 & 17.2 & 2.7 & 1.6 && 23.5 & 22.8 & 3.1 & 0.0 && 18.2 & 17.1 & 2.8 & 1.8 && 76.1 & 17.8 & 60.5 & 87.5 \\
PYPL & 10.1 & 9.9 & 3.7 & 1.1 && 25.4 & 25.1 & 3.0 & 0.0 && 10.1 & 9.9 & 4.1 & 1.6 && 80.2 & 47.6 & 53.5 & 90.5 \\
ORCL & 17.8 & 17.0 & 7.7 & 5.2 && 23.7 & 22.2 & 9.5 & 10.2 && 17.8 & 17.0 & 7.7 & 5.3 && 61.2 & 27.2 & 60.2 & 93.9 \\
ADBE & 19.8 & 19.2 & 3.8 & 1.9 && 24.3 & 23.7 & 4.0 & 2.0 && 19.9 & 19.2 & 4.0 & 2.4 && 68.7 & 31.2 & 54.2 & 98.0 \\
QCOM & 20.7 & 20.1 & 5.2 & 3.5 && 23.2 & 22.5 & 5.7 & 4.1 && 20.8 & 20.0 & 5.6 & 4.5 && 69.7 & 35.3 & 58.2 & 95.9 \\
EBAY & 21.5 & 20.6 & 7.0 & 3.9 && 27.9 & 26.7 & 7.2 & 4.1 && 21.6 & 20.5 & 7.0 & 4.0 && 83.9 & 40.6 & 57.5 & 95.9 \\
GILD & 21.0 & 20.2 & 5.1 & 3.6 && 24.8 & 23.8 & 6.0 & 4.2 && 21.1 & 20.2 & 5.4 & 4.4 && 45.3 & 25.4 & 46.7 & 81.2 \\
AMAT & 23.6 & 22.4 & 7.8 & 4.2 && 26.5 & 25.2 & 8.7 & 4.2 && 23.6 & 22.3 & 8.1 & 4.7 && 56.1 & 29.9 & 46.9 & 83.3 \\
\hline
Average & 21.7 & 20.8 & 4.7 & 3.2 && 26.6 & 25.6 & 5.3 & 4.0 && 21.8 & 20.8 & 5.1 & 4.2 && 74.5 & 40.1 & 52.8 & 95.4 \\
\hline \hline
\end{tabular}
\smallskip
\begin{scriptsize}
\parbox{0.98\textwidth}{ \emph{Note.} 
This table reports the pre-averaged realized variance (RV) and pre-averaged bipower variation (BV).
We also construct the jump proportion (JP), which is defined as Jump Proportion = 1 - Bipower variation/Realized Variance.
We further compute the jump frequency (JF), which is derived from the jump indicator in \eqref{equation:jump-indicator}.
The realized variance and bipower variation are converted to annualized standard deviation in percent for convenience.
The full sample is split into days without (no EA) and with (EA) earnings announcements.
The measures are then averaged across subsamples by company.
In Panel A, we report the analysis for the regular trading session (9:30am--4:00pm), whereas Panel B displays the associated results for the extended trading session (9:30am--6:30am).
The bottom row reports the grand mean (``Average'') over all stock-days.
}
\end{scriptsize}
\end{center}
\end{sidewaystable}

\clearpage

\begin{table}[ht!]
\setlength{\tabcolsep}{0.70cm}
\begin{center}
\caption{Logit estimates for the stock price jump probability.} \label{table:logit}
\begin{tabular}{lrrr}
\hline
\hline
Variable & \multicolumn{1}{c}{(1)} & \multicolumn{1}{c}{(2)} & \multicolumn{1}{c}{(3)}\\
\hline
Intercept & $\underset{(0.012)}{-2.970}$ & $\underset{(0.013)}{-3.268}$ & $\underset{(0.013)}{-3.268}$\\
$EA$ & & $\underset{(0.076)}{5.589}$ & $\underset{(0.328)}{3.172}$\\
$|z_{ \text{EPS}}^{+}|$ & & & $\underset{(0.042)}{0.059}$\\
$|z_{ \text{EPS}}^{-}|$ & & & $\underset{(0.233)}{0.616}$\\
$\sqrt{RV_{n}^{*}}$ & & & $\underset{(0.069)}{-0.179}$\\
$N_{A}$ & & & $\underset{(0.011)}{0.110}$\\ \\
Pseudo $R^{2}$ & & 0.1831 & 0.1853\\
\hline
\hline
\end{tabular}
\begin{scriptsize}
\parbox{\textwidth}{\emph{Note.} We estimate the logit regression: $P(J_{it} = 1) = F( a + b_{1} EA_{it} + b_{2} |z_{ \mathrm{EPS},it}^{+}| + b_{3} |z_{ \mathrm{EPS},it}^{-}| + b_{4} \sqrt{RV_{n,it}^{*}} + b_{5} N_{A,it})$, where $F$ is the logistic distribution function. $J_{it}$ is equal to one if there is a jump in the price of company $i$'s stock on day $t$, zero otherwise. $EA_{it}$ is equal to one if there is an earnings announcement for company $i$ on day $t$, zero otherwise. $z_{ \text{EPS},it}^{+}$ ($z_{ \text{EPS},it}^{-}$) is the standardized unexpected earnings for positive (negative) announcements, $\sqrt{RV_{n,it}^{*}}$ is the pre-averaged realized volatility calculated over the regular trading session on the announcement day, and $N_{A,it}$ is the number of analyst earnings forecasts for the announcement. The other covariates are interacted with $EA$ and take a value of zero on non-announcement days. The table reports parameter estimates of the full model in column (3) and of restricted versions in column (1)--(2). Standard errors are shown in parenthesis below the parameter estimate. The number of observations is 158{,}279 of which 156{,}086 are non-announcement days and 2{,}193 are announcements days. Pseudo $R^{2}$ is the likelihood ratio index, $R^{2} = 1 - L_{1}/L_{0}$, where $L_{0}$ is the log-likelihood of the constant model in (1) and $L_{1}$ is the log-likelihood of the models in (2)--(3).}
\end{scriptsize}
\end{center}
\end{table}

\clearpage

\begin{sidewaystable}[ht!]
\begin{footnotesize}
\setlength{ \tabcolsep}{0.06cm}
\begin{center}
\caption{Co-jump analysis.}
\label{table:co-jump}
\begin{tabular}{ll*{28}{r}}
\hline \hline
& & \multicolumn{24}{c}{Non-announcing firm (NAF)} \\ \cline{3-27}
& & Fb & Aapl & Nflx & Amzn & Nvda & Tsla & Amd & Mu & Msft & Goog & Intc & Dis & Crm & Nke & Csco & Sbux & Ibm & Cmg & Orcl & Adbe & Qcom & Ebay & Gild & Amat & Ulta & & MKT \\
\multirow{25}{*}{\rotatebox[origin=c]{90}{Announcing firm (AF)}}
 & Fb &  & 16.1 & 3.0 & 25.8 & 17.6 & 0.0 & 12.5 & 8.8 & 10.3 & 28.1 & 5.9 & 0.0 & 0.0 & 0.0 & 17.6 & 0.0 & 2.9 & 0.0 & 5.9 & 2.9 & 16.7 & 0.0 & 10.7 & 8.8 & 2.9 & & 14.7\\
& Aapl & 24.3 &  & 2.6 & 19.4 & 15.0 & 2.5 & 17.6 & 20.0 & 18.4 & 5.4 & 12.8 & 0.0 & 2.5 & 5.0 & 20.0 & 5.9 & 5.0 & 2.9 & 2.5 & 5.0 & 23.1 & 8.6 & 16.1 & 7.5 & 5.0 & & 30.0\\
& Nflx & 13.0 & 6.7 &  & 33.3 & 6.4 & 2.1 & 15.6 & 8.5 & 8.9 & 19.1 & 15.2 & 0.0 & 2.1 & 4.3 & 19.1 & 6.8 & 2.6 & 2.3 & 4.3 & 2.1 & 7.0 & 2.4 & 10.9 & 6.4 & 0.0 & & 6.4\\
& Amzn & 17.0 & 6.5 & 12.5 &  & 12.0 & 0.0 & 20.0 & 18.0 & 11.1 & 35.1 & 11.9 & 6.0 & 2.0 & 0.0 & 16.0 & 5.7 & 6.0 & 2.4 & 2.0 & 8.0 & 16.3 & 4.0 & 2.5 & 8.0 & 2.0 & & 12.0\\
& Nvda & 0.0 & 6.2 & 2.1 & 0.0 &  & 2.2 & 35.4 & 10.4 & 0.0 & 0.0 & 10.4 & 2.6 & 0.0 & 4.2 & 8.9 & 4.3 & 0.0 & 0.0 & 2.1 & 0.0 & 4.3 & 4.2 & 0.0 & 2.8 & 0.0 & & 2.1\\
& Tsla & 6.1 & 5.0 & 0.0 & 12.5 & 10.5 &  & 15.4 & 5.0 & 6.1 & 7.5 & 15.0 & 5.0 & 0.0 & 2.5 & 16.2 & 0.0 & 0.0 & 2.7 & 2.5 & 0.0 & 2.9 & 0.0 & 2.5 & 10.3 & 0.0 & & 5.0\\
& Amd & 13.3 & 9.8 & 2.2 & 12.8 & 19.1 & 2.2 &  & 19.1 & 15.8 & 6.7 & 19.1 & 2.1 & 0.0 & 0.0 & 10.6 & 4.9 & 0.0 & 0.0 & 6.4 & 2.1 & 6.7 & 4.7 & 4.8 & 0.0 & 6.4 & & 8.5\\
& Mu & 0.0 & 2.0 & 2.0 & 0.0 & 8.0 & 2.0 & 20.0 &  & 8.0 & 0.0 & 12.0 & 2.0 & 4.0 & 2.5 & 10.0 & 2.0 & 2.0 & 2.0 & 10.2 & 4.0 & 0.0 & 4.0 & 6.0 & 16.0 & 0.0 & & 6.0\\
& Msft & 17.1 & 6.8 & 4.5 & 31.2 & 6.5 & 0.0 & 13.5 & 15.2 &  & 20.0 & 17.1 & 6.5 & 2.2 & 0.0 & 13.0 & 2.6 & 2.3 & 5.6 & 2.2 & 2.2 & 7.0 & 2.3 & 0.0 & 13.0 & 2.2 & & 10.9\\
& Goog & 31.9 & 17.4 & 2.0 & 27.8 & 8.2 & 2.0 & 15.6 & 10.2 & 10.5 &  & 13.6 & 8.2 & 0.0 & 0.0 & 8.2 & 4.8 & 4.8 & 0.0 & 6.1 & 8.2 & 8.2 & 2.0 & 2.3 & 2.0 & 4.1 & & 8.2\\
& Intc & 9.5 & 7.3 & 4.9 & 5.9 & 11.9 & 2.4 & 38.1 & 26.2 & 24.3 & 10.8 &  & 2.4 & 4.8 & 2.4 & 9.5 & 5.9 & 2.7 & 0.0 & 2.4 & 2.4 & 10.0 & 2.6 & 5.0 & 14.3 & 0.0 & & 7.1\\
& Dis & 4.2 & 0.0 & 2.1 & 0.0 & 0.0 & 0.0 & 8.3 & 4.2 & 4.2 & 0.0 & 6.2 &  & 2.1 & 2.1 & 8.5 & 2.1 & 0.0 & 0.0 & 2.1 & 0.0 & 2.1 & 2.1 & 6.8 & 10.4 & 0.0 & & 2.1\\
& Crm & 2.2 & 2.2 & 0.0 & 0.0 & 4.3 & 4.3 & 10.9 & 8.7 & 2.2 & 0.0 & 8.7 & 6.5 &  & 0.0 & 13.3 & 2.2 & 2.2 & 0.0 & 4.3 & 0.0 & 2.2 & 6.5 & 2.2 & 0.0 & 0.0 & & 2.2\\
& Nke & 2.1 & 6.2 & 4.2 & 0.0 & 4.2 & 2.1 & 14.6 & 7.9 & 8.3 & 4.2 & 8.3 & 0.0 & 4.2 &  & 16.7 & 8.3 & 6.2 & 2.1 & 13.3 & 2.1 & 6.2 & 6.2 & 10.4 & 4.2 & 4.2 & & 6.2\\
& Csco & 0.0 & 4.0 & 0.0 & 2.0 & 0.0 & 4.3 & 18.0 & 2.0 & 8.0 & 0.0 & 4.0 & 4.1 & 0.0 & 4.0 &  & 4.0 & 4.0 & 0.0 & 6.0 & 2.0 & 2.0 & 2.0 & 4.0 & 2.4 & 0.0 & & 4.0\\
& Sbux & 13.0 & 4.5 & 4.3 & 8.6 & 10.2 & 0.0 & 22.7 & 22.0 & 9.5 & 11.6 & 9.5 & 2.0 & 4.0 & 2.0 & 8.0 &  & 0.0 & 0.0 & 0.0 & 4.0 & 15.2 & 2.3 & 4.3 & 6.0 & 2.0 & & 10.0\\
& Ibm & 4.2 & 8.3 & 2.6 & 14.6 & 6.2 & 2.1 & 7.3 & 18.8 & 15.2 & 7.3 & 16.3 & 0.0 & 4.2 & 2.1 & 18.8 & 0.0 &  & 0.0 & 2.1 & 0.0 & 8.5 & 2.3 & 2.1 & 4.2 & 4.2 & & 4.2\\
& Cmg & 13.0 & 14.0 & 11.1 & 25.6 & 8.3 & 0.0 & 21.1 & 18.8 & 18.4 & 9.5 & 19.1 & 4.5 & 0.0 & 0.0 & 16.7 & 4.4 & 4.3 &  & 4.2 & 4.2 & 4.8 & 4.5 & 0.0 & 10.4 & 2.1 & & 6.2\\
& Orcl & 2.0 & 6.1 & 2.0 & 2.0 & 0.0 & 2.0 & 8.2 & 10.4 & 4.1 & 2.0 & 10.2 & 10.2 & 4.1 & 4.3 & 8.2 & 6.1 & 8.2 & 2.0 &  & 7.3 & 6.1 & 6.1 & 10.2 & 6.1 & 0.0 & & 12.8\\
& Adbe & 4.1 & 10.2 & 4.1 & 0.0 & 0.0 & 2.0 & 6.1 & 8.2 & 6.1 & 0.0 & 10.2 & 4.1 & 0.0 & 4.1 & 8.2 & 2.0 & 2.0 & 4.1 & 4.9 &  & 2.0 & 12.2 & 2.0 & 4.1 & 2.2 & & 4.3\\
& Qcom & 0.0 & 8.3 & 4.4 & 12.5 & 6.2 & 2.3 & 10.6 & 12.2 & 10.9 & 12.2 & 12.8 & 2.1 & 4.1 & 2.0 & 4.2 & 0.0 & 2.1 & 2.3 & 2.0 & 2.0 &  & 2.9 & 4.3 & 10.2 & 2.0 & & 10.2\\
& Ebay & 9.3 & 4.5 & 2.3 & 30.6 & 4.1 & 6.4 & 17.8 & 14.3 & 17.4 & 6.1 & 10.9 & 2.0 & 4.1 & 8.2 & 14.3 & 0.0 & 4.5 & 6.7 & 4.1 & 6.1 & 5.7 &  & 4.2 & 8.2 & 2.0 & & 8.2\\
& Gild & 19.0 & 12.8 & 4.3 & 10.5 & 12.5 & 2.1 & 11.6 & 14.6 & 8.7 & 11.9 & 8.7 & 4.5 & 4.2 & 2.1 & 18.8 & 4.5 & 6.2 & 2.4 & 6.2 & 6.2 & 10.9 & 2.1 &  & 4.2 & 2.1 & & 8.3\\
& Amat & 0.0 & 4.2 & 0.0 & 0.0 & 2.8 & 2.1 & 14.6 & 10.4 & 6.2 & 0.0 & 12.5 & 0.0 & 4.5 & 4.2 & 12.5 & 6.2 & 2.1 & 0.0 & 8.3 & 2.1 & 4.2 & 4.2 & 4.2 &  & 4.2 & & 2.1\\
& Ulta & 2.4 & 0.0 & 0.0 & 2.4 & 2.4 & 2.4 & 9.8 & 4.9 & 7.3 & 2.4 & 4.9 & 9.8 & 0.0 & 4.9 & 19.5 & 9.8 & 0.0 & 0.0 & 5.1 & 0.0 & 9.8 & 2.4 & 4.9 & 7.3 &  & & 4.9\\
\hline \hline
\end{tabular}
\begin{scriptsize}
\parbox{0.98\textwidth}{\emph{Note.} We report the proportion of extended trading sessions with a jump in the non-announcing firm (NAF, in columns) when an announcing firm (AF, in rows) report financial results. The column MKT reports the jump frequency of the equity market (as proxied by SPY), conditional on the various firms' announcement days.}
\end{scriptsize}
\end{center}
\end{footnotesize}
\end{sidewaystable}

\clearpage

\begin{table}[ht!]
\setlength{\tabcolsep}{0.60cm}
\begin{center}
\caption{Logit estimates for the co-jump probability.}
\label{table:logit-co-jump}
\begin{tabular}{lrrrrr}
\hline
\hline
Variable & \multicolumn{1}{c}{(1)} & \multicolumn{1}{c}{(2)} & \multicolumn{1}{c}{(3)} & \multicolumn{1}{c}{(4)} & \multicolumn{1}{c}{(5)}\\ \hline
\multicolumn{6}{l}{ \textit{Panel A: Firm-level}} \\
Intercept & $\underset{(0.015)}{-2.960}$ & $\underset{(0.017)}{-3.020}$ & $\underset{(0.043)}{-3.595}$ & $\underset{(0.019)}{-2.900}$ & $\underset{(0.046)}{-3.611}$ \\
$IP$ & & $\underset{(0.048)}{0.354}$ & & & $\underset{(0.049)}{0.368}$ \\
log($V^{ \text{NAF}}$) & & & $\underset{(0.009)}{0.132}$ & & $\underset{(0.009)}{0.148}$ \\
log($V^{ \text{AF}}$) & & & $\underset{(0.010)}{0.063}$ & & $\underset{(0.010)}{0.046}$ \\
$D^{ \text{EARLY}}$ & & & & $\underset{(0.036)}{0.060}$ & $\underset{(0.036)}{0.117}$ \\
$D^{ \text{LATE}}$ & & & & $\underset{(0.041)}{-0.372}$ & $\underset{(0.041)}{-0.417}$ \\
\\
\multicolumn{6}{l}{ \textit{Panel B: Market-index}} \\
Intercept & $\underset{(0.137)}{-2.910}$ & $\underset{(0.153)}{-3.063}$ & $\underset{(0.363)}{-4.036}$ & & $\underset{(0.364)}{-3.941}$ \\
$D^{ \# \text{AF}}$ & & $\underset{(0.360)}{1.287}$ & & & $\underset{(0.381)}{0.958}$\\
log($CV^{ \text{AF}}$) & & & $\underset{(0.098)}{0.359}$ & & $\underset{(0.103)}{0.293}$\\
\hline
\hline
\end{tabular}
\begin{scriptsize}
\parbox{\textwidth}{\emph{Note.} In Panel A, we estimate the logit regression: $P \big(J_{jt}^{ \text{NAF}} = 1 \mid EA_{it} = 1 \big) = F \big( a + b_{1} IP_{ij} + b_{2} \log(V_{jt}^{ \text{NAF}}) + b_{3} \log(V_{it}^{ \text{AF}}) +  b_{4} D^{ \text{EARLY}}_{it} + b_{5} D^{ \text{LATE}}_{it}  \big)$, where $F$ is the logistic distribution function. $J_{jt}^{ \text{NAF}}$ is equal to one if there is a jump in the price of the non-announcing firm (NAF) $j$ on day $t$, zero otherwise. $EA_{it}$ is equal to one if the announcing firm (AF) $i$ releases its quarterly earnings report on day $t$, zero otherwise. $IP_{ij}$ is an industry proximity score between 0 (no overlap) and 1 (perfect overlap) calculated from the SIC codes of company $i$ and $j$ following \citet{wang-zajac:07a}. $V^{ \text{NAF}}$ and $V^{ \text{AF}}$ are the average number of transactions in the NAF and AF in the after-hours market in the month prior to $t$. $D^{ \text{EARLY}}_{it}$ and $D^{ \text{LATE}}_{it}$ are indicator variables of whether the AF discloses its fiscal statement early or late in the earnings cycle. In Panel B, we estimate the logit regression: $P(J_{t}^{ \text{SPY}} = 1 \mid \# \text{AF}_{t} \geq 1) = F \left( a + b_{1} D_{t}^{\# \text{AF}} + b_{2} \log( CV_{t}^{ \text{AF}}) \right)$. Here, $J_{t}^{ \text{SPY}}$ is equal to one if there is a jump in the market index (proxied by SPY) on day $t$, zero otherwise. Moreover, $D_{t}^{\# \text{AF}}$ is equal to one if $\# \text{AF}_{t} \geq 5$, zero otherwise, where $\# \text{AF}_{t} = \sum_{i} EA_{it}$ is the number of AFs on day $t$. $CV_{t}^{ \text{AF}} = \sum_{i \, : \, EA_{it} = 1} V_{it}^{ \text{AF}}$ is the cumulative average number of transactions in the AFs in the after-hours market in the month prior to $t$. The table reports parameter estimates of the full model in column (5) and of restricted versions in column (1)--(4). Standard errors are shown in parenthesis below the parameter estimate. The number of observations is 97{,}811 in Panel A and 1,084 in Panel B.}
\end{scriptsize}
\end{center}
\end{table}

\clearpage

\begin{table}[ht!]
\setlength{\tabcolsep}{0.50cm}
\begin{center}
\caption{Post-announcement return regression.}
\label{table:return-regression}
\begin{tabular}{lrrrr}
\hline
\hline
Variable & \multicolumn{1}{c}{(1)} & \multicolumn{1}{c}{(2)} & \multicolumn{1}{c}{(3)} & \multicolumn{1}{c}{(4)}\\ \hline
Intercept & $\underset{(0.075)}{0.145}$ & $\underset{(0.108)}{-0.370}$ & $\underset{(0.132)}{-0.745}$ & $\underset{(0.131)}{-0.731}$\\
$z_{ \text{EPS}}^{+}$ & & $\underset{(0.037)}{0.345}$ & $\underset{(0.037)}{0.281}$ & $\underset{(0.037)}{0.267}$\\
$z_{ \text{EPS}}^{-}$ & & $\underset{(0.157)}{0.784}$ & $\underset{(0.155)}{0.327}$ & $\underset{(0.149)}{0.330}$\\
$\sqrt{RV_{n}^{*}}$ & & & $\underset{(0.074)}{0.165}$ & $\underset{(0.074)}{0.159}$\\
$N_{A}$ & & & $\underset{(0.006)}{0.014}$ & $\underset{(0.006)}{0.013}$\\
$OI$ & & & & $\underset{(0.188)}{1.394}$\\ \\
$\bar{R}^{2}$ & & 0.0682 & 0.0795 & 0.0960\\ \\
$P$-value & & 0.0016 & 0.7682 & 0.6811\\
\hline
\hline
\end{tabular}
\begin{scriptsize}
\parbox{\textwidth}{\emph{Note.} We estimate the linear regression: $r_{1m,it}^{ \text{EA}} = a + b_{1} z_{ \text{EPS},it}^{+} + b_{2} z_{ \text{EPS},it}^{-} + b_{3} \sqrt{RV_{n,it}^{*}} D_{it} + b_{4} N_{A,it} D_{it} + b_{5} OI_{it} + \epsilon_{it}$. $r_{1m,it}^{ \text{EA}}$ is the one-minute post-announcement return, $z_{ \text{EPS},it}^{+}$ ($z_{ \text{EPS},it}^{-}$) is the standardized unexpected earnings for positive (negative) announcements, $\sqrt{RV_{n,it}^{*}}$ is the pre-averaged realized volatility calculated over the regular trading session on the announcement day, $N_{A,it}$ is the number of analyst earnings forecasts for the announcement, $OI_{it}$ is the one-minute post-announcement cumulative net order imbalance, and $D_{it} = \sign(z_{ \text{EPS},it})$ is the sign function. The table reports parameter estimates of the full model in column (4) and of restricted versions in column (1)--(3). Standard errors are shown in parenthesis below the parameter estimate. The number of observations is 2{,}193. $\bar{R}^{2}$ is the adjusted coefficient of multiple determination. The $P$-value is for testing the hypothesis $H_{0} : b_{1} = b_{2}$ against $H_{1} : b_{1} \neq b_{2}$.}
\end{scriptsize}
\end{center}
\end{table}

\clearpage

\begin{table}[ht!]
\setlength{\tabcolsep}{0.20cm}
\begin{center}
\caption{Average return from trading strategy.}
\label{table:trading-strategy}
\begin{footnotesize}
\begin{tabular}{l*{7}{c}}
\hline
\hline
\multicolumn{4}{l}{ \textit{Panel A: Baseline termination rule}} \\
& \multicolumn{3}{c}{Terminate at 6:30pm (EOD).} \\ \cline{2-4}
& Full sample & 2008--2015 & 2016--2020 \\
Trade & 1.80 (8.88) & 2.30 (8.77) & 1.11 (3.49) \\
Midquote & 1.50 (7.67) & 2.00 (7.87) & 0.80 (2.65) \\
BBO & 0.72 (3.68) & 1.23 (4.84) & 0.01 (0.03) \\
BBO+5s & 0.41 (2.18) & 0.93 (3.88) & -0.32 (-1.09) \\
BBO+10s & 0.28 (1.51) & 0.80 (3.32) & -0.44 (-1.48) \\ \\
\multicolumn{4}{l}{ \textit{Panel B: Alternative termination rules}} \\
& \multicolumn{3}{c}{Terminate after fixed number of minutes.} && \multicolumn{3}{c}{Terminate after fixed number of ticks.} \\ \cline{2-4} \cline{6-8}
& Full sample & 2008--2015 & 2016--2020 && Full sample & 2008--2015 & 2016--2020\\
& \multicolumn{3}{c}{ \textit{Panel B.1: Stop after 30 seconds}} && \multicolumn{3}{c}{ \textit{Panel B.6: Stop after 50 ticks}} \\
Trade & 0.74 (8.28) & 0.82 (7.13) & 0.62 (4.43) && 1.14 (9.84) & 1.34 (8.21) & 0.86 (5.47) \\
Midquote & 0.56 (6.31) & 0.66 (5.35) & 0.43 (3.39) && 0.85 (8.04) & 0.98 (6.93) & 0.67 (4.23) \\
BBO & -0.47 (-4.75) & -0.35 (-2.60) & -0.64 (-4.42) && -0.11 (-1.03) & 0.07 (0.47) & -0.37 (-2.17) \\
BBO+5s & -0.79 (-9.72) & -0.65 (-6.31) & -0.97 (-7.57) && -0.43 (-4.13) & -0.24 (-1.70) & -0.70 (-4.48) \\
BBO+10s & -0.91 (-12.11) & -0.78 (-8.00) & -1.09 (-9.34) && -0.56 (-5.45) & -0.37 (-2.70) & -0.82 (-5.35) \\
& \multicolumn{3}{c}{ \textit{Panel B.2: Stop after 1 minute}} && \multicolumn{3}{c}{ \textit{Panel B.7: Stop after 100 ticks}} \\
Trade & 1.05 (9.36) & 1.26 (8.53) & 0.76 (4.42) && 1.20 (9.43) & 1.44 (8.34) & 0.88 (4.68) \\
Midquote & 0.85 (7.97) & 1.04 (7.09) & 0.59 (3.86) && 0.98 (7.61) & 1.23 (7.06) & 0.63 (3.35) \\
BBO & -0.17 (-1.43) & 0.07 (0.46) & -0.50 (-2.96) && 0.08 (0.61) & 0.35 (1.99) & -0.29 (-1.51) \\
BBO+5s & -0.48 (-4.86) & -0.23 (-1.81) & -0.83 (-5.43) && -0.24 (-2.00) & 0.04 (0.28) & -0.63 (-3.43) \\
BBO+10s & -0.61 (-6.29) & -0.36 (-2.86) & -0.95 (-6.48) && -0.36 (-3.18) & -0.09 (-0.59) & -0.74 (-4.29) \\
& \multicolumn{3}{c}{ \textit{Panel B.3: Stop after 2 minutes}} && \multicolumn{3}{c}{ \textit{Panel B.8: Stop after 250 ticks}} \\
Trade & 1.25 (9.97) & 1.55 (9.73) & 0.82 (4.14) && 1.30 (9.07) & 1.53 (8.14) & 0.98 (4.44) \\
Midquote & 0.98 (8.08) & 1.25 (7.83) & 0.59 (3.23) && 1.07 (7.63) & 1.33 (7.34) & 0.69 (3.19) \\
BBO & 0.02 (0.13) & 0.32 (1.87) & -0.40 (-2.08) && 0.19 (1.33) & 0.47 (2.59) & -0.21 (-0.92) \\
BBO+5s & -0.30 (-2.64) & 0.01 (0.09) & -0.74 (-4.18) && -0.13 (-0.99) & 0.17 (1.03) & -0.54 (-2.58) \\
BBO+10s & -0.42 (-3.84) & -0.12 (-0.82) & -0.85 (-5.06) && -0.25 (-2.02) & 0.03 (0.22) & -0.65 (-3.27) \\
& \multicolumn{3}{c}{ \textit{Panel B.4: Stop after 3 minutes}} && \multicolumn{3}{c}{ \textit{Panel B.9: Stop after 500 ticks}} \\
Trade & 1.34 (9.78) & 1.64 (9.57) & 0.92 (4.12) && 1.42 (9.33) & 1.74 (8.56) & 0.96 (4.28) \\
Midquote & 1.08 (8.13) & 1.39 (8.04) & 0.66 (3.16) && 1.12 (7.56) & 1.40 (7.29) & 0.72 (3.14) \\
BBO & 0.17 (1.24) & 0.50 (2.82) & -0.29 (-1.31) && 0.29 (1.94) & 0.59 (3.03) & -0.12 (-0.53) \\
BBO+5s & -0.14 (-1.16) & 0.19 (1.27) & -0.62 (-3.03) && -0.02 (-0.18) & 0.28 (1.62) & -0.46 (-2.06) \\
BBO+10s & -0.27 (-2.20) & 0.06 (0.42) & -0.74 (-3.73) && -0.15 (-1.11) & 0.15 (0.87) & -0.57 (-2.69) \\
& \multicolumn{3}{c}{ \textit{Panel B.5: Stop after 5 minutes}} && \multicolumn{3}{c}{ \textit{Panel B.10: Stop after 1,000 ticks}}\\
Trade & 1.58 (10.82) & 1.89 (10.39) & 1.15 (4.80) && 1.46 (8.90) & 1.93 (9.03) & 0.79 (3.17) \\
Midquote & 1.27 (8.95) & 1.60 (8.70) & 0.81 (3.68) && 1.24 (7.88) & 1.56 (7.62) & 0.78 (3.22) \\
BBO & 0.41 (2.81) & 0.78 (4.18) & -0.12 (-0.52) && 0.44 (2.76) & 0.78 (3.78) & -0.04 (-0.16) \\
BBO+5s & 0.09 (0.69) & 0.48 (2.84) & -0.45 (-2.11) && 0.12 (0.84) & 0.48 (2.55) & -0.37 (-1.57) \\
BBO+10s & -0.03 (-0.25) & 0.35 (2.05) & -0.57 (-2.76) && -0.00 (-0.01) & 0.35 (1.85) & -0.49 (-2.11) \\
\hline
\hline
\end{tabular}
\end{footnotesize}
\smallskip
\begin{scriptsize}
\parbox{\textwidth}{\emph{Note.} The table shows the average return in percent from a trading strategy that employs the standardized earnings surprise to predict the one-minute post-announcement return: $r_{1m,it}^{ \text{EA}} = a + b_{1} z_{ \text{EPS},it}^{+} + b_{2} z_{ \text{EPS},it}^{-} + \epsilon_{it}$. A long (short) position in the stock is opened if the predicted excess return is greater (smaller) than 0.5\% (-0.5\%). In the baseline termination rule in Panel A, the position is held until 6:30pm (EOD). In Panel B, alternative termination rules are inspected. In Panels B.1--B.5, the position is closed after a fixed number of minutes, whereas in Panels B.6--B.10 the position is closed after a fixed number of tick updates. ``Trade'' employs the transaction price, ``Midquote'' the midquote, and ``BBO'' the best bid and offer. ``+Xs'' enforces a latency delay of X seconds before entrance. The test statistic for testing that the average return is zero, based on robust standard errors, is reported in parenthesis.}
\end{scriptsize}
\end{center}
\end{table}

\clearpage


\appendix

\section*{Web appendices}

\setcounter{table}{0}
\setcounter{figure}{0}
\renewcommand{\thetable}{\thesection.\arabic{table}}
\renewcommand{\thefigure}{\thesection.\arabic{figure}}

\section{Proof of theoretical results} \label{appendix:proof}

\setcounter{table}{0}
\setcounter{figure}{0}

In this appendix, we start with a mathematical description of the dynamic of the log-price process under investigation, $p = (p_{t})_{t \geq 0}$, which we suppose evolves over the unit time interval, $t \in [0,1]$. As usual, randomness is described by a filtered probability space $( \Omega, \mathcal{F}, ( \mathcal{F}_{t})_{t \geq 0}, \mathbb{P})$. The construction of this space in a noisy high-frequency setting is outlined in \citet*{jacod-protter:12a}. We decompose the cumulative intraday log-return at time $t$, or $r_{t}$, into a continuous part and a jump, or discontinuous, part: $r_{t} = r_{t}^{c} + r_{t}^{d}$.

Consistent with no-arbitrage, $r^{c} = (r_{t}^{c})_{t \geq 0}$ comprises a drift and a diffusion term:
\begin{equation} \label{equation:bsm}
r_{t}^{c} = \underbrace{ \int_{0}^{t} a_{s} \mathrm{d}s}_{ \text{drift}} + \underbrace{ \int_{0}^{t} \sigma_{s} \mathrm{d}W_{s}}_{ \text{volatility}},
\end{equation}
where the drift $a = (a_{t})_{t \geq 0}$ is a predictable and locally bounded process, the volatility $\sigma = ( \sigma_{t})_{t \geq 0}$ is an adapted and c\`{a}dl\`{a}g process, and $W = (W_{t})_{t \geq 0}$ is a standard Brownian motion.

We decompose $r^{d} = (r_{t}^{d})_{t \geq 0}$ into small and big jumps:\footnote{The separation into ``small'' and ``large'' jumps is not crucial in the analysis.}
\begin{equation} \label{equation:jp}
r_{t}^{d} = \underbrace{ \int_{0}^{t} \int_{ \mathbb{R}} \delta(s,x) 1_{ \{| \delta(s,x)| \leq 1 \}} (\mu - \nu) ( \text{d}s, \text{d}x)}_{ \text{``small'' jumps}} + \underbrace{ \int_{0}^{t} \int_{ \mathbb{R}} \delta(s,x) 1_{ \{| \delta(s,x)| > 1 \}} \mu( \text{d}s, \text{d}x)}_{ \text{``big'' jumps}},
\end{equation}
where $\mu$ is a Poisson random measure on $\mathbb{R}_{+} \times \mathbb{R}$ with compensator $\nu( \text{d}s, \text{d}x) = \text{d}s F( \text{d}x)$. Here  $F$ is a $\sigma$-finite measure, $\delta$ is a predictable function, and $( \tau_{k})_{k \geq 1}$ is a sequence of $\mathcal{F}_{t}$-stopping times increasing to $\infty$ such that $| \delta( \omega,s,x)| \wedge 1 \leq \psi_{k}(x)$ for all $( \omega, s, x)$ with $s \leq \tau_{k}( \omega)$ and  $\int_{ \mathbb{R}} \psi^{ \beta}_{k}(x) F( \text{d}x) < \infty$ for all $k \geq 1$ and $\beta \in [0,2]$. $\beta$ relates to the activity of the price jumps and can be interpreted as a generalized version of the \citet*{blumenthal-getoor:61a} index for a  L\'{e}vy process, see \citet*{ait-sahalia-jacod:09a}.\footnote{A higher value of $\beta$ increases the frequency of the small jumps. Figure \ref{figure:simulation-illustration} in Appendix \ref{appendix:simulation} provides an illustration of this characteristic.}

Moreover, we need a condition on the volatility process and microstructure noise.\\[-0.25cm]

\noindent \textbf{Assumption (V)}: $\quad \sigma$ is of the form:
\begin{equation}
\label{Eqn:sigma}
\sigma_{t} = \sigma_{0} + \int_{0}^{t} \tilde{a}_{s}\text{d}s + \int_{0}^{t} \tilde{ \sigma}_{s} \text{d}W_{s} + \int_{0}^{t} \tilde{v}_{s}\text{d}B_{s},
\end{equation}
where $\sigma_{0}$ is an $\mathcal{F}_{0}$-measurable random variable, while $\tilde{a} = (\tilde{a}_{t})_{t \geq 0}$, $\tilde{ \sigma} = ( \tilde{ \sigma}_{t})_{t \geq 0}$ and $\tilde{v} = ( \tilde{v}_{t})_{t \geq 0}$ are adapted, c\`{a}dl\`{a}g processes. \\[-0.25cm]

\noindent \textbf{Assumption (N)}: The microstructure noise is of the form
\begin{equation}
\epsilon_{i \Delta_{n}} = \omega_{i \Delta_{n}} \pi_{i},
\end{equation}
where \\[0.25cm]
(i) $\pi = ( \pi_{i})_{i \geq 0}$ follows an MA($\ell$) process with $\ell \in \mathbb{Z}^{+}$, i.e. $\pi_{i} = u_{i} + \sum_{j=1}^{ \ell} \theta_{j} u_{i-j}$ for some i.i.d. mean zero sequence $(u_{i})_{i \in \mathbb{Z}}$ and constants $\theta_{j}$, $j = 1, \dots, \ell$ such that $1 + \sum_{j=1}^{ \ell} \theta_{j} \neq 0$. Moreover, the distribution of $\pi_{0}$ has unit variance, is symmetric, and has moments of arbitrary order. \\[0.25cm]
(ii) $\omega = (\omega_t)_{t \geq 0}$ is of the form:
\begin{equation}
\omega_{t} = \omega_{0} + \int_{0}^{t} \bar{a}_{s}\text{d}s + \int_{0}^{t}
\bar{ \sigma}_{s}\text{d}W_{s} + \int_{0}^{t} \bar{v}_{s}\text{d}B_{s},
\end{equation}
where $\omega_{0}$ is an $\mathcal{F}_{0}$-measurable random variable, $\bar{a} = (\bar{a}_{t})_{t \geq 0}$, $\bar{ \sigma} = ( \bar{ \sigma}_{t})_{t \geq 0}$, $\bar{v} = ( \bar{v}_{t})_{t \geq 0}$ are adapted, c\`{a}dl\`{a}g processes, while $B = (B_{t})_{t \geq 0}$ is a standard Brownian motion and $B \Perp W$.\footnote{The symbol $A \Perp B$ means $A$ and $B$ are stochastically independent.} \\[0.25cm]
(iii) $\pi \Perp (p, \omega)$. \\[0.25cm]
(iv) $\limsup_{|t| \to \infty} |\chi(t)| <1$, where  $\chi$ is the characteristic function of $u_{0}$. \\[-0.25cm]

Assumption (V) and Assumption (N,ii) allow us to establish the order of various approximating terms appearing in the proofs, once we freeze the volatility and noise processes locally on small blocks of high-frequency data.\footnote{Assumption (V) can be extended to discontinuous volatility processes following \citet*{barndorff-nielsen-graversen-jacod-podolskij-shephard:06a}.}

We assume without loss of generality that $a$, $\sigma$, $\tilde{a}$, $\tilde{ \sigma}$, $\tilde{v},$ $\omega$, $\bar{a}$, $\bar{ \sigma}$, $\bar{v}$ and the jump components of $p_{t}$ are bounded. That this is innocuous follows from the localization procedure described in \citet*[][Section 4.4.1]{jacod-protter:12a}.

The conditions in Assumption (N) are weak and allow for complicated dynamics, such as heteroscedasticity, serial dependence, and endogeneity. This generalizes previous work in the pre-averaging space based on i.i.d. and exogenous noise \citep*{christensen-oomen-podolskij:14a, jacod-li-mykland-podolskij-vetter:09a, podolskij-vetter:09a, podolskij-vetter:09b}. The assumption follows \citet*{jacod-li-zheng:19a}, who assume $\pi$ is a stationary MA($\infty$) process with suitable mixing condition. Here, for technical reasons, we restrict the noise to be at most $\ell$-dependent, for a finite (but arbitrarily large) $\ell$.

The theory also imposes weak regularity conditions on the weight function $g$, namely $g: [0,1] \rightarrow \mathbb{R}$ is continuous and piecewise continuously differentiable with piecewise Lipshitz derivative $g'$, $g(0) = g(1) = 0$, and $\int_{0}^{1} g(s)^{2} \mathrm{d}s > 0$.

Next, we prove our mathematical results. We employ a generic constant $C$ to bound various terms. Its value may change from line to line without notice.

\subsection{An extended central limit theorem}

We start by presenting an extended version of Theorem \ref{theorem:clt} for a generalized pre-averaged bipower variation estimator. With a slight abuse of the notation from the main text, we define here the estimator with an alternative normalization. We further omit the bias-correction part. This delivers an estimator that is more convenient to work with in the derivations:
\begin{equation} \label{equation:pbv}
BV_{n}(q,r) = \frac{1}{n} \sum_{i=0}^{n-2 k_{n}+1} |n^{1/4} \bar{r}_{i}^{*}|^{q} | n^{1/4} \bar{r}_{i+k_{n}}^{*}|^{r},
\end{equation}
where $q,r \in \mathcal{S} \equiv \{0 \} \cup [1, \infty)$.

We introduce the following notation:
\begin{equation*}
\psi_{1}^{n} = k_{n} \sum_{j = 0}^{k_{n} - 1} (g_{j+1}^{n} - g_{j}^{n})^{2} \quad \text{and}  \quad \psi_{2}^{n} = \frac{1}{k_{n}} \sum_{j = 1}^{k_{n}} (g_{j}^{n})^{2}.
\end{equation*}
Furthermore,
\begin{equation*}
c_{1}^{n} = \frac{1}{k_{n} \psi_{2}^{n}} \quad \text{and} \quad c_{2}^{n} = \frac{ \psi_{1}^{n}}{ \psi_{2}^{n} \theta^{2}}.
\end{equation*}
After \citet*{podolskij-vetter:09a} and \citet{hautsch-podolskij:13a}, we expect that under $\mathcal{H}_{0}$
\begin{equation*}
BV_{n}(q,r) \overset{p}{ \longrightarrow} V(q,r) =
\mu_{q} \mu_{r} \int_{0}^{1} \Big( \theta \psi_{2} \sigma_{s}^{2} + \frac{1}{ \theta} \psi_{1} \rho^{2} \omega^{2}_{s} \Big)^{ \frac{q + r}{2}} \text{d}s,
\end{equation*}
where $\mu_{s} = \mathbb{E} \big[|N(0,1)|^{s} \big]$ and $\rho^{2}$ is the long-run variance of the noise process $\pi$, which is given by $\rho^{2} = \rho(0) + 2 \sum_{m = 1}^{ \ell} \rho(m)$ , where $\rho(m) = \mathbb{E}[ \pi_{0} \pi_{m}]$ is the $m$th autocovariance. Moreover,
\begin{equation*}
\psi_{1}^{n} \rightarrow \psi_{1} = \int_{0}^{1} [g'(s)]^{2} \text{d}s \quad \text{and} \quad \psi_{2}^{n} \rightarrow \psi_{2} = \int_{0}^{1} g(s)^{2} \text{d}s.
\end{equation*}

\begin{theorem} \label{appendix:clt}
Assume that $r$ follows the process in \eqref{equation:X} with $r_{t}^{d} \equiv 0$ (for all $t$), and that Assumptions (V) and (N) hold. As $n \rightarrow 0$, for $q_{1}, q_{2}, r_{1}, r_{2} \in \mathcal{S}$,
\begin{equation*}
n^{1/4} \begin{pmatrix}
BV_{n}(q_{1},r_{1}) - V(q_{1},r_{1}) \\[0.10cm]
BV_{n}(q_{2},r_{2}) - V(q_{2},r_{2})
\end{pmatrix} \overset{ \mathcal{D}_{s}}{ \longrightarrow} \text{MN}(0, \tilde{ \Sigma}),
\end{equation*}
where $\tilde{ \Sigma}$ is defined in \eqref{equation:asymptotic-covariance-tilde-Sigma} and $\overset{ \mathcal{D}_{s}}{ \longrightarrow}$ denotes stable convergence in law.
\end{theorem}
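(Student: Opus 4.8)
The plan is to run the standard pre-averaging argument of \citet{jacod-li-mykland-podolskij-vetter:09a}, \citet{podolskij-vetter:09a} and \citet{hautsch-podolskij:13a}, adapted to the serially dependent, heteroscedastic noise of Assumption (N) in the manner of \citet{jacod-li-zheng:19a}. First I would perform the usual reductions. By the localization recalled after Assumption (N) all coefficients are bounded, and since $r^{d}\equiv 0$ we have $p = p_{0}+\int_{0}^{\cdot}a_{s}\mathrm{d}s+\int_{0}^{\cdot}\sigma_{s}\mathrm{d}W_{s}$. Summation by parts rewrites $\bar r_{i}^{*}$ as a weighted sum of the $p^{*}_{(i+j)\Delta_{n}}$ with slowly varying increment-of-$g$ weights, which I would split into a drift part, a diffusive part $\bar\beta_{i}$ and a noise part $\bar\epsilon_{i}$. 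Since $|\bar r_{i}^{*,\mathrm{drift}}| = O_{p}(k_{n}\Delta_{n}) = O_{p}(n^{-1/2})$, a H\"{o}lder/Minkowski estimate shows the drift is negligible at the $n^{1/4}$ scale inside $BV_{n}(q,r)$. Next, using Assumptions (V) and (N,ii)---both $\sigma$ and $\omega$ are It\^{o} processes, hence have $L^{2}$-continuous increments---I would freeze $\sigma$ and $\omega$ at the left endpoint of each block of $k_{n}$ observations and bound the replacement error, whose total contribution is $o_{p}(n^{-1/4})$ by the standard argument.

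The second ingredient is the conditionally Gaussian approximation. Given $\mathcal{F}$ with $\sigma,\omega$ frozen locally, $n^{1/4}\bar\beta_{i}$ is asymptotically $N\!\big(0,\theta\psi_{2}\sigma^{2}\big)$, while the $\ell$-dependent MA structure of $\pi$ together with the slow variation of the weights and the symmetry/all-moments conditions in (N,i) aggregate the noise into $n^{1/4}\bar\epsilon_{i}\approx N\!\big(0,\tfrac{1}{\theta}\psi_{1}\rho^{2}\omega^{2}\big)$; the long-run variance $\rho^{2}=\rho(0)+2\sum_{m=1}^{\ell}\rho(m)$ enters exactly because adjacent increment-of-$g$ weights multiply $\ell$-dependent noise. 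The crucial structural point is that $n^{1/4}\bar r_{i}^{*}$ and $n^{1/4}\bar r_{i+k_{n}}^{*}$ are built from ingredients whose index sets overlap in only $O(1)$ points, so they are asymptotically conditionally independent; hence $\mathbb{E}\big[|n^{1/4}\bar r_{i}^{*}|^{q}|n^{1/4}\bar r_{i+k_{n}}^{*}|^{r}\mid\mathcal{F}\big]\to\mu_{q}\mu_{r}\big(\theta\psi_{2}\sigma_{s}^{2}+\tfrac{1}{\theta}\psi_{1}\rho^{2}\omega_{s}^{2}\big)^{(q+r)/2}$, which delivers both the law of large numbers $BV_{n}(q,r)\overset{p}{\longrightarrow}V(q,r)$ and the centering used below. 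Uniform $L^{p}$-bounds on $|n^{1/4}\bar r_{i}^{*}|^{q}$ for non-integer $q$---needed for the Lyapunov condition---follow from boundedness of the coefficients and the moment assumption on $\pi_{0}$.

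The core of the proof is the central limit theorem for $n^{1/4}\big(BV_{n}(q,r)-V(q,r)\big) = n^{-3/4}\sum_{i}(\chi_{i}-V)$, where $\chi_{i}=|n^{1/4}\bar r_{i}^{*}|^{q}|n^{1/4}\bar r_{i+k_{n}}^{*}|^{r}$. I would split $\chi_{i}-V=\big(\chi_{i}-\mathbb{E}[\chi_{i}\mid\mathcal{G}_{i-1}]\big)+\big(\mathbb{E}[\chi_{i}\mid\mathcal{G}_{i-1}]-V\big)$ for a suitable conditioning $\sigma$-field; the second term is, modulo the local-freezing errors already handled, a Riemann-sum remainder for the It\^{o} integrand defining $V$, hence contributes nothing after multiplication by $n^{1/4}$, so the limit comes from the first term. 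Because consecutive $\bar r_{i}^{*}$ share all but $O(1)$ of their $k_{n}$ ingredients, the $\chi_{i}$ decorrelate only over lags $\gtrsim 2k_{n}$; I would therefore group the indices into big blocks of length $\sim p k_{n}$ with $p=p_{n}\to\infty$ slowly and $p k_{n}/n\to 0$, drop the negligibly many boundary indices, express each big-block sum minus its conditional mean as the increment of a martingale difference array, and apply the array CLT---verifying a Lyapunov condition (using the uniform moment bounds) and the convergence of the summed conditional variances. That last computation, namely summing $\mathrm{Cov}(\chi_{i},\chi_{i+u}\mid\mathcal{F})$ over $|u|\lesssim k_{n}$ and passing the Riemann sums to integrals, produces precisely the random matrix $\widetilde{\Sigma}$ of \eqref{equation:asymptotic-covariance-tilde-Sigma}, whose entries are stochastic integrals of $\sigma_{s}^{4}$, $\sigma_{s}^{2}\omega_{s}^{2}$ and $\omega_{s}^{4}$ against constants built from $g$, $\theta$ and $\rho^{2}$. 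Since everything is adapted and $\widetilde{\Sigma}$ is $\mathcal{F}$-measurable, upgrading to stable convergence is the usual step: check joint convergence with an arbitrary bounded martingale orthogonal to $W$, or invoke the stable CLT toolbox of \citet[][Section 2.2]{jacod-protter:12a}.

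The main obstacle, as in every pre-averaging CLT, is the double book-keeping in the last step: one must simultaneously control the overlap-induced dependence among the $\chi_{i}$ over a window of width $\sim k_{n}$ that grows with $n$, and track how the heteroscedastic $\ell$-dependent noise---entering both through the summation-by-parts form of each $\bar r_{i}^{*}$ and through the products of adjacent pre-averaged returns---propagates into the local conditional variance (via $\rho^{2}$) and into every entry of $\widetilde{\Sigma}$. This is exactly where the extension beyond the i.i.d./exogenous-noise setting of \citet{jacod-li-mykland-podolskij-vetter:09a} and \citet{christensen-oomen-podolskij:14a} to the \citet{jacod-li-zheng:19a} noise model---now for multipower rather than power variation---bites, and where essentially all of the moment estimates, block-truncation bounds and covariance algebra reside.
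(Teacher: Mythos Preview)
Your outline captures the architecture of the paper's proof---localization, freezing $\sigma$ and $\omega$ on blocks, big-block/small-block grouping, and a martingale-array stable CLT verified via Theorem IX.7.28 of \citet{jacod-shiryaev:03a}---and your intuition about why $\rho^{2}$ appears and why $\bar r_{i}^{*}$, $\bar r_{i+k_{n}}^{*}$ are asymptotically conditionally independent is correct.

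There is, however, a real gap in your treatment of the centering. You write that $\mathbb{E}[\chi_{i}\mid\mathcal{G}_{i-1}]-V$ is ``modulo the local-freezing errors already handled, a Riemann-sum remainder \ldots\ hence contributes nothing after multiplication by $n^{1/4}$.'' This is not true as stated. After freezing $\sigma,\omega$ at $m\Delta_{n}$, the pre-averaged return $n^{1/4}\bar r_{i,m}^{*}$ is a scaled sum of $k_{n}$ $\ell$-dependent, generally non-Gaussian summands, so its conditional law is not Gaussian and $\mathbb{E}\big[|n^{1/4}\bar r_{i,m}^{*}|^{q}\,\big|\,\mathcal{H}_{m}^{n}\big]$ is not equal to $\mu_{q}\big(\theta\psi_{2}\sigma_{m\Delta_{n}}^{2}+\theta^{-1}\psi_{1}\rho^{2}\omega_{m\Delta_{n}}^{2}\big)^{q/2}$. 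What you need is that the difference is $o_{p}(n^{-1/4})$ \emph{uniformly} in $i,m$; a mere ``$\to$'' gives only $o(1)$ per term, and with $O(n)$ terms and the $n^{-3/4}$ normalization this leaves an $o(n^{1/4})$ residual that blows up. The paper obtains the required rate via an Edgeworth expansion for weakly dependent sums (Theorem 6.3 of \citet{lahiri:03a}, verifying the \citet{gotze-hipp:83a} conditions), and this is precisely where Assumption (N,iv)---the Cram\'er-type condition $\limsup_{|t|\to\infty}|\chi(t)|<1$ on the innovations---enters; you never invoke it. The symmetry assumption on $\pi_{0}$, which you mention only in passing, is what kills the $O(k_{n}^{-1/2})$ third-cumulant term in the expansion (since $|x|^{q}$ is even and $\phi(x)(x^{3}-3x)$ is odd), upgrading the error from $O(n^{-1/4})$ to $o(n^{-1/4})$. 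Symmetry is used a second time in bounding the freezing error of the drift-type piece of $\bar r_{i}^{*}-\bar r_{i,m}^{*}$ via an odd/even parity argument in $(W,B,\pi)$. Without these two ingredients your bias term $B(p,q,r)^{n}$ (in the paper's notation) is not under control, and the CLT does not close.

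A minor point: the paper uses the double-limit big-block/small-block scheme with $p$ fixed and $\Sigma(p)\to\widetilde\Sigma$ as $p\to\infty$, rather than your $p=p_{n}\to\infty$; both work, but the former interfaces more cleanly with the Edgeworth step above, since the frozen-coefficient moment identities must hold uniformly over blocks of length $O(pk_{n})$.
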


To present an expression for the asymptotic covariance matrix in the above central limit theorem, we need to introduce some additional notation. In particular, for $i, j \in \{1, 2 \}$, we set
\begin{equation} \label{equation:h-function-covariance}
h_{ij}(x,y,z) = \mathrm{cov}(|H_{1}|^{q_{i}} |H_{2}|^{r_{i}}, |H_{3}|^{q_{j}} |H_{4}|^{r_{j}}),
\end{equation}
where $x,y$ are two-dimensional vectors, whereas $z$ is a four-dimensional vector, and $(H_{1}, H_{2}, H_{3}, H_{4})$ are centered multivariate normal distributed random variables with covariance structure:
\begin{itemize}
\item[(i)] $\mathbb{E} \big[|H_{i}|^{2}] = y_{1} x_{1}^{2} + y_{2} x_{2}^{2}, \quad i \in \{1, 2, 3, 4 \}$,
\item[(ii)] $H_{1} \perp H_{2}, \quad H_{1} \perp H_{4}, \quad H_{3} \perp H_{4}$,
\item[(iii)] $\mathrm{cov}(H_{1},H_{3}) = \mathrm{cov}(H_{2},H_{4}) = z_{1}x_{1}^{2} + z_{2}x_{2}^{2}$,
\item[(iv)] $\mathrm{cov}(H_{2},H_{3}) = z_{3}x_{1}^{2} + z_{4}x_{2}^{2}$.
\end{itemize}
We also introduce the following functions:
\begin{equation*}
f_{1}(s) = \frac{1}{ \theta} \phi_{1}(s), \quad f_{2}(s) = \theta \phi_{2}(s), \quad f_{3}(s) = \frac{1}{ \theta} \phi_{3}(s), \quad f_{4}(s) = \theta \phi_{4}(s),
\end{equation*}
for $s \in [0, 2]$, where
\begin{align*}
\phi_{1}(s) &= \int_{0}^{1-s} g'(u)g'(u+s) \mathrm{d}s, \quad \phi_{2}(s) = \int_{0}^{1-s} g(u)g(u+s) \mathrm{d}s, \\[0.10cm]
\phi_{3}(2-s) &= \int_{0}^{1-s} g'(u)g'(u+s-1) \mathrm{d}s, \quad \phi_{4}(s) = \int_{0}^{2-s} g(u)g(u+s-1) \mathrm{d}s.
\end{align*}
We compactly write these in vectorized form as $ f(s) = (f_{1}(s), \dots, f_{4}(s))$. The $(i,j)$-th component of $\tilde{ \Sigma}$ is then given by
\begin{equation} \label{equation:asymptotic-covariance-tilde-Sigma}
\tilde{ \Sigma}_{ij} = 2 \theta \int_{0}^{1} \int_{0}^{2} h_{ij} \big(( \rho \omega_{u}, \sigma_{u}), (\psi_{1}/ \theta, \theta \psi_{2}), f(s) \big) \mathrm{d}s \mathrm{d}u.
\end{equation}

Compared to the estimators introduced in this appendix, the pre-averaged realized variance and bipower variation in \eqref{equation:rv} are merely rescaled and bias-corrected versions of \eqref{equation:pbv} with $(q,r) = (2,0)$ and $(q,r) = (1,1)$, since (see also Lemma \ref{lemma:noise})
\begin{equation*}
RV_{n}^{*} = c_{1} BV_{n}(2,0) + o_{p} \big(n^{-1/4} \big) \quad \mathrm{and} \quad BV_{n}^{*} = c_{1} \frac{ \pi}{2} BV_{n}(1,1) + o_{p} \big(n^{-1/4} \big),
\end{equation*}
where
\begin{equation*}
c_{1} = \frac{1}{ \theta \psi_{2}}.
\end{equation*}
Hence, the relation between $\tilde{ \Sigma}$ and $\Sigma$ is as follows:
\begin{equation} \label{asymptotic-covariance-Sigma}
\Sigma_{ij} = c_{i}^{2} \left( \frac{ \pi}{2} \right)^{i+j-2} \tilde{ \Sigma}_{ij}.
\end{equation}

\subsection*{Proof of Theorem \ref{appendix:clt}}

For any $m \leq i$, we define
\begin{equation}
\bar{r}_{i,m}^{*} = \sum_{j=1}^{k_{n}} g(j/k_{n}) \big( \sigma_{m \Delta_{n}} (W_{(i+j) \Delta_{n}}-W_{(i+j-1) \Delta_{n}})+ \omega_{m \Delta_{n}} ( \pi_{i+j}- \pi_{i+j-1}) \big).
\end{equation}
We note that $\bar{r}_{i,m}^{*}$ serves to approximate $\bar{r}_{i}^{*}$ by freezing the processes $\sigma$ and $\omega$ locally at the time point $m \Delta_{n}$.

We also denote
\begin{align*}
\eta(q,r)_{i}^{n} &= |n^{1/4} \bar{r}_{i}^{*}|^{q}   |n^{1/4} \bar{r}_{i+k_{n}}^{*}|^{r}, \\
\eta(q,r)_{i,m}^{n} &= |n^{1/4} \bar{r}_{i,m}^{*}|^{q}   |n^{1/4}\bar{r}_{i+k_{n},m}^{*}|^{r}.
\end{align*}
Set $\mathcal{G}_{i} = \sigma(u_{j} : j \leq i)$ to be the $\sigma$-field generated by the i.i.d. sequence $(u_{i})$ used to build the noise process. We then define the product $\sigma$-field $\mathcal{H}_{i}^{n} = \mathcal{F}_{i \Delta_{n}} \otimes \mathcal{G}_{i-2 k_{n}}$, where $n$ is chosen large enough such that $\ell \leq 2 k_{n}$, so that the lagged form $\mathcal{G}_{i-2 k_{n}}$ deals with the $\ell$-dependence in $\pi$.

\begin{lemma} \label{lemma:moment}
Under the maintained assumptions of Theorem \ref{appendix:clt}, for any $q, r \in \mathcal{S}$, and real number $v \geq 1$, it holds that
\begin{equation*}
\max \Big( \mathbb{E} \big[|n^{1/4} \bar{r}_{i}^{*}|^{v} \mid \mathcal{H}_{m}^{n} \big], \mathbb{E} \big[|n^{1/4} \bar{r}_{i,m}^{*}|^{v} \mid \mathcal{H}_{m}^{n} \big] \Big) \leq C,
\end{equation*}
and
\begin{equation*}
\max \Big( \mathbb{E} \big[| \eta(q,r)_{i}^{n}|^{v} \mid \mathcal{H}_{m}^{n} \big], \mathbb{E} \big[| \eta(q,r)_{i,m}^{n}|^{v} \mid \mathcal{H}_{m}^{n} \big] \Big)  \leq C,
\end{equation*}
uniformly in $i$ and $m$.
\end{lemma}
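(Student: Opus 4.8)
The plan is to reduce everything to a single conditional $L^{v}$-estimate on the scaled pre-averaged return $n^{1/4}\bar r_{i}^{*}$ (and its frozen analogue $n^{1/4}\bar r_{i,m}^{*}$), and then deduce the bounds on $\eta(q,r)_{i}^{n}$ and $\eta(q,r)_{i,m}^{n}$ from this by H\"older's inequality. Throughout I would use the localization under Assumptions (V) and (N) to treat $a$, $\sigma$ and the noise level $\omega$ as bounded by a constant, and I would invoke the standing choice of $n$ large enough that $2k_{n}\ge \ell+1$: since $m\le i$, every innovation $u_{k}$ entering $\pi_{i},\dots,\pi_{i+2k_{n}}$ then has index strictly larger than $m-2k_{n}$, so by Assumption (N)(i),(iii) the whole block $(\pi_{i},\dots,\pi_{i+2k_{n}})$ is independent of $\mathcal{H}_{m}^{n}=\mathcal{F}_{m\Delta_{n}}\otimes\mathcal{G}_{m-2k_{n}}$ and keeps its unconditional, stationary, all-moments-finite MA($\ell$) law under $\mathbb{P}(\cdot\mid\mathcal{H}_{m}^{n})$.

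For the first display I would split $\bar r_{i}^{*}=\bar r_{i}^{*,\mathrm{drift}}+\bar r_{i}^{*,\mathrm{mart}}+\bar r_{i}^{*,\epsilon}$ along $r_{i+j}^{*}=r_{i+j}+\epsilon_{(i+j)\Delta_{n}}-\epsilon_{(i+j-1)\Delta_{n}}$, the first two pieces coming from $r^{c}$ in \eqref{equation:bsm} and the last from the noise. The drift piece is deterministically $O(k_{n}\Delta_{n})$, hence $n^{1/4}\bar r_{i}^{*,\mathrm{drift}}=O(n^{-1/4})$ with trivially bounded conditional moments. The martingale piece equals $\int_{t_{i}}^{t_{i+k_{n}-1}}\bar g_{n}(s)\sigma_{s}\,\mathrm{d}W_{s}$ for a predictable step function $\bar g_{n}$ bounded by $\sup|g|$; conditioning on $\mathcal{F}_{t_{i}}\supseteq\mathcal{F}_{m\Delta_{n}}$, Burkholder--Davis--Gundy together with $|\sigma|\le C$ gives a conditional $v$-th moment of order $(k_{n}\Delta_{n})^{v/2}\asymp n^{-v/4}$. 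For the noise piece, a summation by parts using $g(0)=g(1)=0$ rewrites $\bar r_{i}^{*,\epsilon}=\sum_{j}c_{j}^{n}\,\omega_{(i+j)\Delta_{n}}\pi_{i+j}$ with at most $k_{n}$ nonzero coefficients, all $|c_{j}^{n}|\le C/k_{n}$ by the piecewise Lipschitz differentiability of $g$; conditioning additionally on the path of $\omega$ (bounded, independent of $\pi$), Rosenthal's inequality for $\ell$-dependent centred sequences gives a conditional $v$-th moment bounded by $C\big(k_{n}^{-v/2}+k_{n}^{1-v}\big)$. Multiplying the three contributions by $n^{v/4}$ and using $k_{n}\asymp\theta\sqrt n$, the martingale and noise terms are $O(1)$ for $v\ge 2$, while for $1\le v<2$ the bound follows from the $v=2$ case by Lyapunov's inequality; the drift term vanishes. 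The $c_{v}$-inequality then yields the claim for $\bar r_{i}^{*}$, uniformly in $i$ and $m$. The argument for $\bar r_{i,m}^{*}$ is the same but simpler: there is no drift, and freezing $\sigma,\omega$ at the $\mathcal{F}_{m\Delta_{n}}$-measurable bounded values $\sigma_{m\Delta_{n}},\omega_{m\Delta_{n}}$ turns the martingale piece into a conditionally Gaussian variable of variance $\le\sup|g|^{2}k_{n}\Delta_{n}$ and leaves the identical summation-by-parts estimate for the noise piece.

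For the $\eta$-bounds I would write $|\eta(q,r)_{i}^{n}|^{v}=|n^{1/4}\bar r_{i}^{*}|^{qv}\,|n^{1/4}\bar r_{i+k_{n}}^{*}|^{rv}$: if $q$ or $r$ equals $0$ this collapses to a single factor already controlled by the first part (an exponent $0$ simply giving the constant $1$), and if $q,r\ge 1$ the Cauchy--Schwarz inequality bounds $\mathbb{E}[\,\cdot\mid\mathcal{H}_{m}^{n}]$ by the geometric mean of $\mathbb{E}\big[|n^{1/4}\bar r_{i}^{*}|^{2qv}\mid\mathcal{H}_{m}^{n}\big]$ and $\mathbb{E}\big[|n^{1/4}\bar r_{i+k_{n}}^{*}|^{2rv}\mid\mathcal{H}_{m}^{n}\big]$, both $\le C$ by the first part (the conditioning bookkeeping is still valid at index $i+k_{n}$ since $i+k_{n}\ge m$ and $2k_{n}\ge\ell+1$). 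The same reasoning applies verbatim to $\eta(q,r)_{i,m}^{n}$. \emph{Main obstacle:} the delicate step is the noise term --- one must first verify that, given $\mathcal{H}_{m}^{n}$, the relevant block of $\pi$'s retains its unconditional law (where $m\le i$, $2k_{n}\ge\ell+1$ and Assumption (N)(i),(iii) enter), and then apply the right moment inequality for the $\ell$-dependent, boundedly-weighted sum so that the $O(1/k_{n})$ weights produced by the summation by parts absorb the $n^{v/4}$ scaling exactly; the small-$v$ range $1\le v<2$ in particular needs the extra Lyapunov reduction because the $\sum_{j}|c_{j}^{n}|^{v}$ term is not yet $o(1)$ there.
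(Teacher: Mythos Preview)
Your proposal is correct and follows the same route as the paper: the paper's proof simply states that the first bound is ``a standard result for the pre-averaged returns, which follows from the boundedness of $g$, $\sigma$ and $\omega$'' and that the second bound ``is due to the Cauchy--Schwarz inequality'', and you have spelled out precisely these two ingredients in full detail (the drift/martingale/noise decomposition with BDG and a Rosenthal-type bound for the first, H\"older/Cauchy--Schwarz for the second). Your careful bookkeeping of the conditioning on $\mathcal{H}_{m}^{n}$ --- in particular the use of $m\le i$ and $2k_{n}>\ell$ to ensure the relevant block of $\pi$'s retains its unconditional law --- and the Lyapunov reduction for $1\le v<2$ are exactly the unwritten details behind the paper's one-line justification.
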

\begin{proof}
The first bound is a standard result for the pre-averaged returns, which follows from the boundedness of $g$, $\sigma$ and $\omega$. Then, the second bound is due to the Cauchy-Schwarz inequality. \qed
\end{proof}

\begin{lemma} \label{lemma:edgeworth}
Under the maintained assumptions of Theorem \ref{appendix:clt}, for any $q, r \in \mathcal{S},$ it holds that
\begin{equation*}
\mathbb{E} \big[ \eta(q,r)_{i,m}^{n} \mid \mathcal{H}_{m}^{n} \big] = \mu_{q} \mu_{r} \Big( \theta \psi_{2} \sigma_{m \Delta_{n}}^{2} + \frac{1}{ \theta} \psi_{1} \rho^{2} \omega^{2}_{m \Delta_{n}}\Big)^{ \frac{q + r}{2}} + o_{p}(n^{-1/4}),
\end{equation*}
uniformly in $i$ and $m$.
\end{lemma}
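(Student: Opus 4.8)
\emph{Overall strategy.} The plan is to work under the conditional law given $\mathcal{H}_m^n = \mathcal{F}_{m\Delta_n} \otimes \mathcal{G}_{m-2k_n}$, exploiting that $\sigma_{m\Delta_n}$ and $\omega_{m\Delta_n}$ are $\mathcal{H}_m^n$-measurable (hence constants under that law), and to show that the pair $(n^{1/4}\bar r_{i,m}^*,\, n^{1/4}\bar r_{i+k_n,m}^*)$ is, conditionally, within $o(n^{-1/4})$ (in the relevant moments, uniformly in $i,m$) of a pair of \emph{independent} centered Gaussians, each with variance $V_m := \theta\psi_2\sigma_{m\Delta_n}^2 + \tfrac1\theta\psi_1\rho^2\omega_{m\Delta_n}^2$. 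Granting this, the claim is immediate, since for independent $X,Y\sim N(0,V_m)$ one has $\mathbb{E}\big[|X|^q|Y|^r\big] = \mu_q\mu_r V_m^{(q+r)/2}$.

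\emph{Reduction to a functional of independent variables.} For $m\le i$ and $n$ large enough that $2k_n>\ell$, the Brownian increments $W_{(i+j)\Delta_n}-W_{(i+j-1)\Delta_n}$, $j=1,\dots,2k_n$, and the innovations $u_{i-\ell},\dots,u_{i+2k_n}$ that enter $\bar r_{i,m}^*$ and $\bar r_{i+k_n,m}^*$ are independent of $\mathcal{H}_m^n$, and $W\Perp\pi$ by Assumption (N,iii). Writing $\sigma:=\sigma_{m\Delta_n}$, $\omega:=\omega_{m\Delta_n}$, we thus have conditionally $n^{1/4}\bar r_{i,m}^* = \sigma G + \omega N$ and $n^{1/4}\bar r_{i+k_n,m}^* = \sigma G' + \omega N'$, where $(G,G')$ are \emph{exactly} independent $N(0,\theta\psi_2^n)$ random variables (built from disjoint blocks of Brownian increments), while $N$ and $N'$ are normalized weighted sums of the i.i.d., symmetric, all-moment-finite innovations $(u_j)$ that overlap in only $O(\ell)$ terms near the block junction. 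Lemma~\ref{lemma:moment} supplies the uniform conditional moment bounds needed to handle remainders and tails throughout.

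\emph{Gaussian approximation of the noise component.} I would replace $N$ by a centered Gaussian with the same variance. Because $g$ is piecewise $C^1$ with piecewise Lipschitz derivative, the coefficients of $N$ are $O(k_n^{-1/2})$ with a slowly varying profile, so a standard Edgeworth expansion for $\sigma G+\omega N$ is valid under the Cram\'er condition (N,iv), with correction terms of order $k_n^{-1/2}, k_n^{-1}, \dots$. The first correction is an odd Hermite polynomial times the Gaussian density; integrated against the even function $|x|^q$ it vanishes, and by the assumed symmetry of $\pi_0$ the corresponding third cumulant is in fact zero, so the first nonzero contribution is $O(k_n^{-1})=o(n^{-1/4})$. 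The same applies to the $(q,r)$ cross-moment after also Gaussianizing $N'$; the residual dependence between $N$ and $N'$ is through $O(\ell)$ shared innovations carrying $O(k_n^{-1})$ weights, which yields a conditional correlation of order $O(n^{-1/2})$, and by the vanishing of odd Gaussian moments this perturbs $\mathbb{E}[|X|^q|Y|^r]$ only at order $O(n^{-1})=o(n^{-1/4})$. All bounds are uniform in $i,m$, since the constants depend only on the (localized) bounds for $g,g',\sigma,\omega$, on $q,r$, and on the moments of $u_0$.

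\emph{Variance matching and main obstacle.} It remains to replace $\theta\psi_2^n\sigma^2 + \tfrac1\theta(\text{exact noise variance})\,\omega^2$ by $V_m$: the Riemann-sum errors $|\psi_2^n-\psi_2|$, $|\psi_1^n-\psi_1|$ and the boundary/autocovariance-weighting discrepancy between the exact noise variance and $\tfrac1\theta\psi_1\rho^2$ are each $O(k_n^{-1})$, hence bounded by $C k_n^{-1}(\sigma^2+\omega^2)$, while $V_m\ge c(\sigma^2+\omega^2)$ with $c>0$ (as $\psi_1,\psi_2,\rho^2>0$); thus the relative error is $O(k_n^{-1})$ and, since $x\mapsto x^{(q+r)/2}$ is a fixed power, $(V_m(1+O(k_n^{-1})))^{(q+r)/2}=V_m^{(q+r)/2}+o(n^{-1/4})$ uniformly, with both sides vanishing identically when $V_m=0$. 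Summing the three $o(n^{-1/4})$ contributions gives the lemma. The main obstacle is the Edgeworth step: making rigorous, uniformly in $i,m$ and for the unbounded function $|x|^q$, that the non-Gaussianity of the noise contributes only at order $k_n^{-1}$, which is exactly where the symmetry in Assumption (N,i), the Cram\'er condition (N,iv), and the all-moments hypothesis on $u_0$ are used.
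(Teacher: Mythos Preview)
Your strategy for the power-variation case $r=0$ coincides with the paper's: both use an Edgeworth expansion for the $\ell$-dependent sum $n^{1/4}\bar r_{i,m}^*$ (the paper verifies the G\"otze--Hipp/Lahiri conditions (C1)--(C6) explicitly, relying on Assumption~(N,iv) for the Cram\'er-type condition), and both kill the $k_n^{-1/2}$ correction by integrating the odd function $\phi(x)(x^3-3x)$ against the even $|x|^q$, leaving an $O(k_n^{-1})$ remainder that dominates the variance-matching error.

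Where you diverge is the bipower step $r\neq 0$. The paper does \emph{not} Gaussianize and then expand in the small correlation. Instead it introduces truncated versions $\check r_{i,m}^*$ (dropping the last $\ell$ summands) and $\hat r_{i+k_n,m}^*$ (dropping the first $\ell$ summands), so that $\bar r_{i,m}^*\Perp \hat r_{i+k_n,m}^*$ and $\check r_{i,m}^*\Perp(\bar r_{i+k_n,m}^*,\hat r_{i+k_n,m}^*)$ \emph{exactly}, conditionally on $\mathcal H_m^n$. This yields a three-term decomposition of $\mathbb{E}[\eta(q,r)_{i,m}^n\mid\mathcal H_m^n]$: the first term factors by independence and each factor gets a univariate Edgeworth; the second term factors into a bounded piece times $\mathbb{E}[|\bar r'|^r]-\mathbb{E}[|\hat r'|^r]$, which is $o(n^{-1/4})$ because \emph{both} expansions have the same leading constant; the third term is a product of two $O_p(n^{-1/4})$ quantities via $||x|^s-|y|^s|\le C|x-y|\max(|x|^{s-1},|y|^{s-1})$ and Cauchy--Schwarz.

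Your route---Gaussianize $N$ and $N'$ marginally, then note the residual correlation is $O(n^{-1/2})$ and expand $\mathbb{E}[|X|^q|Y|^r]$ in $\rho$---has a gap as stated: two marginal Edgeworth expansions do not by themselves control the \emph{joint} law of $(N,N')$, so you cannot simply pass to a bivariate Gaussian with the same covariance. To make your argument rigorous you would need a \emph{bivariate} Edgeworth expansion for the weakly dependent pair (which is available under the same Lahiri framework but is heavier machinery), after which your odd-moment/$\rho$-expansion would indeed give an $O(\rho^2)=O(n^{-1})$ perturbation. The paper's truncation trick avoids this entirely by reducing to exact conditional independence and two applications of the univariate expansion already established.
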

\begin{proof}

We adapt the proof of Lemma 4 in \citet*{podolskij-vetter:09a} to $\ell$-dependent data by applying Theorem 6.3 from \citet{lahiri:03a}. Hence, the job is to verify conditions (C1)--(C6) therein, which are essentially those introduced adopted in \citet{gotze-hipp:83a}. We first prove the result with $r=0$, i.e the pre-averaged power variation estimator, which is then extended to the bipower case with $r>0$.

We recall that
\begin{align*}
n^{1/4} \bar{r}_{i,m}^{*} &= n^{1/4} \sum_{j=0}^{k_{n}-1} \left( \sigma_{m \Delta_{n}} g \left( \frac{j}{k_{n}} \right)(W_{(i+j) \Delta_{n}} - W_{(i+j-1) \Delta_{n}}) + \omega_{m \Delta_{n}} \left(g \left( \frac{j}{k_{n}} \right) - g \left( \frac{j+1}{k_{n}} \right) \right) \pi_{i+j} \right) \\
&= k_{n}^{-1/2} \sum_{j=0}^{k_{n}-1} \left( \sigma_{m \Delta_{n}} g \left( \frac{j}{k_{n}} \right) \left( \frac{k_{n}}{n^{1/2}} \right)^{1/2} N_{i+j} + \omega_{m \Delta_{n}} \left(g \left( \frac{j}{k_{n}} \right) - g \left( \frac{j+1}{k_{n}} \right) \right) n^{1/4} k_{n}^{1/2} \pi_{i+j} \right).
\end{align*}
Note that the subsequent analysis is conditional on $\mathcal{H}_{m}^{n}$, so that $\sigma_{m \Delta}$, $\omega_{m \Delta}$ can be treated as non-random constants. As a result, the above is a scaled sum of $k_{n}$ terms that are $\ell$-dependent, where $(N_{l})_{ l \geq 1}$ is a i.i.d. sequence of standard normal variables. We set $\mathcal{D}_{i} = \sigma(N_{i}) \otimes \sigma(u_{i})$.

Since $n^{1/4} \bar{r}_{i,m}^{*}$ is mean zero, we denote
\begin{equation*}
\tau_{m,n}^{2} = \mathbb{E} \left[ \big(n^{1/4} \bar{r}_{i,m}^{*} \big)^{2} \mid \mathcal{H}_{m}^{n} \right]
\end{equation*}
as its conditional variance and consider the standardized version
\begin{equation*}
U_{m,n} = \tau_{m,n}^{-1} n^{1/4} \bar{r}_{i,m}^{*} \equiv k_{n}^{-1/2} \sum_{j=0}^{k_{n}-1} \Gamma_{i+j}^{m,n}.
\end{equation*}
The first condition (C1) is about the existence of moments. Due to Assumption (N,i), the terms $\Gamma_{i+j}^{m,n}$ possess $v$ moments for any finite $v>0$. Condition (C2) is verified, since $\Gamma_{i+j}^{m,n}$ has mean zero and unit variance. Condition (C3) is about the strong mixing condition and weak dependence structure of the summands, which we show by choosing the approximate variable as $\Gamma_{i+j}^{m,n}$ itself. Condition (C4) involves the definition of the strong mixing condition. This is obviously fulfilled in our setting, since $( \Gamma_{i+j}^{m,n})_{j=0}^{k_{n}-1}$ are $\ell$-dependent. Moreover, condition (C5) is an approximate Markov-type property and follows immediately due to the finiteness of $\ell$. This leaves condition (C6), which is a Cramer-type condition for weakly dependent data, which requires a bit more work. To this end, we note that (conditional further on $\mathcal{H}_{m}^{n}$ in each expectation):
\begin{equation*}
\mathbb{E} \left[ \left| \mathbb{E} \left[ \exp \big( \mathrm{i} t \big( \Gamma_{i+j - k}^{m,n} + \ldots + \Gamma_{i+j}^{m,n} + \ldots + \Gamma_{i+j+k}^{m,n} \big) \big) \mid \mathcal{D}_{i+s}, s \neq j \right] \right| \right] = \left| \mathbb{E} \left[ \exp \big( \mathrm{i} t \Gamma_{i+j}^{m,n} \big) \right] \right|,
\end{equation*}
for each $r> \ell$.

With the independence between $N_{i}$ and $\pi_{i}$, this can be written as
\begin{equation*}
\exp \left(- \sigma_{m \Delta_{n}}^{2} g^{2} \left( \frac{j}{k_{n}} \right) \frac{k_{n}}{n^{1/2}} \frac{t^{2}}{2 \tau_{n,m}^{2}} \right) \prod_{k=0}^{ \ell} \left| \mathbb{E} \left[ \exp \Big( \mathrm{i} t \omega_{m \Delta_{n}} \left(g \left( \frac{j}{k_{n}} \right) - \left( \frac{j+1}{k_{n}} \right) \right) n^{1/4} k_{n}^{1/2} \theta_{k} u_{i+j-k} \Big) \right] \right|
\end{equation*}
with $\theta_{0} = 1$. Then, this condition holds due to Assumption (N,iv).

Setting $f(x) = |x|^{q}$, the following Edgeworth expansion:
\begin{align*}
\left| \frac{1}{f( \tau_{n,m})} \mathbb{E} \left[ f \big(n^{1/4} \bar{r}_{i,m}^{*} \big) \mid \mathcal{H}_{m}^{n} \right] - \int f(x) \mathrm{d} \Psi_{n}(x) \right| = o \big(k_{n}^{-1/2} \big),
\end{align*}
holds uniformly in $i$ and $m$, where $\Psi_{n}$ is the signed measure with the density
\begin{equation}
\psi_{n}(x) = \phi(x) \left(1 + \frac{1}{ \sqrt{k_{n}}} \frac{x^{3} - 3x}{6} \bar{ \kappa}_{3}^{n} \right),
\end{equation}
where $\bar{ \kappa}_{3}^{n}$ is the third cumulant of $U_{m,n}$.

As $f(x)=|x|^{q}$ is an even function and $\phi(x)(x^{3} - 3x)$ is an odd function,
\begin{equation*}
\int f(x) \phi(x) (x^{3} - 3x) \mathrm{d}x = 0.
\end{equation*}
In view of this result and the bound
\begin{equation*}
\tau_{n,m}^{2} - \left( \theta \psi_{2} \sigma_{m \Delta_{n}}^{2} + \frac{1}{ \theta} \psi_{1} \rho^{2} \omega^{2}_{m \Delta_{n}} \right) = O \big(k_{n}^{-1} \big),
\end{equation*}
we obtain the following version of the expansion:
\begin{equation} \label{equation:case-rzero}
\mathbb{E} \big[ \eta(q,0)_{i,m}^{n} \mid \mathcal{H}_{m}^{n} \big] = \mu_{q} \Big( \theta \psi_{2} \sigma_{m \Delta_{n}}^{2} + \frac{1}{ \theta} \psi_{1} \rho^{2} \omega^{2}_{m \Delta_{n}} \Big)^{ \frac{q}{2}} + o_{p}(n^{-1/4}).
\end{equation}

Next, we look at the case $r \neq 0$. We introduce the notation:
\begin{align*}
\hat{r}_{i,m}^{*} &= \sum_{j= \ell}^{k_{n}-1} \left( \sigma_{m \Delta_{n}} g \left( \frac{j}{k_{n}} \right)(W_{(i+j) \Delta_{n}} - W_{(i+j-1) \Delta_{n}}) + \omega_{m \Delta_{n}} \left(g \left( \frac{j}{k_{n}} \right) - g \left( \frac{j+1}{k_{n}} \right) \right) \pi_{i+j} \right), \\
\check{r}_{i,m}^{*} &= \sum_{j=0}^{k_{n}-1- \ell} \left( \sigma_{m \Delta_{n}} g \left( \frac{j}{k_{n}} \right)(W_{(i+j) \Delta_{n}} - W_{(i+j-1) \Delta_{n}}) + \omega_{m \Delta_{n}} \left(g \left( \frac{j}{k_{n}} \right) - g \left( \frac{j+1}{k_{n}} \right) \right) \pi_{i+j} \right).
\end{align*}
To derive the statement in the lemma on $\mathbb{E} \big[ \eta(q,r)_{i,m}^{n} \mid \mathcal{H}_{m}^{n} \big]$, it suffices to show that:
\begin{equation*}
\mathbb{E} \left[|n^{1/4} \bar{r}_{i,m}^{*}|^{q}   |n^{1/4} \hat{r}_{i+k_{n},m}^{*}|^{r} \mid \mathcal{H}_{m}^{n} \right] = \mu_{q} \mu_{r} \Big( \theta \psi_{2} \sigma_{m \Delta_{n}}^{2} + \frac{1}{ \theta} \psi_{1} \rho^{2} \omega^{2}_{m \Delta_{n}} \Big)^{ \frac{q+r}{2}} + o_{p}(n^{-1/4}),
\end{equation*}
\begin{equation*}
\mathbb{E} \left[|n^{1/4} \check{r}_{i,m}^{*}|^{q} \big(  |n^{1/4} \bar{r}_{i+k_{n},m}^{*}|^{r} -|n^{1/4} \hat{r}_{i+k_{n},m}^{*}|^{r} \big) \mid \mathcal{H}_{m}^{n} \right] = o_{p}(n^{-1/4}),
\end{equation*}
and
\begin{equation*}
\mathbb{E} \left[ \big||n^{1/4} \bar{r}_{i,m}^{*}|^{q} -|n^{1/4} \check{r}_{i,m}^{*}|^{q} \big| \big||n^{1/4} \bar{r}_{i+k_{n},m}^{*}|^{r} - n^{1/4} \hat{r}_{i+k_{n},m}^{*}|^{r} \big| \mid \mathcal{H}_{m}^{n} \right] = O_{p}(n^{-1/2}).
\end{equation*}
The first and second bounds exploit that $\bar{r}_{i,m}^{*}$ and $\hat{r}_{i+k_{n},m}^{*}$, and also $\check{r}_{i,m}^{*}$ and $( \bar{r}_{i+k_{n},m}^{*}, \hat{r}_{i+k_{n},m}^{*})$ are independent, conditional on $\mathcal{H}_{m}^{n}$, combined with an Edgeworth expansion as in \eqref{equation:case-rzero} for $\check{r}_{i,m}^{*}$ and $\hat{r}_{i+k_{n},m}^{*}$, and  Lemma \ref{lemma:moment}. The third bound relies on the inequality $||x|^{s} - |y|^{s}| \leq C |x-y| \max(|x|^{s-1}, |y|^{s-1})$ for $s \in \{q,r\}$, together with Lemma \ref{lemma:moment} and the Cauchy-Schwarz inequality. \qed
\end{proof}

The next preliminary result concerns the error of $\alpha_{i,m}^{n} \equiv n^{1/4} ( \bar{r}_{i}^{*} - \bar{r}_{i,m}^{*})$.
\begin{lemma} \label{lemma:beta}
Under the maintained assumptions of Theorem \ref{appendix:clt}, for any real number $v \geq 2$, it holds that
\begin{align*}
\mathbb{E} \big[| \alpha_{i,m}^{n}|^{v} \mid \mathcal{H}_{m}^{n} \big] &\leq C \Big(n^{-v/4} + \big( \mathbb{E} \big[ \Gamma( \sigma,m,i)^{n} + \Gamma( \omega, m,i)^{n} \mid \mathcal{H}_{m}^{n} \big] \big)^{v} \Big), \\[0.10cm]
\mathbb{E} \big[| \eta(q,r)_{i}^{n} - \eta(q,r)_{i,m}^{n}|^{v} \mid \mathcal{H}_{m}^{n} \big] &\leq C \Big(n^{-v/4} + \big( \mathbb{E} \big[ \Gamma( \sigma,m,i)^{n} + \Gamma( \omega, m,i)^{n} \mid \mathcal{H}_{m}^{n} \big] \big)^{v} \Big),
\end{align*}
where for any process $\gamma$ we set
\begin{equation*}
\Gamma( \gamma,m,i)^{n} = \sup_{m \Delta_{n} \leq s \leq (i+2 k_{n}) \Delta_{n}} | \gamma_{s} - \gamma_{m \Delta_{n}}|.
\end{equation*}
\end{lemma}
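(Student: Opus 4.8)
The plan is to prove the bound for $\alpha_{i,m}^{n}$ first and then deduce the one for $\eta(q,r)_{i}^{n}-\eta(q,r)_{i,m}^{n}$ from it via an elementary product-difference estimate and Lemma~\ref{lemma:moment}. Writing $\bar r_{i}^{*}$ in the summation-by-parts form used in the proof of Lemma~\ref{lemma:edgeworth} (so the noise enters through the weight-function increments $g_{j}^{n}-g_{j+1}^{n}$), I would split $\alpha_{i,m}^{n}=D_{i,m}^{n}+V_{i,m}^{n}+E_{i,m}^{n}$, where $D_{i,m}^{n}=n^{1/4}\sum_{j}g_{j}^{n}\int_{(i+j-1)\Delta_{n}}^{(i+j)\Delta_{n}}a_{s}\,\mathrm{d}s$ is the drift, $V_{i,m}^{n}=n^{1/4}\sum_{j}g_{j}^{n}\int_{(i+j-1)\Delta_{n}}^{(i+j)\Delta_{n}}(\sigma_{s}-\sigma_{m\Delta_{n}})\,\mathrm{d}W_{s}$ is the volatility-freezing error, and $E_{i,m}^{n}=-n^{1/4}\sum_{j}(g_{j+1}^{n}-g_{j}^{n})(\omega_{(i+j)\Delta_{n}}-\omega_{m\Delta_{n}})\pi_{i+j}$ (up to finitely many boundary terms) is the noise-freezing error. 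By boundedness of $a$ and $g$ and $k_{n}\Delta_{n}=O(n^{-1/2})$ one has $|D_{i,m}^{n}|\le Cn^{1/4}k_{n}\Delta_{n}=O(n^{-1/4})$ deterministically, hence $\mathbb{E}[|D_{i,m}^{n}|^{v}\mid\mathcal{H}_{m}^{n}]\le Cn^{-v/4}$.

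For $V_{i,m}^{n}$ I would apply the conditional Burkholder--Davis--Gundy inequality: given $\mathcal{H}_{m}^{n}$ this is a continuous martingale, whose quadratic variation $n^{1/2}\sum_{j}(g_{j}^{n})^{2}\int_{(i+j-1)\Delta_{n}}^{(i+j)\Delta_{n}}(\sigma_{s}-\sigma_{m\Delta_{n}})^{2}\mathrm{d}s$ is bounded by $\|g\|_{\infty}^{2}\,n^{1/2}k_{n}\Delta_{n}\,(\Gamma(\sigma,m,i)^{n})^{2}\le C(\Gamma(\sigma,m,i)^{n})^{2}$ (using $n^{1/2}k_{n}\Delta_{n}\to\theta$ and $m\le i$), so $\mathbb{E}[|V_{i,m}^{n}|^{v}\mid\mathcal{H}_{m}^{n}]\le C\,\mathbb{E}[(\Gamma(\sigma,m,i)^{n})^{v}\mid\mathcal{H}_{m}^{n}]$. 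For $E_{i,m}^{n}$, the coefficient of $\pi_{i+j}$ is of size $\le Cn^{1/4}k_{n}^{-1}\Gamma(\omega,m,i)^{n}$ by the piecewise-Lipschitz property of $g'$, and there are $O(k_{n})$ such terms; since $\pi\Perp(p,\omega)$ (Assumption (N,iii)) I would first condition on $(p,\omega)$ so the coefficients freeze, apply a Rosenthal/Marcinkiewicz--Zygmund inequality for the mean-zero $\ell$-dependent array $(\pi_{i+j})$ — for $n$ large all innovations $u_{\cdot}$ entering these $\pi_{i+j}$ have index exceeding $m-2k_{n}$ because $i\ge m$ and $\ell\le 2k_{n}$, hence are not in $\mathcal{G}_{m-2k_{n}}$, so the sequence is genuinely centered given $\mathcal{H}_{m}^{n}$ up to at most $\ell$ boundary terms absorbed into the $O(n^{-v/4})$ remainder — and then take $\mathbb{E}[\cdot\mid\mathcal{H}_{m}^{n}]$, which gives $\mathbb{E}[|E_{i,m}^{n}|^{v}\mid\mathcal{H}_{m}^{n}]\le C\,\mathbb{E}[(\Gamma(\omega,m,i)^{n})^{v}\mid\mathcal{H}_{m}^{n}]+Cn^{-v/4}$. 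Recombining via $|x+y+z|^{v}\le C(|x|^{v}+|y|^{v}+|z|^{v})$ and invoking the semimartingale form in Assumptions (V) and (N,ii) with bounded characteristics (so the martingale parts of $\Gamma(\sigma,m,i)^{n},\Gamma(\omega,m,i)^{n}$ are controlled by BDG and the finite-variation parts by a deterministic bound) delivers the first inequality in the stated form.

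For the $\eta$-difference I would write $\eta(q,r)_{i}^{n}-\eta(q,r)_{i,m}^{n}=\big(|n^{1/4}\bar r_{i}^{*}|^{q}-|n^{1/4}\bar r_{i,m}^{*}|^{q}\big)|n^{1/4}\bar r_{i+k_{n}}^{*}|^{r}+|n^{1/4}\bar r_{i,m}^{*}|^{q}\big(|n^{1/4}\bar r_{i+k_{n}}^{*}|^{r}-|n^{1/4}\bar r_{i+k_{n},m}^{*}|^{r}\big)$ and use $\big||x|^{s}-|y|^{s}\big|\le C|x-y|(|x|^{s-1}+|y|^{s-1})$ for $s\in\{q,r\}$ with $s\ge 1$ (the difference being identically zero when $s=0$), so that the expression is bounded by $C|\alpha_{i,m}^{n}|$ (resp. $C|\alpha_{i+k_{n},m}^{n}|$) times a product of powers of $n^{1/4}\bar r_{i}^{*}$, $n^{1/4}\bar r_{i,m}^{*}$, $n^{1/4}\bar r_{i+k_{n}}^{*}$. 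Raising to the power $v$, conditioning on $\mathcal{H}_{m}^{n}$, and separating $|\alpha|^{v}$ from the companion factors by Hölder with exponents $(1+\delta,(1+\delta)/\delta)$ — the companion factors have uniformly bounded conditional moments of every order by Lemma~\ref{lemma:moment}, while the first part of the lemma applied at exponent $v(1+\delta)\ge 2$ controls $\mathbb{E}[|\alpha_{i,m}^{n}|^{v(1+\delta)}\mid\mathcal{H}_{m}^{n}]$ and $\mathbb{E}[|\alpha_{i+k_{n},m}^{n}|^{v(1+\delta)}\mid\mathcal{H}_{m}^{n}]$ — and using boundedness of $\sigma,\omega$ and their characteristics, one arrives at the second claimed bound.

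\textbf{Main obstacle.} The delicate step is the noise part: choosing the correct moment inequality for the $\ell$-dependent triangular array $(\pi_{i+j})$ and, above all, the measurability bookkeeping needed to pull $\Gamma(\omega,m,i)^{n}$ out of the expectation — exploiting the product structure $\mathcal{H}_{m}^{n}=\mathcal{F}_{m\Delta_{n}}\otimes\mathcal{G}_{m-2k_{n}}$ and the independence $\pi\Perp(p,\omega)$ to freeze the coefficients before applying the inequality, and verifying that $\ell\le 2k_{n}$ makes the relevant innovations fresh so the conditional mean vanishes. The volatility estimate (conditional BDG) and the $\eta$-case (Hölder separation on top of the $\alpha$-bound) are, by comparison, routine.
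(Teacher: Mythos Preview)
Your proposal is correct and follows essentially the same route as the paper. The paper writes $\alpha_{i,m}^{n}$ as a drift integral, a $(\sigma_{s}-\sigma_{m\Delta_{n}})\,\mathrm{d}W_{s}$ integral, and the weighted noise sum $n^{1/4}\sum_{j}[g(j/k_{n})-g((j+1)/k_{n})](\omega_{(i+j)\Delta_{n}}-\omega_{m\Delta_{n}})\pi_{i+j}$, invokes boundedness of $a,g,\sigma$, the Lipschitz bound $|g(j/k_{n})-g((j+1)/k_{n})|\le C/k_{n}$, BDG for the stochastic integral, and the independence $\pi\Perp\omega$ together with $\Gamma$ bounded for the noise term; for the $\eta$-difference it uses exactly your product split $|y_{1}y_{2}-x_{1}x_{2}|\le|y_{1}-x_{1}||y_{2}|+|x_{1}||y_{2}-x_{2}|$, the inequality $||y|^{q}-|x|^{q}|\le C|x-y|\max(|x|^{q-1},|y|^{q-1})$, and H\"older. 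Your write-up is in fact more careful than the paper's on the noise part---explicitly naming a Rosenthal/Marcinkiewicz--Zygmund inequality for the $\ell$-dependent array and spelling out the $\mathcal{H}_{m}^{n}=\mathcal{F}_{m\Delta_{n}}\otimes\mathcal{G}_{m-2k_{n}}$ measurability bookkeeping---where the paper simply says ``$\omega$ is bounded and independent of $\pi$'' and leaves the moment inequality implicit.
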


\begin{proof}
By construction
\begin{align*}
\alpha_{i,m}^{n} &= n^{1/4} \int_{i \Delta_{n}}^{(i+k_{n}) \Delta_{n}} g_{n}(s-i \Delta_{n}) \big[a_{s} \mathrm{d}s + ( \sigma_{s} - \sigma_{m \Delta_{n}}) \mathrm{d}W_{s} \big] \\[0.10cm]
&+ n^{1/4} \sum_{j=0}^{k_{n}-1} \big[g(j/k_{n})-g(j+1/k_{n}) \big] ( \omega_{(i+j) \Delta_{n}} - \omega_{m \Delta_{n}}) \pi_{i+j},
\end{align*}
where $g_{n}(s) = \sum_{j=1}^{k_{n}} g(j/k_{n}) 1_{\{(j-1) \Delta_{n}, j \Delta_{n} \}}(s)$. Note that $g$, $a$ and $\sigma$ are bounded and $|g(j/k_{n})-g(j+1/k_{n})| \leq C/k_{n}$. Furthermore, $\omega$ is bounded and independent of $\pi$. The first result then follows from the Burkholder-Davis-Gundy inequality and $\Gamma$ being bounded. The second applies H\"{o}lder's inequality and the inequalities $||y|^q-|x|^q| \leq C |x-y| \max(|x|^{q-1}, |y|^{q-1})$ and  $|y_{1} y_{2}-x_{1} x_{2}| \leq |y_{1}-x_{1}| |y_{2}|+|x_{1}||y_{2}-x_{2}|$. \qed
\end{proof}
We proceed by applying the ``big blocks--small blocks'' technique of \cite{jacod-li-mykland-podolskij-vetter:09a}. To this end, we fix an integer $p \geq 2$, which in the later stages of the proof tends to infinity, and introduce the notation
\begin{align*}
a_{j}(p) &= 2j(p+2) k_{n}, \quad b_{j}(p)= 2j(p+2)k_{n} + 2pk_{n}, \\[0.10cm]
A_{j}(p) &= \mathbb{Z} \cap [a_{j}(p), b_{j}(p)), \quad B_{j}(p) = \mathbb{Z} \cap [b_{j}(p), a_{j+1}(p)).
\end{align*}
$A_{j}(p)$ is the big block, which has size $2 pk_{n}$, while the small block $B_{j}(p)$ only has a size of $4 k_{n}$. The small blocks are going to separate the big blocks to account for the dependence introduced by pre-averaging and the MA($\ell$) structure in the noise process $\pi$. This means that some important terms (defined later) are conditionally independent, since the summands in the pre-averaged bipower estimator employ $2 k_{n}$ underlying high-frequency returns. We let $j_{n}(p) \equiv \lfloor n/2 (p+2) k_{n} \rfloor$ be the number of such big block--small block pairs. We utilize these pairwise blocks until observation $i_{n}(p) \equiv j_{n}(p) 2(p+2)k_{n}$, while leaving some residual unused data at the end.

Now, we define
\begin{equation*}
\zeta(p,q,r,1)_{j}^{n} = \sum_{u = a_{j}(p)}^{b_{j}(p)-1} \tilde{Y}_{u}^{n}(q,r) \qquad \text{and} \qquad \zeta(p,q,r,2)_{j}^{n} = \sum_{u = b_{j}(p)}^{a_{j+1}(p)-1} \tilde{Y}_{u}^{n}(q,r).
\end{equation*}
with
\begin{equation*}
\tilde{Y}_{u}^{n}(q,r) = \begin{cases}
n^{-1/2} \left( \eta(q,r)_{u,a_{j}(p)}^{n} - \mathbb{E} \big[ \eta(q,r)_{u,a_{j}(p)}^{n} \mid \mathcal{H}_{a_{j}(p)}^{n} \big] \right), \quad u \in A_{j}(p), \\[0.25cm]
n^{-1/2} \left( \eta(q,r)_{u,b_{j}(p)}^{n} - \mathbb{E} \big[ \eta(q,r)_{u,b_{j}(p)}^{n} \mid \mathcal{H}_{b_{j}(p)}^{n} \big] \right), \quad u \in B_{j}(p), \\[0.25cm]
n^{-1/2} \left( \eta(q,r)_{u,i_{n}(p)}^{n} - \mathbb{E} \big[ \eta(q,r)_{u,i_{n}(p)}^{n} \mid \mathcal{H}_{i_{n}(p)}^{n} \big] \right), \quad u \in i_{n}(p),
\end{cases}
\end{equation*}
and denote
\begin{align*}
M(p,q,r)^{n} &= n^{-1/2} \sum_{j=0}^{j_{n}(p)-1} \zeta(p,q,r,1)_{j}^{n}, \\
N(p,q,r)^{n} &= n^{-1/2} \sum_{j=0}^{j_{n}(p)-1} \zeta(p,q,r,2)_{j}^{n}, \\
C(p,q,r)^{n} &= n^{-1/2} \sum_{u=i_{n}(p)}^{n} \tilde{Y}_{u}^{n} (q,r).
\end{align*}
Next, we introduce the decomposition
\begin{equation} \label{equation:decomposition}
n^{1/4}(BV_{n}(q,r)-V(q,r)) = n^{1/4} \left[M(p,q,r)^{n} + N(p,q,r)^{n} + C(p,q,r)^{n} \right] + F(p,q,r)^{n},
\end{equation}
which implicitly defines $F(p,q,r)^{n}$ as a remainder term. In a series of lemmas, we will show that $M(p,q,r)^{n}$ is the leading term, while the remaining parts are asymptotically negligible.

\begin{lemma} \label{lemma:N}
Let $p \geq 2$ be fixed. Under the maintained assumptions of Theorem \ref{appendix:clt}, it holds that
\begin{equation*}
\mathbb{E} \left[ \left(n^{1/4} N(p,q,r)^{n} \right)^{2}  \right] \leq C/p.
\end{equation*}
\end{lemma}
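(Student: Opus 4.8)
\textbf{Proof proposal for Lemma \ref{lemma:N}.} Since $n^{1/4}N(p,q,r)^{n}=n^{-1/4}\sum_{j=0}^{j_{n}(p)-1}\zeta(p,q,r,2)_{j}^{n}$, squaring and taking expectations gives
\[
\mathbb{E}\Big[\big(n^{1/4}N(p,q,r)^{n}\big)^{2}\Big]=n^{-1/2}\sum_{j,j'=0}^{j_{n}(p)-1}\mathbb{E}\big[\zeta(p,q,r,2)_{j}^{n}\,\zeta(p,q,r,2)_{j'}^{n}\big],
\]
so the plan is to prove (i) that every off-diagonal term vanishes, and (ii) that each diagonal term is $O(k_{n}^{2}/n)$ uniformly in $j$. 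Counting the $j_{n}(p)$ small blocks and using $k_{n}=\theta\sqrt{n}+o(n^{-1/4})$ then delivers the bound $C/p$.

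For (i), I would fix $j<j'$ and use the separation built into the block construction: consecutive small blocks $B_{j}(p)$, each of length $4k_{n}$, are separated by a big block $A_{j+1}(p)$ of length $2pk_{n}\ge4k_{n}$ (since $p\ge2$). Tracking which Brownian increments and which i.i.d.\ innovations $u_{v}$ enter $\eta(q,r)_{u,b_{j}(p)}^{n}$ as $u$ runs over $B_{j}(p)$ shows that all of them carry indices at most $b_{j}(p)+6k_{n}$, so that (for $n$ large enough that $\ell\le2k_{n}$) $\zeta(p,q,r,2)_{j}^{n}$ is measurable with respect to $\mathcal{H}_{b_{j+1}(p)}^{n}=\mathcal{F}_{b_{j+1}(p)\Delta_{n}}\otimes\mathcal{G}_{b_{j+1}(p)-2k_{n}}$, hence — $\mathcal{H}_{i}^{n}$ being increasing in $i$ — with respect to $\mathcal{H}_{b_{j'}(p)}^{n}$; the inequality $b_{j}(p)+6k_{n}\le b_{j+1}(p)-2k_{n}$, equivalent to $p\ge2$, is precisely what absorbs the $2k_{n}$-lag in $\mathcal{G}_{\cdot-2k_{n}}$. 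On the other hand, $\mathbb{E}[\tilde{Y}_{u}^{n}(q,r)\mid\mathcal{H}_{b_{j'}(p)}^{n}]=0$ for every $u\in B_{j'}(p)$ directly from the definition of $\tilde{Y}_{u}^{n}(q,r)$, so $\mathbb{E}[\zeta(p,q,r,2)_{j'}^{n}\mid\mathcal{H}_{b_{j'}(p)}^{n}]=0$, and the tower property gives $\mathbb{E}[\zeta(p,q,r,2)_{j}^{n}\,\zeta(p,q,r,2)_{j'}^{n}]=0$.

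For (ii), each $\zeta(p,q,r,2)_{j}^{n}$ is a sum of $|B_{j}(p)|=4k_{n}$ terms $\tilde{Y}_{u}^{n}(q,r)$, and Lemma \ref{lemma:moment} applied to $\eta(q,r)_{u,m}^{n}$ (and, via conditional Jensen, to its conditional expectation) gives $\mathbb{E}[|\tilde{Y}_{u}^{n}(q,r)|^{2}]\le Cn^{-1}$, hence $|\mathbb{E}[\tilde{Y}_{u}^{n}(q,r)\tilde{Y}_{u'}^{n}(q,r)]|\le Cn^{-1}$ by Cauchy--Schwarz; therefore $\mathbb{E}[(\zeta(p,q,r,2)_{j}^{n})^{2}]\le(4k_{n})^{2}Cn^{-1}=Ck_{n}^{2}/n$ (this can be sharpened by observing the covariances vanish unless $|u-u'|\le2k_{n}+\ell$, but the crude bound is enough). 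Combining with (i) and $j_{n}(p)=\lfloor n/(2(p+2)k_{n})\rfloor\le n/(2pk_{n})$,
\[
\mathbb{E}\Big[\big(n^{1/4}N(p,q,r)^{n}\big)^{2}\Big]\le n^{-1/2}\cdot\frac{n}{2pk_{n}}\cdot\frac{Ck_{n}^{2}}{n}=\frac{Ck_{n}}{2p\,n^{1/2}}\le\frac{C}{p},
\]
since $k_{n}/n^{1/2}$ is bounded.

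The main obstacle is the measurability claim in step (i): one has to pin down exactly the largest Brownian-increment time and the largest innovation index appearing in $\zeta(p,q,r,2)_{j}^{n}$ and check that, after the $2k_{n}$-shift defining $\mathcal{G}_{\cdot-2k_{n}}$, they still fall before $b_{j'}(p)$, which is where the $\ell$-dependence of the noise and the width $2pk_{n}$ of the big block interact and force $p\ge2$. This is also the reason the variance decays like $1/p$ rather than to $0$; everything else is bookkeeping together with Lemma \ref{lemma:moment} and Cauchy--Schwarz.
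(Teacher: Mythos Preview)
Your proposal is correct and is essentially the paper's proof: the paper packages your step (i) as the observation that $(L_{k}^{n})_{k}=\big(n^{-1/2}\sum_{j\le k}\zeta(p,q,r,2)_{j}^{n}\big)_{k}$ is a discrete-time martingale with respect to $(\mathcal{H}_{b_{j}(p)}^{n})_{j}$ and invokes Doob's inequality, which is exactly the measurability/conditional-mean-zero argument you spell out. The diagonal bound and block counting are identical, with the paper writing $\mathbb{E}[(\zeta(p,q,r,2)_{j}^{n})^{2}]\le C$ (your $Ck_{n}^{2}/n$) and $j_{n}(p)\le Cn^{1/2}/p$ to conclude.
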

\begin{proof}
Write
\begin{equation*}
L_{k}^{n} = n^{-1/2} \sum_{j=0}^{k} \zeta(p,q,r,2)_{j}^{n}.
\end{equation*}
We observe that the process $(L_{k}^{n})_{k=0}^{j_{n}(p)-1}$ is a discrete-time martingale under the filtration $( \mathcal{H}_{b_{j}(p) \Delta_{n}})_{j=0}^{j_{n}(p)-1}$. Then, Doob's inequality yields that
\begin{equation}
\mathbb{E} \left[ \left(n^{1/4} N(p,q,r)^{n} \right)^{2} \right] \leq C n^{-1/2} \sum_{j=0}^{j_{n}(p)-1} \mathbb{E} \left[ \left( \zeta(p,q,r,2)_{j}^{n} \right)^{2} \right].
\end{equation}
Lemma \ref{lemma:moment} implies that $\mathbb{E} \left[ \big( \tilde{Y}_{u}^{n}(q,r) \big)^{2} \right] \leq C/n$ uniformly in $j$, $p$, $q$, and $r$. Consequently, we deduce that $\mathbb{E} \left[ \left( \zeta(p,q,r,2)_{j}^{n} \right)^{2} \right] \leq C$. Then, using $j_{n}(p) \leq C n^{1/2}/p$, ends the proof. \qed
\end{proof}

\begin{lemma}\label{lemma-C}
Let $p \geq 2$ be fixed. Under the maintained assumptions of Theorem \ref{appendix:clt}, it holds that
\begin{equation*}
\mathbb{E} \left[ \left|n^{1/4} C(p,q,r)^{n} \right| \right] \leq Cpn^{-1/4}.
\end{equation*}
\end{lemma}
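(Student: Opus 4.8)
I would prove Lemma~\ref{lemma-C} by a crude triangle-inequality bound, which is all that the rate $n^{-1/4}$ (for each fixed $p$) requires. First I would record that, by construction, the residual block $\{i_{n}(p),i_{n}(p)+1,\dots,n\}$ is short: since $i_{n}(p)=j_{n}(p)\cdot 2(p+2)k_{n}$ with $j_{n}(p)=\lfloor n/(2(p+2)k_{n})\rfloor$, one has $n-i_{n}(p)+1\le 2(p+2)k_{n}+1$. Invoking $k_{n}=\theta\sqrt{n}+o(n^{-1/4})$, this index count is bounded by $Cp\sqrt{n}$ for $n$ large and $p$ fixed. (The actual sum only runs over indices $u$ for which $\bar r^{*}_{u}$ and $\bar r^{*}_{u+k_{n}}$ are defined, which only makes the count smaller.)

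Next I would control each summand uniformly. By definition $\tilde{Y}_{u}^{n}(q,r)=n^{-1/2}\big(\eta(q,r)_{u,i_{n}(p)}^{n}-\mathbb{E}[\eta(q,r)_{u,i_{n}(p)}^{n}\mid\mathcal{H}_{i_{n}(p)}^{n}]\big)$, and Lemma~\ref{lemma:moment} with $v=1$ (equivalently $v=2$ and Cauchy--Schwarz) gives $\mathbb{E}[|\eta(q,r)_{u,i_{n}(p)}^{n}|\mid\mathcal{H}_{i_{n}(p)}^{n}]\le C$, hence by the tower property and Jensen's inequality $\mathbb{E}[|\tilde{Y}_{u}^{n}(q,r)|]\le Cn^{-1/2}$, uniformly in $u$ and $p$. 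Combining this with the index count through the triangle inequality yields
\begin{equation*}
\mathbb{E}\Big[\big|n^{1/4}C(p,q,r)^{n}\big|\Big]=n^{-1/4}\,\mathbb{E}\Big[\Big|\sum_{u=i_{n}(p)}^{n}\tilde{Y}_{u}^{n}(q,r)\Big|\Big]\le n^{-1/4}\sum_{u=i_{n}(p)}^{n}\mathbb{E}\big[|\tilde{Y}_{u}^{n}(q,r)|\big]\le n^{-1/4}\cdot Cp\sqrt{n}\cdot Cn^{-1/2}=Cpn^{-1/4},
\end{equation*}
which is the asserted bound.

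There is no genuine obstacle here; the only point to watch is the counting of the residual indices (where the factor $p$ enters) and the fact that the $\tilde{Y}_{u}^{n}$ in the residual block are all demeaned relative to the \emph{single} $\sigma$-field $\mathcal{H}_{i_{n}(p)}^{n}$, so they do not form a martingale difference sequence and no orthogonality/$L^{2}$ improvement is available — which is exactly why the crude $L^{1}$ estimate is used. The resulting bound $Cpn^{-1/4}$ is precisely what the big-block/small-block decomposition in \eqref{equation:decomposition} needs, since $p\to\infty$ is taken only after $n\to\infty$.
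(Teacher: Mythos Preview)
Your proof is correct and follows essentially the same route as the paper's own proof, which simply records the two facts $n-i_{n}(p)\le Cpn^{1/2}$ and $\mathbb{E}\big[|\tilde{Y}_{u}^{n}(q,r)|\big]\le Cn^{-1/2}$ and concludes via the triangle inequality. Your write-up merely spells out the index count and the invocation of Lemma~\ref{lemma:moment} in more detail than the paper does.
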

\begin{proof}
Note that $n-i_{n}(p) \leq Cpn^{1/2}$ and $\mathbb{E} \left[ \big| \tilde{Y}_{u}^{n}(q,r) \big| \right] \leq C n^{-1/2}$. This immediately leads to the conclusion that $\mathbb{E} \left[ \big|n^{1/4} C(p,q,r)^{n} \big| \right] \leq Cpn^{-1/4}$. \qed
\end{proof}

\begin{lemma}
Under the maintained assumptions of Theorem \ref{appendix:clt}, for any $\delta > 0$, it holds that
\begin{equation*}
\lim_{p \rightarrow \infty} \limsup_{n \rightarrow \infty} \mathbb{P} \left[|F(p,q,r)^{n}|> \delta \right] = 0.
\end{equation*}
\end{lemma}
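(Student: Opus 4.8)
The plan is to show that $F(p,q,r)^{n}$---which by construction gathers everything in $n^{1/4}\bigl(BV_{n}(q,r)-V(q,r)\bigr)$ not already isolated in the block sums $M$, $N$, $C$---is asymptotically negligible; since (as the following lemmas will establish) $N$ contributes only at order $p^{-1/2}$, it suffices to prove that $F(p,q,r)^{n}$ converges to zero in probability for each fixed $p$, which a fortiori yields the iterated-limit statement. First I would make the content of $F$ explicit. Writing, for each $i$, the block $\eta(q,r)_{i}^{n}$ as the sum of $\mathbb{E}[\eta(q,r)_{i,m(i)}^{n}\mid\mathcal{H}_{m(i)}^{n}]$, the $\mathcal{H}_{m(i)}^{n}$-centred fluctuation $\eta(q,r)_{i,m(i)}^{n}-\mathbb{E}[\eta(q,r)_{i,m(i)}^{n}\mid\mathcal{H}_{m(i)}^{n}]$, and the freezing error $\eta(q,r)_{i}^{n}-\eta(q,r)_{i,m(i)}^{n}$, where $m(i)$ denotes the left endpoint of the big block, small block, or terminal stretch containing $i$, the centred fluctuations reproduce $M+N+C$ exactly up to an index mismatch of $O(k_{n})$ terms near the right endpoint---which is negligible after the $n^{1/4}/n$ scaling since each summand is $O_{p}(1)$ by Lemma \ref{lemma:moment}. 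Hence
\[
F(p,q,r)^{n} = \underbrace{n^{1/4}\Bigl( \tfrac{1}{n}\sum_{i}\mathbb{E}\bigl[\eta(q,r)_{i,m(i)}^{n}\mid\mathcal{H}_{m(i)}^{n}\bigr]-V(q,r)\Bigr)}_{G_{1}^{n}} \;+\; \underbrace{n^{1/4}\,\tfrac{1}{n}\sum_{i}\bigl(\eta(q,r)_{i}^{n}-\eta(q,r)_{i,m(i)}^{n}\bigr)}_{G_{2}^{n}} \;+\; o_{p}(1).
\]

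Next I would split $G_{2}^{n}=G_{2}^{n,\mathrm{b}}+G_{2}^{n,\mathrm{m}}$ into its $\mathcal{H}_{m(i)}^{n}$-centred and $\mathcal{H}_{m(i)}^{n}$-conditional-mean parts. For $G_{2}^{n,\mathrm{b}}$ I would bound the conditional second moment of the freezing error through Lemma \ref{lemma:beta} (with $v=2$) by $C\bigl(n^{-1/2}+\bigl(\mathbb{E}[\Gamma(\sigma,m(i),i)^{n}+\Gamma(\omega,m(i),i)^{n}\mid\mathcal{H}_{m(i)}^{n}]\bigr)^{2}\bigr)$; since consecutive summands overlap over only $O(k_{n})$ indices and, by Assumption (V), $\sigma$ and $\omega$ are continuous It\^{o} processes with bounded characteristics (so that $\mathbb{E}\bigl[(\Gamma(\,\cdot\,,m(i),i)^{n})^{2}\bigr]\lesssim p\,k_{n}\Delta_{n}$), a blocked $L^{2}$/maximal-inequality estimate gives $\mathrm{Var}(G_{2}^{n,\mathrm{b}})\lesssim (1+p)\,n^{-1/2}\to 0$. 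For $G_{1}^{n}$ together with $G_{2}^{n,\mathrm{m}}$ I would invoke Lemma \ref{lemma:edgeworth}, together with the analogous Edgeworth expansion for the \emph{unfrozen} block $\eta(q,r)_{i}^{n}$ (obtained by the same argument after first replacing $\eta(q,r)_{i}^{n}$ by the version frozen at $i$, whose conditional-mean deviation is $o_{p}(n^{-1/4})$), to replace each conditional expectation by $\mu_{q}\mu_{r}\bigl(\theta\psi_{2}\sigma_{\bullet}^{2}+\theta^{-1}\psi_{1}\rho^{2}\omega_{\bullet}^{2}\bigr)^{(q+r)/2}$ modulo a uniform $o_{p}(n^{-1/4})$ error. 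The coarse-grid pieces frozen at the points $m(i)$ then cancel between $G_{1}^{n}$ and $G_{2}^{n,\mathrm{m}}$, leaving $n^{1/4}$ times the difference between a genuine Riemann sum of $\mu_{q}\mu_{r}(\theta\psi_{2}\sigma_{s}^{2}+\theta^{-1}\psi_{1}\rho^{2}\omega_{s}^{2})^{(q+r)/2}$ on a mesh of order $k_{n}\Delta_{n}$ and the integral $V(q,r)$; under Assumption (V) this Riemann error is $O_{p}(n^{-1/2})$---its per-cell conditional bias and variance are both $O(k_{n}\Delta_{n})$, and summation against the cell weights, with a martingale argument for the centred part, gives $O_{p}(n^{-1/2})$---so $n^{1/4}$ times it tends to $0$.

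I expect the main obstacle to be exactly this cancellation-and-Riemann-sum step. A naive term-by-term absolute bound on $G_{2}^{n}$, or on $G_{1}^{n}$ alone, yields only $O_{p}(\sqrt{p})$: freezing the volatility at the block left endpoint $m(i)$ introduces a per-term error of order $\sqrt{p}\,n^{-1/4}$, and $n^{1/4}$ times its average does not vanish. The argument must therefore exploit that the odd-order Edgeworth corrections integrate to zero against the even functions $x\mapsto|x|^{q}$, so that the price drift and the slowly varying pre-averaged noise term enter $\mathbb{E}[\eta(q,r)_{i}^{n}-\eta(q,r)_{i,m(i)}^{n}\mid\mathcal{H}_{m(i)}^{n}]$ only at second order (through variances, not through first-moment shifts), and that the $\sqrt{p}$-order pieces of $G_{1}^{n}$ and $G_{2}^{n,\mathrm{m}}$ are literally the same quantity with opposite signs; only after this cancellation does Assumption (V) supply the $O_{p}(n^{-1/2})$ modulus-of-continuity rate that kills the surviving fine Riemann error. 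The remaining ingredients---the MA$(\ell)$ structure of the noise (handled through the lag $2k_{n}$ built into $\mathcal{H}_{i}^{n}$, so that the conditional expectations and martingale increments are well defined), the reduction to bounded coefficients, and the moment bounds---are routine given Lemmas \ref{lemma:moment}--\ref{lemma:beta}.
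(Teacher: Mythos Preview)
Your decomposition $F=G_{1}^{n}+G_{2}^{n}+o_{p}(1)$ and the treatment of the centred freezing error $G_{2}^{n,\mathrm{b}}$ via a blocked $L^{2}$ estimate match the paper's $A_{2}^{n}$ analysis. Where your outline diverges is in the handling of $G_{1}^{n}+G_{2}^{n,\mathrm{m}}$.

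The paper does \emph{not} use any cancellation between these two pieces. It shows them to vanish separately: $G_{1}^{n}$ (the paper's $B$) is, after Lemma~\ref{lemma:edgeworth}, a coarse Riemann sum on mesh $pk_{n}\Delta_{n}$ of an It\^{o} process, hence $O_{p}(pn^{-1/4})\to 0$ for fixed $p$; and $G_{2}^{n,\mathrm{m}}$ (the paper's $A_{1}^{n}$) is shown to vanish directly. So your assertion that ``the $\sqrt{p}$-order pieces of $G_{1}^{n}$ and $G_{2}^{n,\mathrm{m}}$ are literally the same quantity with opposite signs'' is not correct---$G_{1}^{n}$ has no such piece.

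The $\sqrt{p}$ issue lives entirely in $G_{2}^{n,\mathrm{m}}$, and the paper kills it by a parity argument that you gesture at but do not make explicit. After reducing to the power case $r=0$ and Taylor-expanding $|x|^{q}$, the paper decomposes the pre-averaged freezing error as $\alpha_{i,m}^{n}=\alpha_{i,m}^{n}(1)+\alpha_{i,m}^{n}(2)$, where $\alpha^{n}(1)$ retains only the leading (frozen-coefficient) pieces of the It\^{o} increments of $\sigma$ and $\omega$. The crucial observation is that $\alpha^{n}(1)$ is an \emph{even} functional of $(W,B,\pi)$ while $f'(\eta_{i,m}^{n})$ is odd, so $\mathbb{E}\bigl[f'(\eta_{i,m}^{n})\alpha_{i,m}^{n}(1)\,\big|\,\mathcal{H}_{m}^{n}\bigr]=0$ identically. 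Only the residual $\alpha^{n}(2)$---built from increments of the \emph{coefficients} $a,\tilde\sigma,\tilde v,\bar\sigma,\bar v$---survives, and its second moment is $O\bigl((pk_{n}\Delta_{n})^{2}\bigr)$ plus a modulus-of-continuity term that tends to zero. This is what converts the naive $\sqrt{p}$ bound into $o_{p}(1)$.

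Your ``analogous Edgeworth expansion for the unfrozen block, whose conditional-mean deviation is $o_{p}(n^{-1/4})$'' is precisely this same argument applied at scale $k_{n}$ rather than $pk_{n}$: Lemma~\ref{lemma:beta} alone gives only $O_{p}(n^{-1/4})$ for $\mathbb{E}[\eta_{i}^{n}-\eta_{i,i}^{n}\mid\mathcal{H}_{i}^{n}]$, and you need the $\alpha(1)/\alpha(2)$ split to upgrade it to $o_{p}(n^{-1/4})$. So your route does not bypass the paper's key step---it relocates it. Finally, your $o_{p}(1)$ boundary term should also absorb the paper's $D$-piece (the mismatch between freezing small blocks at $a_{j}(p)$ versus $b_{j}(p)$) and the re-appearance of $-n^{1/4}C$ in $F$.
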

\begin{proof}
Recalling the decomposition in \eqref{equation:decomposition}, we write that
\begin{equation}
F(p,q,r)^{n} = A(p,q,r)^{n} + B(p,q,r)^{n} + D(p,q,r)^{n} - n^{1/4} C(p,q,r)^{n},
\end{equation}
where
\begin{align*}
A(p,q,r)^{n} &= n^{1/4} BV_{n}(q,r) - n^{1/4} \frac{1}{n} \sum_{j=0}^{j_{n}(p)-1} \sum_{u=0}^{2(p+1)k_{n}-1} \eta(q,r)_{a_{j}(p)+ u, a_{j}(p)}^{n}, \\[0.10cm]
B(p,q,r)^{n} &= n^{1/4} \frac{1}{n} \sum_{j=0}^{j_{n}(p)-1} \sum_{u=0}^{2(p+1)k_{n}-1} \mathbb{E} \left[ \eta(q,r)_{a_{j}(p)+ u, a_{j}(p)}^{n} \mid \mathcal{H}_{a_{j}(p)}^{n} \right] -n^{1/4} V(q,r), \\[0.10cm]
D(p,q,r)^{n} &= n^{-3/4} \sum_{j=0}^{j_{n}(p)-1} \sum_{u=0}^{2k_{n}-1} \eta(q,r)_{b_{j}(p)+ u, a_{j}(p)}^{n}- \mathbb{E} \left[ \eta(q,r)_{b_{j}(p)+ u, a_{j}(p)}^{n} \mid \mathcal{H}_{a_{j}(p)}^{n} \right] \\[0.10cm]
&- n^{-3/4} \sum_{j=0}^{j_{n}(p)-1} \sum_{u=0}^{2k_{n}-1} \eta(q,r)_{b_{j}(p)+ u, b_{j}(p)}^{n}- \mathbb{E} \left[ \eta(q,r)_{b_{j}(p)+ u, b_{j}(p)}^{n} \mid \mathcal{H}_{b_{j}p)}^{n} \right]
\end{align*}
The $C(p,q,r)^{n}$ term was covered in Lemma \ref{lemma-C}, while the $D(p,q,r)^{n}$ term is dealt with following Lemma \ref{lemma:N}. We proceed to $A(p,q,r)^{n}$. Ignoring asymptotically negligible remainder terms, and employing the shorthand notation $\tilde{ \eta}(q,r)_{i,m}^{n} \equiv \eta(q,r)_{i}^{n} - \eta(q,r)_{i,m}^{n}$, it is enough to study
\begin{align*}
A_{1}^{n} &= n^{-3/4} \sum_{j=0}^{j_{n}(p)-1} \sum_{u=0}^{2(p+1)k_{n}-1} \mathbb{E} \left[ \tilde{ \eta}(q,r)_{a_{j}(p)+ u, a_{j}(p)}^{n} \mid \mathcal{H}_{a_{j}(p)}^{n} \right], \\[0.10cm]
A_{2}^{n} &= n^{-3/4} \sum_{j=0}^{j_{n}(p)-1} \sum_{u=0}^{2(p+1)k_{n}-1} \tilde{ \eta}(q,r)_{a_{j}(p)+ u, a_{j}(p)}^{n}- \mathbb{E} \left[ \tilde{ \eta}(q,r)_{a_{j}(p)+ u, a_{j}(p)}^{n} \mid \mathcal{H}_{a_{j}(p)}^{n} \right].
\end{align*}
Proceeding as in the proof of Lemma
\ref{lemma:N}, applying Doob's inequality and Lemma \ref{lemma:beta} yields
\begin{align*}
\mathbb{E} \left[ \big( A_{2}^{n} \big)^{2} \right] &\leq  C n^{-3/2} pk_{n} \sum_{j=0}^{j_{n}(p)-1} \sum_{u=0}^{2(p+1)k_{n}-1} \mathbb{E} \left[ \big( \tilde{ \eta}(q,r)_{a_{j}(p)+ u, a_{j}(p)}^{n} \big)^{2} \right] \\[0.10cm]
&\leq C n^{-3/2} pk_{n} \sum_{j=0}^{j_{n}(p)-1} \sum_{u=0}^{2(p+1)k_{n}-1} \left(n^{-1/2} + \mathbb{E} \left[ \big( \Gamma( \sigma,a_{j}(p),a_{j}(p)+ u)^{n} + \Gamma( \omega,a_{j}(p), a_{j}(p)+ u)^{n} \big)^{2} \right] \right) \\[0.10cm]
&\leq C pn^{-1/2} + C n^{-1/2} p^{2} \sum_{j=0}^{j_{n}(p)-1} \mathbb{E} \left[ \big( \Gamma( \sigma,a_{j}(p), a_{j+1}(p))^{n} + \Gamma( \omega,a_{j}(p), a_{j+1}(p))^{n} \big)^{2} \right],
\end{align*}
which is negligible by Lemma 5.4 in \citet*{jacod-li-mykland-podolskij-vetter:09a}.

Moving to the $A_{1}^{n}$ term we observe that, from the inequality
$|y_{1}y_{2} - x_{1}x_{2}| \leq |y_{1}-x_{1}| |y_{2}|+|x_{1}||y_{2}-x_{2}|$ and Lemmas \ref{lemma:moment} and \ref{lemma:beta}, it suffices to set $r=0$ and look at the term
\begin{equation*}
n^{-3/4} \sum_{j=0}^{j_{n}(p)-1} \sum_{u=0}^{2(p+1)k_{n}-1} \mathbb{E} \left[ \eta(q)_{a_{j}(p)+ u}^{n} - \eta(q)_{a_{j}(p)+ u, a_{j}(p)}^{n} \mid \mathcal{H}_{a_{j}(p)}^{n} \right].
\end{equation*}
Now, take $f(x)=|x|^{q}$ and recall that $\alpha_{i,m}^{n} = n^{1/4} \left( \bar{r}_{i}^{*} - \bar{r}_{i,m}^{*} \right)$. By Taylor's theorem and Lemmas \ref{lemma:moment} and \ref{lemma:beta}, we can reduce the problem to studying the term
\begin{equation} \label{equation:f-alpha}
n^{-3/4} \sum_{j=0}^{j_{n}(p)-1} \sum_{u=0}^{2(p+1) k_{n}-1} \mathbb{E} \left[ f'( \eta(q)_{a_{j}(p)+ u, a_{j}(p)}) \alpha_{a_{j}(p)+u,a_{j}(p)}^{n} \mid \mathcal{H}_{a_{j}(p)}^{n} \right].
\end{equation}
At this stage, we need an extra decomposition:
\begin{equation*}
\alpha_{i,m}^{n} = \alpha_{i,m}^{n}(1) + \alpha_{i,m}^{n}(2),
\end{equation*}
where
\begin{align*}
\alpha_{i,m}^{n}(1) &= n^{1/4} \sum_{j=1}^{k_{n}-1} g(j/k_{n}) \left[ \Delta_{n} a_{m \Delta_{n}} + \int_{(i+j-1) \Delta_{n}}^{(i+j) \Delta_{n}} \big[ \tilde{ \sigma}_{m \Delta_{n}}(W_{s}-W_{m \Delta_{n}}) + \tilde{v}_{m \Delta_{n}}(B_{s}-B_{m \Delta_{n}}) \big] \mathrm{d}W_{s} \right] \\[0.10cm]
&+ n^{1/4} \sum_{j=0}^{k_{n}-1} \big[g(j/k_{n})-g(j+1/k_{n}) \big]
\left[ \bar{ \sigma}_{m \Delta_{n}}(W_{(i+j) \Delta_{n}} - W_{m \Delta_{n}})+ \bar{v}_{m \Delta_{n}} (B_{(i+j) \Delta_{n}} - B_{m \Delta_{n}}) \right] \pi_{i+j}, \\[0.10cm]
\alpha_{i,m}^{n}(2) &= n^{1/4} \sum_{j=1}^{k_{n}-1} g(j/k_{n}) \left[ \int_{(i+j-1) \Delta_{n}}^{(i+j) \Delta_{n}} (a_{s}-a_{m \Delta_{n}}) \mathrm{d}s + \int_{(i+j-1) \Delta_{n}}^{(i+j) \Delta_{n}} \int_{m \Delta_{n}}^{s} \tilde{a}_{u} \mathrm{d}u \mathrm{d}W_{s} \right] \\[0.10cm]
&+ n^{1/4} \sum_{j=1}^{k_{n}-1} g(j/k_{n}) \left[ \int_{(i+j-1) \Delta_{n}}^{(i+j) \Delta_{n}} \left[ \int_{m \Delta_{n}}^{s} ( \tilde{ \sigma}_{u}- \tilde{ \sigma}_{m \Delta_{n}}) \mathrm{d}W_{u} + \int_{m \Delta_{n}}^{s} ( \tilde{v}_{u} - \tilde{v}_{m \Delta_{n}}) \mathrm{d}B_{u} \right] \mathrm{d}W_{s} \right] \\[0.10cm]
&+ n^{1/4} \sum_{j=0}^{k_{n}-1} \big[g(j/k_{n})-g(j+1/k_{n}) \big] \left[ \int_{m \Delta_{n}}^{(i+j) \Delta_{n}} \bar{a}_{s} \text{d}s + ( \bar{ \sigma}_{s} - \bar{ \sigma}_{m \Delta_{n}}) \mathrm{d}W_{s} + ( \bar{v}_{s} - \bar{v}_{m \Delta_{n}}) \mathrm{d}B_{s} \right] \pi_{i+j}.
\end{align*}
Returning to \eqref{equation:f-alpha}, we first note that
\begin{equation*}
\mathbb{E} \left[ f'( \eta(q)_{a_{j}(p)+ u, a_{j}(p)}) \alpha_{a_{j}(p)+u,a_{j}(p)}^{n}(1) \mid \mathcal{H}_{a_{j}(p)}^{n} \right] = 0,
\end{equation*}
because $(W,B, \pi) \overset{d}{=} -(W,B, \pi)$, and since $f'( \eta(q)_{a_{j}(p)+ u, a_{j}(p)})$ is odd and $\alpha_{a_{j}(p)+u,a_{j}(p)}^{n}(1)$ is even in $(W,B, \pi)$. Furthermore,
\begin{equation*}
\mathbb{E} \left[ \big( \alpha_{a_{j}(p)+ u, a_{j}(p)}^{n}(2) \big)^{2} \right] \leq C (pk_{n} \Delta_{n})^{2} + Cpn^{-1/2} \mathbb{E} \left[ ( \gamma_{j}^{n}(p))^{2} \right],
\end{equation*}
where
\begin{equation*}
\gamma_{j}^{n}(p) = \sup_{a_{j}(p) \Delta_{n} \leq s \leq a_{j+1}(p) \Delta_{n}} \Big[ |a_{s} - a_{a_{j(p)}}| + | \tilde{ \sigma}_{s} - \tilde{ \sigma}_{a_{j(p)}}| + | \tilde{v}_{s} - \tilde{v}_{a_{j(p)}}| + | \bar{ \sigma}_{s}- \bar{ \sigma}_{a_{j(p)}}| + | \bar{v}_{s} - \bar{v}_{a_{j(p)}}| \Big].
\end{equation*}
Then, for any fixed $p \geq 2$, it follows from the Cauchy-Schwarz inequality that
\begin{align*}
\mathbb{E} \big[|A_{1}^{n}| \big] &\leq Cn^{-3/4} \sum_{j=0}^{j_{n}(p)-1} \sum_{u=0}^{2(p+1) k_{n}-1} \mathbb{E} \left[ \big( \alpha_{a_{j}(p)+ u,a_{j}(p)}^{n}(2) \big)^{2} \right]^{1/2} \\[0.10cm]
&\leq C n^{-3/4} n (pk_{n} \Delta_{n})+ Cpn^{-1/2} \sum_{j=0}^{j_{n}(p)-1} \mathbb{E} \big[ ( \gamma_{j}^{n}(p))^{2} \big]^{1/2} \rightarrow 0,
\end{align*}
as $n \rightarrow \infty$, which follows from Lemma 5.4 in \citet*{jacod-li-mykland-podolskij-vetter:09a}.

Now, we proceed to the last term, $B(p,q,r)^{n}$. This can be further divided into
\begin{equation*}
B(p,q,r)^{n} = B(p,q,r)_{1}^{n} + B(p,q,r)_{2}^{n},
\end{equation*}
with
\begin{align*}
B(p,q,r)_{1}^{n} &= n^{1/4} \frac{1}{n} \sum_{j=0}^{j_{n}(p)-1} \sum_{u=0}^{2(p+1)k_{n}-1} \mathbb{E} \left[ \eta(q,r)_{a_{j}(p)+ u, a_{j}(p)} \mid \mathcal{H}_{a_{j}(p)}^{n} \right] - n^{1/4} \mu_{q} \mu_{r} \sum_{j=0}^{j_{n}(p)-1} \int_{a_{j}(p) \Delta_{n}}^{a_{j+1}(p) \Delta_{n}} \upsilon_{a_{j}(p)} \mathrm{d}t, \\[0.10cm]
B(p,q,r)_{2}^{n} &= n^{1/4} \mu_{q} \mu_{r} \sum_{j=0}^{j_{n}(p)-1} \int_{a_{j}(p) \Delta_{n}}^{a_{j+1}(p) \Delta_{n}} \big( \upsilon_{a_{j}(p)}- \upsilon_{t} \big) \mathrm{d}t - n^{1/4} \mu_{q} \mu_{r} \int_{i_{n}(p) \Delta_{n}}^{1} \upsilon_{t} \mathrm{d}t,
\end{align*}
and where $\upsilon_{t} = \theta \psi_{2} \sigma_{t}^{2} + \frac{1}{ \theta} \psi_{1} \rho^{2} \omega_{t}^{2}$.

In view of Lemma \ref{lemma:edgeworth},
\begin{equation*}
\lim_{p \rightarrow \infty} \limsup_{n \rightarrow \infty} \mathbb{P} \big[|B(p,q,r)_{1}^{n}| > \delta| \big] = 0.
\end{equation*}
Concerning the term $B(p,q,r)_{2}^{n}$, we remark that $\upsilon_{t}$ is a continuous It\^{o} process due to Assumptions (V) and (N). Hence, a known result on the error of Riemann integration (see, e.g., the proof of Lemma A.1 (iv) in \citet*{christensen-podolskij-thamrongrat-veliyev:17a}) states that
\begin{equation*}
\mathbb{E} \left[|B(p,q,r)_{2}^{n}| \right] \leq C n^{1/4}(pk_{n} \Delta_{n}+p n^{-1/2}) \leq Cpn^{-1/4},
\end{equation*}
which completes the proof. \qed
\end{proof}

\begin{lemma} \label{M-CLT}
Let $p \geq 2$ be fixed. Under the maintained assumptions of Theorem \ref{appendix:clt}, it holds that
\begin{equation*}
n^{1/4} (M(p,q_{1},r_{1})^{n},M(p,q_{2},r_{2})^{n}) \overset{ \mathcal{D}_{s}}{ \longrightarrow} MN(0, \Sigma(p)),
\end{equation*}
where the $(i,j)$-th component of the 2x2 covariance matrix $\Sigma(p)$ is defined as
\begin{equation*}
\Sigma(p)_{ij} = \theta \frac{p}{p+1} \int_{0}^{1} \int_{0}^{2} \left(2-\frac{s}{p} \right) h_{i j} \big(( \rho \omega_{u}, \sigma_{u}), ( \psi_{1}/ \theta, \theta \psi_{2}), f(s) \big) \mathrm{d}s \mathrm{d}u.
\end{equation*}
\end{lemma}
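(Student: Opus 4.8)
The plan is to recognise $n^{1/4}M(p,q,r)^{n}$ as the terminal value of a bivariate discrete-time martingale array and to apply a stable central limit theorem for such arrays. Put $\xi_{j}^{n}(q,r)=n^{-1/4}\zeta(p,q,r,1)_{j}^{n}$ and let $\mathcal{G}_{j}^{n}$ be the $\sigma$-field generated by all high-frequency data in the big blocks $0,\dots,j$. The crucial structural point is that the $4k_{n}$-wide small blocks $B_{j}(p)$, together with the $\ell$-dependence of $\pi$ (admissible since $\ell\leq 2k_{n}$ for $n$ large), isolate consecutive big blocks: $\zeta(p,q,r,1)_{j}^{n}$ uses data only on the indices $[a_{j}(p),b_{j}(p)+2k_{n}]$, and $b_{j-1}(p)+2k_{n}=a_{j}(p)-2k_{n}<a_{j}(p)$, so $\mathcal{G}_{j-1}^{n}\subseteq\mathcal{H}_{a_{j}(p)}^{n}$; since $\mathbb{E}[\xi_{j}^{n}(q,r)\mid\mathcal{H}_{a_{j}(p)}^{n}]=0$ by the very definition of $\tilde{Y}_{u}^{n}(q,r)$, the array $(\xi_{j}^{n}(q,r))_{j}$ is a martingale difference array for $(\mathcal{G}_{j}^{n})_{j}$, with $\mathcal{H}_{a_{j}(p)}^{n}$ playing the role of an intermediate $\sigma$-field. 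I would then invoke the stable martingale CLT (e.g.\ \citet[][Section 2.2.2]{jacod-protter:12a}), for which it suffices to verify, jointly for the pairs $(q_{1},r_{1})$ and $(q_{2},r_{2})$: (a) a Lyapunov/Lindeberg condition; (b) convergence of the conditional (co)variances to $\Sigma(p)$; and (c) the orthogonality condition $\sum_{j}\mathbb{E}[\xi_{j}^{n}(q,r)\,\Delta_{j}\mid\mathcal{H}_{a_{j}(p)}^{n}]\overset{p}{\to}0$ for $\Delta_{j}$ the increment over the $j$-th block of $W$, of $B$, or of any bounded martingale orthogonal to both.

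Conditions (a) and (c) are the easy ones. For (a): $\zeta(p,q,r,1)_{j}^{n}$ is a centered sum of $O(pk_{n})$ summands with $O(k_{n})$-range dependence and $L^{4}$-size $O(n^{-1/2})$ (Lemma \ref{lemma:moment}), so standard moment bounds for dependent arrays give $\mathbb{E}[|\zeta(p,q,r,1)_{j}^{n}|^{4}]=O(p^{2})$ uniformly in $j$, hence $\mathbb{E}[|\xi_{j}^{n}(q,r)|^{4}]=O(p^{2}n^{-1})$, and summing over the $\asymp n^{1/2}/p$ blocks yields $O(p\,n^{-1/2})\to0$, which gives Lyapunov and hence Lindeberg. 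For (c): after freezing $\sigma$ and $\omega$ on a big block at its left endpoint (at a cost of $o_{p}(1)$ by Lemma \ref{lemma:beta}), $\xi_{j}^{n}(q,r)$ becomes, conditionally on $\mathcal{H}_{a_{j}(p)}^{n}$, an \emph{even} functional of $(W,B,\pi)$ restricted to the block, because $f(x)=|x|^{q}$ is even while the pre-averaged increments are odd; since $(W,B,\pi)\overset{d}{=}-(W,B,\pi)$ (using $\pi\Perp(p,\omega)$), all cross-expectations with the odd increments $\Delta_{j}W$, $\Delta_{j}B$ and with bounded martingales orthogonal to both vanish -- this is exactly the parity argument already used for the $A_{1}^{n}$ term in the proof of the $F(p,q,r)^{n}$ lemma.

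The substantive step, and the one I expect to be the main obstacle, is (b):
\begin{equation*}
\sum_{j=0}^{j_{n}(p)-1}\mathbb{E}\!\left[\xi_{j}^{n}(q_{1},r_{1})\,\xi_{j}^{n}(q_{2},r_{2})\,\big|\,\mathcal{H}_{a_{j}(p)}^{n}\right]\overset{p}{\longrightarrow}\Sigma(p)_{12},
\end{equation*}
and likewise for the $(1,1)$- and $(2,2)$-entries. Expanding, the left side equals $n^{-3/2}\sum_{j}\sum_{u,u'\in A_{j}(p)}\mathrm{cov}\!\big(\eta(q_{1},r_{1})_{u,a_{j}(p)}^{n},\eta(q_{2},r_{2})_{u',a_{j}(p)}^{n}\mid\mathcal{H}_{a_{j}(p)}^{n}\big)$. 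By the pre-averaging window structure this conditional covariance vanishes once $|u-u'|\geq 2k_{n}+O(\ell)$; for $|u-u'|<2k_{n}$ it must be evaluated by a bivariate refinement of Lemma \ref{lemma:edgeworth}: with $\sigma,\omega$ frozen at $m\Delta_{n}=a_{j}(p)\Delta_{n}$, the vector $(n^{1/4}\bar{r}_{u,m}^{*},n^{1/4}\bar{r}_{u+k_{n},m}^{*},n^{1/4}\bar{r}_{u',m}^{*},n^{1/4}\bar{r}_{u'+k_{n},m}^{*})$ is, up to an $o(1)$ Edgeworth correction obtained from Theorem 6.3 in \citet{lahiri:03a} in the joint setting (the G\"{o}tze--Hipp conditions having been verified in the scalar case in Lemma \ref{lemma:edgeworth}), approximated by the centered Gaussian vector $(H_{1},H_{2},H_{3},H_{4})$ whose common marginal variance is $\theta\psi_{2}\sigma_{m\Delta_{n}}^{2}+\theta^{-1}\psi_{1}\rho^{2}\omega_{m\Delta_{n}}^{2}$ and whose cross-covariances are exactly the overlap integrals $\phi_{1},\dots,\phi_{4}$ of $g$ and $g'$ rescaled into $f_{1}(s),\dots,f_{4}(s)$ with $s=|u'-u|/k_{n}\in[0,2]$, the non-overlapping adjacent windows producing the independences $H_{1}\perp H_{2}$, $H_{1}\perp H_{4}$, $H_{3}\perp H_{4}$. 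Hence the conditional covariance equals $h_{12}\!\big((\rho\omega_{m\Delta_{n}},\sigma_{m\Delta_{n}}),(\psi_{1}/\theta,\theta\psi_{2}),f(s)\big)+o(1)$, uniformly in $u,u',j$. The remaining work is bookkeeping: summing over $u,u'$ is a Riemann sum in $s$ over $(-2,2)$ -- for each lag $d=u'-u$ the number of admissible $u\in A_{j}(p)$ is $|A_{j}(p)|$ minus an $O(|d|)$ boundary deficit, which after normalisation produces the symmetrised triangular weight $(2-s/p)$ -- and with $k_{n}\sim\theta\sqrt{n}$, $|A_{j}(p)|=2pk_{n}$ and $\asymp n^{1/2}/p$ blocks, $n^{-3/2}\sum_{j}$ together with the frozen $(\sigma,\omega)$ converts into the integral $\theta\frac{p}{p+1}\int_{0}^{1}\!\!\int_{0}^{2}(2-s/p)\,h_{12}\big((\rho\omega_{u},\sigma_{u}),(\psi_{1}/\theta,\theta\psi_{2}),f(s)\big)\,\mathrm{d}s\,\mathrm{d}u$ asserted in the statement, the factors $p/(p+1)$ and $(2-s/p)$ being boundary corrections that $\to1$ and $\to2$ as $p\to\infty$ (so that $\Sigma(p)\to\tilde{\Sigma}$). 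Throughout, this is an adaptation to $\ell$-dependent noise and joint multipower statistics of the arguments in \citet{jacod-li-mykland-podolskij-vetter:09a}, \citet{podolskij-vetter:09a,podolskij-vetter:09b} and \citet{christensen-podolskij-thamrongrat-veliyev:17a}; the delicate points are the \emph{uniform} control of the bivariate Edgeworth remainder across all lags $|u-u'|<2k_{n}$, including the regime where the two pre-averaging windows barely overlap, and the careful accounting of the boundary-deficit terms that generate the $(2-s/p)$ weight. \qed
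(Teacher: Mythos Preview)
Your proposal is correct and follows essentially the same route as the paper: both invoke a stable martingale CLT (the paper cites Theorem IX.7.28 of Jacod--Shiryaev) for the array $\zeta(p)_{j}^{n}=(\zeta(p,q_{1},r_{1},1)_{j}^{n},\zeta(p,q_{2},r_{2},1)_{j}^{n})^{\top}$, dismiss the Lyapunov and orthogonality conditions by reference to prior work, and concentrate on the conditional covariance.

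The one technical difference worth noting is in your step (b). You propose a bivariate Edgeworth expansion and flag the uniform control of its remainder across lags as the delicate point. The paper sidesteps this: it introduces $\tilde{\eta}_{u,m}^{n}$, the version of $\eta_{u,m}^{n}$ in which the i.i.d.\ innovations $u_{i}$ are replaced by Gaussians, and uses the \emph{scalar} Lemma~\ref{lemma:edgeworth} together with Cauchy--Schwarz to show $\mathbb{E}[\eta_{u,a_{j}(p)}^{n}\eta_{v,a_{j}(p)}^{n}\mid\mathcal{H}_{a_{j}(p)}^{n}]=\mathbb{E}[\tilde{\eta}_{u,a_{j}(p)}^{n}\tilde{\eta}_{v,a_{j}(p)}^{n}\mid\mathcal{H}_{a_{j}(p)}^{n}]+o_{p}(n^{-1/8})$; the right-hand side is then an exact Gaussian covariance, which is computed directly in terms of the overlap sums $f^{n}(s)$. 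This avoids any joint Edgeworth control. The paper also makes the combinatorics behind $(2-s/p)$ explicit by splitting $\mathbb{E}[(\zeta(p,1)_{j}^{n})^{2}\mid\mathcal{H}_{a_{j}(p)}^{n}]=R_{1}^{n}+R_{2}^{n}$, where $R_{1}^{n}$ collects $u\in[a_{j}(p),b_{j}(p)-2k_{n})$ (full lag range) and $R_{2}^{n}$ the last $2k_{n}$ values of $u$; their respective limits $2\theta(p-1)/(p+1)\int\!\!\int h\,\mathrm{d}s\,\mathrm{d}u$ and $\theta(p+1)^{-1}\int\!\!\int(2-s)h\,\mathrm{d}s\,\mathrm{d}u$ sum to the stated expression.
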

\begin{proof}
We set $\zeta(p)_{j}^{n} = \big( \zeta(p,q_{1},r_{1},1)_{j}^{n}, \zeta(p,q_{2}, r_{2},1)_{j}^{n} \big)^{ \top}$ for notational convenience. In view of Theorem IX.7.28 in \cite{jacod-shiryaev:03a}, it suffices to verify the following four conditions:
\begin{align*}
&n^{-1/2} \sum_{j=0}^{j_{n}(p)-1} \mathbb{E} \left[ \zeta(p)_{j}^{n} ( \zeta(p)_{j}^{n})^{ \top} \mid \mathcal{H}_{a_{j}(p)}^{n} \right] \overset{p}{ \longrightarrow} \Sigma(p), \\[0.10cm]
&n^{-1/2} \sum_{j=0}^{j_{n}(p)-1} \mathbb{E} \left[ \lVert \zeta(p)_{j}^{n} \rVert^{4} \mid \mathcal{H}_{a_{j}(p)}^{n} \right] \overset{p}{ \longrightarrow} 0, \\[0.10cm]
&n^{-1/2} \sum_{j=0}^{j_{n}(p)-1} \mathbb{E} \left[ \zeta(p)_{j}^{n} (W_{b_{j}(p) \Delta_{n}} - W_{a_{j}(p) \Delta_{n}}) \mid \mathcal{H}_{a_{j}(p)}^{n} \right] \overset{p}{ \longrightarrow} 0, \\[0.10cm]
&n^{-1/2} \sum_{j=0}^{j_{n}(p)-1} \mathbb{E} \left[ \zeta(p)_{j}^{n} (N_{b_{j}(p) \Delta_{n}} - N_{a_{j}(p) \Delta_{n}}) \mid \mathcal{H}_{a_{j}(p)}^{n} \right] \overset{p}{ \longrightarrow} 0,
\end{align*}
for any bounded martingale $N$ that is orthogonal to $W$.

The last three convergences are omitted, since they can be proved exactly as in \citet*{podolskij-vetter:09a} and \citet*{jacod-li-mykland-podolskij-vetter:09a}. So we only prove the convergence in the first condition, which is done separately for each component of $\tilde{ \Sigma}$. We only spell out the details for the first term.

To this end, for any $u, v \in A_{j}(p)$ and $u \leq v$, we note that
\begin{equation*}
\mathbb{E} \left[ \tilde{Y}_{u}^{n} \tilde{Y}_{v}^{n} \mid \mathcal{H}_{a_{j}(p)}^{n} \right] = \frac{1}{n} \left( \mathbb{E} \big[ \eta_{u,a_{j}(p)}^{n} \eta_{v,a_{j}(p)}^{n} \mid \mathcal{H}_{a_{j}(p)}^{n} \big] - \mathbb{E} \big[ \eta_{u,a_{j}(p)}^{n} \mid \mathcal{H}_{a_{j}(p)}^{n} \big] \mathbb{E} \big[ \eta_{v,a_{j}(p)}^{n} \mid \mathcal{H}_{a_{j}(p)}^{n} \big] \right).
\end{equation*}
We suppress the dependence on $q$ and $r$ for brevity.
Let $\tilde{ \eta}_{i,m}^{n}$ be a version of the variable $\eta_{i,m}^{n}$, where the noise variable $u$ is replaced by a Gaussian distributed one. In view of Lemma \ref{lemma:edgeworth},
\begin{equation*}
\mathbb{E} \left[ \eta_{u,a_{j}(p)}^{n} \mid \mathcal{H}_{a_{j}(p)}^{n} \right] = \mathbb{E} \left[ \tilde{ \eta}_{u,a_{j}(p)}^{n} \mid \mathcal{H}_{a_{j}(p)}^{n} \right] + o_{p}(n^{-1/4}).
\end{equation*}
This, together with Lemma \ref{lemma:moment} and the Cauchy-Schwarz inequality, implies that
\begin{align*}
\left| \mathbb{E} \left[ \eta_{u,a_{j}(p)}^{n} \eta_{v,a_{j}(p)}^{n} \mid \mathcal{F}_{a_{j}(p) \Delta_{n}} \right] - \mathbb{E} \left[ \tilde{ \eta}_{u,a_{j}(p)}^{n} \tilde{ \eta}_{v,a_{j}(p)}^{n} \mid \mathcal{H}_{a_{j}(p)}^{n} \right] \right| &\leq
\left| \mathbb{E} \left[( \eta_{u,a_{j}(p)}^{n} - \tilde{ \eta}_{u,a_{j}(p)}^{n}) \eta_{v,a_{j}(p)}^{n} \mid \mathcal{H}_{a_{j}(p)}^{n} \right] \right| \\[0.10cm]
&+ \left| \mathbb{E} \left[( \eta_{v,a_{j}(p)}^{n} - \tilde{ \eta}_{v,a_{j}(p)}^{n}) \tilde{ \eta}_{u,a_{j}(p)}^{n} \mid \mathcal{H}_{a_{j}(p)}^{n} \right] \right| \\[0.10cm]
&= o_{p}(n^{-1/8}).
\end{align*}
Further, recalling the notation in \eqref{equation:h-function-covariance}, we have that
\begin{align*}
\mathbb{E} \left[ \tilde{ \eta}_{u,a_{j}(p)}^{n} \tilde{ \eta}_{v,a_{j}(p)}^{n} \mid \mathcal{H}_{a_{j}(p)}^{n} \right] &- \mathbb{E} \left[ \tilde{ \eta}_{u,a_{j}(p)}^{n} \mid \mathcal{H}_{a_{j}(p)}^{n} \right] \mathbb{E} \left[ \tilde{ \eta}_{v,a_{j}(p)}^{n} \mid \mathcal{H}_{a_{j}(p)}^{n} \right] \\[0.10cm] &= h_{11} \big(( \omega_{a_{j}(p) \Delta_{n}}, \sigma_{a_{j}(p) \Delta_{n}}), \psi^{n}, f^{n}((v-u)/k_{n}) \big),
\end{align*}
where $\psi^{n} = (n^{1/2} \rho^{2} \psi_{1}^{n}/k_{n}, k_{n} \psi_{2}^{n}/n^{1/2})$ and
\begin{equation*}
f^{n}(s) = \bigg( \sum_{ m =-\ell}^{\ell} \rho(|m|) f_{1}^{n}(s+ m/k_{n}), f_{2}^{n}(s), \sum_{m=-\ell}^{\ell} \rho(|m|) f_{3}^{n}(s+ m/k_{n}), f_{4}^{n}(s) \bigg)
\end{equation*}
with
\begin{align*}
f_{1}^{n}(s) &= n^{1/2} \sum_{j=1}^{k_{n}(1-s)} \left(g \left( \frac{j}{k_{n}} \right) - g \left( \frac{j-1}{k_{n}} \right) \right)
\left(g \left( \frac{j+s k_{n}}{k_{n}} \right) - g \left( \frac{j-1+s k_{n}}{k_{n}} \right) \right), \\[0.10cm]
f_{2}^{n}(s) &= n^{-1/2} \sum_{j=0}^{k_{n}(1-s)} g \left( \frac{j}{k_{n}} \right) g \left( \frac{j+s k_{n}}{k_{n}} \right), \\[0.10cm]
f_{3}^{n}(s) &= n^{1/2} \sum_{j=1}^{k_{n}(2-s)} \left(g \left( \frac{j}{k_{n}} \right) - g \left( \frac{j-1}{k_{n}} \right) \right) \left(g \left( \frac{j+s k_{n}-k_{n}}{k_{n}} \right) - g \left( \frac{j-1+s k_{n} - k_{n}}{k_{n}} \right) \right), \\[0.10cm]
f_{4}^{n}(s) &= n^{-1/2} \sum_{j=0}^{k_{n}(2-s)} g \left( \frac{j}{k_{n}} \right)  g \left( \frac{j+s k_{n} - k_{n}}{k_{n}} \right).
\end{align*}
As a consequence, for $u, v \in A_{j}(p)$ and $0 \leq v-u < 2k_{n}$, we obtain
\begin{equation*}
\mathbb{E} \left[ \tilde{Y}_{u}^{n} \tilde{Y}_{v}^{n} \mid \mathcal{H}_{a_{j}(p)}^{n} \right] = \frac{1}{n} h_{11} \Big(( \omega_{a_{j}(p) \Delta_{n}}, f^{n}((v-u)/k_{n}) \Big) + o_{p} (n^{-1}),
\end{equation*}
while this term vanishes for $v-u \geq 2k_{n}$. In turn, this implies that
\begin{equation*}
\mathbb{E} \left[( \zeta(p,1)_{j}^{n})^{2} \mid \mathcal{H}_{a_{j}(p)}^{n} \right] = R_{1}^{n} + R_{2}^{n},
\end{equation*}
where
\begin{align*}
R_{1}^{n} &= 2 \sum_{u=a_{j}(p)}^{b_{j}(p)-2k_{n}-1} \sum_{v=u}^{u+2 k_{n}-1} \mathbb{E} \left[ \tilde{Y}_{u}^{n} \tilde{Y}_{v}^{n} \mid \mathcal{H}_{a_{j}(p)}^{n} \right] - \sum_{u=a_{j}(p)}^{b_{j}(p)-2 k_{n}-1} \mathbb{E} \left[( \tilde{Y}_{u}^{n})^{2} \mid \mathcal{H}_{a_{j}(p)}^{n} \right], \\[0.10cm]
R_{2}^{n} &= 2 \sum_{u=b_{j}(p)-2 k_{n}}^{b_{j}(p)-1} \sum_{v=u}^{b_{j}(p)-1} \mathbb{E} \left[ \tilde{Y}_{u}^{n} \tilde{Y}_{v}^{n} \mid \mathcal{H}_{a_{j}(p)}^{n} \right] - \sum_{u=b_{j}(p)-2k_{n}}^{b_{j}(p)-1} \mathbb{E} \left[( \tilde{Y}_{u}^{n})^{2} \mid \mathcal{H}_{a_{j}(p)}^{n} \right].
\end{align*}
Consider the first term,
\begin{align*}
n^{-1/2} \sum_{j=0}^{j_{n}(p)-1} R_{1}^{n} &= n^{-3/2} 2(2pk_{n} - 2k_{n}) \sum_{j=0}^{j_{n}(p)-1} \sum_{v=0}^{2k_{n}-1} h_{11} \big(( \omega_{a_{j}(p) \Delta_{n}}, \sigma_{a_{j}(p) \Delta_{n}}), \psi^{n}, f^{n}(v/k_{n}) \big) + o_{p}(1) \\[0.10cm]
& \overset{p}{ \longrightarrow} 2 \theta \frac{p-1}{p+1} \int_{0}^{1} \int_{0}^{2} h_{11} \big(( \rho \omega_{u}, \sigma_{u}), ( \psi_{1} / \theta, \theta \psi_{2}), f(s) \big) \mathrm{d}s \mathrm{d}u,
\end{align*}
where the argument is as in \citet*{podolskij-vetter:09a} via continuity of the function $h_{11}$ and Lebesgue's theorem.

Furthermore, we deduce that
\begin{align*}
n^{-1/2} \sum_{j=0}^{j_{n}(p)-1} n^{-1/2} R_{2}^{n} &= n^{-3/2} 2 \sum_{j=0}^{j_{n}(p)-1} \sum_{v=0}^{2 k_{n}-1} (2 k_{n}-v) h_{11} \big(( \omega_{a_{j}(p) \Delta_{n}}, \sigma_{a_{j}(p) \Delta_{n}}), \psi^{n}, f^{n}(v/k_{n}) \big) + o_{p}(1) \\[0.10cm]
& \overset{p}{ \longrightarrow} \frac{ \theta}{p+1} \int_{0}^{1} \int_{0}^{2} (2-s) h_{1 1} \big(( \rho \omega_{u}, \sigma_{u}), ( \psi_{1} / \theta, \theta \psi_{2}), f(s) \big) \mathrm{d}s \mathrm{d}u.
\end{align*}
The statement then follows by summing these two terms. \qed
\end{proof}
The proof of Theorem \ref{appendix:clt} now follows directly from \eqref{equation:decomposition} and Lemmas \ref{lemma:N} -- \ref{M-CLT}.

\subsection*{Proof of Theorem \ref{theorem:clt}}

The proof is mostly based on Theorem \ref{appendix:clt} above, after we take into account that the error in the noise variance estimator introduced in \eqref{equation:noise-variance}. We recall that the long-run variance of the process $\pi$ is defined as $\rho^{2} = \rho(0) + 2 \sum_{m = 1}^{ \ell} \rho(m)$, where $\rho(m) = \mathbb{E}[ \pi_{0} \pi_{m}]$ is the $m$th autocovariance.

\begin{lemma} \label{lemma:noise}
Assume that $r$ follows the process in \eqref{equation:X} and that Assumptions (V) and (N) hold. Then, as $n \rightarrow \infty$, such that $h_{n} \asymp  n^{1/5}$ and $\ell_{n} \asymp  n^{1/8}$, it holds that
\begin{equation} \label{equation:noise-bound}
\hat{ \omega}_{n}^{2} - \int_{0}^{1} \rho^{2} \omega_{s}^{2} \mathrm{d}s  = o_{p}(n^{-1/4}).
\end{equation}
\end{lemma}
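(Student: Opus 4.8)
The plan is to substitute the identity $p^{*}_{i\Delta_{n}} = p_{i\Delta_{n}} + \epsilon_{i\Delta_{n}}$ together with the local-average decomposition $\tilde{p}^{*}_{(i+2h_{n})\Delta_{n}} = \tilde{p}_{(i+2h_{n})\Delta_{n}} + \bar{\epsilon}_{(i+2h_{n})\Delta_{n}}$ (with $\tilde{p}$ and $\bar{\epsilon}$ the $h_{n}$-local averages of the efficient price and of the noise) into $\rho_{n}(m)$, so that $p^{*}_{i\Delta_{n}} - \tilde{p}^{*}_{(i+2h_{n})\Delta_{n}} = A_{i} + E_{i}$ with $A_{i} = p_{i\Delta_{n}} - \tilde{p}_{(i+2h_{n})\Delta_{n}}$ and $E_{i} = \epsilon_{i\Delta_{n}} - \bar{\epsilon}_{(i+2h_{n})\Delta_{n}}$, and likewise $p^{*}_{(i+m)\Delta_{n}} - \tilde{p}^{*}_{(i+4h_{n})\Delta_{n}} = A_{i}' + E_{i}'$. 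Expanding the product gives $A_{i}A_{i}' + A_{i}E_{i}' + E_{i}A_{i}' + E_{i}E_{i}'$, and the argument proceeds by bounding the first three groups after averaging over $i$, identifying $\frac{1}{n-5h_{n}+1}\sum_{i}E_{i}E_{i}'$ with $\rho(m)\int_{0}^{1}\omega_{s}^{2}\,\mathrm{d}s$ up to a negligible remainder, and finally summing the $2\ell_{n}+1$ lags in $\hat{\omega}_{n}^{2}$.

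For the efficient-price and cross contributions I would use Assumption (V), the boundedness afforded by the localization procedure, and the Burkholder--Davis--Gundy inequality to get $\mathbb{E}|A_{i}|^{2} \lesssim h_{n}\Delta_{n}$ (the drift, the volatility-of-volatility, and the small jumps all enter at this order or below), so that $\frac{1}{n-5h_{n}+1}\sum_{i}|A_{i}A_{i}'| = O_{p}(h_{n}\Delta_{n})$, while a big jump in $p$ inflates only $O(h_{n})$ of the summands by an $O(1)$ amount and hence contributes $O_{p}(h_{n}/n)$ as well. Since $\pi \Perp (p,\omega)$ and $\pi$ is centered, $A_{i}E_{i}'$ and $E_{i}A_{i}'$ have conditional mean zero given $(p,\omega)$, so only their variances matter; tracking the $\asymp h_{n}$-range overlap of the $A$'s against $|E_{i}'| = O_{p}(1)$ gives $\frac{1}{n-5h_{n}+1}\sum_{i}A_{i}E_{i}' = O_{p}(h_{n}/n)$. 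Multiplying through by the $O(\ell_{n})$ lags, all of this is $O_{p}(\ell_{n}h_{n}\Delta_{n}) = O_{p}(n^{1/8-4/5}) = o_{p}(n^{-1/4})$ under the stated rates.

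For the leading term I would expand $E_{i}E_{i}' = \epsilon_{i\Delta_{n}}\epsilon_{(i+m)\Delta_{n}} - \epsilon_{i\Delta_{n}}\bar{\epsilon}_{(i+4h_{n})\Delta_{n}} - \bar{\epsilon}_{(i+2h_{n})\Delta_{n}}\epsilon_{(i+m)\Delta_{n}} + \bar{\epsilon}_{(i+2h_{n})\Delta_{n}}\bar{\epsilon}_{(i+4h_{n})\Delta_{n}}$ and observe that, once $h_{n} > \ell$, the $\pi$-indices entering the last three blocks are mutually separated by more than $\ell$ for every $m \le \ell_{n} \ll h_{n}$; by the MA($\ell$) structure these have conditional mean zero given $\omega$ and contribute only fluctuations of order $O_{p}(n^{-1/2})$ per lag (variance $\asymp h_{n}^{-1}$ per summand against $\asymp h_{n}$-range dependence), hence $O_{p}(\ell_{n}n^{-1/2}) = o_{p}(n^{-1/4})$ after summation. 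In the first block I would write $\epsilon_{i\Delta_{n}} = \omega_{i\Delta_{n}}\pi_{i}$, replace $\omega_{(i+m)\Delta_{n}}$ by $\omega_{i\Delta_{n}}$ at a cost of $O_{p}(\sqrt{m\Delta_{n}})$ per summand (Assumption (N,ii)), condition on $\omega$, and split $\pi_{i}\pi_{i+m} = \rho(m) + (\pi_{i}\pi_{i+m}-\rho(m))$. The centered part is a normalized sum of an $(\ell+m)$-dependent array, hence $O_{p}(\sqrt{(\ell+m)/n})$, which together with the $\omega$-smoothing error sums over $m \le \ell_{n}$ to $O_{p}(\ell_{n}^{3/2}n^{-1/2}) = O_{p}(n^{3/16-1/2}) = o_{p}(n^{-1/4})$; and $\frac{1}{n-5h_{n}+1}\sum_{i}\omega_{i\Delta_{n}}^{2} - \int_{0}^{1}\omega_{s}^{2}\,\mathrm{d}s = O_{p}(n^{-1/2})$ is the standard Riemann-sum error for the It\^{o} process $\omega^{2}$, entering with the bounded prefactor $\rho(0) + 2\sum_{m=1}^{\ell_{n}}\rho(m)$. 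Collecting everything, $\hat{\omega}_{n}^{2} = \bigl(\rho(0)+2\sum_{m=1}^{\ell_{n}}\rho(m)\bigr)\int_{0}^{1}\omega_{s}^{2}\,\mathrm{d}s + o_{p}(n^{-1/4})$; because $\pi$ is MA($\ell$), $\rho(m) = 0$ for $m>\ell$, so the prefactor equals $\rho^{2}$ for all $n$ with $\ell_{n}\ge \ell$, and $\rho^{2}\int_{0}^{1}\omega_{s}^{2}\,\mathrm{d}s = \int_{0}^{1}\rho^{2}\omega_{s}^{2}\,\mathrm{d}s$, which is \eqref{equation:noise-bound}.

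The main obstacle is the bookkeeping of the $\ell_{n}$-fold summation: each individual contribution is comfortably small, but there is essentially no slack, since the choices $h_{n}\asymp n^{1/5}$ and $\ell_{n}\asymp n^{1/8}$ are precisely what forces the accumulated efficient-price bias $\ell_{n}h_{n}\Delta_{n}$, the autocovariance sampling error $\ell_{n}^{3/2}n^{-1/2}$, and the residual $\bar{\epsilon}$-fluctuation $\ell_{n}n^{-1/2}$ all below $n^{-1/4}$. Keeping the two distinct dependence ranges straight in every second-moment estimate — the $h_{n}$ coming from the local averaging and the $\ell$ coming from the MA noise structure — is where the real care is required, and it is here that one leans on the boundedness of the coefficients and on the moment bounds for $\ell$-dependent sequences exactly as in the analogous step of \citet{jacod-li-zheng:19a}.
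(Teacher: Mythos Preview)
Your proposal is correct and follows essentially the same route as the paper: decompose $p^{*}_{i\Delta_{n}} - \tilde{p}^{*}_{(i+h)\Delta_{n}}$ into an efficient-price piece and a noise piece, bound the cross and residual contributions lag by lag, identify the $\epsilon_{i}\epsilon_{i+m}$ block with $\rho(m)\int_{0}^{1}\omega_{s}^{2}\,\mathrm{d}s$ plus fluctuation and Riemann errors, and sum the $\ell_{n}$ lags. The paper's grouping is slightly different---it sets $\alpha_{i}^{n}(h) = \omega_{i\Delta_{n}}(\pi_{i}-\tilde{\pi}_{i+h})$ and pushes the $\omega$-smoothing error into the $\beta$ remainder rather than keeping it in your $E$---and it bounds the cross terms by a direct Cauchy--Schwarz $\mathbb{E}|\alpha\beta| \le C\sqrt{h_{n}/n}$, giving $O_{p}(n^{-2/5})$ per lag and $O_{p}(n^{-11/40})$ after summation, whereas your use of the conditional mean-zero structure sharpens this slightly; but the strategy and all the essential ingredients coincide.
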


\begin{proof}
For each $h \geq 0$, we observe that
\begin{align*}
p_{i \Delta_{n}}^{*} - \tilde{p}_{(i+h) \Delta_{n}}^{*} = \alpha_{i}^{n}(h) + \beta_{i}^{n}(h),
\end{align*}
with
\begin{equation*}
\alpha_{i}^{n}(h) = \omega_{i \Delta_{n}} (\pi_{i}-\tilde{\pi}_{i+h}) \quad \text{and} \quad \beta_{i}^{n}(h) = p_{i \Delta_{n}} - \tilde{p}_{(i+h) \Delta_{n}} + \frac{1}{h_{n}} \sum_{j=0}^{h_{n}-1} \pi_{i+h+j}( \omega_{i \Delta_{n}} - \omega_{(i+h+j) \Delta_{n}}).
\end{equation*}
It suffices to prove that for each $m \in \{0, \dots, \ell_{n} \}$:
\begin{equation*}
\rho_{n}(m) - \int_{0}^{1} \rho(m) \omega_{s}^{2} \mathrm{d}s = O_{p} \left(n^{-2/5} \right).
\end{equation*}
Then, based on the required order of $h_{n}$ and $\ell_{n}$, summing up these terms results in an $O_{p}(n^{-11/40})$ bound on the error of $\hat{ \omega}_{n}^{2}$.

To show this result, we suppose (without loss of generality) that $n$ is chosen so large that $h_{n} > \ell$ and note that $\rho(k) = 0$ for $k > \ell$. We proceed with the following decomposition:
\begin{equation*}
\rho_{n}(m) - \int_{0}^{1} \rho(m) \omega_{s}^{2} \mathrm{d}s = N_{1}^{n}(m) + N_{2}^{n}(m) + N_{3}^{n}(m),
\end{equation*}
where
\begin{align*}
N_{1}^{n}(m) &= \frac{1}{n-5h_{n}+1} \sum_{i=0}^{n-5h_{n}} \omega_{i \Delta_{n}} \omega_{(i+m) \Delta_{n}} \big[ ( \pi_{i}- \tilde{ \pi}_{i+2h_{n}})( \pi_{i+m}- \tilde{ \pi}_{i+4h_{n}}) - \rho(m) \big], \\
N_{2}^{n}(m) &= - \rho(m) \int_{(n-5h_{n}+1) \Delta_{n}}^{1} \omega_{s}^{2} \mathrm{d}s + \frac{ \rho(m)}{n-5h_{n}+1} \sum_{i=0}^{n-5h_{n}} \int_{i \Delta_{n}}^{(i+1) \Delta_{n}} \left( \omega_{i \Delta_{n}} \omega_{(i+m) \Delta_{n}} -\omega_{s}^{2} \right) \mathrm{d}s, \\
N_{3}^{n}(m) &= \frac{1}{n-5h_{n}+1} \sum_{i=0}^{n-5h_{n}} \Big[ \alpha_{i}^{n}(2h_{n}) \beta_{i+m}^{n}(4h_{n}-m) + \alpha_{i+m}^{n}(4h_{n}-m) \beta_{i+m}^{n}(4h_{n}) \\
& \qquad \qquad \qquad \qquad \ \, + \beta_{i}^{n}(2h_{n}) \beta_{i+m}^{n}(4h_{n}-m) \Big].
\end{align*}
We proceed to show that each of these terms is asymptotically negligible. In view of (2.1.44) in \citet*{jacod-protter:12a}, we deduce that $\sup_{0 \leq i \leq n-h} \mathbb{E} \left[(p_{i \Delta_{n}} - p_{(i+h) \Delta_{n}})^{2} \right] \leq C h/n.$ This order also holds if $p$ is replaced by $\omega$. Then, combining these bounds with $\mathbb{E}[ \pi_{i}^{2}] \leq C$ yields:
\begin{equation*}
\mathbb{E} \left[ \alpha_{i}^{n}(h)^{2} \right] \leq C \quad \text{and} \quad \mathbb{E} \left[ \beta_{i}^{n} (h)^{2} \right] \leq C \frac{h}{n}.
\end{equation*}
Consequently,
\begin{equation*}
\mathbb{E} \left[|N_{n}^{3}(m)| \right] \leq C \sqrt{ \frac{h_{n}}{n}} \leq C n^{-2/5}.
\end{equation*}
Next, we deal with the term $N_{2}^{n}(m)$, which is non-zero for $m \leq \ell$, which we suppose is true. A standard Riemann integration error argument then yields $N_{2}^{n}(m) = O_{p}(n^{-1/2})$.

Concerning the term $N_{1}^{n}(m)$, due to Assumption (N,i+iv) the summands are correlated only up to order $\ell_{n}$, and hence
\begin{equation*}
\mathbb{E} \left[(N_{n}^{1}(m))^{2} \right] \leq C \frac{ \ell_{n}}{n},
\end{equation*}
so $N_{1}^{n}(m) = O_{p} \big(( \ell_{n}/n)^{1/2} \big)$. Adding up terms leads to the statement in \eqref{equation:noise-bound}. \qed
\end{proof}

We now turn to the proof of Theorem \ref{theorem:clt}. For the convergence in probability part it follows from the above results that
\begin{align*}
BV_{n}(2,0) & \overset{p}{ \longrightarrow}  \int_{0}^{1} \Big( \theta \psi_{2} \sigma_{s}^{2} + \frac{1}{ \theta} \psi_{1} \rho^{2} \omega_{s}^{2} \Big) \mathrm{d}s + \theta \psi_{2} \sum_{0 \leq s \leq 1} ( \Delta r_{s})^{2}, \\[0.10cm]
BV_{n}(1,1) & \overset{p}{ \longrightarrow}  \int_{0}^{1} \Big( \theta \psi_{2} \sigma_{s}^{2} + \frac{1}{ \theta} \psi_{1} \rho^{2} \omega_{s}^{2} \Big) \mathrm{d}s.
\end{align*}
This can be shown by proceeding as in the proof of Theorem 2 in \citet*{podolskij-vetter:09a}. Now, the consistency in Theorem \ref{theorem:clt} follows from Lemma \ref{lemma:noise}. The central limit theorem is derived under the further assumption that $p$ is continuous. Hence, it follows from Theorem \ref{appendix:clt} [choosing the exponents $(q_{1},r_{1}) = (2,0)$ and $(q_{2},r_{2}) = (1,1)$] and Lemma \ref{lemma:noise}. \qed

\subsection{A jump- and noise-robust estimator of $\tilde{\Sigma}$} \label{appendix:subsampler}

Here, we present an estimator of $\tilde{ \Sigma}$, which appears in the extended central limit theorem in Theorem \ref{appendix:clt}. Its connection with $\Sigma$ is explained in \eqref{asymptotic-covariance-Sigma}.

We build on previous work of \citet*{christensen-podolskij-thamrongrat-veliyev:17a}, who propose a subsampling estimator of $\tilde{ \Sigma}$ \citep*[see also][]{politis-romano-wolf:99a, kalnina:11a, mykland-zhang:17a}. An appealing feature of their estimator is that it is positive semi-definite and has good small sample properties.\footnote{\citet*{podolskij-vetter:09a} develop an element-by-element estimator of $\tilde{ \Sigma}$ in the i.i.d noise setting. However, this estimator is not positive semi-definite and is often ill-conditioned in practice. Moreover, it is not consistent for $\tilde{ \Sigma}$ under the jump alternative.} In that paper, the subsampling estimator is shown to be consistent if there is either price jumps or microstructure noise, but not in the presence of both. Microstructure noise is further restricted to be either heteroscedastic or dependent, but not both. Here, we extend their framework to account for all of these features at once.

We propose the following jump- and noise-robust covariance matrix estimator:
\begin{equation*}
\tilde{ \Sigma}_{n} = \frac{1}{L} \sum_{l=1}^{L} \Bigg( \frac{n^{1/4}}{ \sqrt{L}} \begin{bmatrix} \check{BV}_{l}(q_{1},r_{1}) - \check{BV}_{n}(q_{1},r_{1}) \\[0.10cm] \check{BV}_{l}(q_{2},r_{2}) - \check{BV}_{n}(q_{2},r_{2}) \end{bmatrix} \Bigg) \Bigg( \frac{n^{1/4}}{ \sqrt{L}} \begin{bmatrix}  \check{BV}_{l}(q_{1},r_{1}) - \check{BV}_{n}(q_{1},r_{1}) \\[0.10cm] \check{BV}_{l}(q_{2},r_{2}) - \check{BV}_{n}(q_{2},r_{2}) \end{bmatrix} \Bigg)^{ \top},
\end{equation*}
with
\begin{align*}
\begin{split}
\check{BV}_{n}(q,r) &= \frac{1}{n} \sum_{i=0}^{n-2k_{n}+1} | n^{1/4} \check{r}_{i}^{*}|^{q} | n^{1/4} \check{r}_{i+k_{n}}^{*}|^{r}, \qquad
\check{BV}_{l}(q,r) = \frac{Lpk_{n}}{n} \sum_{i=1}^{n/Lpk_{n}} v_{(i-1)L+l}(q,r)^{n}, \\[0.10cm]
v_{i}(q,r)^{n} &= \frac{1}{pk_{n}-2k_{n}+2} \sum_{j,j+2 k_{n}-1 \in B_{i}(p)} | n^{1/4} \check{r}_{j}^{*}|^{q} | n^{1/4} \check{r}_{j+k_{n}}^{*}|^{r},
\end{split}
\end{align*}
and
\begin{equation} \label{equation:truncated-return}
\check{r}_{i}^{*} = \bar{r}_{i}^{*} \mathbbm{1}( |\bar{r}_{i}^{*}| \leq u_{n}),
\end{equation}
where $\mathbbm{1}(A)$ is an indicator function that equals one if $A$ is true, zero otherwise, and $u_{n} = \alpha n^{- \bar{ \omega}}$ with $\alpha > 0$ and $\bar{ \omega} \in (0, 1/4)$.

In the main text, we present an estimator of $\Sigma$, $\Sigma_{n}^{*}$, for the particular setting with $(q_{1},r_{1}) = (2,0)$ and $(q_{2},r_{2}) = (1,1)$. Its connection with $\tilde{ \Sigma}_{n}$ is given by:
\begin{equation*}
\Sigma_{n,ij}^{*} = \left( \sqrt{n} c_{1}^{n} \right)^{2} \left(  \frac{ \pi}{2} \right)^{i+j-2} \tilde{ \Sigma}_{n,ij}.
\end{equation*}

To deal with microstructure noise, $\tilde{ \Sigma}_{n}$ is based on the pre-averaged bipower variation. To further robustify it to jumps, we exploit the truncation device of \citet*{mancini:09a} in \eqref{equation:truncated-return}. This sets large negative or positive pre-averaged log-returns to zero, since they are most likely dominated by the jump component. The threshold, $u_{n}$, is adapted to an estimate of the square-root integrated variance by setting $\alpha = c \sqrt{ BV_{n}(1,1)}$. With this configuration, we can interpret $c$ as the number of local diffusive standard deviations a pre-averaged log-return must exceed to be labelled a jump. The rate parameter, $\bar{ \omega}$, ensures that, asymptotically, continuous pre-averaged log-returns are unaffected by the truncation. In this way, $\tilde{ \Sigma}_{n}$ is also made jump-robust.

\begin{theorem} \label{theorem:subsampler-general}
Assume that $r$ follows \eqref{equation:X} with $\beta \leq 1$ and that Assumptions (V) and (N) hold. Moreover, for each $s \in \{ q_{1}, q_{2}, r_{1}, r_{2} \} \cap [1, \infty)$, we require that
\begin{equation*}
\frac{q+ \delta-1/2}{4 q- \beta} < \bar{ \omega}< \frac{1}{4}- \delta
\end{equation*}
where $L \asymp n^{(1- \delta)/2}$. Then, as $n \rightarrow \infty$, $\rightarrow \infty$, $L/p \rightarrow \infty$, $\sqrt{n}/L p^{2} \rightarrow \infty$ and $\delta< 1/16$, it holds that
\begin{equation*}
\tilde{ \Sigma}_{n} \overset{p}{ \longrightarrow} \tilde{\Sigma}.
\end{equation*}
\end{theorem}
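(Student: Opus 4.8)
The plan is to reduce the statement to the purely continuous model by a truncation argument, and then to obtain the law of large numbers for the subsampling estimator by computing its conditional mean — which should match $\tilde\Sigma$ in the $p\to\infty$ limit — and showing that its conditional variance vanishes. The scheme follows \citet*{christensen-podolskij-thamrongrat-veliyev:17a}, but has to be upgraded to handle price jumps, heteroscedastic noise, and $\ell$-dependent noise \emph{at the same time}, whereas that reference allowed only one of these features at a time.

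First I would eliminate the jump component. By the localization argument in \citet*[Section 4.4.1]{jacod-protter:12a} all coefficients may be assumed bounded, and since $\beta\le 1$ the discontinuous part $r^d$ is of finite variation: the aggregate of the ``small'' jumps contributes only $O_p(k_n\Delta_n)=O_p(n^{-1/2})$ to any pre-averaged return, while there are finitely many ``big'' jumps, each of which contaminates $O(k_n)$ consecutive pre-averaged log-returns and hence $O(1)$ of the blocks $B_i(p)$. Adapting the truncation estimates of \citet*{mancini:09a} to the pre-averaged setting along the lines of \citet*{christensen-hounyo-podolskij:18a}, the upper bound $\bar\omega<1/4-\delta$ keeps the threshold $u_n=\alpha n^{-\bar\omega}$ large enough that a \emph{continuous} pre-averaged return (of order $n^{-1/4}$) is a.s.\ not truncated in the limit, while the lower bound $\bar\omega>(q+\delta-1/2)/(4q-\beta)$, together with the moment order $q\ge 1$, makes the residual mass of the truncated jump terms negligible. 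Consequently $\check{BV}_n(q,r)$, $\check{BV}_l(q,r)$, and the block statistics $v_i(q,r)^n$ may all be replaced, up to an error that is negligible in $\tilde\Sigma_n$, by their analogues built from the continuous price alone, and from here on I may set $r^d\equiv 0$.

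Working in the continuous model, I would expand $n^{1/4}\bigl(\check{BV}_l(q,r)-\check{BV}_n(q,r)\bigr)$ in terms of the block contributions $v_i(q,r)^n$, decomposing each into a conditionally centered (martingale-difference-type) part and a conditional-mean part, exactly as in the ``big blocks--small blocks'' decomposition used for Theorem \ref{appendix:clt}. Lemma \ref{lemma:edgeworth} identifies the conditional means with the local spot functionals $\mu_q\mu_r(\theta\psi_2\sigma^2+\theta^{-1}\psi_1\rho^2\omega^2)^{(q+r)/2}$, so these parts agree to leading order between $\check{BV}_l$ and $\check{BV}_n$ and cancel in the difference, leaving only a Riemann-integration remainder that I can bound. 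Thus $\tilde\Sigma_n$ equals, up to $o_p(1)$, the scaled sample second-moment matrix of the $L$ centered subsampled bipower variations. I would then compute its conditional expectation: this requires the variances and cross-block covariances of the $v_i(q,r)^n$, and as in Lemma \ref{M-CLT} blocks separated by more than $2k_n$ high-frequency increments are conditionally independent (the overlap from pre-averaging plus the $\ell$-dependence of $\pi$), so only adjacent blocks contribute. By continuity of the $h_{ij}$ functions and dominated convergence, the resulting block sum converges to $\Sigma(p)$, which in turn tends to $\tilde\Sigma$ of \eqref{equation:asymptotic-covariance-tilde-Sigma} as $p\to\infty$.

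It remains to show that the conditional variance of $\tilde\Sigma_n$ around this mean is $o_p(1)$. Here I would invoke the uniform moment bounds of Lemma \ref{lemma:moment} (all moments of $v_i(q,r)^n$ are bounded) and again the block-independence structure to count the surviving fourth-order cross terms; the rate requirements $L/p\to\infty$ and $\sqrt n/Lp^2\to\infty$ with $L\asymp n^{(1-\delta)/2}$ and $\delta<1/16$ are precisely what make the product of the scaling $n^{1/2}/L^2$ with the number of non-negligible covariance quadruples tend to zero. Combining the convergence of the conditional mean with the vanishing of the conditional variance yields $\tilde\Sigma_n\overset{p}{\longrightarrow}\tilde\Sigma$. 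I expect the principal obstacle to be the first step: making the truncation estimates uniform over all $O(\sqrt n)$ blocks when the noise is simultaneously heteroscedastic \emph{and} serially dependent — in particular, checking that calibrating $u_n$ between the diffusive and jump scales still isolates the continuous part even though each pre-averaged return already blends $2k_n$ noisy increments with a non-trivial local covariance — since the subsampling reference we build on treated jumps and (one form of) noise only separately.
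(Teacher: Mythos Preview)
Your plan is essentially the paper's: first strip out the jumps so that the truncated estimator $\tilde\Sigma_n$ can be replaced by its continuous-price analogue $\Sigma_n'$ (your Step 1), then establish $\Sigma_n'\overset{p}{\to}\tilde\Sigma$ (your Steps 2--3, which the paper simply cites from Theorem 3.8 of \citet*{christensen-podolskij-thamrongrat-veliyev:17a} rather than re-deriving). The paper's effort goes entirely into Step 1, executed via the pointwise inequality $\big||x+y|^q\mathbbm{1}_{\{|x+y|\le u\}}-|x|^q\big|\le C\big(|x|^{q+a}u^{-a}+|x|^q|y|^bu^{-b}+(|y|\wedge u)^q+\mathbbm{1}_{\{q>1\}}|x|^{q-1}(|y|\wedge u)\big)$ with $x=\bar r_j'$, $y=\bar r_j^d$ and then fourth-moment bounds on each piece; note also that the condition $\delta<1/16$ is consumed in \emph{that} truncation bound (to kill $\rho_n^{1/2}n^{\delta-(1-\beta/2)\bar\omega}$ using $\beta\le 1$ and $\bar\omega>1/8$), not in the variance step where you placed it.
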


\subsection*{Proof of Theorem \ref{theorem:subsampler-general}}

By the polarization identity, it suffices to show the result $\tilde{ \Sigma}_{n} - \tilde{ \Sigma} \overset{p}{ \longrightarrow} 0$ in the univariate setting. To this end, we denote by $\bar{r}_{i}'$ the pre-averaged return based on the continuous part of $p_{t}$ and $\epsilon_{t}$. The corresponding subsampling estimator is denoted as
\begin{equation*}
\Sigma_{n}' = \frac{1}{L} \sum_{l=1}^{L} \Biggl( \frac{n^{1/4}}{ \sqrt{L}} \Bigl(BV_{l}'(q,r) - BV_{n}'(q,r) \Bigr) \Biggr)^{2},
\end{equation*}
where
\begin{align*}
BV_{n}'(q,r) &= \frac{1}{n} \sum_{i=0}^{n-2k_{n}+1} |n^{1/4} \bar{r}_{i}'|^{q} |n^{1/4}  \bar{r}_{i+k_{n}}'|^{r}, \qquad
BV_{l}'(q,r) = \frac{Lpk_{n}}{n} \sum_{i=1}^{n/Lpk_{n}} v'_{(i-1)L+l}(q,r)^{n}, \\[0.10cm]
v'_{i}(q,r)^{n} &= \frac{1}{pk_{n}-2k_{n}+2} \sum_{j,j+2 k_{n}-1 \in B_{i}(p)} |n^{1/4} \bar{r}_{j}'|^{q} |n^{1/4} \bar{r}_{j+k_{n}}'|^{r}.
\end{align*}
Proceeding exactly as in the proof of Theorem 3.8 in \citet*{christensen-podolskij-thamrongrat-veliyev:17a}, it immediately follows that $\Sigma_{n}' - \tilde{ \Sigma} \overset{p}{ \longrightarrow} 0$. Hence, it suffices to show that
\begin{equation*}
\tilde{ \Sigma}_{n} - \Sigma_{n}' \overset{p}{ \longrightarrow} 0.
\end{equation*}
To establish this, we note that:
\begin{align*}
\tilde{\Sigma}_{n} - \Sigma_{n}' = \frac{1}{L} \sum_{l=1}^{L} & \Bigg( \frac{n^{1/4}}{ \sqrt{L}} \bigg( BV_{l}(q,r) - BV_{l}'(q,r) + BV_{n}'(q,r) - BV_{n}(q,r) \bigg) \Bigg) \\[0.10cm]
\times & \Bigg( \frac{n^{1/4}}{ \sqrt{L}} \bigg( BV_{l}(q,r) - BV_{n}(q,r) + BV_{l}'(q,r) - BV_{n}'(q,r) \bigg) \Bigg),
\end{align*}
which is an average of $L$ terms. The main idea is to deduce that each of these $L$ terms converges in probability to zero.
In particular, we show uniform convergence in mean square:
\begin{equation} \label{equation:difference-to-zero}
\sup_{1 \leq l \leq L} \mathbb{E} \bigg[ \Big| \frac{n^{1/4}}{ \sqrt{L}} \big( BV_{l}(q,r) - BV_{l}'(q,r) \big) \Big|^{2} \bigg] \rightarrow 0 \quad \text{and} \quad \sup_{1 \leq l \leq L} \mathbb{E} \bigg[ \Big| \frac{n^{1/4}}{ \sqrt{L}} \big(BV_{n}(q,r) - BV_{n}'(q,r) \big) \Big|^{2} \bigg] \rightarrow 0.
\end{equation}
Following the arguments in the proof of Lemma A.6 in \citet*{christensen-podolskij-thamrongrat-veliyev:17a} implies that
\begin{equation} \label{equation:difference-bounded}
\sup_{1 \leq l \leq L} \mathbb{E} \left[ \Big| \frac{n^{1/4}}{ \sqrt{L}} \big( BV_{l}'(q,r) - BV_{n}'(q,r) \big) \Big|^{2} \right] \leq C.
\end{equation}
Then, equation \eqref{equation:difference-to-zero} -- \eqref{equation:difference-bounded} lead to:
\begin{equation*}
\sup_{1 \leq l \leq L} \mathbb{E} \bigg[ \Big| \frac{n^{1/4}}{ \sqrt{L}} \big( BV_{l}(q,r) - BV_{n}(q,r) \big) \Big|^{2} \bigg] \leq C.
\end{equation*}
The combination of the last three results and the Cauchy-Schwarz inequality implies:
\begin{equation*}
\mathbb{E} \big[ | \tilde{ \Sigma}_{n} - \Sigma_{n}'| \big] \rightarrow 0,
\end{equation*}
so that $\tilde{ \Sigma}_{n} - \Sigma_{n}' \overset{p}{ \longrightarrow} 0$. Thus, it is enough to show \eqref{equation:difference-to-zero}.

For each $j$, we define
\begin{align*}
\lambda_{j}^{n}(q,r) = |n^{1/4} \bar{r}_{j}^{*}|^{q} |n^{1/4}  \bar{r}_{j+k_{n}}^{*}|^{r} \mathbbm{1}_{ \{|\bar{r}_{j}^{*}| \leq u_{n} \cap | \bar{r}_{j+k_{n}}^{*}| \leq u_{n} \}} - |n^{1/4} \bar{r}_{j}'|^{q} |n^{1/4} \bar{r}_{j+k_{n}}'|^{r}.
\end{align*}
Note that
\begin{equation*}
BV_{n}(q,r) - BV_{n}'(q,r) = \frac{1}{n} \sum_{i=0}^{n-2k_{n}+1} \lambda_{i}^{n}(q,r),
\end{equation*}
and
\begin{equation*}
BV_{l}(q,r) - BV_{l}'(q,r) = \frac{Lpk_{n}}{n} \frac{1}{pk_{n}-2k_{n}+2} \sum_{i=1}^{n/Lpk_{n}} \sum_{j, j+k_{n}-1 \in B_{(i-1)L+l}(p)} \lambda_{j}^{n}(q,r).
\end{equation*}
As a result of Cauchy-Schwarz, showing \eqref{equation:difference-to-zero} is reduced to proving:
\begin{equation*}
\sup_{1 \leq j \leq n-2k_{n}+2} \mathbb{E} \bigg[ \Big| \frac{n^{1/4}}{ \sqrt{L}} \lambda_{j}^{n}(q,r) \Big|^{2} \bigg] \rightarrow 0.
\end{equation*}
Furthermore, the identity
\begin{equation*}
\lambda_{j}^{n}(q,r) =
\lambda_{j}^{n}(q,0) |n^{1/4}  \bar{r}_{j+k_{n}}'|^{r} + \lambda_{j}^{n}(0,r) |n^{1/4} \bar{r}_{j}'|^{q} + \lambda_{j}^{n}(q,0) \lambda_{j}^{n}(0,r),
\end{equation*}
and Lemma \ref{lemma:moment} combined with Cauchy-Schwarz inequality yields:
\begin{equation*}
\mathbb{E} \big[| \lambda_{j}^{n}(q,r)|^{2} \big] \leq C \Big( \mathbb{E} \big[| \lambda_{j}^{n}(q,0)|^{4} \big]^{1/2} + \mathbb{E} \big[| \lambda_{j}^{n}(0,r)|^{4} \big]^{1/2} + \mathbb{E} \big[| \lambda_{j}^{n}(q,0)|^{4} \big]^{1/2} \mathbb{E} \big[| \lambda_{j}^{n}(0,r)|^{4} \big]^{1/2} \Big).
\end{equation*}
Then, in view of the rate condition $n^{1/4}/ \sqrt{L} \rightarrow \infty,$ it is enough that
\begin{equation*}
\sup_{1 \leq j \leq n-2k_{n}+2} \mathbb{E} \bigg[ \Big| \frac{n^{1/4}}{ \sqrt{L}} \lambda_{j}^{n}(q,0) \Big|^{4} \bigg] \rightarrow 0 \quad \text{and} \quad \sup_{1 \leq j \leq n-2k_{n}+2} \mathbb{E} \bigg[ \Big| \frac{n^{1/4}}{ \sqrt{L}} \lambda_{j}^{n}(0,r) \Big|^{4} \bigg] \rightarrow 0.
\end{equation*}
We only look at the term $\lambda_{j}^{n}(q,0)$ with $q > 0$, for which
\begin{equation*}
\lambda_{j}^{n}(q,0) = n^{q/4} |\bar{r}_{j}' +  \bar{r}_{j}^{d}|^{q} \mathbbm{1}_{ \{| \bar{r}_{j}^{*}| \leq u_{n} \}} - n^{q/4} | \bar{r}_{j}'|^{q}.
\end{equation*}
Now, for any $u > 0$, $a \geq 0$ and $b\geq 0$, it holds that
\begin{equation*}
\big| |x+y|^{q} \mathbbm{1}_{ \{|x+y| \leq u \}} - |x|^{q} \big| \leq C
\bigg( \frac{|x|^{q+a}}{u^{a}} + |x|^{q} \frac{|y|^{b}}{u^{b}} + (|y| \wedge u)^{q} + \mathbbm{1}_{ \{q>1 \}}|x|^{q-1} (|y| \wedge u) \bigg),
\end{equation*}
which can be verified by looking at the cases:
\begin{equation*}
(1) \ |x| \geq u/2, \qquad (2) \ |x| < u/2 \text{ and } |x+y| > u, \qquad (3) \ |x| < u/2 \text{ and } |x+y| \leq u.
\end{equation*}
If we exploit this result with $x = \bar{r}_{j}'$, $y = \bar{r}_{j}^{d}$ and $u = u_{n}$ ($a$ and $b$ are selected below) in combination with a second inequality $|c-d|^{4} \leq |c^{4} - d^{4}|$, we find that:
\begin{equation*}
| \lambda_{j}^{n}(q,0)|^{4} \leq C n^{q} \Bigg( \frac{| \bar{r}_{j}'|^{4q+a}}{u_{n}^{a}} + | \bar{r}_{j}'|^{4q} \frac{| \bar{r}_{j}^{d}|^{b}}{u_{n}^{b}} + \big(| \bar{r}_{j}^{d}| \wedge u_{n} \big)^{4q} + | \bar{r}_{j}'|^{4q-1} \big(| \bar{r}_{j}^{d}| \wedge u_{n} \big) \Bigg).
\end{equation*}
Note that we discard the indicator function, because it holds for any $q > 1/4$. Now, arguing as in \citet*[][p. 529]{jacod-protter:12a}, we deduce that
\begin{equation*}
\mathbb{E} \left[(| \bar{r}_{j}^{d}| \wedge u_{n})^{2} \right] \leq C \frac{ \rho_{n}}{n^{1/2+(2- \beta) \bar{ \omega}}}.
\end{equation*}
In view of the fact that $(|x| \wedge u)^{q} \leq u^{q-2} (|x| \wedge u)^{2}$ for $q \geq 2$ and recalling that $u_{n} = \alpha n^{- \bar{ \omega}}$:
\begin{equation*}
\mathbb{E} \left[(| \bar{r}_{j}^{d}| \wedge u_{n})^{q} \right] \leq C \times
\begin{cases}
\displaystyle \frac{ \rho_{n}^{q/2}}{n^{q/4+(2- \beta) \bar{ \omega} q/2}} & \text{if } q \leq 2, \\[0.10cm]
\displaystyle \frac{u_{n}^{q-2} \rho_{n}}{n^{1/2+(2- \beta) \bar{ \omega}}} \leq \frac{ \rho_{n}}{n^{1/2+(q- \beta) \bar{ \omega}}} & \text{if } q>2.
\end{cases}
\end{equation*}
Moreover, for any $q>0$:
\begin{equation*}
\mathbb{E} \big[| \bar{r}_{j}'|^{q} \big] \leq C \frac{1}{n^{q/4}} \quad \text{and} \quad \mathbb{E} \big[| \bar{r}_{j}^{d}|^{2} \big] \leq C \frac{1}{n^{1/2}}.
\end{equation*}
Combining these inequalities with $a=4$, $b=1$ and
and recalling $L \asymp n^{ (1-\delta)/2}$ leads to
\begin{equation*}
\mathbb{E} \bigg[ \Big| \frac{n^{1/4}}{ \sqrt{L}} \lambda_{j}^{n}(q,0) \Big|^{4} \bigg] \leq C
\bigg(\frac{1}{n^{1-\delta-4 \bar{ \omega}}}+
 \frac{1}{n^{1/4-\delta- \bar{ \omega}}}+ \frac{ \rho_{n}}{n^{1/2-\delta-q+(4 q- \beta) \bar{ \omega}}}
+ \frac{\rho_{n}^{1/2}}{n^{-\delta+(1- \beta/2)\bar{ \omega}}} \bigg)
\end{equation*}
The first and second error terms converge to 0 due to the condition $\bar{ \omega}<1/4- \delta$. The third error term converges to 0 due to the assumption $\bar{ \omega}>(q+ \delta-1/2)/(4 q- \beta)$. This also implies $\bar{ \omega}>1/8$, which combined with $\beta \leq 1$ and $\delta < 1/16$ deals with the fourth term. \qed

\subsection*{Proof of Theorem \ref{theorem:subsampler}}

The result follows directly from the proof of Theorem \ref{theorem:subsampler-general} by taking $(q_{1},r_{1}) = (2,0)$, $(q_{2},r_{2}) = (1,1)$ and adjusting by the constants $c_{1}$ and $c_{2}$. It is worth highlighting that the constraint $\bar{ \omega} > (q+ \delta-1/2)/(4 q-\beta)$ is more binding for $q=2$ compared to $q=1$. \qed

\subsection{Irrelevance of truncation}

To show that the truncation does not influence the limiting distribution in Theorem \ref{appendix:clt} and, hence, does not affect Theorem \ref{theorem:clt}, we note that
\begin{equation*}
n^{1/4} \begin{pmatrix} BV_{n}(q_{1},r_{1}) - V(q_{1},r_{1}) \\[0.10cm] BV_{n}(q_{2},r_{2}) - V(q_{2},r_{2})
\end{pmatrix} = n^{1/4} \begin{pmatrix} BV_{n}(q_{1},r_{1}) - V(q_{1}, r_{1}) \\[0.10cm] \check{BV}_{n}(q_{2},r_{2}) - V(q_{2},r_{2}) \end{pmatrix} + n^{1/4} \begin{pmatrix} 0 \\[0.10cm] BV_{n}(q_{2},r_{2}) - \check{BV}_{n}(q_{2}, r_{2}) \end{pmatrix}.
\end{equation*}
It is thus enough to show that $n^{1/4} \big(BV_{n}(q_{2},r_{2}) - \check{BV}_{n}(q_{2},r_{2}) \big) = o_{p}(1)$. An application of Boole's and Markov's inequalities and Lemma \ref{lemma:moment} in the proof of Theorem \ref{appendix:clt} yields that
\begin{equation*}
\mathbb{P} \big[ BV_{n}(q_{2},r_{2}) \neq \check{BV}_{n}(q_{2},r_{2}) \big] \leq \sum_{i=1}^{n-k_{n}+2} \mathbb{P} \big[ | \bar{r}_{i}^{*}| > u_{n} \big] \leq \frac{C}{n^{M(1/4 - \bar{ \omega})-1}},
\end{equation*}
where $M>0$. As $M$ can be chosen arbitrarily large, the result follows. \qed

\clearpage

\section{Monte Carlo Simulations} \label{appendix:simulation}

\setcounter{table}{0}
\setcounter{figure}{0}

The jump-testing theory in Section \ref{section:theory} is derived using an infill asymptotic setup, which assumes that $\Delta_{n} \rightarrow 0$. This is unattainable in practice. In this appendix, we therefore conduct a series of Monte Carlo simulations to examine the finite sample properties of the test statistic in a more realistic setting.

We simulate from the following model:
\begin{equation}
r_{t}^{*} = r_{t} + \epsilon_{t},
\end{equation}
where $r_{t} = r_{t}^{c} + r_{t}^{d}$, and
\begin{equation} \label{equation:simulation-return}
\text{d}r_{t}^{c} = \sigma_{t} \text{d}W_{t},
\end{equation}
with $r_{0} \equiv 0$.

The drift term is omitted (i.e., $a_{t} \equiv 0$) so the continuous part of $r_{t}$ evolves as a local martingale. The contribution of the drift is small over short horizons and the terms involving it---including cross products---are asymptotically negligible. It therefore tends to play a minor role in the high-frequency setting.\footnote{A notable exception is \citet*{christensen-oomen-reno:22a}.} Similarly, adding a realistic expected rate of return does not affect the conclusions in any material way.

The instantaneous variance is assumed to be driven by a square-root process \citep*[e.g.,][]{cox-ingersoll-ross:85a, heston:93a}:
\begin{equation} \label{equation:simulation-variance}
\text{d} \sigma_{t}^{2} = \kappa (\sigma^{2} - \sigma_{t}^{2}) \text{d}t + \xi \sigma_{t} \text{d}B_{t},
\end{equation}
with $\sigma_{0}^{2} \sim \text{Gamma}(2 \kappa \sigma^{2} \xi^{-2}, 2\kappa \xi^{-2})$. Our choice of parameters is based on previous work such as \citet*{ait-sahalia-kimmel:07a}. Specifically, we set $\kappa = 5$, $\sigma = 0.4$, $\xi = 0.5$, and $\rho = -\sqrt{0.5}$, where $\rho$ is the leverage correlation between the standard Brownian motions $\mathbb{E}[\text{d}W_{t} \text{d}B_{t}]= \rho \text{d}t$. This yields a stationary and strictly positive variance in continuous-time (i.e., the Feller condition $\xi^{2} < 2 \kappa \sigma^{2}$ holds) with an annualized mean volatility of 40\%.

$r_{t}^{d}$ is an infinite-activity tempered stable process with L\'{e}vy measure
\begin{equation}
\nu( \text{d}x) = \tau \frac{e^{- \lambda x}}{x^{1 + \beta}} \text{d}x,
\end{equation}
where $\tau > 0$ and $\lambda > 0$.

The activity of the jump process is controlled by $\beta$. As $\beta$ increases, the density of the small jumps is enlarged, and the realizations of $r_{t}^{d}$ become more vibrant and start to resemble those of a Brownian motion (see Figure \ref{figure:simulation-illustration} for an illustration). As noted by \citet*{bollerslev-todorov:11a}, this renders the decomposition of diffusive and jump risk meaningless in practice and also makes it harder to separate the null from the jump alternative (\citet{ait-sahalia-jacod-li:12a}, ). A priori, we therefore expect larger values of $\beta$ to be detrimental to the rejection rate of the jump test statistic under the alternative. To gauge the impact of changing $\beta$, we follow \citet*{ait-sahalia-jacod-li:12a} and examine $\beta = 0.50$, $1.00$, $1.50$ and $1.75$. Moreover, we set $\lambda = 3$ and calibrate $\tau$ so that $r_{t}^{d}$ induces 20\% of the quadratic return variation, on average.

We simulate for $t \in [0,1]$ with a time step of $\Delta_{n} = 1/23{,}400$. This can be interpreted as adding a new observation every second during a 6.5 hour trading session. A standard Euler discretization is employed for the continuous part.\footnote{There is a small probability that the variance in \eqref{equation:simulation-variance} goes negative in finite samples. We therefore enforce a full truncation floor at zero to avoid this \cite*[e.g.,][]{andersen:08a}.} As described in \citet*{todorov-tauchen-grynkiv:14a}, the jump component is computed as the difference between two spectrally positive tempered stable processes that are generated using the acceptance-rejection algorithm of \citet*{baeumer-meerschaert:10a}. The total number of Monte Carlo replications is $T = 10{,}000$. Figure \ref{figure:simulation-illustration} illustrates a single sample path of $r_{t}^{c}$ and $r_{t}^{d}$.

\begin{figure}[ht!]
\begin{center}
\caption{A single simulation of the efficient log-return.}
\label{figure:simulation-illustration}
\begin{tabular}{cc}
\small{Panel A: Sample path of $r_{t}^{c}$.} & \small{Panel B: Sample path of $r_{t}^{d}$.}\\
\includegraphics[height=8cm,width=0.48\textwidth]{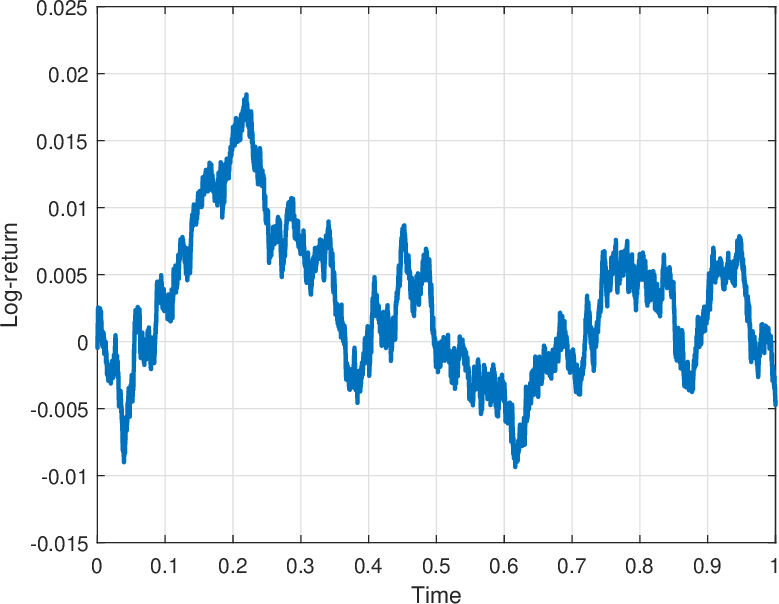} &
\includegraphics[height=8cm,width=0.48\textwidth]{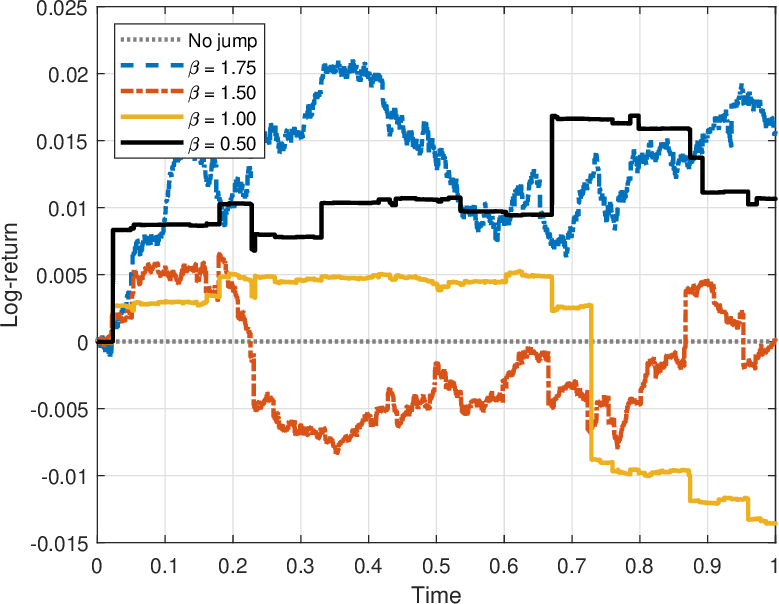}\\
\end{tabular}
\begin{scriptsize}
\parbox{\textwidth}{\emph{Note.} We show a realization of the components of the efficient log-return $r_{t}$, which is a superposition of a continuous sample path \citet*{heston:93a}-type stochastic volatility model ($r_{t}^{c}$ in Panel A) and a pure jump process of tempered stable-type ($r_{t}^{d}$ in Panel B). The latter is plotted as a function of the activity index $\beta$ using a common random seed.}
\end{scriptsize}
\end{center}
\end{figure}

We next explain how $r_{t}$ is disrupted with microstructure noise, $\epsilon_{t}$. In particular, we study four types of additive measurement error:
\begin{equation}
\epsilon_{t} =
\begin{cases}
\displaystyle \gamma \sqrt{ \text{IV} \Delta_{n}} \epsilon_{t}^{N}, & \text{``Gaussian'',} \\[0.10cm]
\displaystyle \gamma \sqrt{ \text{IV} \Delta_{n}} \epsilon_{t}^{T} \sqrt{ \frac{ \eta - 2}{\eta}}, & \text{``T-distributed'',}  \\[0.10cm]
\displaystyle \gamma \sqrt{ \text{IV} \Delta_{n}} \frac{\epsilon_{t}^{N} + \phi \epsilon_{t-1}^{N}}{1 + \phi^{2}}, & \text{``Autocorrelated'',} \\[0.10cm]
\displaystyle \gamma \sigma_{t} \sqrt{ \Delta_{n}} \epsilon_{t}^{N}, & \text{``Heteroscedastic''}.
\end{cases}
\end{equation}
where $\epsilon_{t}^{N}$ and $\epsilon_{t}^{T}$ are i.i.d. sequences of Gaussian and $t$-distributed (with $\eta$ degrees of freedom) random variables.

While the Gaussian i.i.d. draws are standard, $t$-distributed noise is less common. The latter, however, can generate notable outliers if $\eta$ is small. This leads to infrequent---but large---bouncebacks in the noisy log-price series which is a common trait of raw high-frequency data. We set $\eta = 2.5$ to be consistent with this observation.

We also examine cases with autocorrelated and heteroscedastic noise. The former scenario assumes $\epsilon_{t}$ is MA(1) with degree of memory determined by the parameter $\phi$ which is fixed at $\phi = -0.77$. Throughout, the variance of the noise (on a per increment basis) changes with the level of volatility, which is a well-documented feature in practice \citep*[e.g.,][]{bandi-russell:08a,kalnina-linton:08a}. In scenario 1 -- 3, the noise scales with the square root of the integrated variance $\text{IV}  = \int_{0}^{1} \sigma_{s}^{2} \text{d}s$, while in the last scenario it is a function of $\sigma_{t}$ and thus time-varying within each simulation.

The $\gamma$ parameter is the noise-to-volatility ratio of \citet*{oomen:06a} which determines the relative strength of the microstructure component. We assume $\gamma = 5$, which corresponds to heavy noise pertubation \citep[e.g.,][]{ait-sahalia-jacod-li:12a, christensen-oomen-podolskij:14a}.

We follow standard practice in the pre-averaging literature by setting $g(x) = \min(x,1-x)$ and $k_{n} = [ \theta \sqrt{n}]$. We fix the tuning parameter at $\theta = [1,2,3]/3$, which is in line with prior work \citep*[e.g.,][]{ait-sahalia-jacod-li:12a, christensen-kinnebrock-podolskij:10a, christensen-oomen-podolskij:14a}. The threshold for jump-truncation is implemented with $\bar{ \omega} = 0.24$ and $c = 4, 5, 6$ as a robustness check.\footnote{The choice of minimal value for $c$ follows \citet*{li-todorov-tauchen:17b}. If this parameter is chosen too small, the threshold becomes rather narrow and the truncation device starts to eliminate continuous log-returns (drawn from states of high stochastic volatility), which instills a downward bias in the truncated pre-averaged bipower variation, as opposed to the non-truncated pre-averaged realized variance. This leads to a widening of their difference, a measure of the quadratic variation of the jump component, and hence an overrejection under the null.}

We compute $\Sigma_{n}^{*}$ with $L = 10,15,20$ and $p = 10,15,20$ following the guidelines laid out in \citet*{christensen-podolskij-thamrongrat-veliyev:17a}. However, the rejection rates of the jump test statistic are not sensitive to the concrete choice of tuning parameters. We therefore restrict attention to $L = 10$ and $p = 10$. The corresponding tables for other combinations of tuning parameters are available on request.

\begin{sidewaystable}[ht!]
\setlength{ \tabcolsep}{0.30cm}
\begin{center}
\caption{Rejection rate of jump test statistic ($L = 10$ and $p = 10$). \label{table:Jtest-L=10-p=10.tex}}
\smallskip
\begin{tabular}{lccccccccccccccc}
\hline \hline
 && \multicolumn{11}{c}{pre-averaging} && \multicolumn{2}{c}{5-minute sampling} \\
\cline{3-13} \cline{15-16}
 && \multicolumn{3}{c}{$\theta = 1/3$} && \multicolumn{3}{c}{$\theta = 1/2$}  && \multicolumn{3}{c}{$\theta = 1/1$}\\
 && $c=4$ & $c=5$ & $c=6$ && $c=4$ & $c=5$ & $c=6$ && $c=4$ & $c=5$ & $c=6$ && $p$ & $p^{*}$ \\
\cline{3-5} \cline{7-9} \cline{11-13} \cline{15-16}
\multicolumn{3}{l}{\textit{Panel A: Gaussian}}\\
No jump && 0.050 & 0.049 & 0.049 && 0.050 & 0.048 & 0.048 && 0.053 & 0.051 & 0.051 && 0.092 & 0.087 \\
$\beta = $ & 1.75 & 0.476 & 0.393 & 0.336 && 0.283 & 0.221 & 0.181 && 0.210 & 0.162 & 0.135 && 0.121 & 0.111 \\
& 1.50 & 0.832 & 0.754 & 0.687 && 0.610 & 0.507 & 0.422 && 0.454 & 0.354 & 0.283 && 0.169 & 0.153 \\
& 1.00 & 0.983 & 0.966 & 0.943 && 0.909 & 0.847 & 0.766 && 0.800 & 0.698 & 0.589 && 0.292 & 0.260 \\
& 0.50 & 0.997 & 0.994 & 0.990 && 0.976 & 0.955 & 0.914 && 0.928 & 0.872 & 0.785 && 0.398 & 0.342 \\
\multicolumn{3}{l}{\textit{Panel B: T-distributed}}\\
No jump && 0.119 & 0.097 & 0.087 && 0.059 & 0.055 & 0.054 && 0.053 & 0.051 & 0.051 && 0.092 & 0.089 \\
$\beta = $ & 1.75 & 0.522 & 0.436 & 0.377 && 0.287 & 0.222 & 0.183 && 0.211 & 0.164 & 0.134 && 0.121 & 0.112 \\
& 1.50 & 0.848 & 0.780 & 0.713 && 0.614 & 0.509 & 0.427 && 0.455 & 0.354 & 0.283 && 0.169 & 0.153 \\
& 1.00 & 0.989 & 0.972 & 0.951 && 0.911 & 0.846 & 0.769 && 0.802 & 0.699 & 0.588 && 0.292 & 0.268 \\
& 0.50 & 0.998 & 0.995 & 0.991 && 0.976 & 0.954 & 0.914 && 0.926 & 0.871 & 0.786 && 0.398 & 0.349 \\
\multicolumn{3}{l}{\textit{Panel C: Autocorrelated}}\\
No jump && 0.047 & 0.046 & 0.046 && 0.052 & 0.051 & 0.051 && 0.054 & 0.052 & 0.052 && 0.092 & 0.085 \\
$\beta = $ & 1.75 & 0.508 & 0.422 & 0.363 && 0.289 & 0.226 & 0.184 && 0.209 & 0.163 & 0.134 && 0.121 & 0.116 \\
& 1.50 & 0.855 & 0.785 & 0.721 && 0.618 & 0.513 & 0.431 && 0.455 & 0.356 & 0.286 && 0.169 & 0.147 \\
& 1.00 & 0.989 & 0.976 & 0.955 && 0.915 & 0.854 & 0.774 && 0.803 & 0.702 & 0.593 && 0.292 & 0.257 \\
& 0.50 & 0.998 & 0.996 & 0.993 && 0.977 & 0.958 & 0.919 && 0.928 & 0.875 & 0.789 && 0.398 & 0.330 \\
\multicolumn{3}{l}{\textit{Panel D: Heteroscedastic}}\\
No jump && 0.047 & 0.043 & 0.043 && 0.048 & 0.047 & 0.047 && 0.050 & 0.048 & 0.048 && 0.092 & 0.072 \\
$\beta = $ & 1.75 & 0.434 & 0.354 & 0.299 && 0.277 & 0.215 & 0.176 && 0.205 & 0.159 & 0.131 && 0.121 & 0.089 \\
& 1.50 & 0.793 & 0.713 & 0.635 && 0.597 & 0.492 & 0.405 && 0.450 & 0.349 & 0.281 && 0.169 & 0.118 \\
& 1.00 & 0.980 & 0.961 & 0.927 && 0.904 & 0.838 & 0.747 && 0.798 & 0.695 & 0.580 && 0.292 & 0.189 \\
& 0.50 & 0.996 & 0.993 & 0.986 && 0.975 & 0.950 & 0.905 && 0.926 & 0.865 & 0.780 && 0.398 & 0.246 \\
\hline \hline
\end{tabular}
\smallskip
\begin{scriptsize}
\parbox{0.98\textwidth}{\emph{Note.}
In the left-hand side of the table, we report the rejection rates of the feasible noise-robust pre-averaging jump test statistic in \eqref{equation:feasible-test-statistic}, i.e. the fraction of times (out of 10,000) $\mathcal{J}_{n}$ exceeds $\Phi^{-1}(1- \alpha)$,
where $\Phi( \cdot)$ is the standard normal distribution function and the nominal level of significance is $\alpha = 0.05$.
$\theta$ controls the length of the pre-averaging window $k_{n} = [\theta \sqrt{n}]$.
$c$ is the number of local diffusive standard deviations a pre-averaged log-return must exceed to be classified as a jump and truncated to zero.
$L$ is the number of subsamples in $\Sigma_{n}^{*}$, whereas $p$ is the block length (in multiples of $k_{n}$).
``No jump'' is the null hypothesis, while $\beta$ governs the activity level of the jump process under the alternative (larger values of $\beta$ imply jumps are smaller and resemble Brownian motion closer).
Each separate panel describes the structure of the microstructure noise.
In the right-hand side, we report the rejection rates of the \citet*{barndorff-nielsen-shephard:06a} jump test computed at a 5-minute sampling frequency (or $n = 78$).
The latter is not resistant to noise, so as a comparison we implement it both on the inaccessible noise-free log-price ($p$) and the observable contaminated log-price ($p^{*}$).
Further details are available in Appendix \ref{appendix:simulation}.
}
\end{scriptsize}
\end{center}
\end{sidewaystable}

\begin{sidewaystable}[ht!]
\setlength{ \tabcolsep}{0.30cm}
\begin{center}
\caption{Rejection rate of jump test statistic ($L = 10$ and $p = 10$).}
\label{table:Jtest-L=10-p=10-AJL.tex}
\smallskip
\begin{tabular}{lccccccccccccccc}
\hline \hline
 && \multicolumn{11}{c}{pre-averaging} && \multicolumn{2}{c}{5-minute sampling} \\
\cline{3-13} \cline{15-16}
 && \multicolumn{3}{c}{$\theta = 1/3$} && \multicolumn{3}{c}{$\theta = 1/2$}  && \multicolumn{3}{c}{$\theta = 1/1$}\\
 && $c=4$ & $c=5$ & $c=6$ && $c=4$ & $c=5$ & $c=6$ && $c=4$ & $c=5$ & $c=6$ && $p$ & $p^{*}$ \\
\cline{3-5} \cline{7-9} \cline{11-13} \cline{15-16}
\multicolumn{3}{l}{\textit{Panel A: Gaussian}}\\
No jump && 0.054 & 0.052 & 0.052 && 0.055 & 0.055 & 0.055 && 0.063 & 0.063 & 0.063 && 0.092 & 0.087 \\
$\beta = $ & 1.75 & 0.599 & 0.523 & 0.443 && 0.388 & 0.310 & 0.244 && 0.278 & 0.215 & 0.170 && 0.121 & 0.111 \\
& 1.50 & 0.905 & 0.851 & 0.769 && 0.727 & 0.617 & 0.480 && 0.571 & 0.437 & 0.316 && 0.169 & 0.153 \\
& 1.00 & 0.994 & 0.988 & 0.968 && 0.959 & 0.910 & 0.794 && 0.887 & 0.771 & 0.566 && 0.292 & 0.260 \\
& 0.50 & 0.998 & 0.998 & 0.996 && 0.993 & 0.983 & 0.940 && 0.975 & 0.924 & 0.773 && 0.398 & 0.342 \\
\multicolumn{3}{l}{\textit{Panel B: T-distributed}}\\
No jump && 0.196 & 0.190 & 0.188 && 0.049 & 0.048 & 0.047 && 0.060 & 0.059 & 0.059 && 0.092 & 0.089 \\
$\beta = $ & 1.75 & 0.348 & 0.315 & 0.276 && 0.349 & 0.277 & 0.215 && 0.269 & 0.205 & 0.163 && 0.121 & 0.112 \\
& 1.50 & 0.624 & 0.580 & 0.521 && 0.679 & 0.572 & 0.446 && 0.560 & 0.427 & 0.306 && 0.169 & 0.153 \\
& 1.00 & 0.855 & 0.833 & 0.806 && 0.930 & 0.878 & 0.767 && 0.876 & 0.758 & 0.555 && 0.292 & 0.268 \\
& 0.50 & 0.925 & 0.916 & 0.907 && 0.979 & 0.965 & 0.923 && 0.969 & 0.918 & 0.770 && 0.398 & 0.349 \\
\multicolumn{3}{l}{\textit{Panel C: Autocorrelated}}\\
No jump && 0.000 & 0.000 & 0.000 && 0.000 & 0.000 & 0.000 && 0.004 & 0.004 & 0.004 && 0.092 & 0.085 \\
$\beta = $ & 1.75 & 0.242 & 0.222 & 0.191 && 0.220 & 0.161 & 0.104 && 0.190 & 0.129 & 0.089 && 0.121 & 0.116 \\
& 1.50 & 0.641 & 0.600 & 0.538 && 0.573 & 0.462 & 0.336 && 0.485 & 0.357 & 0.239 && 0.169 & 0.147 \\
& 1.00 & 0.960 & 0.948 & 0.919 && 0.918 & 0.857 & 0.738 && 0.854 & 0.730 & 0.523 && 0.292 & 0.257 \\
& 0.50 & 0.995 & 0.995 & 0.992 && 0.987 & 0.974 & 0.925 && 0.968 & 0.911 & 0.760 && 0.398 & 0.330 \\
\multicolumn{3}{l}{\textit{Panel D: Heteroscedastic}}\\
No jump && 0.142 & 0.131 & 0.131 && 0.050 & 0.050 & 0.050 && 0.055 & 0.055 & 0.055 && 0.092 & 0.072 \\
$\beta = $ & 1.75 & 0.583 & 0.513 & 0.440 && 0.372 & 0.294 & 0.230 && 0.270 & 0.205 & 0.160 && 0.121 & 0.089 \\
& 1.50 & 0.875 & 0.821 & 0.741 && 0.703 & 0.589 & 0.458 && 0.561 & 0.430 & 0.308 && 0.169 & 0.118 \\
& 1.00 & 0.990 & 0.983 & 0.959 && 0.953 & 0.897 & 0.783 && 0.883 & 0.760 & 0.559 && 0.292 & 0.189 \\
& 0.50 & 0.998 & 0.997 & 0.994 && 0.991 & 0.981 & 0.935 && 0.974 & 0.920 & 0.768 && 0.398 & 0.246 \\
\hline \hline
\end{tabular}
\smallskip
\begin{scriptsize}
\parbox{0.98\textwidth}{\emph{Note.}
In the left-hand side of the table, we report the rejection rates of the noise-robust pre-averaging jump test statistic from \citet{ait-sahalia-jacod-li:12a}, i.e. the fraction of times (out of 10,000) $\mathcal{J}_{n}$ subceeds $\Phi^{-1}(\alpha)$,
where $\Phi( \cdot)$ is the standard normal distribution function and the nominal level of significance is $\alpha = 0.05$.
$\theta$ controls the length of the pre-averaging window $k_{n} = [\theta \sqrt{n}]$.
$c$ is the number of local diffusive standard deviations a pre-averaged log-return must exceed to be classified as a jump and truncated to zero.
``No jump'' is the null hypothesis, while $\beta$ governs the activity level of the jump process under the alternative (larger values of $\beta$ imply jumps are smaller and resemble Brownian motion closer).
Each separate panel describes the structure of the microstructure noise.
In the right-hand side, we report the rejection rates of the \citet*{barndorff-nielsen-shephard:06a} jump test computed at a 5-minute sampling frequency (or $n = 78$).
The latter is not resistant to noise, so as a comparison we implement it both on the inaccessible noise-free log-price ($p$) and the observable contaminated log-price ($p^{*}$).
Further details are available in Appendix \ref{appendix:simulation}.
}
\end{scriptsize}
\end{center}
\end{sidewaystable} 

In Table \ref{table:Jtest-L=10-p=10.tex}, we report the properties of the jump test statistic at the 5\% significance level. Under the null $\mathcal{H}_{0}$ of no jumps, the rejection rates are close to the to the nominal level of significance. In Panel B, we observe a slight overrejection for $t$-distributed noise and minimal pre-averaging, which is insufficient to combat the large outliers generated by this distribution. However, increasing $\theta$ solves the problem.

Next, consider the results conducted under the alternative, $\mathcal{H}_{a}$. A near-perfect power close to 100\% is recorded for the lowest jump activity indexes. As $\beta$ increases, however, the rejection rate goes down. This reduction is consistent with \citet*{ait-sahalia-jacod-li:12a} and happens because the larger is $\beta$, the more the sample path of $r_{t}^{d}$ resembles a Brownian motion (with smaller and more erratic increments, cf. Figure \ref{figure:simulation-illustration}). This makes it tough for the test to discriminate between $\mathcal{H}_{0}$ and $\mathcal{H}_{a}$ at least over discrete intervals of fixed length $\Delta_{n}$. Moreover, we see that power is a decreasing function of $c$. The intuition is that as $c$ increases, $\check{BV}_{n}^{*}$ and $\Sigma_{n}^{*}$ are less jump resistant. This lack of robustness tends to reduce power, as also emphasized in the no-noise version of the test from \cite*{barndorff-nielsen-shephard:06a}.

Finally, we observe that the rejection rates are negatively related to $\theta$. The latter controls the pre-averaging horizon and too much smoothing reduces our ability to detect jumps on the trajectory of $p$. On the other hand, a larger pre-averaging window makes the estimator robust against more complicated noise structures than assumed here.

The right columns of Table \ref{table:Jtest-L=10-p=10.tex} compare our jump test to the noise-free version from \citet*{barndorff-nielsen-shephard:06a} implemented with a 5-minute realized variance and bipower variation. The latter are not robust to noise. As a consequence of this, the size and power of the standard jump test are distorted. In particular, the test is oversized under the null (rejects too often), while it lacks power under the alternative (rejects too little). This is true regardless of whether we allow the test statistic access to the latent efficient (no noise) log-price, $p$, or whether the statistic is computed from the observable noisy log-price, $p^{*}$. This finding is consistent with the empirical results in Table \ref{table:rv-descriptive}.

In Table \ref{table:Jtest-L=10-p=10-AJL.tex}, we look at the rejection rates of the noise-robust jump test statistic from \citet{ait-sahalia-jacod-li:12a}. As our testing procedure, it is based on pre-averaging. However, the main building block is a ratio of power variation estimators, as opposed to the difference of multipower variation estimators advocated here. This leads to small discrepancies in the implementation, and their test statistic appears a bit more sensitive to the structure of the noise. In particular, as with our test it rejects too often (even more so) with $t$-distributed noise, but it also does that with heteroscedastic noise. Here, increasing the pre-averaging horizon again removes the effect. Meanwhile, their test is rather conservative under $\mathcal{H}_{0}$ with conditionally autocorrelated noise. It prevails even with a larger pre-averaging window and leads to a sustained drop in the rejection rate under $\mathcal{H}_{a}$, yielding a lower power. We note that this setting was not inspected in \citet{ait-sahalia-jacod-li:12a}, since it falls outside of their theoretical framework. Apart from that, the conclusion of their test is close to that of our test.

Our empirical application follows the parameter settings above with $\theta = 1/2$ for pre-averaging, $c = 5$ for jump truncation, and $L = p = 10$ for the subsampler.\footnote{In practice, fixing $\theta$ means $k_{n}$ is time-varying since it changes with the actual number of high-frequency data available in each trading session, $n$. As a robustness check, we also experimented with fixed $k_{n} = 25$ and $k_{n} = 50$ without any noticeable change in the results.}

\clearpage

\section{Trading activity in the pre- and after-hours market} \label{appendix:pre-market}

Our analysis of earnings announcements concentrates on the after-hours market and excludes announcements that occur in the pre-market. We make this choice because the trading volume of announcing firms tend to be much smaller in the pre-market than in the after-hours market as can be seen by comparing Panel A and B in Figure \ref{figure:sp500-volume.eps}, see also the contrast between Table \ref{table:sp500-volume-pm.tex} and Table \ref{table:sp500-volume-am.tex}. In general, for the most liquid stocks trading volume on earnings announcement days is around 5--10 times larger during the after-hours session that during the pre-market session.

\setcounter{table}{0}
\setcounter{figure}{0}

\begin{figure}[ht!]
\begin{center}
\caption{S\&P 500 extended trading session transaction count.}
\label{figure:sp500-volume.eps}
\begin{tabular}{cc}
\small{Panel A: Pre-market trading (6:00am--9:30am).} & \small{Panel B: After-hours market (4:00pm--6:30pm).} \\
\includegraphics[height=8cm,width=0.48\textwidth]{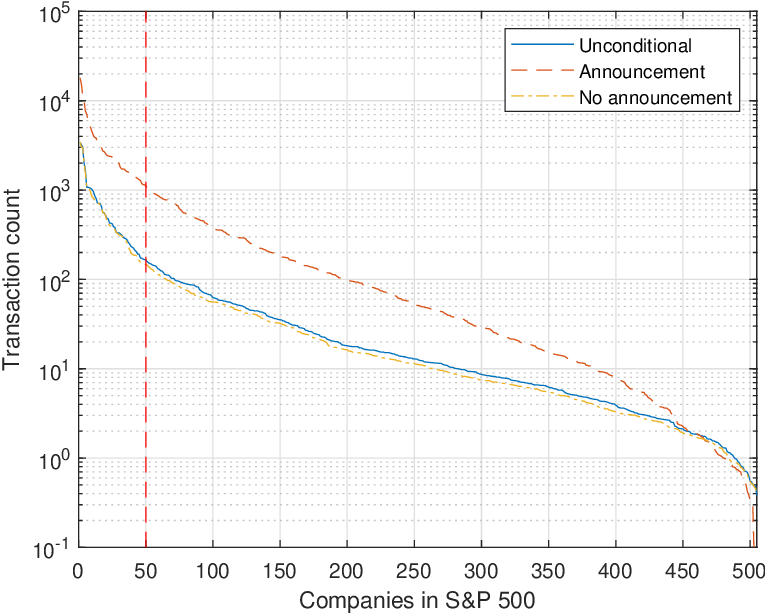} &
\includegraphics[height=8cm,width=0.48\textwidth]{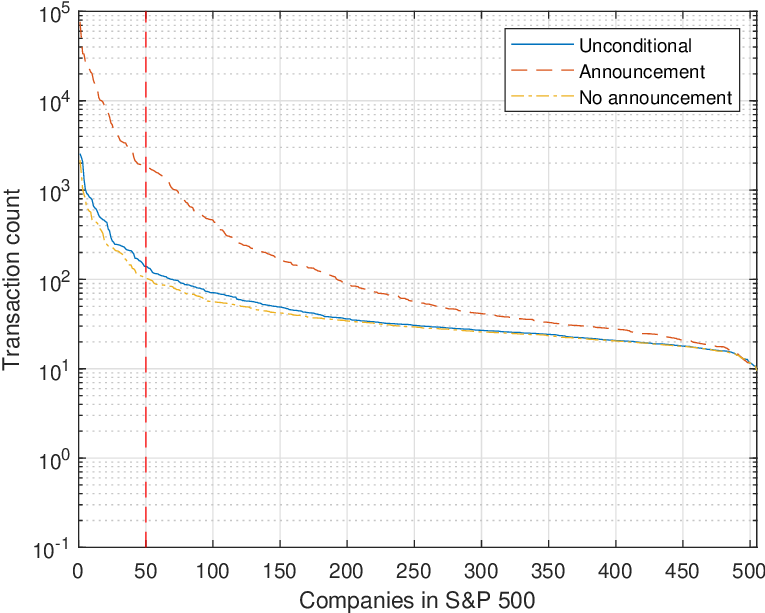}
\end{tabular}
\begin{scriptsize}
\parbox{\textwidth}{\emph{Note.} We download NYSE Trade and Quote (TAQ) high-frequency data for the constituent members of the S\&P 500 index as of 12/31/2020. We sort the companies by their average transaction counts for the sample period 06/02/2008--12/31/2020. Panel A shows the associated distribution for the pre-market session (6:00am--9:30am), whereas Panel B is for the after-hours session (4:00pm--6:30pm). Both are reported on a log-scale. We further split the unconditional transaction count distribution into days with and without earnings announcements. The vertical red dashed line indicates the transaction count for the fiftieth most liquid stock.}
\end{scriptsize}
\end{center}
\end{figure}

\begin{sidewaystable}[p!]
\begin{footnotesize}
\setlength{ \tabcolsep}{0.15cm}
\begin{center}
\caption{S\&P 500 companies by pre-market (6:00am--9:30am) trading activity. \label{table:sp500-volume-am.tex}}
\vspace*{-0.25cm}
\begin{tabular}{lrrrrrrrrrrrrrrrrrrrr}
\hline \hline
& \multicolumn{7}{c}{conditional on no announcement} && \multicolumn{7}{c}{conditional on announcement} && \multicolumn{4}{c}{announcement information}\\
\cline{2-8} \cline{10-16} \cline{18-21}
& & & & \multicolumn{4}{c}{quantile} & & & & & \multicolumn{4}{c}{quantile}\\
\cline{5-8} \cline{13-16}
ticker & mean & fraction & std. & 0.25 & 0.50 & 0.75 & 0.99 & & mean & fraction & std. & 0.25 & 0.50 & 0.75 & 0.99 & & $n_{ \text{EA}}$ & time & $r_{30m}^{ \text{EA}}$ & $z_{ \text{EPS}}$ \\
\hline
GE & 840 & 0.49 & 2,502 & 150& 334& 705& 8,395 & & 11,516 & 3.66 & 9,239 & 4,195 & 7,776 & 17,713 & 37,538 & & 49 & 6:30am & $\underset{(1.882)}{0.752}$ & $\underset{(1.648)}{1.209}$
\\
BAC & 2,990 & 1.25 & 6,085 & 656& 1,412& 3,157& 25,089 & & 18,007 & 4.86 & 14,185 & 6,815 & 11,798 & 23,330 & 52,800 & & 50 & 7:00am & $\underset{(2.258)}{-0.530}$ & $\underset{(1.772)}{0.934}$
\\
WMT & 147 & 0.18 & 544 & 12& 32& 87& 1,908 & & 4,954 & 2.99 & 7,725 & 1,237 & 2,377 & 4,912 & 46,543 & & 50 & 7:00am & $\underset{(2.220)}{0.327}$ & $\underset{(1.648)}{1.226}$
\\
TGT & 65 & 0.11 & 363 & 2& 9& 35& 792 & & 3,843 & 2.59 & 4,889 & 809 & 2,406 & 4,853 & 23,711 & & 50 & 6:30am & $\underset{(1.836)}{-0.119}$ & $\underset{(2.354)}{1.494}$
\\
BBY & 56 & 0.10 & 649 & 3& 9& 25& 560 & & 5,447 & 4.14 & 3,525 & 3,101 & 5,409 & 7,280 & 20,275 & & 34 & 7:00am & $\underset{(3.572)}{-0.872}$ & $\underset{(2.473)}{3.157}$
\\
CAT & 113 & 0.26 & 179 & 27& 61& 133& 751 & & 3,862 & 3.39 & 2,981 & 1,921 & 3,228 & 4,909 & 16,765 & & 50 & 7:30am & $\underset{(2.381)}{-0.335}$ & $\underset{(2.768)}{1.643}$
\\
KR & 33 & 0.05 & 359 & 0& 3& 14& 348 & & 2,050 & 1.41 & 2,593 & 248 & 796 & 3,392 & 9,053 & & 51 & 8:45am & $\underset{(4.313)}{-0.742}$ & $\underset{(1.869)}{1.213}$
\\
GM & 216 & 0.26 & 775 & 22& 53& 150& 2,645 & & 2,994 & 2.22 & 2,760 & 1,190 & 2,321 & 3,745 & 13,994 & & 40 & 7:30am & $\underset{(1.837)}{0.155}$ & $\underset{(1.850)}{1.541}$
\\
JPM & 387 & 0.26 & 920 & 82& 170& 373& 3,074 & & 5,993 & 2.75 & 4,624 & 3,069 & 4,374 & 7,394 & 22,071 & & 50 & 6:59am & $\underset{(0.963)}{-0.092}$ & $\underset{(1.608)}{1.596}$
\\
HD & 75 & 0.14 & 163 & 10& 26& 63& 759 & & 2,377 & 2.34 & 2,016 & 1,043 & 1,657 & 3,029 & 8,390 & & 50 & 6:00am & $\underset{(0.617)}{0.046}$ & $\underset{(1.872)}{2.012}$
\\
BA & 766 & 0.46 & 3,341 & 12& 32& 156& 14,894 & & 3,179 & 2.27 & 5,957 & 452 & 908 & 2,825 & 35,952 & & 49 & 7:30am & $\underset{(1.387)}{0.204}$ & $\underset{(1.931)}{1.271}$
\\
BIIB & 58 & 0.18 & 527 & 2& 9& 27& 748 & & 1,593 & 2.06 & 4,495 & 66 & 414 & 1,399 & 30,415 & & 50 & 7:15am & $\underset{(3.417)}{0.551}$ & $\underset{(1.993)}{1.769}$
\\
C & 1,566 & 0.74 & 4,860 & 132& 367& 1,218& 18,126 & & 10,189 & 3.31 & 14,867 & 2,714 & 5,757 & 11,118 & 87,389 & & 50 & 7:59am & $\underset{(1.605)}{0.096}$ & $\underset{(1.858)}{1.103}$
\\
LOW & 37 & 0.07 & 138 & 2& 7& 22& 462 & & 2,303 & 1.81 & 2,167 & 767 & 1,746 & 2,886 & 11,477 & & 50 & 6:00am & $\underset{(0.955)}{-0.074}$ & $\underset{(2.109)}{0.675}$
\\
MCD & 81 & 0.19 & 199 & 12& 28& 72& 829 & & 2,019 & 2.60 & 1,417 & 1,005 & 1,588 & 2,734 & 6,915 & & 50 & 7:58am & $\underset{(1.942)}{0.184}$ & $\underset{(2.213)}{1.190}$
\\
CVS & 64 & 0.09 & 374 & 1& 6& 30& 637 & & 2,007 & 1.18 & 4,258 & 127 & 390 & 1,917 & 21,491 & & 49 & 7:00am & $\underset{(0.900)}{-0.017}$ & $\underset{(2.139)}{1.856}$
\\
VZ & 150 & 0.15 & 741 & 20& 45& 105& 1,477 & & 1,649 & 1.34 & 1,363 & 777 & 1,256 & 2,244 & 6,216 & & 49 & 7:30am & $\underset{(0.680)}{-0.132}$ & $\underset{(0.965)}{0.598}$
\\
WFC & 363 & 0.18 & 1,907 & 36& 86& 221& 3,786 & & 4,484 & 1.74 & 4,485 & 1,738 & 3,165 & 4,726 & 21,858 & & 50 & 8:00am & $\underset{(2.463)}{-0.645}$ & $\underset{(1.737)}{0.316}$
\\
PG & 64 & 0.11 & 157 & 11& 29& 66& 524 & & 1,472 & 1.43 & 1,550 & 336 & 1,122 & 1,782 & 8,064 & & 48 & 7:00am & $\underset{(1.410)}{0.110}$ & $\underset{(1.842)}{1.642}$
\\
F & 756 & 0.62 & 1,566 & 97& 263& 745& 6,792 & & 7,702 & 3.16 & 10,035 & 747 & 2,526 & 10,591 & 45,122 & & 49 & 7:00am & $\underset{(4.179)}{-0.035}$ & $\underset{(1.971)}{1.098}$
\\
GS & 187 & 0.34 & 630 & 28& 72& 177& 1,639 & & 4,095 & 3.98 & 4,096 & 1,477 & 2,646 & 5,155 & 15,568 & & 45 & 7:35am & $\underset{(1.443)}{-0.447}$ & $\underset{(2.145)}{2.107}$
\\
MS & 185 & 0.15 & 1,034 & 15& 45& 132& 1,448 & & 3,340 & 2.11 & 2,721 & 1,446 & 2,449 & 4,361 & 13,841 & & 47 & 7:15am & $\underset{(2.410)}{0.236}$ & $\underset{(2.258)}{1.753}$
\\
CCL & 686 & 0.69 & 3,379 & 21& 68& 209& 11,512 & & 2,057 & 2.48 & 2,096 & 1,024 & 1,488 & 2,386 & 13,470 & & 50 & 9:15am & $\underset{(3.970)}{0.162}$ & $\underset{(1.993)}{2.626}$
\\
DLTR & 16 & 0.06 & 185 & 1& 3& 10& 177 & & 1,150 & 1.51 & 1,574 & 191 & 441 & 1,737 & 6,894 & & 50 & 7:30am & $\underset{(1.632)}{0.143}$ & $\underset{(2.314)}{1.364}$
\\
\hline \hline
\end{tabular}
\smallskip
\begin{scriptsize}
\parbox{0.98\textwidth}{\emph{Note.}
In the left-hand side, we show descriptive statistics of the most liquid companies from the S\&P 500 index based on their pre-market (6:00am--9:30am) trading volume, which is separated into no announcement days (columns 2--8) and announcement days (columns 9--15).
Ticker is the stock listing symbol.
SIC is the standard industrial classification code.
Mean and Std. are the transaction count sample average and standard deviation.
Quantile are selected quantiles from the empirical transaction count distribution.
Fraction is the number of transactions executed in the pre-market session divided by the total transaction count for the extended trading session from 6:00am--4:00pm.
In the right-hand side, we report announcement information.
\#EA is the number of earnings announcements.
Time is the modal announcement time (Eastern Standard Time).
$z_{ \text{EPS}}$ and $r_{1m}^{ \text{EA}}$ are the sample average standardized unexpected earnings and one-minute post-announcement return (standard deviation across announcements reported below in parenthesis).
}
\end{scriptsize}
\end{center}
\end{footnotesize}
\end{sidewaystable} 

\clearpage

\section{A close--to--open return-based jump test} \label{appendix:close-to-open}

\setcounter{table}{0}
\setcounter{figure}{0}

Our noise-robust jump test requires a large number of high-frequency data and is best suited for assets that are actively traded in the after-hours market. This condition is often not valid for smaller and more illiquid firms with low trading volume. In this appendix, we propose an alternative inference strategy based on close-to-open information that relies on the opening price after the announcement relative to the closing price prior to the announcement. In the parlance of Section \ref{section:ea} in the main text, the idea is that the difference between the pre-announcement closing price, i.e. $P_{t-1}^{c} = \mathbb{E}^{ \mathbb{Q}} \left[v_{t} \mid \mathcal{I}_{{t-1}^{c}}^{ \text{public}} \right]$ and the post-announcement opening price next morning, i.e. $P_{t}^{o} = \mathbb{E}^{ \mathbb{Q}} \left[v_{1} \mid \mathcal{I}_{{t}^{o}}^{ \text{public}} \right]$, reflects the pricing implications of the announcement. The test can be viewed as an adaptation of the \citet{lee-mykland:08a} jump test to low-frequency data. It can be applied to a much broader cross-section of securities regardless of their after-hours trading activity and, while it does not possess the power of our noise-robust jump test based on tick-by-tick data, in practice we find that it identifies the majority of earnings announcement days as containing price jumps. Other advantages of this approach is that it (i) is suitable for both pre-market and after-hours releases, (ii) does not require knowledge of the exact announcement time, and (iii) is not impaired if trading is halted pending the announcement.

To explain how the test is designed, we let $r_{t}^{co} = p_{t}^{o} - p_{t-1}^{c}$ denote the close-to-open log-return of a stock on day $t$, where $p_{t}^{o} = \log(P_{t}^{o})$ and $p_{t-1}^{c} = \log(P_{t-1}^{c})$ are the opening and closing log-price on day $t$ and $t-1$. We assume $r_{t}^{co}$ follows a two-state process, where the state reflects whether there is an announcement during the overnight period as captured by the indicator variable $EA_{t}$:
\begin{equation*}
r_{t}^{co} \mid EA_{t} \sim
\begin{cases}
N(0, \sigma^{2}), & EA_{t} = 0,\\[0.10cm]
N(0, \sigma^{2}) + J_{t}, & EA_{t} = 1,
\end{cases}
\end{equation*}
where $J_{t} \sim N(0, \sigma_{EA}^{2})$ is an independent jump with $\sigma_{EA}^{2} \geq 0$.

Conditional on no announcement ($EA_{t} = 0$), close-to-open returns follow a centered normal distribution with variance $\sigma^{2}$ while conditional on an announcement ($EA_{t} = 1$), returns are further affected by a jump so that $r_{t}^{co} \mid \textit{EA}_{t} = 1 \sim N(0, \sigma^{2} + \sigma_{EA}^{2})$.

To test the null hypothesis of no announcement effect, ($J_{t} = 0$ or $\sigma_{EA}^{2} = 0$), we standardize the current close-to-open return with an estimator of the non-announcement standard deviation computed from the previous $L$ close-to-open returns, $\hat{ \sigma}^{2} = L^{-1} \sum_{l=1}^{L} (r_{t-l}^{co})^{2}$:
\begin{equation} \label{equation:co-test}
Z_{t} = \frac{r_{t}^{co}}{ \hat{ \sigma}}.
\end{equation}

Under the null hypothesis $\mathcal{H}_{0}: \sigma_{EA}^{2} = 0$, the distribution of $Z_{t}$ approaches a standard normal as $L \rightarrow \infty$, i.e. $Z_{t} \overset{d}{ \longrightarrow} N(0,1)$. Under the alternative hypothesis $\mathcal{H}_{a}: \sigma_{EA}^{2} > 0$, $Z_{t} \overset{d}{ \longrightarrow} N(0,1+ \sigma_{EA}^{2}/ \sigma^{2})$, which is overdispersed compared to the standard normal. This implies that $Z_{t}$ has a higher probability of rejecting the null hypothesis than the nominal level under the alternative. However, it does not define a consistent test, because power is not going to one.\footnote{In line with \citet{lee-mykland:08a}, the time horizon has to shrink to zero to get a consistent test. In particular, assume that under the null the variance of $r_{t}^{co}$ per unit of time is $\sigma^{2}$. Then, under the alternative, $Z_{t} \sim N(0,1+ \sigma_{EA}^{2} / ( \Delta \sigma^{2}))$, where $\Delta = t_{open} - t-1_{close}$. Hence, $P( \mathrm{reject} \ \mathcal{H}_{0} \mid \mathcal{H}_{a}) \rightarrow 1$ as $\Delta \rightarrow 0$. In our context, this limit is not feasible, however, due to the fixed placement of the opening and closing.} Nonetheless, if $\sigma_{EA}^{2}$ is large compared to $\sigma^{2}$, we can get close.\footnote{The power curve is readily calculated as $P( \mathrm{reject} \ \mathcal{H}_{0} \mid \mathcal{H}_{a}) = f( \sigma_{EA}^{2}/ \sigma^{2}) =  1-P(z_{ \alpha/2} < N(0,1+ \sigma_{EA}^{2}/ \sigma^{2}) < z_{1- \alpha/2})$. For example, if $\sigma_{EA}^{2}/ \sigma^{2} = 1,2,3$, the probability of rejecting the null at the $\alpha = 1\%$ nominal level of significance is 19.78\%, 39.06\%, and 51.96\%.} Arguably, in practice the variance of the announcement effect is much bigger than the variance of the regular non-announcement news component, so the empirical rejection rate for $Z_{t}$ across many announcement days may still serve as an effective  indicator of a jump effect.

To get a consistent test, we conduct a two-sample test for equality of variance over announcement and non-announcement days. In essence, this amounts to testing whether the variance of the standardized announcement close-to-open returns is equal to one. Within our framework, this can be done with a textbook variance ratio statistic evaluated through an $F$-distribution. However, in keeping with the model-free approach, we switch to the Brown-Forsythe test in our application, which is a more robust alternative.

To illustrate the much broader usage of our close-to-open return-based jump test, we download from CRSP open and close price information from the full set of members of the S\&P 500 index as of the redefinition on March 18, 2024. This list has 503 constituents, including the stocks analyzed in the main text.\footnote{We remove GOOG since its earnings announcement dates are double-counted with GOOGL.} The series are corrected for dividend distributions, stock splits, and other corporate actions affecting the price. We also obtain the placement of an earnings announcement either ``Before Market Open'', ``During Regular Trading'', or ``After Market Close'' from Zacks Investment Research. This suffices for the calculation of the close-to-open return test statistic which does not require knowledge about the exact timestamp. In addition, we obtain the individual S\&P 500 index weights from Bloomberg. The sample period is January 1, 2012 to December 31, 2023. Our empirical analysis implements the close-to-open return-based jump test by calculating $Z_{t}$ with $L = 22$, corresponding to estimating the standard deviation based on one month of previous close-to-open returns. However, our results are very robust to this choice.\footnote{In our implementation, we exclude previous announcement days from the calculation and employ the mean absolute deviation as a more robust estimator.}

\begin{figure}[ht!]
\begin{center}
\caption{Close-to-open return-based jump test.}
\label{figure:open-to-close}
\begin{tabular}{cc}
\small{Panel A: Test statistic (AAPL).} & \small{Panel B: Jump frequency.} \\
\includegraphics[height=8cm,width=0.48\textwidth]{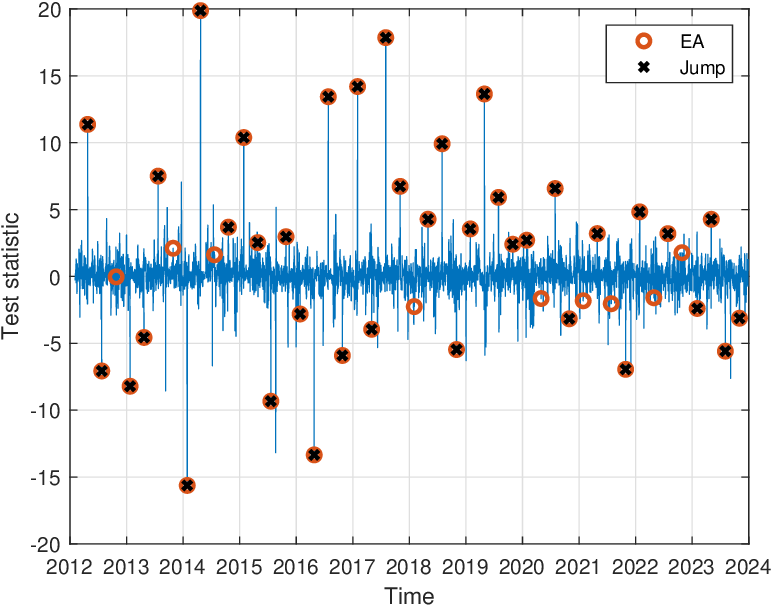} &
\includegraphics[height=8cm,width=0.48\textwidth]{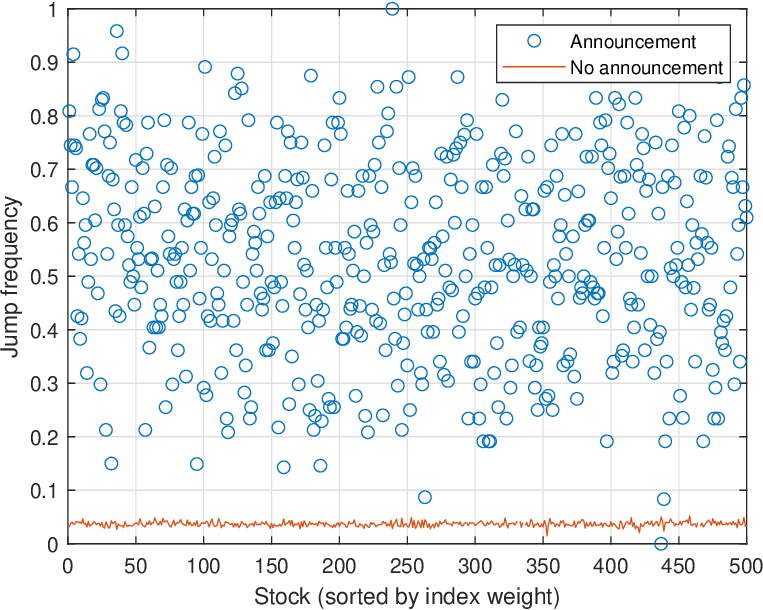}
\end{tabular}
\begin{scriptsize}
\parbox{\textwidth}{\emph{Note.} We report some properties of the close-to-open return-based jump test statistic. In Panel A, we plot the test statistic for Apple. An announcement date is highlighted with a circle (a significant test is further indicated with a cross). In Panel B, we show the rejection rate for announcement and non-announcement days across companies in the S\&P 500, sorted in descending order by index weight.}
\end{scriptsize}
\end{center}
\end{figure}

Figure \ref{figure:open-to-close} reports the outcome for our close-to-open return-based jump test. In Panel A, we plot the test statistic over time for Apple (AAPL), highlighting earnings announcement days with circles. At the 1\% level of significance, the close-to-open test flags 74.47\% of these announcements as containing a jump, as indicated with a crossed circle. This compares to 100\% detected by the noise-robust test using high-frequency data, thus demonstrating the lower power of the close-to-open jump test. In Panel B, we plot the average jump frequency across the members of the S\&P 500, sorted in descending order by index weight and separately for announcement and non-announcement days. Overall, the rejection rate on non-announcement days is 3.68\% which is indistinguishable from the 3.67\% reported for the pre-averaging jump test during after-hours trading sessions without announcements. Conversely, on announcement days the jump frequency is 52.86\%, notably lower than the 91.06\% obtained from our tick-by-tick jump test during after-hours trading sessions with announcements, but notably higher than the 40.40\% jump frequency reported by the 5-minute jump test.

The robust Brown-Forsythe test of equality of variance is firmly rejected across the board with $P$-values no larger than 0.0001. Thus, the close-to-open analysis provides compelling evidence of an announcement jump effect. We should note that there is no discernible serial correlation in the first or second moments of the close-to-open return-based jump test statistic. This is important since the variance ratio test presumes random sampling.

We next turn to the logistic regression for the stock price jump probability. As in Section \ref{section:logit}, we define an indicator variable based on our close-to-open return-based jump test statistic
\begin{equation*}
J_{it}^{oc} =
\begin{cases}
  1, & Z_{it} > q, \\
  0, & \text{otherwise.}
\end{cases}
\end{equation*}
We then estimate the logit model with $J_{it}^{oc}$ as dependent variable, but because information about after-hours trading activity is missing in the CRSP database, we exclude the variables $\log(V^{ \text{NAF}})$ and $\log(V^{ \text{AF}})$ from the analysis and only employ the remaining three covariates:
\begin{equation} \label{equation:logit-naf-co}
P \left(J_{jt}^{oc, \text{NAF}} = 1 \mid EA_{it} = 1 \right) = F \left( a + b_{1} IP_{ij} + b_{2} D^{ \text{EARLY}}_{it} + b_{3} D^{ \text{LATE}}_{it} \right).
\end{equation}
The results are reported in Table \ref{table:logit-co-jump-close-to-open}. They are entirely consistent with the results based on after-hours high-frequency data. In particular, industry proximity has a positive and significant spillover effect on the co-jump probability. Moreover, the coefficients on the early and late dummy variables are both significant with the correct positive and negative signs.

\begin{table}[ht!]
\setlength{\tabcolsep}{0.60cm}
\begin{center}
\caption{Logit estimates for the co-jump probability.}
\label{table:logit-co-jump-close-to-open}
\begin{tabular}{lrrrrr}
\hline
\hline
Variable & \multicolumn{1}{c}{(1)} & \multicolumn{1}{c}{(2)} & \multicolumn{1}{c}{(3)} & \multicolumn{1}{c}{(4)} \\ \hline
Intercept & $\underset{(0.002)}{-3.128}$ & $\underset{(0.002)}{-3.132}$ & $\underset{(0.002)}{-3.130}$ & $\underset{(0.002)}{-3.133}$ \\
$IP$ & & $\underset{(0.010)}{0.063}$ & & $\underset{(0.010)}{0.061}$ \\
$D^{ \text{EARLY}}$ & & & $\underset{(0.004)}{0.091}$ & $\underset{(0.004)}{0.091}$ \\
$D^{ \text{LATE}}$ & & & $\underset{(0.004)}{-0.092}$ & $\underset{(0.004)}{-0.092}$ \\
\hline
\hline
\end{tabular}
\begin{scriptsize}
\parbox{\textwidth}{\emph{Note.} We estimate the logit regression: $P \big(J_{jt}^{ \text{NAF}} = 1 \mid EA_{it} = 1 \big) = F \big( a + b_{1} IP_{ij} + b_{2} D^{ \text{EARLY}}_{it} + b_{3} D^{ \text{LATE}}_{it} \big)$, where $F$ is the logistic distribution function. $J_{jt}^{ \text{oc, NAF}}$ is equal to one if there is a jump in the price of the non-announcing firm (NAF) $j$ on day $t$, zero otherwise. $EA_{it}$ is equal to one if the announcing firm (AF) $i$ releases its quarterly earnings report on day $t$, zero otherwise. $IP_{ij}$ is an industry proximity score between 0 (no overlap) and 1 (perfect overlap) calculated from the SIC codes of company $i$ and $j$ following \citet{wang-zajac:07a}. $D^{ \text{EARLY}}_{it}$ and $D^{ \text{LATE}}_{it}$ are indicator variables of whether the AF discloses its fiscal statement early or late in the earnings cycle. The table reports parameter estimates of the full model in column (4) and of restricted versions in column (1)--(3). Standard errors are shown in parenthesis below the parameter estimate. The number of observations is 10,273,595.}
\end{scriptsize}
\end{center}
\end{table}

\clearpage

\section{Supplemental material for second half of stock sample} \label{appendix:supplemental}

\setcounter{table}{0}
\setcounter{figure}{0}

\begin{sidewaystable}[p!]
\begin{footnotesize}
\setlength{ \tabcolsep}{0.15cm}
\begin{center}
\caption{ \normalsize S\&P 500 companies by after-hours (4:00pm--6:30pm) trading activity.}
\label{table:sp500-volume-pm-supplemental.tex}
\vspace*{-0.25cm}
\begin{tabular}{lcrrrrrrrrrrrrrrrrrrrr}
\hline \hline
& & \multicolumn{7}{c}{Conditional on no announcement} && \multicolumn{7}{c}{Conditional on announcement} && \multicolumn{4}{c}{Announcement information}\\
\cline{3-9} \cline{11-17} \cline{19-22}
& & & & & \multicolumn{4}{c}{Quantile} & & & & & \multicolumn{4}{c}{Quantile}\\
\cline{6-9} \cline{14-17}
Ticker & SIC & Mean & Fraction & Std. & 0.25 & 0.50 & 0.75 & 0.99 & & Mean & Fraction & Std. & 0.25 & 0.50 & 0.75 & 0.99 & & \#EA & Time & $z_{ \text{EPS}}$ & $r_{1m}^{ \text{EA}}$\\
\hline
ULTA & 5990 & 33 & 0.37 & 52 & 8& 17& 41& 201 & & 3,368 & 11.32 & 4,549 & 427 & 2,576 & 4,311 & 22,330 & & 41 & 4:03pm & $\underset{(2.586)}{3.569}$ & $\underset{(5.370)}{0.773}$
\\
EA & 7372 & 57 & 0.19 & 95 & 22& 38& 71& 264 & & 3,113 & 5.63 & 3,677 & 1,014 & 1,951 & 4,140 & 19,349 & & 44 & 4:01pm & $\underset{(2.422)}{2.396}$ & $\underset{(3.016)}{-0.002}$
\\
ATVI & 7372 & 87 & 0.18 & 132 & 22& 41& 107& 535 & & 3,040 & 3.41 & 4,362 & 478 & 1,432 & 4,018 & 20,379 & & 41 & 4:05pm & $\underset{(2.358)}{3.147}$ & $\underset{(2.680)}{0.161}$
\\
AVGO & 3674 & 65 & 0.26 & 175 & 13& 33& 75& 472 & & 2,066 & 4.44 & 2,309 & 121 & 1,268 & 3,323 & 11,491 & & 43 & 4:05pm & $\underset{(1.192)}{1.051}$ & $\underset{(1.924)}{0.382}$
\\
EXPE & 4700 & 46 & 0.21 & 134 & 22& 31& 46& 219 & & 3,076 & 7.19 & 2,847 & 1,137 & 2,457 & 4,207 & 14,853 & & 43 & 4:01pm & $\underset{(1.591)}{0.339}$ & $\underset{(3.779)}{0.203}$
\\
ETSY & 7389 & 89 & 0.32 & 322 & 11& 32& 74& 658 & & 4,557 & 9.12 & 5,908 & 874 & 2,323 & 6,244 & 24,530 & & 23 & 4:05pm & $\underset{(2.276)}{0.106}$ & $\underset{(4.912)}{-0.259}$
\\
TXN & 3674 & 68 & 0.14 & 257 & 26& 40& 66& 394 & & 2,796 & 3.80 & 2,590 & 903 & 1,670 & 4,405 & 10,863 & & 47 & 4:30pm & $\underset{(2.074)}{1.724}$ & $\underset{(2.760)}{-0.435}$
\\
V & 7389 & 113 & 0.21 & 221 & 27& 49& 114& 989 & & 2,834 & 5.03 & 2,165 & 1,023 & 2,514 & 4,198 & 8,457 & & 46 & 4:05pm & $\underset{(1.501)}{1.953}$ & $\underset{(1.074)}{0.385}$
\\
GPS & 5651 & 52 & 0.14 & 189 & 15& 23& 38& 565 & & 1,884 & 2.96 & 3,531 & 300 & 575 & 1,874 & 20,942 & & 50 & 4:00pm & $\underset{(1.204)}{1.184}$ & $\underset{(4.359)}{0.136}$
\\
ADSK & 7372 & 36 & 0.19 & 31 & 20& 28& 44& 143 & & 1,754 & 4.80 & 2,006 & 446 & 1,108 & 2,228 & 8,943 & & 47 & 4:01pm & $\underset{(1.984)}{2.333}$ & $\underset{(4.168)}{-0.601}$
\\
BKNG & 4700 & 69 & 0.55 & 43 & 44& 58& 79& 233 & & 1,938 & 8.93 & 1,334 & 962 & 1,847 & 2,821 & 4,674 & & 37 & 4:01pm & $\underset{(1.830)}{2.204}$ & $\underset{(3.487)}{0.046}$
\\
NTAP & 3572 & 44 & 0.16 & 197 & 24& 33& 47& 148 & & 4,368 & 7.30 & 5,452 & 1,492 & 2,789 & 5,187 & 36,053 & & 46 & 4:01pm & $\underset{(2.429)}{1.859}$ & $\underset{(5.315)}{-0.486}$
\\
ANET & 3576 & 33 & 0.33 & 40 & 7& 23& 48& 128 & & 2,130 & 9.11 & 3,062 & 469 & 1,143 & 2,275 & 12,335 & & 25 & 4:05pm & $\underset{(2.007)}{3.993}$ & $\underset{(6.560)}{-0.561}$
\\
WYNN & 7011 & 71 & 0.30 & 145 & 24& 37& 60& 683 & & 1,779 & 5.70 & 1,638 & 735 & 1,449 & 2,389 & 8,046 & & 44 & 4:05pm & $\underset{(1.854)}{0.274}$ & $\underset{(2.710)}{-0.460}$
\\
AKAM & 7389 & 42 & 0.22 & 148 & 21& 30& 43& 179 & & 3,566 & 9.54 & 2,539 & 1,540 & 2,715 & 5,412 & 9,060 & & 50 & 4:01pm & $\underset{(2.636)}{2.513}$ & $\underset{(4.115)}{1.481}$
\\
TTWO & 7372 & 40 & 0.27 & 138 & 11& 18& 45& 223 & & 1,811 & 5.48 & 3,005 & 227 & 573 & 2,068 & 17,850 & & 14 & 4:05pm & $\underset{(2.283)}{2.176}$ & $\underset{(2.110)}{0.976}$
\\
WDC & 3572 & 57 & 0.18 & 112 & 18& 31& 62& 411 & & 1,678 & 3.25 & 2,573 & 283 & 770 & 2,108 & 14,628 & & 46 & 4:15pm & $\underset{(2.215)}{1.814}$ & $\underset{(2.780)}{0.303}$
\\
LRCX & 3559 & 51 & 0.23 & 65 & 20& 29& 59& 320 & & 1,072 & 2.80 & 1,797 & 73 & 140 & 1,599 & 7,800 & & 48 & 4:05pm & $\underset{(2.778)}{3.439}$ & $\underset{(1.618)}{0.412}$
\\
ALGN & 3842 & 27 & 0.32 & 33 & 9& 15& 39& 140 & & 1,314 & 5.59 & 2,245 & 94 & 432 & 1,406 & 10,699 & & 46 & 4:00pm & $\underset{(3.432)}{3.628}$ & $\underset{(5.337)}{-0.440}$
\\
AXP & 6199 & 65 & 0.16 & 118 & 24& 38& 62& 453 & & 1,701 & 2.95 & 1,832 & 278 & 1,224 & 1,849 & 7,815 & & 43 & 4:05pm & $\underset{(2.218)}{0.624}$ & $\underset{(1.956)}{0.057}$
\\
ISRG & 3842 & 36 & 0.54 & 89 & 14& 22& 44& 145 & & 1,780 & 13.10 & 1,334 & 888 & 1,287 & 2,194 & 7,118 & & 49 & 4:05pm & $\underset{(1.828)}{2.019}$ & $\underset{(3.707)}{1.427}$
\\
HPQ & 3570 & 84 & 0.13 & 559 & 29& 45& 71& 476 & & 5,664 & 5.39 & 5,608 & 1,195 & 3,892 & 9,440 & 22,986 & & 45 & 4:05pm & $\underset{(1.670)}{1.541}$ & $\underset{(2.362)}{0.335}$
\\
SWKS & 3674 & 54 & 0.23 & 74 & 21& 35& 59& 352 & & 1,571 & 4.01 & 1,255 & 761 & 1,353 & 1,939 & 5,865 & & 49 & 4:30pm & $\underset{(1.978)}{2.445}$ & $\underset{(3.110)}{0.753}$
\\
NOW & 7372 & 32 & 1.69 & 240 & 3& 11& 30& 210 & & 1,359 & 4.12 & 1,582 & 294 & 522 & 2,162 & 6,283 & & 34 & 4:10pm & $\underset{(2.253)}{2.485}$ & $\underset{(3.870)}{-0.500}$
\\
CSX & 4011 & 39 & 0.11 & 59 & 16& 28& 47& 190 & & 1,138 & 2.31 & 1,244 & 340 & 629 & 1,466 & 6,222 & & 48 & 4:02pm & $\underset{(1.819)}{1.728}$ & $\underset{(1.535)}{0.607}$
\\
\hline \hline
\end{tabular}
\smallskip
\begin{scriptsize}
\parbox{0.98\textwidth}{\emph{Note.}
In the left-hand side, we show descriptive statistics of the most liquid companies from the S\&P 500 index based on their after-hours market (4:00pm--6:30pm) trading volume, which is separated into no announcement days (columns 2--8) and announcement days (columns 9--15).
Ticker is the stock listing symbol.
SIC is the standard industrial classification code.
Mean and Std. are the transaction count sample average and standard deviation.
Quantile are selected quantiles from the empirical transaction count distribution.
Fraction is the number of transactions executed in the after-hours session divided by the total transaction count for the extended trading session from 9:30am--6:30pm.
In the right-hand side, we report announcement information.
\#EA is the number of earnings announcements.
Time is the modal announcement time (Eastern Standard Time).
$z_{ \text{EPS}}$ and $r_{1m}^{ \text{EA}}$ are the sample average standardized unexpected earnings and one-minute post-announcement return (standard deviation across announcements reported below in parenthesis).
}
\end{scriptsize}
\end{center}
\end{footnotesize}
\end{sidewaystable}

\clearpage

\begin{sidewaystable}[p!]
\setlength{\tabcolsep}{0.25cm}
\begin{center}
\caption{Sample average of realized variance, bipower variation, and jump proportion.}
\label{table:rv-descriptive-supplemental}
\vspace*{-0.25cm}
\begin{tabular}{lrrrrrrrrrrrrrrrrrrr}
\hline \hline
& \multicolumn{9}{c}{Panel A: Regular trading session (9:30am--4:00pm)} && \multicolumn{9}{c}{Panel B: Extended trading session (9:30am--6:30pm)} \\
\cline{2-10} \cline{12-20}
& \multicolumn{4}{c}{no EA} && \multicolumn{4}{c}{EA} && \multicolumn{4}{c}{no EA} && \multicolumn{4}{c}{EA} \\
\cline{2-5} \cline{7-10} \cline{12-15} \cline{17-20}
& RV & BV & JP & JF && RV & BV & JP & JF && RV & BV & JP & JF && RV & BV & JP & JF \\
ULTA & 17.7 & 16.4 & 5.7 & 2.4 && 29.2 & 28.1 & 5.4 & 7.3 && 17.7 & 16.4 & 5.7 & 2.6 && 80.4 & 36.9 & 50.4 & 78.0 \\
EA & 23.5 & 22.1 & 6.7 & 3.4 && 30.5 & 28.8 & 7.3 & 11.4 && 23.6 & 22.1 & 6.8 & 3.7 && 82.8 & 32.2 & 60.3 & 100.0 \\
ATVI & 23.0 & 21.2 & 9.2 & 5.6 && 28.3 & 26.3 & 10.4 & 17.1 && 23.1 & 21.2 & 9.5 & 6.4 && 69.7 & 23.6 & 58.6 & 90.2 \\
AVGO & 18.0 & 16.8 & 5.9 & 3.1 && 25.7 & 24.4 & 5.6 & 2.3 && 18.1 & 16.7 & 6.2 & 3.8 && 46.5 & 26.9 & 42.4 & 72.1 \\
EXPE & 24.0 & 22.7 & 5.4 & 2.9 && 29.9 & 28.5 & 6.4 & 4.7 && 24.0 & 22.6 & 5.6 & 3.2 && 110.8 & 41.7 & 61.6 & 93.0 \\
ETSY & 16.0 & 15.1 & 4.9 & 1.0 && 53.3 & 51.7 & 3.9 & 0.0 && 16.0 & 15.1 & 5.1 & 1.1 && 141.0 & 67.6 & 50.9 & 91.3 \\
TXN & 19.3 & 18.7 & 5.3 & 2.8 && 22.2 & 21.3 & 6.1 & 2.1 && 19.5 & 18.6 & 5.7 & 3.6 && 52.3 & 22.4 & 56.3 & 91.5 \\
V & 18.3 & 17.8 & 4.1 & 3.0 && 20.3 & 19.2 & 6.0 & 4.3 && 18.4 & 17.8 & 4.3 & 3.7 && 46.0 & 23.1 & 55.2 & 100.0 \\
GPS & 26.0 & 24.7 & 6.7 & 3.3 && 34.0 & 32.8 & 6.5 & 2.0 && 26.5 & 24.5 & 7.4 & 4.2 && 75.4 & 30.4 & 49.7 & 74.0 \\
ADSK & 21.7 & 20.6 & 5.5 & 3.4 && 29.1 & 28.1 & 5.0 & 2.1 && 21.7 & 20.6 & 5.5 & 3.5 && 74.8 & 25.4 & 58.7 & 91.5 \\
BKNG & 18.8 & 18.1 & 2.3 & 1.6 && 25.4 & 24.6 & 3.8 & 0.0 && 18.9 & 18.0 & 2.7 & 2.6 && 79.1 & 27.8 & 57.9 & 100.0 \\
NTAP & 22.0 & 20.9 & 6.1 & 3.3 && 26.8 & 25.5 & 6.8 & 4.3 && 22.1 & 20.9 & 6.2 & 3.4 && 93.2 & 25.9 & 64.7 & 95.7 \\
ANET & 12.2 & 11.8 & 2.6 & 0.4 && 29.6 & 28.6 & 3.7 & 0.0 && 12.2 & 11.8 & 2.6 & 0.5 && 86.4 & 28.0 & 57.2 & 92.0 \\
WYNN & 31.1 & 30.4 & 2.3 & 1.5 && 36.7 & 35.6 & 3.5 & 0.0 && 31.3 & 30.4 & 2.7 & 2.3 && 80.0 & 37.8 & 51.4 & 97.7 \\
AKAM & 21.5 & 20.6 & 4.8 & 2.0 && 29.2 & 28.2 & 4.9 & 0.0 && 21.6 & 20.5 & 5.0 & 2.3 && 108.2 & 49.1 & 54.7 & 100.0 \\
TTWO & 21.6 & 19.1 & 7.4 & 3.0 && 26.7 & 25.6 & 4.3 & 0.0 && 21.8 & 19.0 & 7.7 & 3.6 && 51.2 & 8.0 & 54.2 & 71.4 \\
WDC & 27.6 & 26.6 & 4.2 & 2.3 && 35.1 & 34.7 & 2.0 & 0.0 && 27.7 & 26.5 & 4.6 & 3.2 && 69.8 & 34.6 & 43.4 & 76.1 \\
LRCX & 23.0 & 22.1 & 4.1 & 2.4 && 25.7 & 23.8 & 7.9 & 4.2 && 23.0 & 22.1 & 4.2 & 2.8 && 38.0 & 23.5 & 30.6 & 45.8 \\
ALGN & 19.6 & 17.6 & 6.6 & 3.0 && 30.0 & 25.8 & 9.3 & 4.3 && 19.6 & 17.6 & 6.6 & 3.0 && 71.9 & 24.3 & 51.3 & 65.2 \\
AXP & 20.4 & 19.8 & 4.0 & 1.9 && 23.5 & 22.6 & 4.4 & 0.0 && 20.4 & 19.8 & 4.2 & 2.2 && 43.5 & 25.6 & 46.6 & 81.4 \\
ISRG & 16.2 & 15.3 & 3.9 & 2.6 && 24.6 & 23.6 & 4.4 & 0.0 && 16.3 & 15.2 & 4.0 & 2.7 && 68.3 & 27.6 & 49.6 & 93.9 \\
HPQ & 21.5 & 20.3 & 8.8 & 4.7 && 24.8 & 22.8 & 14.0 & 17.8 && 21.6 & 20.2 & 9.1 & 5.4 && 63.5 & 31.0 & 56.0 & 93.3 \\
SWKS & 26.1 & 24.9 & 4.0 & 1.7 && 31.8 & 29.5 & 7.0 & 6.1 && 26.2 & 24.6 & 4.7 & 2.9 && 70.8 & 26.3 & 58.9 & 93.9 \\
NOW & 16.6 & 15.7 & 5.1 & 1.7 && 30.0 & 28.6 & 5.7 & 2.9 && 16.6 & 15.7 & 5.2 & 1.8 && 72.5 & 24.7 & 55.3 & 79.4 \\
CSX & 21.1 & 20.2 & 6.1 & 2.9 && 25.3 & 24.1 & 6.6 & 2.1 && 21.1 & 20.2 & 6.2 & 3.0 && 45.8 & 20.1 & 53.4 & 83.3 \\
\hline
Average & 21.1 & 20.0 & 5.3 & 2.6 && 29.1 & 27.7 & 6.0 & 3.8 && 21.2 & 19.9 & 5.5 & 3.1 && 72.9 & 29.8 & 53.2 & 86.0 \\
\hline \hline
\end{tabular}
\smallskip
\begin{scriptsize}
\parbox{0.98\textwidth}{ \emph{Note.} 
This table reports the pre-averaged realized variance (RV) and pre-averaged bipower variation (BV).
We also construct the jump proportion (JP), which is defined as Jump Proportion = 1 - Bipower variation/Realized Variance.
We further compute the jump frequency (JF), which is derived from the jump indicator in \eqref{equation:jump-indicator}.
The realized variance and bipower variation are converted to annualized standard deviation in percent for convenience.
The full sample is split into days without (no EA) and with (EA) earnings announcements.
The measures are then averaged across subsamples by company.
In Panel A, we report the analysis for the regular trading session (9:30am--4:00pm), whereas Panel B displays the associated results for the extended trading session (9:30am--6:30am).
The bottom row reports the grand mean (``Average'') over all stock-days.
}
\end{scriptsize}
\end{center}
\end{sidewaystable}

\clearpage

\section{Evolution of the after-hours market}
\label{appendix:after-hours-market}

\setcounter{table}{0}
\setcounter{figure}{0}

In this appendix, we highlight key features of how the after-hours market has evolved over time. Our main finding is that trading volume has increased significantly from 2008 to 2020.

\subsection{Trading Volume}

The left panel in Figure \ref{figure:trading-volume} shows how the daily transaction counts evolved during the regular trading session and the after-hours market over our sample which runs from 06/02/2008 to 12/31/2020, using a log-scale to improve readability. The right panel shows the median number of shares traded. In both cases, we report a single daily value computed as a cross-sectional average over the 50 companies included in our empirical analysis.

While after-hours transaction volume remains smaller than the regular trading session counterpart, it increases notably over time particularly after the structural changes in the after-hours market design were introduced in 2015. The opposite shift is evident in the number of shares traded, which has trended down systematically from 2016 onward.

\begin{figure}[ht!]
\begin{center}
\caption{Transaction count and number of shares traded.}
\label{figure:trading-volume}
\begin{tabular}{cc}
\small{Panel A: Transaction count.} & \small{Panel B: Number of shares traded.} \\
\includegraphics[height=8cm,width=0.48\textwidth]{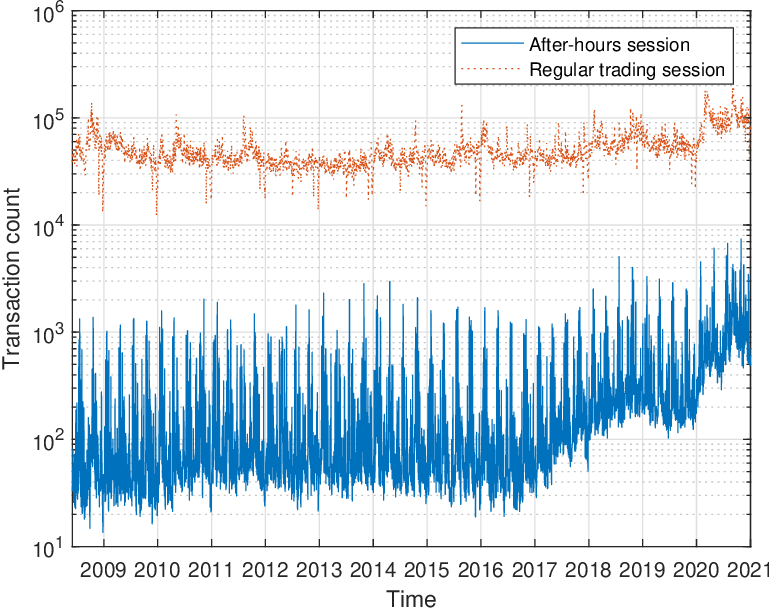} &
\includegraphics[height=8cm,width=0.48\textwidth]{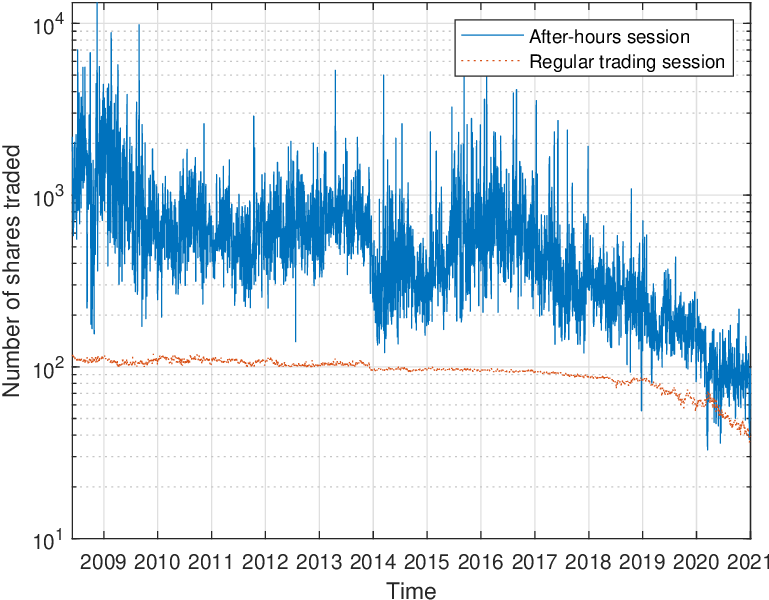} \\
\end{tabular}
\begin{scriptsize}
\parbox{\textwidth}{\emph{Note.} The figure plots transaction counts in Panel A and the number of shares traded in Panel B during the regular trading session and after-hours session. The statistics are computed daily for each firm in our sample and then averaged over the cross-section.}
\end{scriptsize}
\end{center}
\end{figure}

The left panel in Figure \ref{figure:relative-volume} shows the daily cross-sectional average transaction count in the after-hours market relative to the regular trading session. The relative trading volume increases from 0.17\% in 2008Q3--2009Q2 to 0.62\% in 2020. There are notable spikes in the series, which frequently exceeds 3\% and goes as high as 9\%. These surges occur mostly on days where large corporations publish their financial results. To corroborate this statement, the right panel in Figure \ref{figure:relative-volume} plots a kernel density estimate of the relative trading volume at the firm level on announcement days. To highlight the pronounced differences in trading activity, the figure includes the 99th percentile of the relative trading volume on non-announcement days (q$_{0.99}$), which falls far in the left tail of the distribution of relative trading volume on announcement days.

\begin{figure}[ht!]
\begin{center}
\caption{Cross-sectional and firm level relative transaction count.} \label{figure:relative-volume}
\begin{tabular}{cc}
\small{Panel A: Cross-sectional level} & \small{Panel B: Firm level} \\
\includegraphics[height=8cm,width=0.48\textwidth]{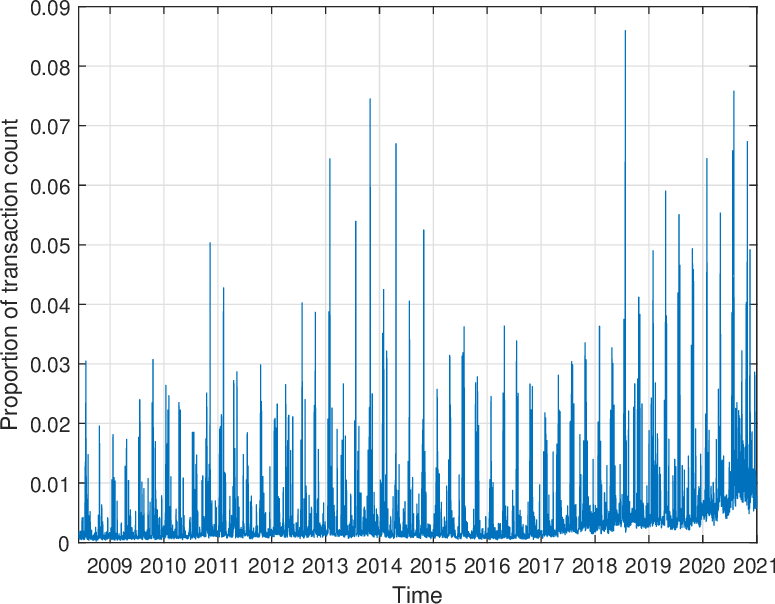} &
\includegraphics[height=8cm,width=0.48\textwidth]{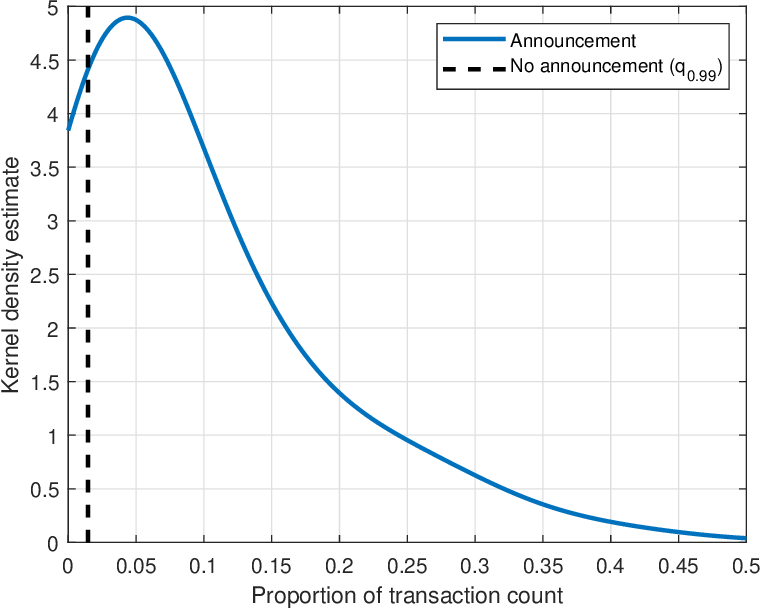}
\end{tabular}
\begin{scriptsize}
\parbox{\textwidth}{\emph{Note.} In Panel A, we plot the daily transaction count in the after-hours session relative to the transaction count in the regular trading session, both aggregated over the cross-section of firms in our sample. In Panel B, we show the relative transaction count distribution at the firm level on announcement days. The dashed line marks the 99th percentile relative transaction count on non-announcement days.}
\end{scriptsize}
\end{center}
\end{figure}

\subsection{Bid-Ask Spread}

In Panel A of Figure \ref{figure:bid-ask-spread}, we plot the daily cross-sectional average median quoted spread in basis points for the regular trading session and after-hours session, whereas Panel B reports the relative spread. The absolute spread (in basis points) is defined as
\begin{equation*}
\text{Spread} = 10000 \times \frac{ \mathrm{ask} - \mathrm{bid}}{ \mathrm{midquote}},
\end{equation*}
where $\mathrm{midquote} = (\mathrm{bid} + \mathrm{ask})/2$.

The relative spread is defined as $\text{Spread}^{ \text{after-hours}} / \text{Spread}^{ \text{regular trading}}$.

\begin{figure}[ht!]
\begin{center}
\caption{Bid-ask spread.}
\label{figure:bid-ask-spread}
\begin{tabular}{cc}
\small{Panel A: Absolute spread.} & \small{Panel B: Relative spread.} \\
\includegraphics[height=8cm,width=0.48\textwidth]{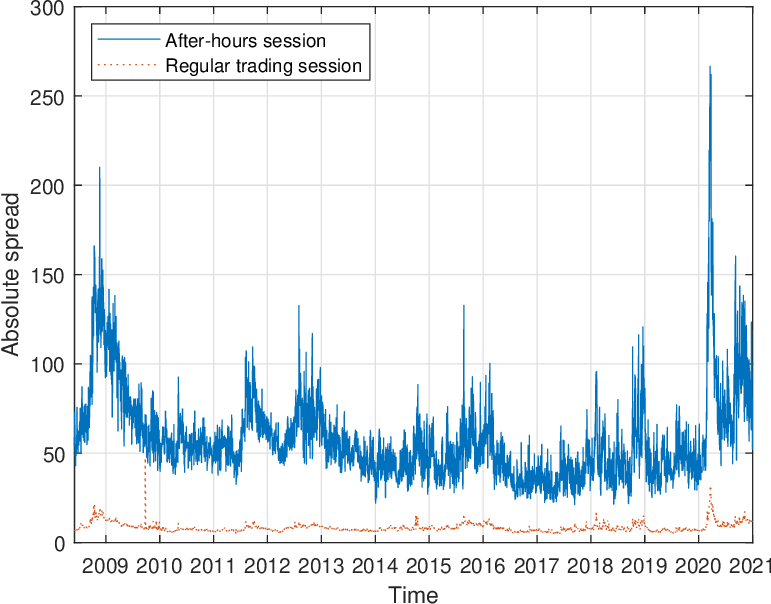} &
\includegraphics[height=8cm,width=0.48\textwidth]{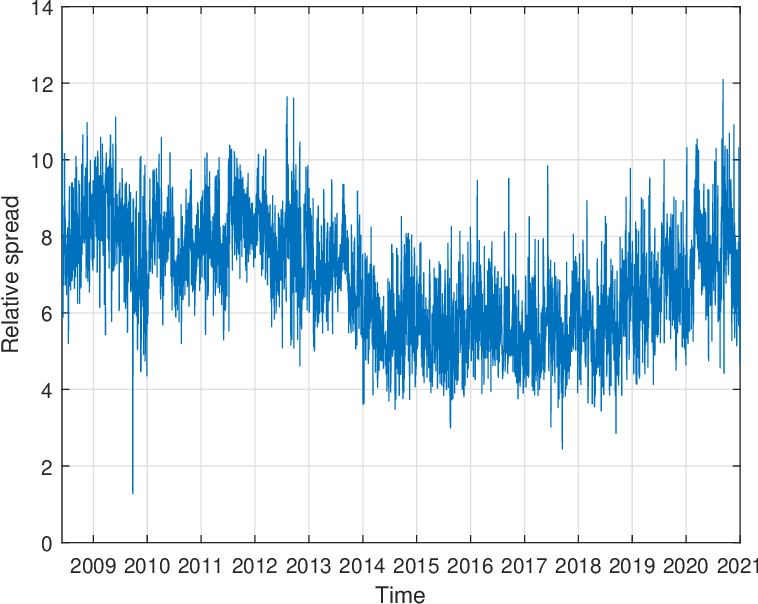} \\
\end{tabular}
\begin{scriptsize}
\parbox{\textwidth}{\emph{Note.} The figure plots the daily cross-sectional average median quoted bid-ask spread (in basis points) in the regular trading session and after-hours session in Panel A, while the relative spread is shown in Panel B.}
\end{scriptsize}
\end{center}
\end{figure}

The behavior of the quoted spread is consistent with a positive correlation between price volatility and trading costs. Spreads peak during the financial crisis in 2008--09 before starting a secular downward trend lasting until  2020 which saw spreads increase significantly and reach all-time highs after the outbreak of Covid-19. Median spreads in after-hours markets are quite stable, mostly falling in a range of four to ten times the spreads in the regular session.

\subsection{Price Discovery} \label{section:weighted-price-contribution}

To show how lengthy the price adjustment process is after an earnings announcement, we employ a standard measure of price discovery, namely the weighted price contribution (WPC) of \cite{barclay-warner:93a}. For company $i$ this is defined over small intraday intervals $t$:

\begin{equation} \label{equation:WPC}
WPC_{it} = \sum_{a=1}^{ \#EA_{i}} w_{a} \times \frac{r_{a,t}}{r_{a}}, \quad \text{for } t = 1, \dots, T_{A},
\end{equation}
where $r_{a,t}$ is the log-return for announcement $a$ over time interval $t$, $T_{A}$ is the number of time intervals, $r_{a} = \sum_{t=1}^{T_{A}} r_{a,t}$ is the total announcement log-return, $w_{a} = |r_{a}| \times \big( \sum_{a=1}^{\#EA_{i}} |r_{a}| \big)^{-1}$ is announcement $a$'s weight, and $\#EA_{i}$ is the number of earnings announcements for company $i$, as reported in Table \ref{table:sp500-volume-pm.tex}. The WPC measure starts at zero and ends at one, so the price discovery process is assumed to be completed at $T_{A}$.\footnote{We also calculated \cite{hasbrouck:95a}'s information share (IS). Consistent with our results based on the WPC, the IS measure suggests that anywhere between 60--80\% of the price discovery is accounted for in the first five minutes after the announcement.}

\begin{figure}[ht!]
\begin{center}
\caption{Weighted price contribution.}
\label{figure:wpc}
\begin{tabular}{c}
\includegraphics[height=8cm,width=0.48\textwidth]{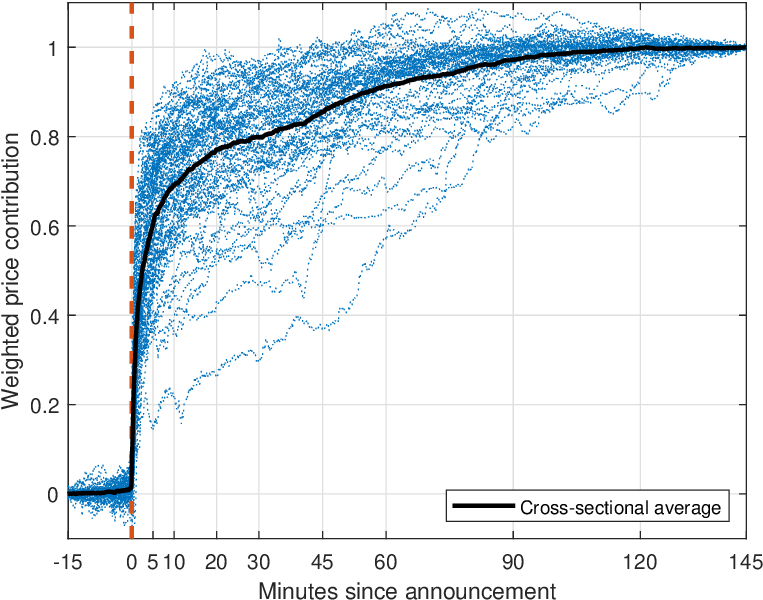}
\end{tabular}
\begin{scriptsize}
\parbox{\textwidth}{\emph{Note.} We plot the weighted price contribution (WPC) of \citet*{barclay-warner:93a}, which for company $i$ is defined over small intraday intervals $t$: $WPC_{it} = \sum_{a=1}^{ \#EA_{i}} w_{a} \times \frac{r_{a,t}}{r_{a}}$, for $t = 1, \dots, T_{A}$, where $r_{a,t}$ is the log-return for announcement $a$ over time interval $t$, $T_{A}$ is the total number of time intervals, $r_{a} = \sum_{t=1}^{T_{A}} r_{a,t}$ is the cumulative announcement log-return, $w_{a} = |r_{a}| \times \big( \sum_{a=1}^{ \#EA_{i}} |r_{a}| \big)^{-1}$ is the weight for announcement $a$, and $\#EA_{i}$ is the number of earnings announcements for company $i$, as reported in Table \ref{table:sp500-volume-pm.tex}. The returns cover a time interval from 15 minutes before to 145 minutes after each announcement with the event window broken into 5-second subintervals. The WPC is shown for each company as a blue dotted line, while the full black line is the cross-sectional average.}
\end{scriptsize}
\end{center}
\end{figure}

In Figure \ref{figure:wpc}, we plot the cumulative WPC measure for each of the 50 stocks in our analysis along with their cross-sectional average. The return covers a period from 15 minutes before to 145 minutes after each announcement with the event window broken into 5-second subintervals.\footnote{We also constructed the WPC measure over a longer window that begins 15 minutes prior to each announcement and extends through the overnight period and into next day's pre-market trading up to the opening of the stock exchange at 9:30am. The extended WPC curve is essentially flat after the end of the announcement day, suggesting that very little additional price discovery takes place here.} Implicitly, market prices at $t  = -15$ and $t = 145$ are taken as the efficient pre- and post-announcement prices. Since the typical company announces earnings at 4:05pm, the pre-announcement price is typically determined at 3:50pm during the very active regular trading session, while the post-announcement price is measured around 6:30pm. The WPC curves in Figure \ref{figure:wpc} suggest that price discovery is very fast with close to 70\% of the post-announcement price being discovered within five minutes of an announcement. This finding is consistent with \citet*{santosh:19a}.

Next, to examine how the price discovery process has shifted over time, Table \ref{table:wpc} reports the daily weighted price contribution (WPC) divided into four time intervals that are motivated by our empirical application and closely follow those used in \cite{barclay-hendershott:03a} and \citet*{jiang-likitapiwat-mcinish:12a}, namely pre-market (6:00am--9:30am), regular trading (9:30am--4:00pm), after-hours (4:00pm--6:30pm), and the overnight period (6:30pm--6:00am). In Panel A, the WPC measure calculated over the full sample and across all days indicates that 7.49\% of the price discovery process occurs in the pre-market, 73.19\% in regular trading, 4.64\% in the after-hours market, and 14.68\% overnight. On earnings announcement days, however, these fractions shift markedly as the 4:00pm--6:30pm segment accounts for 80.07\% of the WPC, with only 10.12\%, 6.87\%, and 2.95\% stemming from the remaining segments.

In Panels B and C of Table \ref{table:wpc}, we split these results into subsamples, covering 2008--2015 and 2016--2020. Over time, the after-hours segment has gained in importance. On earnings announcement days, it accounts for 85.39\% of the WPC in the 2016--2020 sample, nearly 10\% higher than the proportion during 2008--2015 (76.37\%). This increase has primarily come at the expense of the price discovery during the regular trading session and the overnight period, which on days with an earnings announcement has declined from 8.15\% and 4.71\% in the first sub-sample to 5.02\% and 0.4\%in the second sub-sample.

\begin{table}[ht!]
\setlength{ \tabcolsep}{0.20cm}
\begin{center}
\caption{Weighted price contribution.}
\label{table:wpc}
\vspace*{-0.25cm}
\begin{tabular}{lllll}
\hline \hline
& Pre-market & Regular trading & After-hours & Overnight \\
& 6:00am--9:30am & 9:30am--4:00pm & 4:00pm--6:30pm & 6:30pm--6:00am \\
\cline{2-5}
\textit{Panel A: Full sample} \\
All              & 0.0749    & 0.7319    & 0.0464    & 0.1468    \\
No announcement  & 0.0737    & 0.7638    & 0.0101    & 0.1524    \\
Announcement     & 0.1012*** & 0.0687*** & 0.8007*** & 0.0295*** \\
\\
\textit{Panel B: 2008 -- 2015}\\
All              & 0.0682    & 0.7571    & 0.0363    & 0.1383    \\
No announcement  & 0.0664    & 0.7890    & 0.0020    & 0.1426    \\
Announcement     & 0.1077*** & 0.0815*** & 0.7637*** & 0.0471*** \\
\\
\textit{Panel C: 2016 -- 2020}\\
All              & 0.0850    & 0.6944    & 0.0618    & 0.1589    \\
No announcement  & 0.0847    & 0.7264    & 0.0224    & 0.1666    \\
Announcement     & 0.0918    & 0.0502*** & 0.8539*** & 0.0040*** \\
\hline \hline
\end{tabular}
\smallskip
\begin{scriptsize}
\parbox{0.98\textwidth}{\emph{Note.} We calculate the weighted price contribution (WPC) of of \cite{barclay-warner:93a} over the four time intervals pre-market (9:30am--4:00pm), regular trading (4:00pm--6:30pm), after-hours (6:30pm--6:00am), and overnight (6:00am--9:30am). The WPC measure is described in Section  \ref{section:weighted-price-contribution}. The table shows the unconditional WPC (All), and the conditional WPC for non-announcement and announcement days. Panel A reports the full sample estimates, whereas Panels B--C report estimates separately for the subsamples 2008--2015 and 2016--2020. *, **, and *** denote that the non-announcement WPC is statistically different from the announcement WPC at the 10\%, 5\%, and 1\% level of significance, as inferred from a nonparametric permutation test with 10,000 shuffles.}
\end{scriptsize}
\end{center}
\end{table}

\clearpage

\section{Apple case study} \label{appendix:apple}

\setcounter{table}{0}
\setcounter{figure}{0}

This appendix illustrates our broader results using Apple as a case study. Apple is among the most important stocks throughout our entire sample and it has a broad coverage of analysts.

In Panel A of Figure \ref{figure:appl-extended-volume-variance}, we plot the proportion of the total volume (measured in transaction counts) traded during the after-hours session (4:00pm--6:30pm) relative to the extended trading session (9:30am--6:30pm). Liquidity in the after-hours market is typically small with an unconditional sample average value of 1.1\% of the total volume. However, there are clear spikes on earnings announcements days where the after-hours market is highly active and the relative trading volume exceeds 18\% on average on such days.

In Panel B of Figure \ref{figure:appl-extended-volume-variance}, we show the pre-averaged realized variance. The realized variance displays regularly occurring spikes that, consistent with the peaky trading volume in Panel A, almost always occur on days with earnings announcements.

\begin{figure}[ht!]
\begin{center}
\caption{Relative trading volume and pre-averaged realized variance of Apple.}
\label{figure:appl-extended-volume-variance}
\begin{tabular}{cc}
\small{Panel A: Relative trading volume.} & \small{Panel B: Pre-averaged realized variance.} \\
\includegraphics[height=8cm,width=0.48\textwidth]{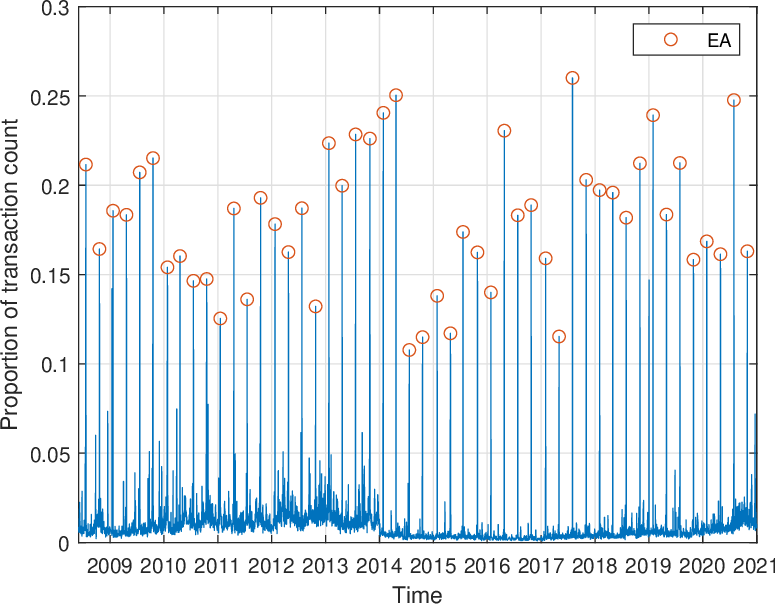} &
\includegraphics[height=8cm,width=0.48\textwidth]{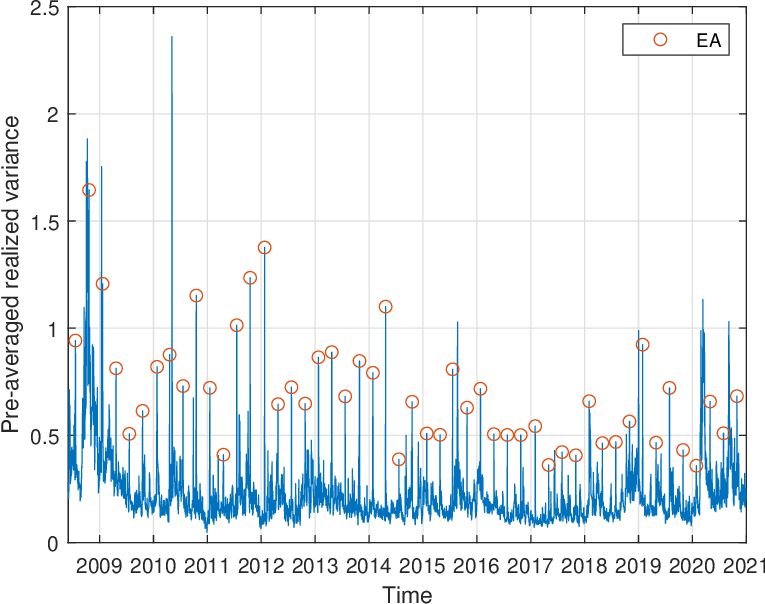}
\end{tabular}
\begin{scriptsize}
\parbox{\textwidth}{\emph{Note.} In Panel A, we show the daily after-hours market (4:00pm--6:30pm) transaction count scaled by the total transaction count in the extended trading session (9:30am--6:30pm). In Panel B, we plot the daily pre-averaged realized variance, converted to an annualized standard deviation. In both panels, a red circle indicates a day on which Apple made an earnings announcement.}
\end{scriptsize}
\end{center}
\end{figure}

The left panel in Figure \ref{figure:aapl-regular-vs-extended} plots the pre-averaged realized variance computed over the regular trading session  (9:30am--4:00pm) against its corresponding value computed over the extended trading session (9:30am--6:30pm). The figure also marks days on which Apple announced earnings. On days with no announcement, the estimates fall close to the 45-degree line. In contrast, on days with an announcement, the estimates are very different. The only notable exception in our sample is 01/17/2009, where Apple sent out an e-mail informing about Steve Jobs' medical leave, and 01/02/2019, where the company issued a profit warning. Both messages were released during the after-hours session.

To examine whether the incremental variance in after-hours markets took the form of jumps, Panel B in Figure \ref{figure:aapl-regular-vs-extended} plots the pre-averaged bipower variation against the pre-averaged realized variance, both computed over the extended trading session. Because bipower variation does not increase in the presence of jumps whereas the realized variance does, points on the 45-degree line are indicative of days without jumps, whereas days with such jumps should appear to the right of the line. Almost all days with price jumps in the extended trading session coincide with earnings announcement days. The primary exceptions are the day of the Steve Jobs medical leave e-mail, the profit warning, and the S\&P 500 Flash Crash on 05/06/2010.

\begin{figure}[ht!]
\begin{center}
\caption{Pre-averaged realized variance and bipower variation of Apple.}
\label{figure:aapl-regular-vs-extended}
\begin{tabular}{cc}
\small{Panel A: Incremental return variation.} & \small{Panel B: Incremental jump variation.} \\
\includegraphics[height=8cm,width=0.48\textwidth]{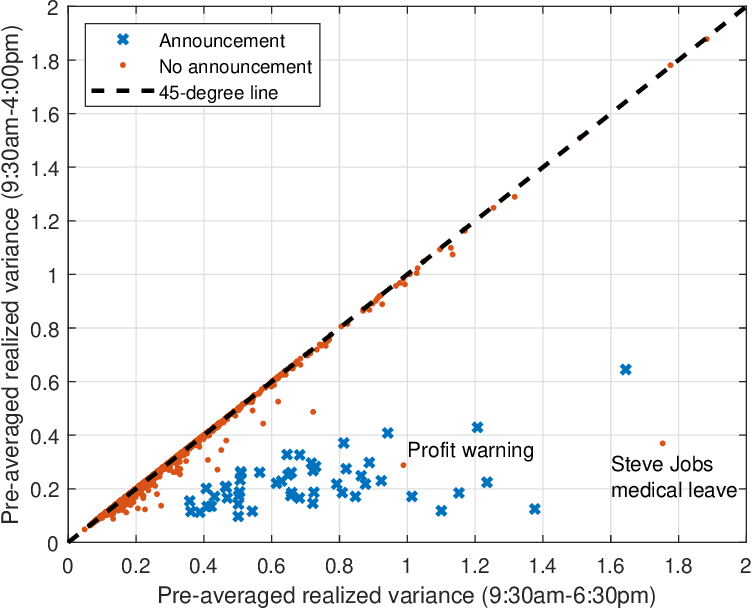} &
\includegraphics[height=8cm,width=0.48\textwidth]{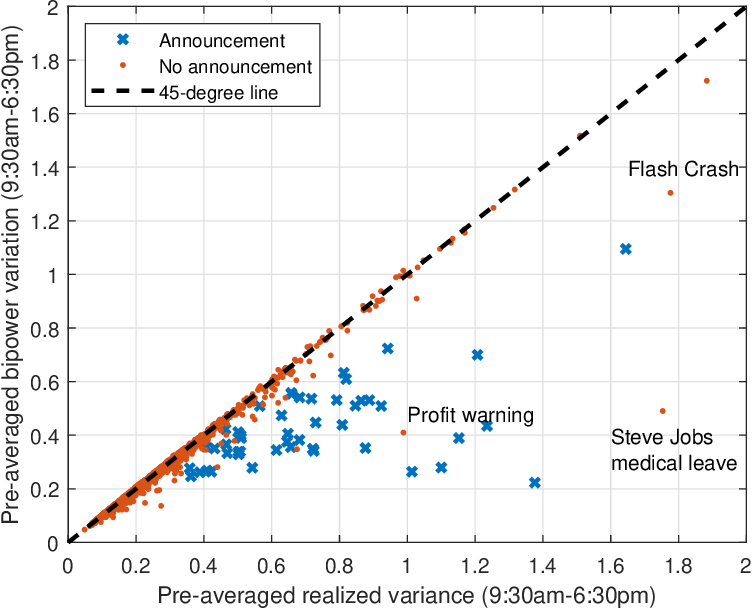}
\end{tabular}
\begin{scriptsize}
\parbox{\textwidth}{\emph{Note.} We show point estimates of the pre-averaged realized variance and pre-averaged bipower variation of Apple, converted to an annualized standard deviation. The axis label shows which estimator is plotted. In parenthesis, we further indicate for which part of the day high-frequency data are employed to calculate the estimate.}
\end{scriptsize}
\end{center}
\end{figure}

\clearpage

\section{Examples of no-jump announcements} \label{appendix:nojump}

\setcounter{table}{0}
\setcounter{figure}{0}

In this appendix, we present a few illustrative examples of corporate earnings announcements, for which our noise-robust jump test fails to be significant.

\begin{figure}[ht!]
\begin{center}
\caption{Examples of no-jump announcements.}
\label{figure:nojump}
\begin{tabular}{cc}
\small{Panel A: ULTA -- 12/03/2009.} & \small{Panel B: GILD -- 04/26/2012.}\\
\includegraphics[height=8cm,width=0.48\textwidth]{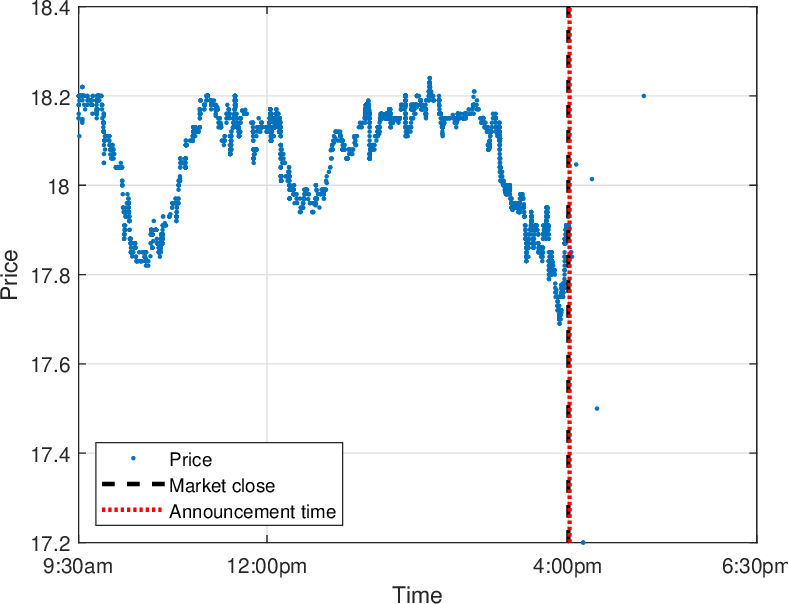} &
\includegraphics[height=8cm,width=0.48\textwidth]{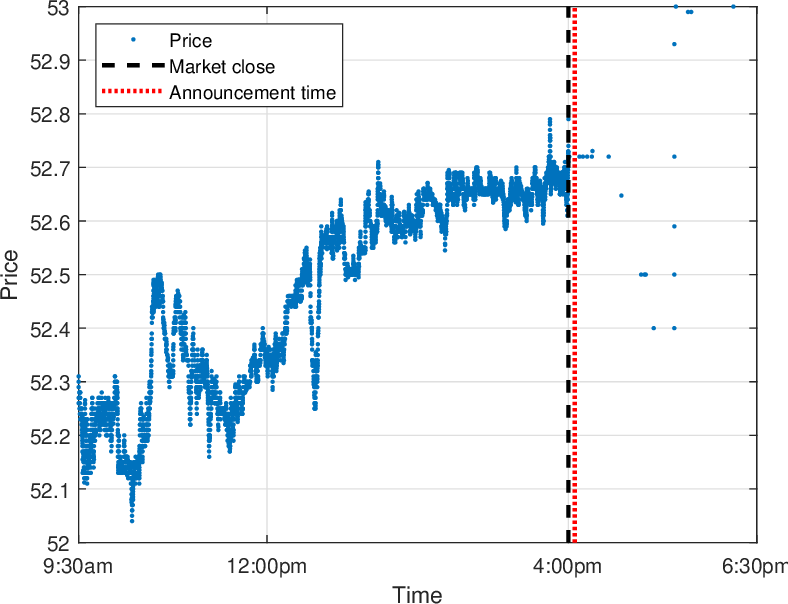}\\
\small{Panel C: TSLA -- 05/06/2015.} & \small{Panel D: GOOG -- 07/30/2020.}\\
\includegraphics[height=8cm,width=0.48\textwidth]{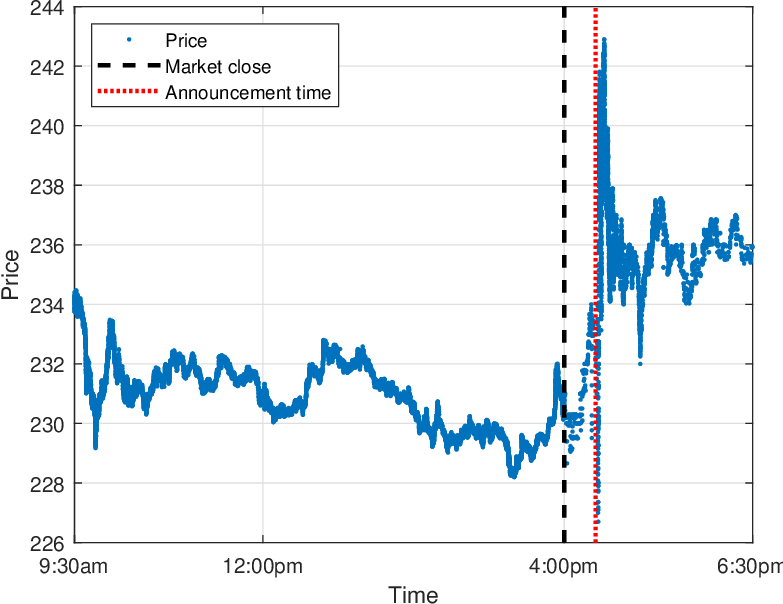} &
\includegraphics[height=8cm,width=0.48\textwidth]{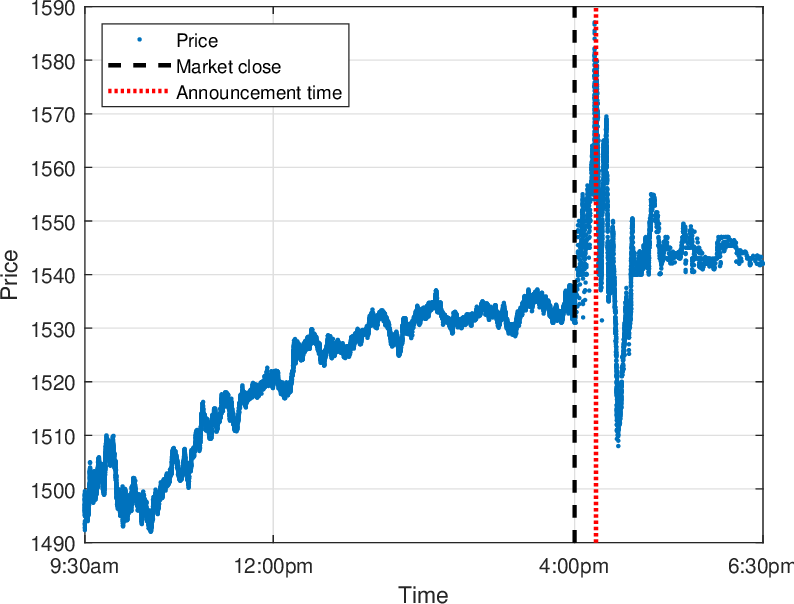}\\
\end{tabular}
\begin{scriptsize}
\parbox{\textwidth}{\emph{Note.} We plot the evolution of the stock price from 9:30am--6:30pm for selected earnings announcement days that are not flagged as significant by our noise-robust jump test. The stock ticker symbol and announcement date is indicated in the panel description.}
\end{scriptsize}
\end{center}
\end{figure}

\clearpage

\section{Missing or incorrect announcement times} \label{appendix:announcement-time}

\setcounter{table}{0}
\setcounter{figure}{0}

There are a few earnings reports, where the announcement time from Wall Street Horizon database is either missing or appears erroneous. As explained in personal e-mail exchange with Michael Raines, Director of Quantitative Data Solutions at Wall Street Horizon, a missing timestamp in their database reflects that the reporting company did not add a timestamp to the source documents (e.g. the official press release or form 8-K filing), in which case there is nothing to input. Wall Street Horizon is required to adhere to the information as provided---whether right or wrong---due to compliance needs of their institutional customers.

We therefore also hand collect the entire set of announcement times from the Factiva database. We search by company name and ticker symbol. We extract press releases that are marked as ``Earnings'' in the subject field and record the earliest time, where the actual quarterly earnings information appear in the headline of a press release.

Regardless, there are still a few announcements, where the time of the press release appears to be incorrect both in Wall Street Horizon and Factiva. An example for the stock price of Tesla (TSLA) on the announcement day 08/07/2013 is shown in Figure \ref{figure:tesla.eps}. The price rallies 8.17\% from 4:00pm to 4:01pm, suggesting that this is the actual time at which the announcement was published. Wall Street Horizon records the reporting time as 4:14pm, but we can trace the announcement back to 4:07pm in Factiva. There is no additional information in Factiva (or from popular internet search engines) suggesting the information in the report got leaked beforehand.

\begin{figure}[ht!]
\begin{center}
\caption{Transaction price of Tesla on 08/07/2013.}
\label{figure:tesla.eps}
\begin{tabular}{c}
\includegraphics[height=8cm,width=0.48\textwidth]{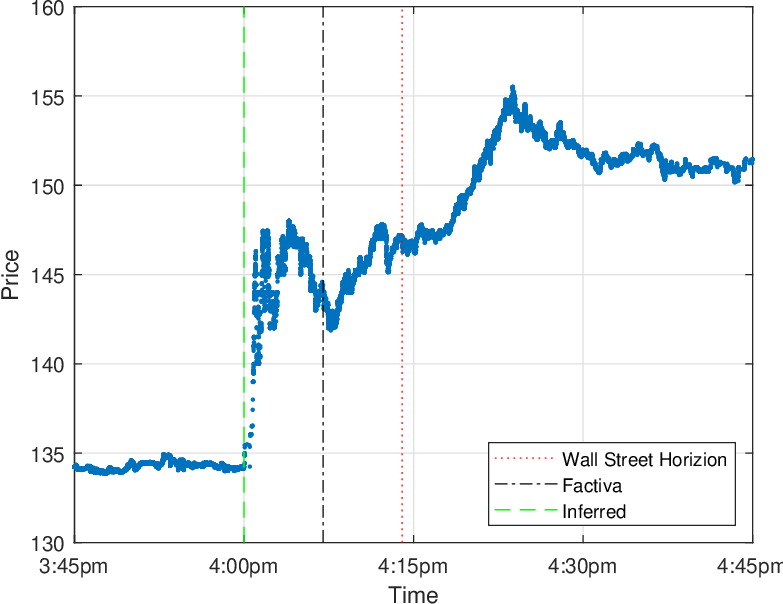} \\
\end{tabular}
\begin{scriptsize}
\parbox{\textwidth}{\emph{Note.} We plot the transaction price of Tesla's stock between 3:45pm and 4:45pm on 08/07/2013, where the company made a quarterly earnings announcement. The horizontal lines show the announcement timestamps collected from Wall Street Horizon and Factiva, which we contrast to the inferred announcement time based on our automated screening algorithm.}
\end{scriptsize}
\end{center}
\end{figure}

\clearpage

\section{Risk-adjusted return performance} \label{appendix:risk}

The large average returns on some of the \citet{patell-wolfson:84a} trading strategies may reflect a risk premium, i.e. exposure to the market or other priced risk factors. To examine this, we create a monthly return series by summing daily trading strategy log-returns within a given month. On days without active trading signals, we assume a risk-free rate is earned, which is proxied through the yield-to-maturity of a one-month T-bill converted into a continuously compounded daily interest rate. This yields the monthly log-return:
\begin{equation} \label{equation:monthly-return}
r_{m} = (N_{m} - \#\mathrm{EA}_{m}) r_{f,d} + \sum_{t=1}^{ \#\mathrm{EA}_{m}} r_{t}^{ \mathrm{EA}}, \quad \text{for } m = 1, \dots, M,
\end{equation}
where $\#\mathrm{EA}_{m}$ is the number of announcement days with trading signals in month $m$, $r_{t}^{ \mathrm{EA}}$ is the return generated from trading on day $t$, $r_{f,d}$ is the daily risk-free rate, $N_{m}$ is the number of days in month $m$, and $M$ is the number of months.

\begin{table}[ht!]
\setlength{\tabcolsep}{0.25cm}
\begin{center}
\caption{Properties of monthly trading returns.}
\label{table:excess-return}
\begin{small}
\begin{tabular}{l*{5}{c}}
\hline
\hline
& Trade & Midquote & BBO & BBO+5s & BBO+10s \\ \cline{2-6}
\\[-0.30cm]
& \multicolumn{5}{c}{Sample average return, Jensen's alpha, and Sharpe ratio}\\
\multicolumn{6}{l}{ \textit{Panel A: Full sample}}\\
\\[-0.30cm]
$\bar{r}_{m}$ & $\underset{(1.315)}{10.349^{***}}$ & $\underset{(1.305)}{8.606^{***}}$ & $\underset{(1.233)}{4.152^{***}}$ & $\underset{(1.148)}{2.342^{** }}$ & $\underset{(1.139)}{1.624^{   }}$\\
\\[-0.40cm]
$\alpha$ & $\underset{(1.306)}{11.164^{***}}$ & $\underset{(1.307)}{9.231^{***}}$ & $\underset{(1.198)}{4.809^{***}}$ & $\underset{(1.103)}{3.178^{***}}$ & $\underset{(1.074)}{2.372^{** }}$\\
\\[-0.40cm]
Sharpe ratio & 2.264 & 1.897 & 0.969 & 0.587 & 0.410\\
\\[-0.20cm]
\multicolumn{6}{l}{ \textit{Panel B: 2008 -- 2015}}\\
\\[-0.30cm]
$\bar{r}_{m}$ & $\underset{(1.648)}{12.751^{***}}$ & $\underset{(1.580)}{11.067^{***}}$ & $\underset{(1.456)}{6.836^{***}}$ & $\underset{(1.260)}{5.151^{***}}$ & $\underset{(1.244)}{4.426^{***}}$\\
\\[-0.40cm]
$\alpha$ & $\underset{(1.629)}{13.155^{***}}$ & $\underset{(1.551)}{11.130^{***}}$ & $\underset{(1.416)}{7.076^{***}}$ & $\underset{(1.272)}{5.521^{***}}$ & $\underset{(1.249)}{4.688^{***}}$\\
\\[-0.40cm]
Sharpe ratio & 2.856 & 2.586 & 1.734 & 1.510 & 1.313\\
\\[-0.20cm]
\multicolumn{6}{l}{ \textit{Panel C: 2016 -- 2020}}\\
\\[-0.30cm]
$\bar{r}_{m}$ & $\underset{(2.077)}{6.683^{***}}$ & $\underset{(2.154)}{4.851^{** }}$ & $\underset{(2.067)}{0.057^{   }}$ & $\underset{(2.043)}{-1.945^{   }}$ & $\underset{(2.035)}{-2.653^{   }}$\\
\\[-0.40cm]
$\alpha$ & $\underset{(1.992)}{7.746^{***}}$ & $\underset{(2.124)}{5.792^{***}}$ & $\underset{(1.978)}{0.723^{   }}$ & $\underset{(1.884)}{-1.106^{   }}$ & $\underset{(1.834)}{-1.946^{   }}$\\
\\[-0.40cm]
Sharpe ratio & 1.463 & 1.024 & 0.012 & -0.433 & -0.593\\
\\[-0.20cm]
& \multicolumn{5}{c}{Full sample Fama-French regression}\\
\\[-0.40cm]
$\beta_{ \mathrm{MKT}}$ & $\underset{(0.376)}{-0.454^{   }}$ & $\underset{(0.381)}{-0.267^{   }}$ & $\underset{(0.352)}{-0.274^{   }}$ & $\underset{(0.334)}{-0.390^{   }}$ & $\underset{(0.320)}{-0.293^{   }}$\\
\\[-0.40cm]
$\beta_{ \mathrm{HML}}$ & $\underset{(0.734)}{0.426^{   }}$ & $\underset{(0.761)}{0.480^{   }}$ & $\underset{(0.708)}{0.514^{   }}$ & $\underset{(0.662)}{0.482^{   }}$ & $\underset{(0.653)}{0.430^{   }}$\\
\\[-0.40cm]
$\beta_{ \mathrm{SMB}}$ & $\underset{(0.522)}{-0.321^{   }}$ & $\underset{(0.506)}{-0.271^{   }}$ & $\underset{(0.494)}{-0.213^{   }}$ & $\underset{(0.482)}{-0.276^{   }}$ & $\underset{(0.483)}{-0.258^{   }}$\\
\\[-0.40cm]
$\beta_{ \mathrm{RMW}}$ & $\underset{(0.848)}{-1.140^{   }}$ & $\underset{(0.901)}{-1.152^{   }}$ & $\underset{(0.915)}{-1.313^{   }}$ & $\underset{(0.946)}{-1.520^{   }}$ & $\underset{(0.931)}{-1.574^{*  }}$\\
\\[-0.40cm]
$\beta_{ \mathrm{CMA}}$ & $\underset{(0.945)}{-0.607^{   }}$ & $\underset{(1.048)}{-0.984^{   }}$ & $\underset{(0.994)}{-1.090^{   }}$ & $\underset{(0.934)}{-0.679^{   }}$ & $\underset{(0.919)}{-0.466^{   }}$\\
\\[-0.40cm]
$\beta_{ \mathrm{MOM}}$ & $\underset{(0.282)}{-0.185^{   }}$ & $\underset{(0.291)}{-0.079^{   }}$ & $\underset{(0.288)}{-0.078^{   }}$ & $\underset{(0.286)}{-0.087^{   }}$ & $\underset{(0.298)}{0.009^{   }}$\\
\\[-0.40cm]
$R^{2}$ & 0.028 & 0.023 & 0.032 & 0.043 & 0.040\\
\\[-0.40cm]
$P$-value ($\beta_{ \mathrm{MKT}}$ = \dots = $\beta_{ \mathrm{MOM}}$ = 0) & 0.688 & 0.771 & 0.598 & 0.406 & 0.468\\
\hline
\hline
\end{tabular}
\end{small}
\smallskip
\begin{scriptsize}
\parbox{\textwidth}{\emph{Note.} We construct a monthly return series (in percent), $r_{m} = \sum_{t=1}^{ \#\mathrm{EA}_{m}} r_{t}^{ \mathrm{EA}} + (N_{m} - \#\mathrm{EA}_{m}) r_{f,d}$, for $m = 1, \dots, M$, where $\#\mathrm{EA}_{m}$ is the number of announcement days with trading signals in month $m$, $r_{t}^{ \mathrm{EA}}$ is the return generated from each trade, $r_{f,d}$ is the daily risk-free rate, $N_{m}$ is the number of days in month $m$, and $M$ is the number of months. In Panel A, $\bar{r}_{m}$ is the sample average monthly return. We subtract the monthly risk-free rate, $r_{f,m}$, from $r_{m}$ to create an excess return. The Sharpe ratio is the average excess return divided by the sample standard deviation of the return series, which we convert to an annualized figure by multiplying with $\sqrt{12}$. We regress the excess return on an intercept, the five factors from the extended \citet*{fama-french:15a} asset pricing model, and the momentum factor of \citet*{carhart:97a}: $r_{m} - r_{f,m} = \alpha + \beta_{ \mathrm{MKT}} r_{m}^{ \mathrm{MKT}} + \beta_{ \mathrm{HML}} r_{m}^{ \mathrm{HML}} + \beta_{ \mathrm{SMB}} r_{m}^{ \mathrm{SMB}} + \beta_{ \mathrm{RMW}} r_{m}^{ \mathrm{RMW}} + \beta_{ \mathrm{CMA}} r_{m}^{ \mathrm{CMA}} + \beta_{ \mathrm{MOM}} r_{m}^{ \mathrm{MOM}} + \epsilon_{m}$, where $r_{m}^{ \mathrm{MKT}}$ is the monthly excess return on the market (MKT), $r_{m}^{ \mathrm{HML}}$ is High Minus Low (HML), $r_{m}^{ \mathrm{SMB}}$ is Small Minus Big (SML), $r_{m}^{ \mathrm{RMW}}$ is Robust Minus Weak (RMW), $r_{m}^{ \mathrm{CMA}}$ is Conservative Minus Aggressive (CMA), and $r_{m}^{ \mathrm{MOM}}$ is Momentum (MOM). The table reports parameter estimates from the regression in Panel B. Standard errors are shown in parenthesis below the parameter estimate. *, **, and *** denote statistical significance at the 10\%, 5\%, and 1\% level. The number of observations is 146. $R^{2}$ is the coefficient of determination. The $P$-value is for testing the hypothesis $H_{0} : \beta_{ \mathrm{MKT}} = \dots = \beta_{ \mathrm{MOM}} = 0$.}
\end{scriptsize}
\end{center}
\end{table}

The first row in Panel A of Table \ref{table:excess-return} reports the sample mean of the raw monthly returns series computed under our different trading scenarios. For trades executed at the transaction price (column 1 labeled "Trade"), the mean return is 10.35\% per month. This figure is both economically large and highly statistically significant. In the same vein, executing at the midquote earns a mean return of 8.61\% per month. Trading at the best bid or offer reduces the mean return to 4.15\% per month, which remains highly significant. Introducing latency of five or ten seconds brings it further down to 2.34\% and 1.62\% per month with only the former being significant at the 5\% level, however. These results are in close alignment with our trading strategy results from Table \ref{table:trading-strategy} in the main text, so aggregation to the monthly horizon does not affect our conclusions.

We also compute the Sharpe ratio as the average monthly excess returns divided by the sample standard deviation of the monthly return series, both converted to annualized figures. The Sharpe ratio (shown in the third row) follows the pattern of the mean (excess) returns, starting out at 2.26 for the transaction price and 1.90 for midquote setting. It then drops to 0.97, 0.59 and 0.41 for the BBO scenarios with a zero, five and ten second latency delay.

Panels B and C in Table \ref{table:excess-return} show subsample results with the top row reporting average raw returns and the third row showing the Sharpe ratio. The results are distinctly different. The mean return in the scenario with no trading frictions (12.75\% per month) is almost twice as high in the 2008--2015 period as compared to the 2016--2020 subsample (6.68\%). Moreover, the mean return in the early subsample is statistically significant across all trading scenarios even after accounting for bid-ask spreads and a 10-second latency delay. In contrast, in the late subsample the mean return is only 0.057\% per month in the BBO scenario, which is statistically insignificant. Our estimated Sharpe ratios are also much smaller, in some cases even negative, in the second subsample as compared to the first period. These results are again consistent with our daily returns analysis.

To adjust for risk exposure, we next subtract the monthly risk-free rate, $r_{f,m}$, from $r_{m}$ and regress the resulting excess return on an intercept, the five factors from the extended \citet*{fama-french:15a} model, and the momentum factor of \citet*{carhart:97a}:
\begin{equation} \label{risk-adj regression}
r_{m} - r_{f,m} = \alpha + \beta_{ \mathrm{MKT}} r_{m}^{ \mathrm{MKT}} + \beta_{ \mathrm{HML}} r_{m}^{ \mathrm{HML}} + \beta_{ \mathrm{SMB}} r_{m}^{ \mathrm{SMB}} + \beta_{ \mathrm{RMW}} r_{m}^{ \mathrm{RMW}} + \beta_{ \mathrm{CMA}} r_{m}^{ \mathrm{CMA}} + \beta_{ \mathrm{MOM}} r_{m}^{ \mathrm{MOM}} + \epsilon_{m},
\end{equation}
where $r_{m}^{ \mathrm{MKT}}$ is the monthly excess return on the market portfolio (MKT), and the remaining covariates are excess returns on the factor-mimicking portfolios, where $r_{m}^{ \mathrm{HML}}$ is High Minus Low (HML), $r_{m}^{ \mathrm{SMB}}$ is Small Minus Big (SML), $r_{m}^{ \mathrm{RMW}}$ is Robust Minus Weak (RMW), $r_{m}^{ \mathrm{CMA}}$ is Conservative Minus Aggressive (CMA), and $r_{m}^{ \mathrm{MOM}}$ is Momentum (MOM).\footnote{The data are downloaded from Kenneth French's website: \path{http://mba.tuck.dartmouth.edu/pages/faculty/ken.french/}. The one-month T-bill rate is from Ibbotson Associates.}

Our trading strategies generate slightly negative market betas with values ranging from -0.454 to -0.267, though none of these estimates are statistically different from zero. The only risk factor that our trading strategy returns load marginally significantly on is the profitability factor, RMW, which generates negative estimates around -1.1 to -1.6. Since earnings announcements contain information that is relevant to profitability, it is not surprising that the RMW factor stands out in terms of significance. Our trading rule goes long in stocks whose actual earnings numbers surprise analysts the most while it shorts stocks whose earnings disappoint. The negative loadings on the RMW factor is therefore consistent with firms with relatively weak profitability outperforming analyst expectations the most, whereas companies with relatively robust earnings disappoint the most. However, the relationship is weak.

The monthly alpha estimates from the risk-adjustment regression in \eqref{risk-adj regression} are bigger than the corresponding mean raw returns. The alpha estimate is positive (at 2.37\% per month) and statistically significant even for the BBO strategy with a 10-second latency delay in the full sample. Turning to the subsample analysis, Panels B and C show that alpha estimates are far bigger, and always statistically significant in the early period (2008--2015) as compared to the later sample (2016--2020). Conversely, in the second subsample alpha estimates are only statistically significant under the no-friction and midquote scenarios.

We conclude from these findings that while risk-adjusting the returns from our trading strategy slightly strengthens our performance results, the basic conclusions from the earlier analysis remain unaltered. In the early subsample, there is strong evidence that information on earnings surprises could have been exploited to generate significantly positive mean returns both on a raw and risk-adjusted basis. Conversely, market efficiency seems to have improved after 2016 to the point where our trading strategy is no longer generating significantly positive raw or risk-adjusted returns after accounting for transaction costs.

\end{document}